\let\mathbb\varmathbb
\crefname{lemma}{Lemma}{Lemmas}
\crefname{fact}{Fact}{Facts}
\crefname{theorem}{Theorem}{Theorems}
\crefname{corollary}{Corollary}{Corollaries}
\crefname{claim}{Claim}{Claims}
\crefname{example}{Example}{Examples}
\crefname{algorithm}{Algorithm}{Algorithms}
\crefname{problem}{Problem}{Problems}
\crefname{definition}{Definition}{Definitions}
\newtheorem{theorem}{Theorem}[section]
\newtheorem*{theorem*}{Theorem}
\newtheorem{proposition}[theorem]{Proposition}
\newtheorem*{proposition*}{Proposition}
\newtheorem{lemma}[theorem]{Lemma}
\newtheorem*{lemma*}{Lemma}
\newtheorem{corollary}[theorem]{Corollary}
\newtheorem*{conjecture*}{Conjecture}
\newtheorem{fact}[theorem]{Fact}
\newtheorem*{fact*}{Fact}
\newtheorem*{hypothesis*}{Hypothesis}
\theoremstyle{definition}
\newtheorem{definition}[theorem]{Definition}
\newtheorem*{definition*}{Definition}
\newtheorem{example}[theorem]{Example}
\newtheorem{problem}[theorem]{Problem}
\theoremstyle{remark}
\newtheorem{claim}[theorem]{Claim}
\newtheorem*{claim*}{Claim}
\newtheorem{remark}[theorem]{Remark}
\newtheorem*{remark*}{Remark}
\newtheorem*{observation*}{Observation}
\newcommand{\C}{\mathbb{C}}	                
\newcommand{\R}{\mathbb{R}}                     
\newcommand{\poly}{\text{poly}}
\newcommand{\eps}{\epsilon}
\newcommand{\RRR}{\mathcal{R}}
\newcommand{\CCC}{\mathcal{C}}
\newcommand{\ZZ}{\mathcal{Z}}
\newcommand{\pr}[1]{\text{\bf Pr}\normalfont\lbrack #1 \rbrack} 
\newcommand{\ex}[1]{\mathbb{E}\normalfont\lbrack #1 \rbrack}
\newcommand{\bpr}[1]{\text{\bf Pr}\normalfont \Big[#1 \Big]} 
\newcommand{\bex}[1]{\mathbb{E}\normalfont \Big[#1 \Big]}
\newcommand{\bvar}[1]{\mathbf{Var}\normalfont \Big(#1 \Big)}
\def\Tr{\text{Tr}}
\def\KF{\textsf{KF}}
\def\AA{\mathbf{A}}
\def\BB{\mathbf{B}}
\def\EE{\mathbf{E}}
\def\MM{\mathbf{M}}
\def\NN{\mathbf{N}}
\def\CC{\mathbf{C}}
\def\DD{\mathbf{D}}
\def\Y{\mathbf{Y}}
\def\X{\mathbf{X}}
\def\P{\mathbf{P}}
\def\U{\mathbf{U}}
\def\V{\mathbf{V}}
\def\S{\mathbf{S}}
\def\T{\mathbf{T}}
\def\W{\mathbf{W}}
\def\G{\mathbf{G}}
\def\M{\mathbf{M}}
\def\ZZ{\mathbf{Z}}
\newcommand{\PP}{\mathbb{P}}
\newcommand{\wt}[1]{\widetilde{#1}}
\title{Testing Positive Semi-Definiteness via Random Submatrices}
\author{
Ainesh Bakshi\thanks{Ainesh Bakshi and Rajesh Jayaram would like to thank the partial support
from the Office of Naval Research (ONR) grant N00014-18-1-2562, and the National Science Foundation (NSF) under
Grant No. CCF-1815840.}\\
CMU \\
\texttt{abakshi@cs.cmu.edu}
\and 
Nadiia Chepurko \\
MIT \\
\texttt{nadiia@mit.edu} 
\and 
Rajesh Jayaram\footnotemark[1] \\
CMU \\
\texttt{rkjayara@cs.cmu.edu} 
}
\begin{document}
\maketitle
\begin{abstract}
    We study the problem of testing whether a matrix $\AA \in \R^{n \times n}$ with bounded entries ($\|\AA\|_\infty \leq 1$)  is positive semi-definite (PSD), 
    or $\eps$-far in Euclidean distance from the PSD cone, meaning that $\min_{\BB \succeq 0} \|\AA - \BB\|_F^2  > \eps n^2$, where $\BB \succeq 0$ denotes that $\BB$ is PSD.
    Our main algorithmic contribution is a non-adaptive tester which distinguishes between these cases using only $\tilde{O}(1/\eps^4)$ queries to the entries of $\AA$.\footnote{Throughout the paper, $\tilde{O}(\cdot)$ hides $\log(1/\eps)$ factors.} If instead of the Euclidean norm we considered the distance in spectral norm, we obtain the ``$\ell_\infty$-gap problem'', where $\AA$ is either PSD or satisfies $\min_{\BB\succeq 0} \|\AA - \BB\|_2  > \eps n$. For this related problem, we give a $\tilde{O}(1/\eps^2)$ query tester, which we show is optimal up to $\log(1/\eps)$ factors. 
  Both our testers randomly sample a collection of principal submatrices and check whether these submatrices are PSD. Consequentially,  our algorithms achieve \textit{one-sided error}: whenever they output that $\AA$ is not PSD, they return a certificate that $\AA$ has negative eigenvalues.
    
   We complement our upper bound for PSD testing with Euclidean norm distance by giving a $\tilde{\Omega}(1/\eps^2)$ lower bound for any non-adaptive algorithm. 
  Our lower bound construction is general, and can be used to derive lower bounds for a number of spectral testing problems. As an example of the applicability of our construction, we obtain a new $\tilde{\Omega}(1/\eps^4)$ sampling lower bound for testing the Schatten-$1$ norm with a $\eps n^{1.5}$ gap, 
   extending a result of Balcan, Li, Woodruff, and Zhang \cite{BalcanTesting}. In addition, our hard instance results in new sampling lower bounds for estimating the Ky-Fan Norm, and the cost of rank-$k$ approximations, i.e.  $\|\AA - \AA_k\|_F^2 = \sum_{i > k} \sigma_i^2(\AA)$.
\end{abstract}

\thispagestyle{empty}
\newpage
\tableofcontents
\thispagestyle{empty}
\newpage
\pagenumbering{arabic}

\section{Introduction}

Positive Semi-Definite (PSD) matrices are central objects of interest in algorithm design, and continue to be studied extensively in optimization, spectral graph theory, numerical linear algebra, 
statistics, and dynamical systems, among many others~\cite{vandenberghe1996semidefinite,wolkowicz2012handbook,goemans1995improved,arora2009expander, arora2005fast, steurer2010fast,spielman2004nearly,deza2009geometry,wainwright2019high,diakonikolas2019recent,slotine1991applied}. Specifically, a real-valued matrix $\AA \in \R^{n \times n}$ is said to be PSD if it defines a non-negative quadratic form: namely if $x^\top \AA x \geq 0$ for all $x \in \R^n$. If $\AA$ is symmetric, this is equivalent to the eigenvalues of $\AA$ being non-negative. Certifying whether a matrix is PSD often provides crucial insights into the structure of metric spaces~\cite{schoenberg1935remarks}, arises as a separation oracles in Semi-Definite Programming (SDP)~\cite{vandenberghe1996semidefinite}, leads to faster algorithms for solving linear systems and linear algebra problems~\cite{spielman2004nearly,kelner2013simple,musco2017sublinear,bakshi2019robust} detects existence of community structure in random graphs~\cite{saade2014spectral}, and is used to ascertain local convexity of functions. Furthermore, testing if a matrix is PSD is also used when studying the rate of dissipation in the heat equation~\cite{a1993heat} and 
the behavior of non-oscillatory, exponentially stable modes of linear differential equations~\cite{glendinning1994stability}. For these applications, in addition to testing the existence of negative eigenvalues, it is often important to provide a \textit{certificate} that the matrix is not PSD, by exhibiting a direction in which the quadratic form is negative.

While efficient, numerically stable algorithms for computing the spectrum of a matrix have been known since Turing \cite{turing1948rounding}, such algorithms require reading the entire matrix and incur a cubic running time in practice. Computing the eigenvalues of a matrix is often the bottleneck in applications, especially when just determining the existence of negative eigenvalues suffices. For instance, checking embeddability of a finite metric into Euclidean space, feasibility of a SDP, convexity of a function, and if specialized solvers are applicable for linear algebraic problems, all only require knowledge of whether a given matrix is PSD. The focus of this work is to study when the property of being PSD can be tested \textit{sublinear} time and queries, without reading the entire matrix. 

We approach the problem from the perspective of property testing \cite{goldreich1998property,goldreich2017introduction}, where the input matrix $\AA$ is promised to be either a PSD matrix, or ``$\eps$-far'' from PSD under an appropriate notion of distance (discussed below). Specifically, we work in the \textit{bounded-entry model}, proposed by Balcan, Li, Woodruff, and Zhang \cite{BalcanTesting}, where the input matrix has bounded entries: $\|\AA\|_\infty \leq 1$. Boundedness is often a natural assumption in practice, and  has numerous real world applications, such as recommender systems as in the Netflix Challenge \cite{koren2009matrix}, unweighted or bounded weight graphs \cite{goldreich2010introduction,goldreich1998property}, correlation matrices, distance matrices with bounded radius, and others \cite{li2014improved,kannan2016bounded,BalcanTesting}. Further, the boundedness of entries avoids degenerate instances where an arbitrarily large entry is hidden in $\AA$, thereby drastically changing the spectrum of $\AA$, while being impossible to test without reading the entire matrix.

Our starting point is a simple fact: a matrix $\AA$ is PSD if and only if all \textit{principal}\footnote{Recall that a principal submatrix $\AA_{T \times T}$ for $T \subseteq [n]$ is the restriction of $\AA$ to the rows and columns indexed by $T$.}  submatrices of $\AA$ are PSD.  However, a much more interesting direction is: if $\AA$ is not PSD, what can be said about the eigenvalues of the submatrices of $\AA$?  Specifically, if $\AA$ is far from PSD, how large of a submatrix must one sample in order to find a negative eigenvalue? 
Note that given a principal submatrix $\AA_{T \times T}$ with $x^\top \AA_{T \times T} x < 0$ for some $x \in \R^{|T|}$, this direction $x$ can be used as a certificate that the input matrix is not PSD, since $y^\top \AA y = x^\top \AA_{T \times T} x < 0$, where $y$ is the result of padding $x$ with $0$'s.  
Further, it leads us to a natural algorithm to test definiteness: sample multiple principal submatrices and compute their eigenvalues. If any are negative, then $\AA$ must not be PSD. Determining the query complexity of this task is the principal focus of this paper. Specifically, we ask: 
\begin{quote}
\centering 
\emph{Can the positive semi-definiteness of a bounded matrix be tested via the semi-definiteness of a small random submatrix?}
\end{quote}
\paragraph{The Testing Models.} 
The distance from $\AA$ to the PSD cone is given by $\min_{\BB \succeq 0} \|\AA - \BB\|$, where $\| \cdot \|$ is a norm, and $\BB \succeq 0$ denotes that $\BB$ is PSD. To instantiate $\|\cdot \|$, we consider two natural norms over $n \times n$ matrices: the spectral norm ($\| \cdot \|_2$) and the Euclidean norm ($\|\cdot\|_F$). 
Perhaps surprisingly, the distance of a symmetric matrix $\AA$ to the PSD cone under these norms can be characterized in terms of the eigenvalues of $\AA$. In particular, let $\lambda \in \R^n$ be the vector of eigenvalues of $\AA$. Then, the spectral norm distance corresponds to the $\ell_\infty$ distance between $\lambda$ and the positive orthant. Similarly, the squared Frobenius distance corresponds to the $\ell^2_2$ distance between $\lambda$ and the positive orthant.

Therefore, we will refer to the two resulting gap problems as the $\ell_\infty$-gap and the $\ell_2^2$-gap, respectively.  This connection between matrix norms of $\AA$ and vector norms of eigenvalues $\lambda$ will be highly useful for the analysis of random submatrices. Next, we formally define the testing problems:

\begin{problem}[PSD Testing with Spectral norm/$\ell_\infty$-gap]\label{prob:linf}
Given $\eps\in (0,1]$ and a symmetric matrix $\AA \in \R^{n \times n}$ such that $\|\AA \|_{\infty} \leq 1$, distinguish whether $\AA$ satisfies: 
\begin{itemize}
\itemsep0em 
    \item[(1)] $\AA$ is PSD. 
    \item[(2)] $\AA$ is $\eps$-far from the PSD cone in Spectral norm:  $\min_{\BB \succeq 0} \|\AA - \BB\|_2 = \max_{i : \lambda_i < 0} |\lambda_{i}(\AA)| \geq \eps n$.
\end{itemize}
\end{problem} 

The fact that the spectral norm distance from $\AA$ to the PSD cone $(\min_{\BB \succeq 0} \|\AA - \BB\|_2)$ is equivalent to the magnitude of the smallest negative eigenvalue of $\AA$ is a consequence of the variational principle for eigenvalues. For general non-symmetric matrices $\AA$, one can replace $(2)$ above with the condition $x^\top \AA x < - \eps n$ for some unit vector $x \in \R^n$, which is equivalent to $(2)$ if $\AA$ is symmetric (again by the variational principle). 
We note that our results for the $\ell_\infty$-gap hold in this more general setting.\footnote{Also note that given query access to any  $\AA \in \R^{n \times n}$, one can always run a tester on the symmetrization $\BB = (\AA + \AA^\top)/2$, which satisfies $x^\top \AA x =x^\top \BB x$ for all $x$, with at most a factor of $2$ increase in query complexity.}

Next, if we instantiate $\|\cdot\|$ with the (squared) Euclidean norm, we obtain the $\ell_2^2$ gap problem.

\begin{problem}[PSD Testing with $\ell_2^2$-gap]\label{prob:l2}
Given $\eps\in (0,1]$ and a symmetric matrix $\AA \in \R^{n \times n}$ such that $\|\AA \|_{\infty} \leq 1$, distinguish whether $\AA$ satisfies:
\begin{itemize}
\itemsep0em 
    \item[(1)] $\AA$ is PSD.
    \item[(2)] 
    $\AA$ is $\eps$-far from the PSD cone in squared Euclidean norm: 
    \begin{equation}\label{eqn:1}
        \min_{\BB \succeq 0}\|\AA - \BB \|^2_F  = \sum_{i : \lambda_i(\AA) < 0} \lambda_i^2(\AA) \geq \eps n^2
    \end{equation}
\end{itemize}

\end{problem} 
Note that the identity $\min_{\BB \succeq 0}\|\AA - \BB \|^2_F  = \sum_{i : \lambda_i(\AA) < 0} \lambda_i^2(\AA) $ in equation \ref{eqn:1} also follows from the variational principle for eigenvalues (see Appendix \ref{sec:appendixA}). Similarly to the $\ell_\infty$-gap, if $\AA$ is not symmetric one can always run a tester on the symmetrization $(\AA + \AA^\top)/2$.
Also observe that $\|\AA\|_F^2 \leq n^2$ and $\|\AA\|_2 \leq n$ for bounded entries matrices, hence the respective scales of $n,n^2$ in the two gap instances above. 
Notice by definition, if a symmetric matrix $\AA$ is $\eps$-far from PSD in $\ell_\infty$ then $\AA$ is $\eps^2$-far from PSD in $\ell_2^2$. However, the converse is clearly not true, and as we will see the complexity of PSD testing with $\eps^2$-$\ell_2^2$ gap is strictly harder than testing with $\eps$-$\ell_\infty$ gap.\footnote{The difference in scaling of $\eps$ between the $\ell_\infty$ and $\ell_2^2$ gap definitions ($\eps$ is squared in the latter) is chosen for the sake of convenience, as it will become clear the two problems are naturally studied in these respective paramaterizations.} 

In fact, there are several important examples of matrices which are far from the PSD cone in $\ell_2^2$, but which are not far in $\ell_\infty$. For instance, if $\AA$ is a random matrix with bounded moments, such as a matrix with i.i.d.  Rademacher ($\{1,-1\}$) or Gaussian entries, then as a consequence of Wigner's Semicircle Law $\AA$ will be $\Omega(1)$-far in $\ell_2^2$ distance. 
However, $\|\AA\|_2  = O(\sqrt{n})$ with high probability, so $\AA$ will only be $O(1/\sqrt{n})$-far in $\ell_\infty$ distance. Intuitively, such random instances should be very ``far'' from being PSD, and the $\ell_2^2$ distance captures this fact. 
\begin{remark}
A previous version of this work defined the gap in Problem \ref{prob:linf} in full generality (without symmetry assumed) as $x^\top \AA x < - \eps n$ for a unit vector $x$. We have since changed the presentation, as this more general definition does not clearly emphasize the connection between the Problem \ref{prob:linf} and the spectral norm. 
The authors would like to thank an anonymous reviewer for this suggestion. We note that the results themselves remain unaffected. 
\end{remark}

\subsection{Our Contributions}
\label{sec:contri}

We now introduce our main contributions. Our algorithms for PSD testing randomly sample principal submatrices and check if they are PSD. Thus, all our algorithms have one-sided error; when $\AA$ is PSD, they always return \textsf{PSD}, and whenever our algorithms return \textsf{Not PSD}, they output a certificate in the form of a principal submatrix which is not PSD. 
In what follows, $\omega < 2.373$ is the exponent of matrix multiplication, and $\tilde{O},\tilde{\Omega}$ notation only hide $\log(1/\eps)$ factors (and $\log(s)$ factors for Ky-Fan-$s$ and residual error bounds), thus our bounds have no direct dependency on the input size $n$.
We first state our result for the $\ell_\infty$ gap problem in its most general form, which is equivalent to Problem \ref{prob:linf} in the special case when $\AA$ is symmetric.

\smallskip 		\smallskip \smallskip 
\noindent \textbf{Theorem \ref{thm:inftymain} }($\ell_\infty$-gap Upper Bound)\textit{  There is a non-adaptive sampling algorithm which, given $\AA  \in \R^{n \times n} $ with $\|\AA\|_\infty \leq 1$ and $\eps \in (0,1)$, returns \textsf{PSD} if $x^\top \AA x \geq 0$ for all $x \in \R^n$, and with probability $2/3$ returns \textsf{Not PSD} if $x^\top\AA x \leq - \eps n$ for some unit vector $x \in \R^n$. The algorithm make $\tilde{O}(1/\eps^2)$ queries to the entries of $\AA$, and runs in time $\tilde{O}(1/\eps^{\omega})$.\smallskip		\smallskip \smallskip 
	}
	
We demonstrate that the algorithm of Theorem \ref{thm:inftymain} is optimal up to $\log(1/\eps)$ factors, even for adaptive algorithms with two-sided error. Formally, we show:

\smallskip 	\smallskip \smallskip 
\noindent \textbf{Theorem \ref{thm:linftyLB} }($\ell_\infty$-gap Lower Bound)\textit{ Any adaptive or non-adaptive algorithm which solves the PSD testing problem with $\eps$-$\ell_\infty$ gap with probability at least $2/3$, even with two-sided error and if $\AA$ is promised to be symmetric, must query $\wt{\Omega}( 1/\eps^2)$ entries of $\AA$.\smallskip		\smallskip \smallskip 
	}



Next, we present our algorithm for the $\ell_2^2$-gap problem. Our algorithm crucially relies on first running our tester for the $\ell_\infty$-gap problem, which allows us to demonstrate that if $\AA$ is far from PSD in $\ell_2^2$ but close in $\ell_\infty$, then 
	it must be far, under other notions of distance such as Schatten norms or residual tail error, from any PSD matrix. 
	
\smallskip 		\smallskip \smallskip 
\noindent \textbf{Theorem \ref{thm:l2Main} } ($\ell_2^2$-gap Upper Bound)\textit{ 
There is a non-adaptive sampling algorithm which, given a symmetric matrix $\AA \in \R^{n \times n}$ with $\|\AA\|_\infty \leq 1$ and $\eps \in (0,1)$, returns \textsf{PSD} if $\AA$ is PSD, and with probability $2/3$ returns \textsf{Not PSD} if $ \min_{\BB \succeq 0}\|\AA - \BB \|^2_F\geq \eps n^2$. The algorithm make $\tilde{O}(1/\eps^4)$ queries to $\AA$,  and runs in time $\tilde{O}(1/\eps^{2\omega})$.\smallskip		\smallskip \smallskip 
	}

We complement our upper bound by a  $\wt{\Omega}( \frac{1}{\eps^2})$  lower bound for PSD-testing with $\eps$-$\ell_2^2$ gap, which holds even for algorithms with two sided error. Our lower bound demonstrates a separation between the complexity of PSD testing with $\sqrt{\eps}$-$\ell_\infty$ gap and PSD testing with $\eps$-$\ell_2^2$-gap, showing that the concentration of negative mass in large eigenvalues makes PSD testing a strictly easier problem.
	
	\smallskip 		\smallskip \smallskip 
\noindent \textbf{Theorem \ref{thm:lbmain} }($\ell_2^2$-gap Lower Bound)  \textit{ 
Any non-adaptive algorithm which solves the PSD testing problem with $\eps$-$\ell_2^2$ gap with probability at least $2/3$, even with two-sided error, must query $\wt{\Omega}( 1/\eps^2)$ entries of $\AA$. \smallskip		\smallskip \smallskip 
	}

	Our lower bound is built on discrete hard instances which are ``locally indistinguishable'', in the sense that the distribution of any small set of samples is completely identical between the PSD and $\eps$-far cases.
	At the heart of the lower bound is a key combinatorial Lemma about arrangements of paths on cycle graphs (see discussion in Section \ref{sec:techlb}). 
	Our construction is highly general, and we believe will likely be useful for proving other lower bounds for matrix and graph property testing problems. Exemplifying the applicability of our construction, we obtain as an immediate corollary a new lower bound for testing the Schatten-$1$ norm of $\AA$. Recall, that the Schatten-$1$ norm is defined via $\|\AA\|_{\mathcal{S}_1} = \sum_i \sigma_i(\AA)$, where $\sigma_1(\AA) \geq \dots \geq \sigma_n(\AA)$ are the singular values of $\AA$. 
	
			\smallskip 	\smallskip \smallskip 
\noindent \textbf{Theorem \ref{thm:schattenlb} } (Schatten-$1$ Lower Bound)  \textit{ Fix any $1/\sqrt{n} \leq \eps \leq 1$.
Then any non-adaptive algorithm in the bounded entry model that distinguishes between 
\begin{enumerate}
    \item $\|\AA\|_{\mathcal{S}_1} \geq \eps n^{1.5}$,
    \item $\|\AA\|_{\mathcal{S}_1} \leq (1-\eps_0)\eps n^{1.5}$
\end{enumerate}
with probability $2/3$, where $\eps_0 = 1/\log^{O(1)}(1/\eps)$, must make at least $\tilde{\Omega}( 1/\eps^4)$ queries to $\AA$.
	}			\smallskip 		\smallskip \smallskip 
	
Note that one always has $\|\AA\|_{\mathcal{S}_1}  \leq n^{1.5}$ in the bounded entry model ($\|\AA\|_\infty \leq 1$), which accounts for the above scaling.
		Theorem \ref{thm:schattenlb} extends a lower bound of Balcan et. al. \cite{BalcanTesting}, which is $\Omega(n)$ for the special case of $\eps ,\eps_0= \Theta(1)$. Thus, for the range $\eps = \tilde{O}(n^{-1/4})$, our lower bound is an improvement. To the best of our 
			knowledge, Theorem \ref{thm:schattenlb} gives the first $\tilde{\Omega}(n^2)$ sampling lower bound for testing Schatten-$1$ in a non-degenerate range (i.e., for $\|\AA\|_{\mathcal{S}_1} > n$). 
			\begin{remark}
			We note that the lower bound of \cite{BalcanTesting} is stated for a slightly different version of gap (a ``$\eps$-$\ell_0$''-gap), where either $\|\AA\|_{\mathcal{S}_1} \geq c_1 n^{1.5}$ for a constant $c_1$, or at least $\eps n^2$ of the entries of $\AA$ must be changed (while respecting $\|\AA\|_\infty \leq 1$) so that the Schatten-$1$ is larger than $c_1 n^{1.5}$. However, their lower bound construction itself satisfies the ``Schatten-gap'' version as stated in Theorem \ref{thm:schattenlb}, where either  $\|\AA\|_{\mathcal{S}_1} \geq c_1 n^{1.5}$, or $\|\AA\|_{\mathcal{S}_1} \leq c_2 n^{1.5}$ and $c_1 > c_2$ are constants. From here, it is easy to see that this gap actually implies the $\ell_0$-gap (and this is used to obtain the $\ell_0$-gap lower bound in \cite{BalcanTesting}), since if  $\|\AA\|_{\mathcal{S}_1} \leq c_2 n^{1.5}$ then for any $\EE$ with $\|\EE\|_\infty \leq 2$ and $\|\EE\|_0 \leq  \epsilon n^2$ for a small enough constant $\epsilon < c_2^2$, we have $\|\AA + \EE \|_{\mathcal{S}_1} \leq \|\AA\|_{\mathcal{S}_1} + \|\EE\|_{\mathcal{S}_1}  \leq n^{1.5} (c_2 + 2 \sqrt{\epsilon} ) < c_1 n^{1.5}$. So Theorem \ref{thm:schattenlb} implies a lower bound of $\tilde{\Omega}(1/\eps^2)$ for distinguishing $\|\AA\|_{\mathcal{S}_1} \geq \sqrt{\eps} n^{1.5}$ from the case of needing to change at least $\tilde{\Omega}(\eps n^2)$ entries of $\AA$ so that $\|\AA\|_{\mathcal{S}_1} \geq \sqrt{\eps} n^{1.5}$. Thus, our lower bound \textit{also} extends the $\ell_0$-gap version of the results of \cite{BalcanTesting} for the range $\eps = \tilde{O}(1/\sqrt{n})$.
			\end{remark}
			
			In addition to Schatten-$1$ testing, the same lower bound construction and techniques from Theorem \ref{thm:lbmain} also result in new lower bounds for testing the Ky-Fan $s$ norm $\|\AA\|_{\KF(s)} = \sum_{i=1}^s \sigma_i(\AA)$, as well as the cost of the best rank-$s$ approximation $\|\AA - \AA_s \|_{F}^2 = \sum_{i> s} \sigma_i^2(\AA)$, stated below. In the following, $s$ is any value $1 \leq s \leq n/(\poly \log n)$, and $c$ is a fixed constant. 
			
				\smallskip \smallskip \smallskip 
\noindent \textbf{Theorem \ref{thm:kflb} }(Ky-Fan Lower Bound)   \textit{ 
Any non-adaptive algorithm in the bounded entry model which distinguishes between 
\begin{enumerate}
    \item $\|\AA\|_{\KF(s)} >  \frac{c}{\log s}n$
    \item $\|\AA\|_{\KF(s)} < (1-\eps_0) \cdot \frac{c}{\log s}n $
\end{enumerate}
with probability $2/3$, where $\eps_0 = \Theta(1/\log^2(s))$, must query at least $\tilde{\Omega}(s^2)$ entries of $\AA$.
}\smallskip \smallskip \smallskip 

	\smallskip \smallskip \smallskip 
\noindent \textbf{Theorem \ref{thm:taillb} }(Residual Error Lower Bound) \textit{ 
  Any non-adaptive algorithm in the bounded entry model which distinguishes between  
  \begin{enumerate}
      \item $\|\AA - \AA_s \|_{F}^2 >   \frac{c}{s \log s}n $
      \item $\|\AA - \AA_s \|_{F}^2 <  (1-\eps_0) \cdot \frac{c}{s \log s}n$
  \end{enumerate}
with probability $2/3$,  where $\eps_0 = 1/\log^{O(1)}(s)$,  must query at least $\tilde{\Omega}(s^2)$ entries of $\AA$.
 }\smallskip \smallskip \smallskip 
 

 Our lower bound for the Ky-Fan norm complements a Ky-Fan testing lower bound of \cite{li2016tight}, which is $\Omega(n^2/s^2)$ for distinguishing \textbf{1)} $\|\AA\|_{KF(s)} < 2.1 s \sqrt{n}$ from \textbf{1)} $\|\AA\|_{KF(s)} > 2.4s  \sqrt{n}$ when $s = O(\sqrt{n})$. Note their bound decreases with $s$, whereas ours increases, thus the two bounds are incomparable (although they match up to $\log(s)$ factors at $s = \Theta(\sqrt{n})$).\footnote{The bound from \cite{li2016tight} is stated in the sketching model, however the entries of the instance are bounded, thus it also applies to the sampling model considered here.}
 We also point out that there are (not quite matching) upper bounds for both the problems of Ky-Fan norm  and $s$-residual error testing in the bounded entry model, just based on a standard application of the Matrix Bernstein Inequality.\footnote{See Theorem 6.1.1 of \cite{tropp2015introduction}, applied to $S_k = a_{(k)}(a_{(k)})^\top$, where $a_{(k)}$ is the $k$-th row sampled in $\AA$; for the case of residual error, one equivalently applies matrix Bernstein inequality to estimate the head $\sum_{i \leq k} \sigma_i^2(\AA)$. These bounds can be tightened via the usage of interior Chernoff bounds \cite{gittens2011tail}.} 
 We leave the exact query complexity of these and related testing problems for functions of singular values in the bounded entry model as subject for future work.
			
			

	



\paragraph{A Remark on the $\ell_2^2$-Gap.} We note that
there appear to be several key barriers to improving the query complexity of PSD testing with $\ell_2^2$-gap beyond $O(1/\eps^4)$, which we briefly discuss here. 
First, in general, to preserve functions of the squared singular values of $\AA$ up to error $\eps n^2$, such as $\|\AA\|_F^2 = \sum_i \sigma_i^2(\AA)$ or $\|\AA\|_2^2 = \sigma_1^2(\AA)$, any algorithm which samples a submatrix must make  $\Omega(1/\eps^4)$ queries (see Lemma \ref{lem:lowerboundAA} for estimating $\sum_{i \leq k} \sigma_i^2$ for any $k$). In other words, detecting $\eps n^2$-sized perturbations in the spectrum of a matrix in general requires $\Omega(1/\eps^4)$ sized submatrices. This rules out improving the query complexity by detecting the $\eps n^2$ negative mass in $\AA$ via, for instance, testing if the sum of squares of top $k=1/\eps$ singular values has $\Theta(\eps n^2)$ less mass than it should if $\AA$ were PSD (even this may require $\Omega(k^2/\eps^4)$ queries, see the discussion in Section \ref{sec:techl2}).

The key issue at play in the above barrier appears to be the requirement of sampling submatrices.  Indeed, notice for the simplest case of $\|\AA\|_F^2 $, we can easily estimate $\|\AA\|_F^2$ to additive $\eps n^2$ via $O(1/\eps^2)$ queries to random entries of $\AA$. On the other hand, if these queries must form a submatrix, then it is easy to see that $\Omega(1/\eps^4)$ queries are necessary, simply from the problem of estimating $\|\AA\|_F^2$ whose rows (or columns) have values determined by a coin flip with bias either equal to $1/2$ or $1/2 + \eps$. On the other hand, for testing positive semi-definiteness, especially with one-sided error, the requirement of sampling a principal submatrix seems unavoidable.

 
In addition, a typical approach when studying spectral properties of submatrices is to first pass to a random row submatrix $\AA_{S \times [n]}$, argue that it preserves the desired property (up to scaling), and then iterate the process on a column submatrix $\AA_{S \times T}$. Unfortunately, these types of arguments are not appropriate when dealing with eigenvalues of $\AA$, since after passing to the rectangular matrix $\AA_{S \times [n]}$, any notion of negativity of the eigenvalues has now been lost. This forces one to argue indirectly about functions of the singular values of $\AA_{S \times [n]}$, returning to the original difficulty described above.
We leave it as an open problem to determine the exact non-adaptive query complexity of PSD testing with $\ell_2^2$-gap. For a further discussion of these barriers and open problems, see Section \ref{sec:conclusion}.





\subsection{Connections to Optimization, Euclidean Metrics and Linear Algebra}

We now describe some explicit instances where our algorithms may be useful for testing positive semi-definiteness. We emphasize that in general, the distance between $\AA$ and the PSD cone may be too small to verify via our testers. However, when the input matrices satisfy a non-trivial gap from the PSD cone, we can speed up some basic algorithmic primitives.
The first is testing feasibility of the PSD constraint in a Semi-Definite Program (SDP) with sublinear queries and time, so long as the variable matrix has bounded entries. Importantly, our algorithms also output a separating hyperplane to the PSD cone.
\begin{corollary}[Feasibility and Separating Hyperplanes for SDPs]
Given a SDP $\mathcal{S}$, let $\X \in \R^{n \times n}$ be a symmetric matrix that violates the PSD constraint for $\mathcal{S}$. Further, suppose $\|\X \|_{\infty} \leq 1$ and $\X$ is $\eps n^2$-far in entry-wise $\ell_2^2$ distance to the PSD cone. Then, there exists an algorithm that queries $\widetilde{O}(1/\epsilon^4)$ entries in $\X$ and runs in $\wt{O}(1/\epsilon^{2\omega})$  time, and with probability $9/10$, outputs a vector $\tilde{v}$ such that $\tilde{v}^{T} \X \tilde{v} < 0$. 
Moreover, if $\lambda_{\min}(\X) <  -\epsilon n$, then there is an algorithm yielding the same guarantee, that queries $\widetilde{O}(1/\epsilon^2)$ entries in $\X$ and runs in $\wt{O}(1/\epsilon^{\omega})$  time. 
\end{corollary}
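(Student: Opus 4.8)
The plan is to observe that this corollary is essentially a direct application of Theorems \ref{thm:l2Main} and \ref{thm:inftymain} to the variable matrix $\X$, exploiting the fact that both testers are \emph{certifying}: whenever they output \textsf{Not PSD}, they return a principal submatrix $\X_{T \times T}$ which is itself not PSD, and from such a submatrix a separating direction is immediate.

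First I would handle the general $\ell_2^2$ case. By hypothesis $\X$ is symmetric, violates the PSD constraint, and satisfies $\min_{\BB \succeq 0}\|\X - \BB\|_F^2 \geq \eps n^2$, so $\X$ falls into case $(2)$ of Problem \ref{prob:l2}. Run the algorithm of Theorem \ref{thm:l2Main} on $\X$ a constant number of times, say $c$, with fresh randomness each time; this uses $\tilde{O}(1/\eps^4)$ queries and $\tilde{O}(1/\eps^{2\omega})$ time in total. Since the tester has one-sided error, each run independently returns \textsf{Not PSD} with probability at least $2/3$, so with probability $1 - (1/3)^c \geq 9/10$ at least one run outputs a principal submatrix $\X_{T \times T}$ that is not PSD, where $|T| = \tilde{O}(1/\eps^2)$. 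Given such a $T$, compute an eigendecomposition of $\X_{T \times T}$ and let $x \in \R^{|T|}$ be a unit eigenvector with negative eigenvalue; this costs $O(|T|^{\omega}) = \tilde{O}(1/\eps^{2\omega})$ time and hence does not dominate. Padding $x$ with zeros on the coordinates outside $T$ gives $\tilde v \in \R^n$ with $\tilde v^{\top} \X \tilde v = x^{\top} \X_{T \times T} x < 0$, as required.

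Next, the improved bound under the spectral-norm promise. If $\lambda_{\min}(\X) < -\eps n$, then by the variational principle (as discussed after Problem \ref{prob:linf}) $\X$ is $\eps$-far from the PSD cone in spectral norm, i.e.\ it satisfies case $(2)$ of Problem \ref{prob:linf}. Now repeat the argument verbatim, replacing Theorem \ref{thm:l2Main} by Theorem \ref{thm:inftymain}: a constant number of independent runs uses $\tilde{O}(1/\eps^2)$ queries and $\tilde{O}(1/\eps^{\omega})$ time, succeeds with probability at least $9/10$ in producing a non-PSD principal submatrix $\X_{T \times T}$ with $|T| = \tilde{O}(1/\eps)$, and extracting a negative-eigenvalue eigenvector — at cost $O(|T|^{\omega}) = \tilde{O}(1/\eps^{\omega})$ — yields the separating direction $\tilde v$. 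Non-adaptivity is inherited directly from Theorems \ref{thm:l2Main} and \ref{thm:inftymain}.

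There is no real obstacle here; the only points requiring care are bookkeeping ones — that amplifying the success probability from $2/3$ to $9/10$ costs only a constant factor (immediate, since we just output the first certificate found), and that the post-processing of diagonalizing the returned submatrix is dominated by the stated running times (immediate, since the certificate submatrices have dimension $\tilde{O}(1/\eps^2)$ and $\tilde{O}(1/\eps)$, respectively).
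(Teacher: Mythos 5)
Your proof is correct and is essentially the intended derivation: the paper treats this corollary as an immediate consequence of Theorems \ref{thm:l2Main} and \ref{thm:inftymain} together with the fact that both testers are certifying (they return a non-PSD principal submatrix, from which a negative direction is obtained by eigendecomposition and zero-padding), and you reconstruct exactly that argument, including the routine constant-factor amplification to $9/10$ and the observation that diagonalizing the $\tilde{O}(1/\eps^2)$- or $\tilde{O}(1/\eps)$-sized certificate stays within the stated runtime.
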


While in the worst case, our algorithm may need to read the whole matrix to exactly test if $\X$ is PSD, there may be applications where relaxing the PSD constraint to the convex set of matrices which are close to the PSD cone in Euclidean distance is acceptable. 
Moreover, our algorithm may be run as a preliminary step at each iteration of an SDP solver to check if the PSD constraint is badly violated, resulting in speed-ups by avoiding an expensive eigendecomposition of $\X$ whenever our algorithm outputs a separating hyperplane~\cite{vandenberghe1996semidefinite}.

Next, we consider the problem of testing whether an arbitrary finite metric $d$ over $n$ points, $x_1, \dots x_n \in \mathbb{R}^d$ is embeddable into Euclidean Space. Testing if a metric is Euclidean has a myriad of applications, such as determining whether dimensionality reduction techniques such as  Johnson-Lindenstrauss can be used \cite{parnas2003testing}, checking if efficient Euclidean TSP solvers can be applied \cite{arora1998polynomial}, and more recently, computing a low-rank approximation in sublinear time \cite{bakshi2018sublinear,indyk2019sample}. It is well known (Schoenberg's criterion~\cite{dattorro2010convex}) that given a distance matrix $\DD \in \R^{n \times n}$ such that $\DD_{i,j} = d(x_i, x_j)$, the points are isometrically embeddable into Euclidean space if and only If $\G = \mathbf{1} \cdot \DD_{1,*} + \DD_{1,*}^\top \cdot \mathbf{1}^\top - \DD  \succeq 0$, where $ \DD_{1,*}$ is the first row of $\DD$. Notice that embeddability is scale invariant, allowing one to scale distances to ensure boundedness. Furthermore, since our algorithms sample submatrices and check for non-positive semi-definiteness, the tester need not know this scaling in advance, and gives guarantees for distinguishing definiteness if the necessary gap is satisfied after hypothetically scaling the entries.

\begin{corollary}[Euclidean Embeddability of Finite Metrics]
Given a finite metric $d$ on $n$ points $\{x_1,\dots,x_n\}$, let $\DD \in \R^{n \times n}$ be the corresponding distance matrix, scaled so that $\|\DD\|_\infty \leq 1/3$, and let $\G= \mathbf{1}\DD_{1,*} + \DD_{1,*}^\top\mathbf{1}^\top - \DD$. Then if $\min_{\BB \succeq 0} \|\G - \BB\|_F^2 \geq \eps n^2$, there exists an algorithm that queries $\wt{O}(1/\epsilon^4)$ entries in $\AA$ and with probability $9/10$, determines the non-embeddability  of $\{x_1,\dots,x_n\}$ into Euclidean space.  Further, the algorithm runs in time $\wt{O}(1/\epsilon^{2\omega})$.
\end{corollary}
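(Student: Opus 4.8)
The plan is to invoke Theorem~\ref{thm:l2Main} directly on the matrix $\G$, using Schoenberg's criterion to translate positive semi-definiteness of $\G$ into Euclidean embeddability. First I would check the two hypotheses required by Problem~\ref{prob:l2}. Symmetry of $\G$ is immediate: since $\DD$ is a distance matrix we have $\DD_{ij}=\DD_{ji}$, and $\G_{ij}=\DD_{1,j}+\DD_{1,i}-\DD_{ij}$ is therefore symmetric in $i$ and $j$. Boundedness follows from the assumed scaling $\|\DD\|_\infty\le 1/3$: every entry of $\G$ is a signed sum of exactly three entries of $\DD$, so $|\G_{ij}|\le |\DD_{1,j}|+|\DD_{1,i}|+|\DD_{ij}|\le 1$, i.e.\ $\|\G\|_\infty\le 1$. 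By Schoenberg's criterion, $\{x_1,\dots,x_n\}$ embeds isometrically into Euclidean space if and only if $\G\succeq 0$, so the input falls into exactly one of the two cases of Problem~\ref{prob:l2}: either the metric is Euclidean (and $\G$ is PSD), or the stated gap $\min_{\BB\succeq 0}\|\G-\BB\|_F^2\ge \eps n^2$ holds.

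Next I would simulate the tester of Theorem~\ref{thm:l2Main} on $\G$, noting that although we have query access only to $\DD$, any entry $\G_{ij}$ is determined by the three entries $\DD_{1,i}$, $\DD_{1,j}$, $\DD_{ij}$ of $\DD$. Hence if the (non-adaptive) tester reads a set $Q$ of $\tilde{O}(1/\eps^4)$ entries of $\G$, forming a collection of principal submatrices, then it suffices to read at most $3|Q| = \tilde{O}(1/\eps^4)$ entries of $\DD$, and each entry of $\G$ is assembled in $O(1)$ extra time, so the running time stays $\tilde{O}(1/\eps^{2\omega})$. By the one-sided guarantee of Theorem~\ref{thm:l2Main}: if the metric is Euclidean the tester always outputs \textsf{PSD}; and if the gap holds it outputs \textsf{Not PSD} with probability at least $2/3$, together with a principal submatrix of $\G$ that is not PSD---equivalently a vector $\tilde{v}$, obtained by zero-padding that submatrix's negative eigenvector, with $\tilde{v}^\top \G \tilde{v} < 0$---which certifies $\G \not\succeq 0$ and hence non-embeddability of the metric.

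To boost the confidence from $2/3$ to $9/10$ I would run the tester a constant number of independent times and report ``not embeddable'' if any run does; since the error is one-sided, repetition only increases the detection probability in the far case and never manufactures a false certificate.

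The only points requiring care, rather than a genuine obstacle, are the bookkeeping around the scaling constant and the query overhead: the scaling threshold must be small enough (here $1/3$) that the threefold sum defining $\G$ stays inside the bounded-entry model $\|\G\|_\infty \le 1$, and one must confirm that reconstructing a principal submatrix of $\G$ from $\DD$ costs only a constant factor in queries. Beyond these routine verifications the corollary is an essentially immediate consequence of Theorem~\ref{thm:l2Main}.
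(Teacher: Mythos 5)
Your proof is correct, and since the paper states this corollary without an explicit proof (it is presented as a direct consequence of Theorem~\ref{thm:l2Main} together with Schoenberg's criterion), there is no alternative argument in the paper to compare against. The verifications you carry out---symmetry of $\G$, the bound $\|\G\|_\infty\le 1$ from the $1/3$ scaling, the translation of embeddability to $\G\succeq 0$, the factor-$3$ (in fact even smaller, since a principal submatrix $\G_{S\times S}$ is determined by $\DD_{S\times S}$ together with $\DD_{1,S}$) overhead in simulating queries to $\G$ via queries to $\DD$, and the preservation of one-sided error---are exactly what is needed. One small note: the restated Theorem~\ref{thm:l2Main} in the body already achieves success probability $9/10$, so the boosting step is unnecessary (you were reading the $2/3$ bound from the contributions section, a minor inconsistency in the paper, not an error on your part), but your observation that one-sided error permits costless amplification is correct in any case.
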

\begin{remark}
An intriguing question is to characterized geometric properties of finite metrics based on the $\ell_2^2$-distance of the Schoenberg matrix $\G$ from the PSD cone. For instance, given a finite metric with Schoenberg matrix $\G$ that is close to being PSD in $\ell^2_2$-distance, can we conclude that the metric has a low worst or average case distortion embedding into Euclidean space?
\end{remark}
\begin{remark}
Since rescaling entries to be bounded only affects the gap parameter $\eps$, in both of the above cases, so long as the magnitude of the entries in $\X,\DD$ do not scale with $n$, the running time of our algorithms is still sublinear in the input. 
\end{remark}

Finally, several recent works have focused on obtaining sublinear time algorithms for low-rank approximation when the input matrix is PSD \cite{musco2017sublinear, bakshi2019robust}. However, such algorithms only succeed when the input is PSD or close to PSD (in $\ell_2^2$), and it is unknown how to verify whether these algorithm succeeded in sublinear time. Therefore, our tester can be used as a pre-processing step to determine input instances where the aforementioned algorithms provably will (or will not) succeed.

\subsection{Related work}
Property testing in the bounded entry model was first considered in \cite{BalcanTesting} to study the query complexity of testing spectral properties of matrices, such as stable rank (the value $\|\AA\|_F^2/\|\AA\|_2^2)$ and Schatten $p$ norms. A related model, known as the \textit{bounded row model}, where rows instead of entries are required to be bounded, was studied by Li, Wang, and Woodruff \cite{li2014improved}, who gave tight bounds for testing stable rank in this model. In addition, the problem of testing the rank of a matrix from a small number of queries has been well studied \cite{parnas2003testing,krauthgamer2003property,li2014improved,barman2016testing}, as well the problem of estimating the rank via a random submatrix \cite{balcan2011learning,balcan2016noise}. Notice that since rank is not a smooth spectral property, hiding an unbounded value in a single entry of $\AA$ cannot drastically alter the rank. Thus, for testing rank, the condition of boundedness is not required.

More generally, the bounded entry model is the natural sampling analogue for the \textit{linear sketching model}, where the algorithm gets to choose a matrix $\S \in \R^{t \times n^2}$, where $t$ is the number of ``queries'', and then observes the product $\S \cdot \textsf{vec}(\AA)$, where $\textsf{vec}(\AA)$ is the
vectorization of $\AA$~\cite{li2017embeddings,li2016tight, braverman2018matrix,li2016approximating,li2014improved,li2014sketching,braverman2019schatten,li2019approximating}. The model has important applications to streaming and distributed algorithms. Understanding the query complexity of sketching problems, such as estimating spectral norms and the top singular values \cite{andoni2013eigenvalues,li2014sketching,li2016tight}, estimating Schatten and Ky-Fan norms \cite{li2016tight, li2017embeddings,li2016approximating, braverman2019schatten}, estimating $\ell_p$ norms \cite{alon1996space, indyk2006stable, kane2010exact,  jayaram2019towards,ben2020framework}, and $\ell_p$ sampling \cite{monemizadeh20101, Jowhari:2011,jayaram2018perfect,jayaram2019weighted}, has been a topic of intense study. For the problem of sketching \emph{eigenvalues} (with their signs), perhaps the most related result is \cite{andoni2013eigenvalues}, which gives point-wise estimates of the top eigenvalues. Notice that linear sketching can simulate sampling by setting the rows of $\S$ to be standard basis vectors, however sketching is in general a much stronger query model. 
Note that to apply a linear sketch, unlike in sampling, one must read all the entries of $\AA$, which does not yield sublinear algorithms.

A special case of the sketching model is the \textit{matrix-vector product} model, which has been studied extensively in the context of compressed sensing~\cite{candes2006stable, eldar2012compressed} and sparse recovery~\cite{gilbert2010sparse}. Here, one chooses vectors $v_1,\dots,v_k$ and observes the products $\AA v_1,\dots,\AA v_k$. Like sketching, matrix-vector product queries are a much stronger access model than sampling. Recently, in the matrix-vector product model, Sun et. al. considered testing various graph and matrix properties~\cite{sun2019querying}, 
and Han et. al. considered approximating spectral sums and testing positive semi-definiteness~\cite{han2017approximating}. 

Lastly, while there has been considerable work on understanding concentration of norms and singular values of random matrices, not as much is known about their \textit{eigenvalues}. 
Progress in understanding the behavior of singular values of random matrices includes concentration bounds for spectral norms of submatrices \cite{rudelson2007sampling,tropp2008norms}, concentration bounds for extreme singular values \cite{gittens2011tail,tropp2015introduction,vershynin2010introduction,garg2018matrix,kyng2018matrix}, non-commutative Khintchine inequalities for Schatten-$p$ norms \cite{lust1991non,pisier2009remarks,pisier2017non}, as well as Kadison-Singer type discrepancy bounds \cite{marcus2015interlacing,kyng2020four,song2020hyperbolic}. 
These random matrix concentration bounds have resulted in improved algorithms for many fundamental problems, such as low-rank approximation and regression \cite{clarkson2017low,  musco2017sublinear, bakshi2018sublinear, indyk2019sample, diao2019optimal} and spectral sparsification
\cite{spielman2011spectral,batson2013spectral,andoni2016sketching}. However, in general, understanding behavior of negative eigenvalues of random matrices and submatrices remains largely an open problem. 

\subsection{Technical Overview} 
In this section, we describe the techniques used in our non-adaptive testing algorithms for the $\ell_\infty$ and more general $\ell^2_2$ gap problem, as well as the techniques involved in our lower bound construction for the $\ell^2_2$-gap. 

\subsubsection{PSD Testing with  \texorpdfstring{$\ell_\infty$}{L-inf}  Gap}\label{sec:techlinf}

Recall in the 
general statement of the $\ell_\infty$-gap problem, our task is to distinguish between $\AA \in \R^{n \times n}$ satisfying $x^\top \AA x \geq 0$ for all $ x \in \R^n$, or $x^\top \AA x \leq - \eps n$ for some unit vector $x \in \R^n$. 
Since if  $x^\top \AA x \geq 0$ for all $ x \in \R^n$ the same holds true for all principal submatrices of $\AA$, it suffices to show that in the $\eps$-far case we can find a $k \times k$ principal submatrix $\AA_{T \times T}$ such that $y^\top \AA_{T \times T} y < 0$ for some $y \in \R^{k}$.\footnote{This can be efficiently checked by computing the eigenvalues of $\AA_{T \times T} + \AA_{T \times T}^\top$.} 

\paragraph{Warmup: A $O(1/\eps^3)$ query algorithm.} 
Since we know $x^\top \AA x \leq - \eps n$ for some fixed $x$, one natural approach would be to show that the quadratic form with the \textit{same} vector $x$, projected onto to a random subset $T \subset [n]$ of its coordinates, is still negative. Specifically, we would like to show that the quadratic form $\mathcal{Q}_T(x) = x^\top_T\AA_{T \times T} x_T$, of $x$ with a random principal submatrix $\AA_{T \times T}$ for $T \subset [n]$ will continue to be negative. If $\mathcal{Q}_T(x) < 0$, then clearly $\AA_{T \times T}$ is not PSD. Now while our algorithm does not know the target vector $x$, we can still analyze the concentration of the scalar random variable $\mathcal{Q}_T(x)$ over the choice of $T$, and show that it is negative with good probability. 


\smallskip \smallskip \smallskip 
\noindent \textbf{Proposition \ref{prop:exp} and Lemma \ref{lem:var} (informal)}  \textit{ 
	Suppose $\AA \in \R^{n \times n}$ satisfies $\|\AA\|_\infty \leq 1$ and $x^\top \AA x \leq - \eps n$  where $\|x\|_2 \leq 1$. Then if $k \geq 6/\eps$, and if $T \subset [n]$ is a random sample of expected size $k$, we have  $\ex{\mathcal{Q}_T(x)} \leq - \frac{\eps k^2}{4n}$ and $\text{Var}(\mathcal{Q}_T(x)) \leq O(\frac{k^3}{n^2})$.
	}
\smallskip \smallskip \smallskip

	
By the above Proposition, after setting $k = \Theta(1/\eps^2)$, we have that $|\ex{\mathcal{Q}_T(x)}|^2  = \Omega( \text{Var}(\mathcal{Q}_T(x))$, and so by Chebyshev's inequality, with constant probability we will have $\mathcal{Q}_T(x)< 0$. This results in a $k^2 = O(1/\eps^4)$ query tester. To improve the complexity, we could instead set $k=\Theta(1/\eps)$ and re-sample $T$ for $k$ times independently to reduce the variance. Namely, one can sample submatrices $T_1,T_2,\dots,T_k$, and analyze  $\frac{1}{k}\sum_{i=1}^k\mathcal{Q}_{T_i}(x)$. The variance of this sum goes down to $O(\frac{k^2}{n^2})$, so, again by Chebyshev's inequality, the average of these quadratic forms will be negative with constant probability. If this occurs, then at least one of the quadratic forms must be negative, from which we can conclude that at least one of $\AA_{T_i \times T_i}$ will fail to be PSD, now using only $O(1/\eps^3)$ queries.

	\paragraph{A Family of Hard Instances}
	One could now hope for an even tighter analysis of the concentration of $\mathcal{Q}_T(x)$, so that $O(1/\eps^2)$ total queries would be sufficient. Unfortunately, the situation is not so simple, and in fact the two aforementioned testers are tight in the query complexity for the matrix dimensions they sample. Consider the hard instance $\AA$ in the left of Figure \ref{fig:Matrix1}, which is equal to the identity on the diagonal, and is zero elsewhere except for a small subset $S \subset [n]$ of $|S| = \eps^2 n$ rows and columns, where we have $\AA_{S \times \overline{S}}=\AA_{\overline{S}\times S} = - \mathbf{1}$, where $\overline{S}$ is the complement of $S$. Notice that if we set $x_i^2 = 1/(2n)$ for $i \notin S$ and $x_i^2 = 1/(2\eps^2 n)$ for $i \in S$, then $x^\top \AA x \leq - \eps n/4$. However, in order to even see a single entry from $S$, one must sample from at least $\Omega(1/\eps^2)$ rows or columns. In fact, this instance itself gives rise to a $\Omega(1/\eps^2)$ lower bound for any testing algorithm, even for adaptive algorithms (Theorem \ref{thm:linftyLB}).

\begin{figure}
\centering
\includegraphics[scale = .17 ]{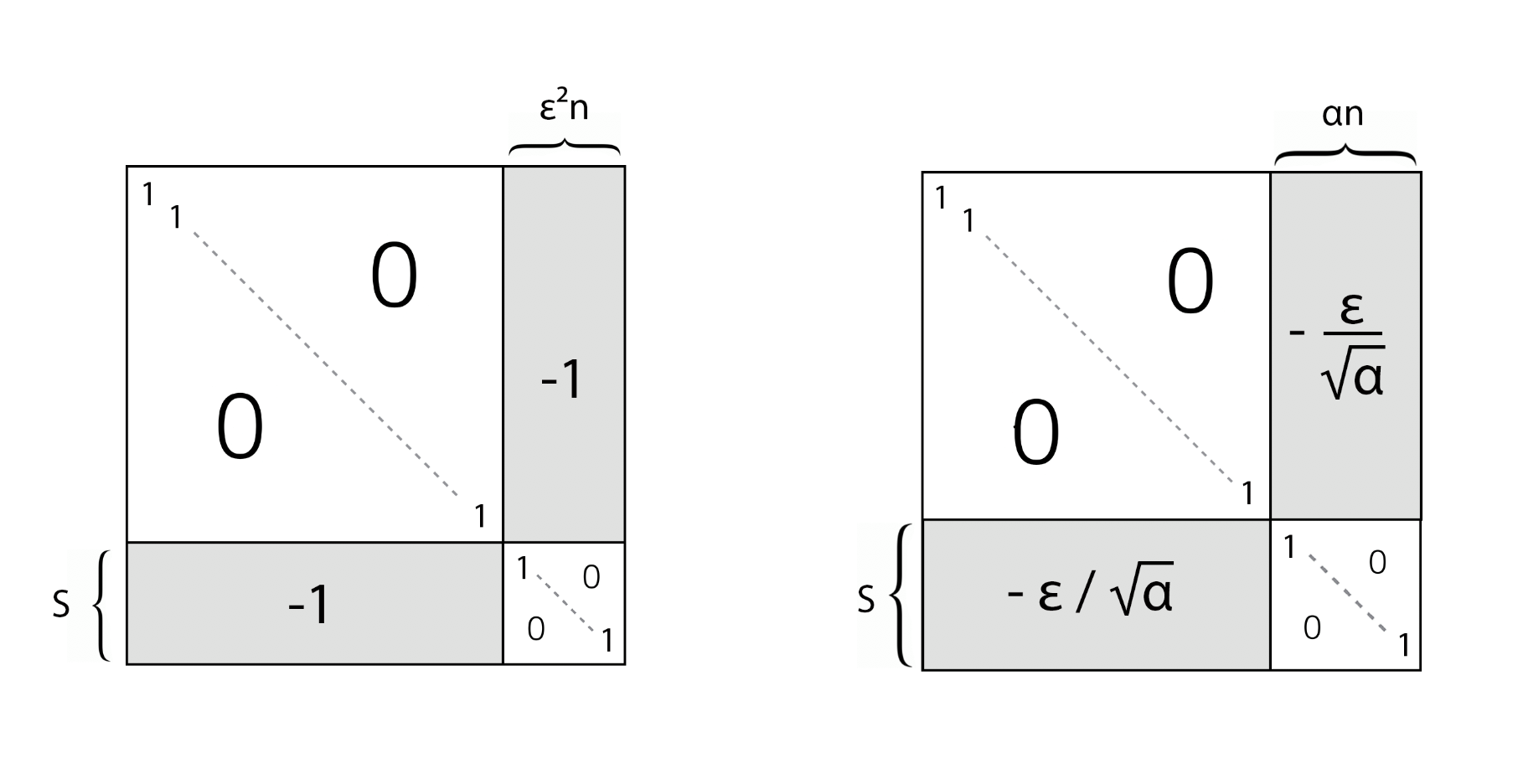}

	\caption{Hard instances for $\ell_\infty$ testing. On the left, the negative mass is highly concentrated in $|S| = \eps^2 n$ rows and columns, and on the right it more spread out over $|S| = \alpha n$, where $\eps^2 \leq \alpha \leq \eps$. }
	\label{fig:Matrix1}
	
\end{figure}
	
	The difficulty of the above instance is that the negative mass of $x^\top \AA x$ is hidden in only a $\eps^2$-fraction of $\AA$. On the other hand, since the negative entries are so large and concentrated, one need only  sample $O(1)$ entries from a single row $i \in S$ in order for $\AA_{T \times T}$ to be non-PSD in the prior example. Thus, an algorithm for such instances would be to sample $O(1/\eps^2)$ principal submatrices, each of \textit{constant} size. 
	On the other hand, the set $S$ could also be more spread out; namely, we could have $|S| = \alpha n$ for any $\eps^2 \leq \alpha \leq \eps$,	but where each entry in $\AA_{S \times \overline{S}}$ is set to $-\eps/\sqrt{\alpha}$ (see the matrix in the right side of Figure \ref{fig:Matrix1}). If instead, we define $x_i^2 = 1/(2\alpha n)$ for $i \in S$, we still have $x^\top \AA x < - \eps n/4$. However, now any submatrix $\AA_{T \times T}$ with $|T \cap S| = 1$ must have at least $|T| \geq \alpha/\eps^2$ rows and columns, otherwise $\AA_{T \times T}$ would be PSD due to the identity on the diagonal.

	

	The aforementioned instances suggest the following approach: query matrices at $O(\log \frac{1}{\eps} )$ different scales of subsampling. Specifically, for each $\eps^2 \leq \alpha = 2^i \leq \eps$, we sample $\tilde{O}(\frac{\eps^2}{\alpha^2})$ independent $k \times k$ submatrices, each of size $k = \tilde{O}(\alpha/\eps^2)$, giving a total complexity of $\tilde{O}(\frac{1}{\eps^2})$. The analysis now proceeds by a complete characterization of the ways in which $x^\top \AA x$ can be negative. Specifically, we prove the following: either a substantial fraction of the negative mass is hidden inside of a small set of rows and columns $S$ with $|S| < \eps n$, or it is the case that $\text{Var}(\mathcal{Q}_T(x))$ is small enough so that a single $k \times k$ submatrix will already be non-PSD with good probability when $k \gtrsim 1/\eps$. 
	Given this classification, it suffices to demonstrate a level of subsampling which will find a non-PSD submatrix when the negative mass is concentrated inside inside a small set $S$.


	\paragraph{Eigenvector Switching.}
	
To analyze this case, ideally, one would like to demonstrate that conditioned on $T$ intersecting $S$ at some level of subsampling, we will have $\mathcal{Q}_T(x) < 0$ with good probability. Unfortunately, the approach of analyzing the quadratic form with respects to $x$ will no longer be possible; in fact, $\mathcal{Q}_T(x)$ may never be negative conditioned on $|T \cap S|  = 1$ (unless $|T| > 1/\eps$, which we cannot afford in this case). The complication arises from the fact that the coordinates of $x_i$ in the small set $S$ can be extremely large, and thus the diagonal contribution of $x_i^2 \AA_{i,i}$ will dominate the quadratic form of a small submatrix. For instance, if $\AA_{T \times T}$ is a sample with $k =|T| = O(1)$ which intersects the set $S$ in the leftmost matrix in Figure \ref{fig:Matrix1}, where $x_i = 1/(\eps \sqrt{n})$ for $i \in S$ and $x_i = 1/ \sqrt{n}$ otherwise, then $\mathcal{Q}_T(x) \approx k/n -(k/\sqrt{n}) x_i  + \AA_{i,i} x_i^2$, which is dominated by the diagonal term $\AA_{i,i} x_i^2 = 1 /(\eps^2n)$. Thus, while $\AA_{T \times T}$ itself is not PSD, we have that $\mathcal{Q}_T(x)> 0$.


	To handle this, we must and analyze the quadratic form $\mathcal{Q}_T(\cdot)$ with respect to \textit{another} direction $y$. The vector $y$ may not even satisfy $y^\top \AA y < 0$, however conditioned on $|T \cap S| \geq 1$, we will have $\mathcal{Q}_T(y) < 0$ with good probability. Clearly, we must scale down the large coordinates $x_i$ for $i \in S$. However, one cannot scale too low, otherwise the negative contribution of the rows $i \in S$ would become too small.
	The correct scaling is then a careful balancing act between the contributions of the different portions of $\AA_{T \times T}$. Informally, since the $x_i$'s for $i \in S$ make up a $|S|/n$ fraction of all coordinates, they can be as large as $x_i^2 \geq( n/|S| )\cdot(1/n)$. However, inside of the smaller submatrix $\AA_{T \times T}$, then conditioned on $i \in T$, since $|T|$ is small $x_i$ now makes up a larger $1/|T|$ fraction of the submatrix, thus we should scale down $x_i$ to only be $x_i^2 \approx |T|/n $.  With this scaling in mind, we (very roughly) set $y_i^2 = (|S|/n) \cdot (|T|/n)$ if $i \in S$, and set $y_i = x_i$ otherwise. The remaining argument then requires a careful analysis of the contribution of entries of $\AA$ outside of $S$ to show that the target vector $y$ indeed satisfies $\mathcal{Q}_T(y) <0$ with good probability conditioned on $T$ intersecting $S$.

\subsubsection{PSD Testing with \texorpdfstring{$\ell_2$}{L-2}  Gap}\label{sec:techl2}
Recall in the $\ell_2$ gap problem, our task is to distinguish between $\AA$ being PSD, and  $\AA$ being $\eps$-far in $\ell_2^2$ distance from any PSD matrix, namely that $\sum_{i : \lambda_i(\AA) < 0} \lambda_i^2(\AA) > \eps n^2$. In what follows, we refer to the quantity $\sum_{i : \lambda_i(\AA) < 0} \lambda_i^2(\AA)$ as the \textit{negative mass} of $\AA$. First observe that in the special case that we had a ``large'' negative eigenvalue, say $\lambda_{\min}(\AA) = - \sqrt{\eps} n$, then by applying our testing algorithm for $\ell_\infty$-gap, we could find a non-PSD submatrix with only $\tilde{O}(1/\eps)$ queries. However, in general the negative mass of $\AA$ may be spread out over many smaller eigenvalues. 
Thus, we cannot hope to apply our earlier approach for the $\ell_\infty$-gap, which preserved the quadratic form $\mathcal{Q}_T(x) = x^\top_T \AA_{T \times T} x_T$ with respects to a fixed direction $x$. Instead, our approach will be to show that if $\AA$ is $\eps$-far from PSD in $\ell_2^2$, then the singular values of $\AA$ must be ``far'' from PSD, in some other notion of distance, allowing us to indirectly infer the existence of negative eigenvalues in submatrices. 


\paragraph{PSD matrices are top-heavy, and a reduction to estimating the tail.} Our first step is to show that if $\AA \in \R^{n \times n}$ is PSD, then the $t$-``tail'' of $\AA$, defined as $\sum_{i > t} \sigma_i^2(\AA)$, cannot be too large. This can be derived from the following fact: if $\AA$ is PSD then we can bound the Schatten-$1$ norm of $\AA$ by $ \|\AA\|_{\mathcal{S}_1} = \sum_i \sigma_i(\AA) = \text{Tr}(\AA)$, which is at most $n$ if $\|\AA\|_\infty \leq 1$. This simple fact will prove highly useful, since whenever we can demonstrate that the Schatten-$1$ norm of a submatrix $\AA_{T \times T}$ is larger than $|T|$, we may immediately conclude the that $\AA_{T \times T}$ is not PSD. 
In addition, it implies:

\smallskip
\noindent \textbf{Proposition \ref{prop:topheavy}} (PSD matrices are top-heavy) \textit{
Fix any $n  \in \mathbb{N}$, $1 \leq t \leq n$, and $\DD \in \R^{n \times n}$. Then if $\DD$ is PSD, we have
	\[\sum_{i > t} \sigma_i(\DD)^2 \leq \frac{1}{t}\left( \Tr(\DD)\right)^2\] 
In particular, if $\DD$ has bounded entries $\|\DD\|_\infty \leq 1$, we have $\sum_{i > t} \sigma_i(\DD)^2 \leq \frac{1}{t}n^2$.
}\smallskip

On the other hand, suppose that $\AA$ is $\eps$-far from PSD, and let $t > 10 / \eps$. Then if no eigenvalue is smaller than $- \eps n/100$, a condition which can be checked with $\tilde{O}(1/\eps^2)$ queries by first running our $\ell_\infty$-gap tester, then the negative mass must be spread out, and it must be the case that a substantial fraction of the negative mass of $\AA$ is contained in the bottom $n-t$ singular values. Specifically, we must have $\sum_{i > t} \sigma_i(\AA)^2 > (\eps/2) n^2$, whereas any PSD matrix $\DD$ would have to satisfy  $\sum_{i > t} \sigma_i^2(\DD) \leq (\eps/10) n^2$ by the above Proposition. Thus, after first running our $\ell_\infty$ tester, it will suffices to estimate the tail  $\sum_{i > t} \sigma_i^2(\AA)$. Equivelantly, since  $\|\AA\|_F^2 = \sum_{i  } \sigma_i^2(\AA)$ can be efficiently estimated, it also suffices to estimate the ``head'' $\sum_{i \leq t} \sigma_i^2(\AA)$  to additive $O(\eps n^2)$.

In order to accomplish this, one could utilize the tools from random matrix concentration, such as Matrix Bernstein's inequality \cite{tropp2015introduction}, which allows one to estimate each $\sigma_i^2$  to error $\eta n^2 $ by taking a random rectangular $O(1/\eta^2) \times O(1/\eta^2)$ sized submatrix. The error in estimating $\sum_{i \leq t} \sigma_i^2(\AA)$ is then $t \eta n^2$, thus one needs to set $\eta = O(\eps/t)$, giving a $O(1/\eps^8)$ tester with two-sided error. Using a careful bucketing analysis on the error, along with the more powerful Interior Matrix Chernoff bounds of Gittens and Tropp \cite{gittens2011tail}, one can improve this to $O(t^2/\eps^4) = O(1/\eps^6)$. However, substantial improvements on unconditional estimation of $\sum_{i \leq t} \sigma_i^2(\AA)$ seem unlikely. In fact, we demonstrate that event for $t=1$ (spectral norm estimation), tools such as matrix concentration inequalities which sample submatrices of $\AA$, must make $\Omega(1/\eps^4)$ queries (Lemma \ref{lem:lowerboundAA}), which rules out, for instance, a $o(t^2/\eps^4)$ upper bound for general $t$. 
Thus, instead of unconditional estimation, our main insight is to demonstrate conditions under which  $\sum_{i \leq t} \sigma_i^2(\AA)$ can be efficiently estimated. When these conditions do not hold, we show that it is because the Schatten-$1$ norm of our sampled submatrix must be too large, from which we can deduce the existence of negative eigenvalues in our query.

In the first case, if the $t$-th singular value is not too large, say $\sigma_{t+1}(\AA) \leq 10n/t$, we show that the (re-scaled) tail $\frac{n^2}{k^2}\sum_{i > t} \sigma_i^2(\AA_{S \times T})$ of a random rectangular matrix, where $|S| = |T| = k = O(1/\eps^2)$, approximates the tail of $\AA$ to error $O(\eps n^2)$. Our argument relies on splitting $\AA$ into head and tail pieces $\AA = \AA_t + \AA_{-t}$, where $\AA_t$ is $\AA$ projected onto the top-$t$ eigenvectors of $\AA$. We demonstrate that the spectral mass of each is preserved after passing to a random row submatrix, and additionally demonstrate that $ \sigma_{\max}(\AA_{-t}) = \sigma_{t+1}(\AA)$ does not grow too much using spectral decay inequalities for random submatrices \cite{rudelson2007sampling}. This forces the spectrum of $(\AA_{-t})_{S \times [n]}$ to be well spread out, allowing us to apply interlacing inequalities to demonstrate that after adding $(\AA_{t})_{S \times [n]}$ back in, the resulting tail is still sufficiently large, and then iterate the argument when sampling columns to obtain $\AA_{S \times T}$.

On the other hand, if $\sigma_{t+1}(\AA)$ is too large, then after moving to a random row submatrix the spectral norm of $\AA_{-t}$ can concentrate highly in its top eigenvalues, which can then be absorbed by the top $t$ eigenvalues of $\AA_t$, stealing too much mass from the tail. Instead, note that if $\sigma_{t+1}(\AA) \geq 10n/t$, then the Schatten norm of $\AA$ must be large, namely $\sum_i \sigma_i(\AA) > 10 n$, which cannot occur if $\AA$ is PSD. We show that by applying Interior Eigenvalue Matrix Chernoff bounds (mentioned above), we can preserve this fact, obtaining $\frac{n}{k}\sigma_{t+1} (\AA_{S \times T}) > 10n/t$ with good probability when $k =  \Omega(1/\eps^2)$. If this is the case, then the Schatten norm of the submatrix will be too large:  $\|\AA_{S \times T}\|_{\mathcal{S}_1} \geq t (10k/t) > 10k$. To obtain a certificate from this fact, we move to the larger principal submatrix $\AA_{(S\cup T) \times (S\cup T)}$, which we show must still have large Schatten norm, from which we can infer the existence of negative eigenvalues. Similarly, in the earlier case, we show that the large tail of $\AA_{S \times T}$ implies that the principal submatrix  $\AA_{(S\cup T) \times (S\cup T)}$ also has too large of a tail, meaning it must not be PSD.



\subsubsection{Lower Bounds}\label{sec:techlb}
As seen above, the distribution of negative mass in the matrix $\AA$ plays an important role in the complexity of testing if $\AA$ is PSD. Specifically, the problem becomes easier the more concentrated the negative mass is within a few eigenvalues. So in order to avoid a $o(1/\eps^2)$ upper bound from the $\ell_\infty$-testing algorithm, our hard instance must have $|\lambda_{\min}(\AA)|  = O( \eps n)$ in the $\eps$-far case. On the other hand, we cannot allow the negative mass to be extremely spread out, otherwise we would have to add many more \textit{positive} eigenvalues to avoid violating the trace constraint $|\text{Tr}(\AA)| = |\sum_i \lambda_i(\AA)| \leq n$ implied by the boundedness, creating further spectral differences between the instances. With this in mind, our hard distribution will have $1/\eps$ negative eigenvalues, each roughly equal to $\lambda_i(\AA) = - \eps n$. 

\paragraph{The Hard Instance.}
Our first insight is to construct a discrete instance, with the property that the distribution induced by observing a small sample of the ``meaningful'' entries of $\AA$ is \textit{identical} in both cases.
Specifically, we construct two distribtuions: $\mathcal{D}_{\text{YES}}$ and $\mathcal{D}_{\text{NO}}$ over $n \times n$ matrices. In both cases, $\AA$ will be block diagonal, with $k$ disjoint blocks $B_1,B_2,\dots,B_k \subset [n]$, each of size $|B_i| = n/k$, for some parameter $k$; we will later set $k=\Theta(1/\eps)$, so our target lower bound is $\Omega(k^2)$. In $\mathcal{D}_{\text{YES}}$, each block $\AA_{B_i \times B_i}$ will be PSD, whereas in $\mathcal{D}_{\text{NO}}$ we will have $\lambda_{\min}(\AA_{B_i \times B_i}) = - \tilde{\Theta}(n/k) \approx - \eps n$. The partition $B_1 \cup B_2 \cup \dots \cup B_k = [n]$ is chosen randomly, so that for any fixed set of samples, only a small fraction them will be contained inside any block $\AA_{B_i \times B_i}$.
The diagonal entries will always be fixed to $1$, and all off-diagonal entries are either $\{0,1,-1\}$. 
The samples $a_1,a_2,\dots,a_s  \in [n] \times [n]$ of any algorithm can then be interpreted as a graph $H$ (possibly with self-loops), where for each edge $a_r= (i,j) \in E(H)$, the algorithm learns the value $\AA_{i,j} \in \{0,1,-1\}$. 

Now consider the algorithm which just samples a $t \times t$ principal submatrix $T \subset [n]$, so that $H$ is a $t$-clique. Now in expectation $\ex{|T \cap B_i|} = \frac{t}{k}$ for each $i$, however, by a balls and bins argument, as $t$ approaches $k$ we will obtain some blocks $i$ with  $|T \cap B_i| = \Omega(\log k/\log\log k)$. Thus, to fool this query, we must be able to ``fool'' cliques of size roughly $\log k$ within a block $B_i$. On the other hand, an algorithm could find many more entries in a block by lop-sided sampling: for instance, it could sample $k^2$ entries in a single column of $\AA$ ($H$ is a $k^2$-star), getting $k$ entries inside a column of a block $\AA_{B_i \times B_i}$. Thus we must also fool large star queries. It turns out that the right property to consider is the \textit{matching number} $\nu(H)$ of the query graph $H$, i.e. the size of a maximum matching. Notice for a star $H$, we have $\nu(H) = 1$. We prove (roughly) that if within each block $B_i$, one can ``fool'' every query graph $H$ inside $\AA_{B_i \times B_i}$ with matching number $\nu(H) < \ell$, one would obtain a lower bound of $\Omega(k^{\frac{2(\ell -1)}{\ell}})$. Thus, it will suffice to fool all query graphs $H$ within a block $B_i$ with $\nu(H) \leq \log k$. 



For a first step towards this, suppose that in  $\mathcal{D}_{\text{YES}}$, we set each block independently to $\AA_{B_i \times B_i} = v v^\top$, where $v \in \{1,-1\}^{|B_i|}$ is a random sign vector, and in $\mathcal{D}_{\text{NO}}$, we set $\AA_{B_i \times B_i} = - v v^\top$ (except we fix the diagonal to be $1$ in both cases). Now notice that the distribution of any individual entry $(\AA_{B_i \times B_i})_{a,b}$ is symmetric, and identical in both  $\mathcal{D}_{\text{YES}}$ and $\mathcal{D}_{\text{NO}}$. Furthermore, it is not difficult to check that the distribution of a path or star query $H$ inside of $\AA_{B_i \times B_i}$ is also identical in both cases. On the other hand, if $H$ contained a \textit{triangle}, then this would not be the case, since in $\mathcal{D}_{\text{YES}}$ one could never have a negative cycle $(x,y,z)$ where $v_x v_y = v_y v_z  = v_z v_x = -1$, whereas this could occur in $\mathcal{D}_{\text{NO}}$, since we could have that $-v_x v_y = -v_y v_z  = -v_z v_x = -1$. Thus, roughly, to distinguish between these distributions $\mathcal{D}_{\text{YES}}$ from $\mathcal{D}_{\text{NO}}$, an algorithm must sample a triangle within one of the blocks $\AA_{B_i \times B_i}$, which one can show requires $\Omega(k^{4/3})$ queries, yielding a first lower bound.\footnote{Note that $\nu(H) = 1$ for a triangle $H$, so the $\Omega(k^{2(\ell - 1)/\ell})$ lower bound when $\nu(H) < \ell$ is actually loose here.}

\paragraph{Boosting to $\Omega(k^{2})$.}
Given the above example, we would now like to construct instances which fool $H$ with larger and larger $\nu(H)$. In fact, our next insight is to have an even simpler structure on $\mathcal{D}_{\text{YES}}$ and $\mathcal{D}_{\text{NO}}$: each of them will be a random permutation of one of two \textit{fixed} matrices $\DD_1,\DD_2$ respectively. We now formalize the ``fooling'' condition we need. For a matrix $\BB$ and a query graph $H$, let $(\BB)_H$  denote the result of setting all entries of $\BB$ not in $H$ equal to zero. Then the matrices $\DD_1,\DD_2$ must have the property that for any graph $H$ with $\nu(H) \leq \log k$, if $\sigma:[m] \to [m]$ is a random permutation and $\P_\sigma \in \R^{m \times m}$ is the row permutation matrix corresponding to $\sigma$, then the distribution of $(\P_\sigma \DD_1 \P_\sigma^\top)_H$ is identical to the distribution $(\P_\sigma \DD_2 \P_\sigma^\top)_H$. We call this property $H$-\textit{subgraph equivalence}. This implies that any algorithm which queries the edges in $H$ inside of $\P_\sigma \DD_1 \P_\sigma^\top$ or $\P_\sigma \DD_2 \P_\sigma^\top$ will be unable to distinguish between them with any advantage. To obtain a lower bound, we must also have a gap between $\lambda_{\min}(\DD_1)$ and $\lambda_{\min}(\DD_2)$, so that their spectrum can be shifted to make one PSD and the other far. Furthermore, neither $\lambda_{\min}(\DD_1)$ or $\lambda_{\min}(\DD_2)$ can be too negative, otherwise by shifting we would lose boundedness of the entries.



A priori, it is not even clear that such matrices $\DD_1,\DD_2$ exist, even for fixed values of $\nu(H)$, such as $\nu(H) = 5$. Our main contribution now is to demonstrate their existence for every $\nu(H)$. Our construction is simple, but perhaps surprisingly so. Both $\DD_1,\DD_2$ will be adjacency matrices; in the PSD case, we set $\DD_1$ to be the cycle graph $C_{2m+1}$ on $2m+1 = \Theta(\log k)$ vertices, and in the $\eps$-far case we set $\DD_2$ to be the disjoint union of two cycles $C_{m+1} \oplus C_m$. Since one of $m$ and $m+1$ is even, while $2m+1$ is odd, we will have that $\lambda_{\min}(C_{m+1} \oplus C_m) = -2$, but $\lambda_{\min}(C_{2m+1} ) > -2$.\footnote{To intuitively see why this is true, note that if $m$ is even and $v \in \{-1,1\}^m$ is the vector that assigns opposite signs to adjacent vertices of $C_m$, then we have $C_m v = -2 v$. However, if $m$ is odd, this assignment $v$ is no longer possible.} 
To show subgraph equivalence, it suffices to show a slightly more general version of the following: for any graph $H$ with $\nu(H) < m/4$, the number of subgraphs of $C_{2m+1}$ isomorphic to $H$ is the same as the number of subgraphs of $C_{m+1} \oplus C_m$ isomorphic to $H$.\footnote{A more general statement is needed since $H$ can also query for edges which do not exist in $C_{2m+1}$.}  Note that if $\nu(H) < m/4$, then $H$ is just a disjoint collection of paths.

Our proof of this fact is by a construction of a bijection from arrangements of $H$ in $C_{2m+1}$ to $H$ in $C_{m+1} \oplus C_m$. While a seemingly simple property, some care must be taken when designing a bijection. 
Our mapping involves first ``swapping'' two paths (whose length depends on $H$) in $C_{2m+1}$, before ``splitting'' $C_{2m+1}$ into two cycles of length $m$ and $m+1$. We direct the reader to Section \ref{sec:CnLemma} for further details.


\vspace{-.1 in}

\paragraph{Amplifying the Gap.}
The subgraph equivalence between $C_{2m+1}$ and $C_{m+1} \oplus C_m$ prevents any algorithm from distinguishing between them with a small number of samples, however the gap in the minimum eigenvalue shrinks at the rate of $\Theta(1/m^2)$. Meaning, if we set  $\gamma =  \lambda_{\min}(C_{2m+1}) = 2-\Theta(1/m^2)$, while the matrix $\gamma \mathbf{I} + C_{2m+1}$ is PSD and has constant sized entries, we only have $\lambda_{\min}(\gamma \mathbf{I} + C_{m+1} \oplus C_m) = - \Theta(1/m^2)$, which is not far enough from PSD. 
Instead, recall that we only need $m = \Omega(\log k)$ to fool all $H$ with $\nu(H) \leq \log k$, but the block size which we must fill is much larger: $\AA_{B_i \times B_i}$ has size $|B_i| = n/k$. Thus, instead of setting $m = \Theta(n/k)$ and filling all of $\AA_{B_i \times B_i}$ with the cycles, we set $m = \Theta(\log k)$, and we amplify the spectral gap by taking the tensor product of the small graphs $C_{2m+1}$ and  $C_{m+1} \oplus C_m$ with a large, fixed matrix $\MM$, so that $(\gamma \mathbf{I} + C_{2m+1}) \otimes \MM$ has $|B_i|$ rows and columns. We prove that taking the tensor product with any fixed $\MM$ preserves the subgraph equivalence properties of the original matrices.  From here, our lower bounds for testing PSD with $\ell_2$ gap, Schatten norms, Ky fan, and the cost of the best rank-$k$ approximation, all follow by a proper choice of $\MM$.  For PSD testing, we can choose $\MM = \mathbf{1}$ to be the all $1$'s matrix, and to amplify the gap in Schatten $1$ norm, we can choose $\MM$ to be a random Rademacher matrix. Since $\MM = \mathbf{1}$ is PSD and $\|\MM\|_2 = \wt{\Omega}(n/k)$, the gap is amplified to the desired $- \wt{\Omega}(n/k)$.  Finally, we remark that to obtain a lower bound for another norm, any matrix $\MM$ which is large in that norm may be suitable, so long as the original sub-graph equivalent matrices also have a gap in that norm. We pose it as an interesting open problem to design other pairs of matrices $\DD_1,\DD_2$ with different spectral gaps which have good sub-graph equivalence properties.

\section{Preliminaries}
We now introduce the notation and definitions that will be used consistently throughout the paper. Additional, specialized notation will be introduced as needed in their respective sections. Specifically, our lower bound construction in Section \ref{sec:lb} utilizes several additional pieces of notation, such as those concerning signed graphs, which are introduced at the beginning of that section. 

\paragraph{Singular Values and Eigenvalues.}We use boldface $\AA$ notation to denote matrices.
For a $n \times d$ matrix $\AA$,  let $\sigma_{\max}(\AA) = \sigma_1(\AA) \geq \sigma_2(\AA) \geq \dots \geq \sigma_{\min\{n,d\}}(\AA) = \sigma_{\min}(\AA)$ denote the singular values of $\AA$. If $\AA$ is rank $r$, let $\AA = \U\mathbf{\Sigma} \V^\top$ be its singular value decomposition, where $\U \in \R^{n \times r},\V\in \R^{d \times r}$ have orthonormal columns, and $\mathbf{\Sigma} \in  \R^{r \times r} $ is a diagonal matrix with the (non-zero) singular values $\sigma_i$ on the diagonal. We use $\mathbf{\Sigma}_k$ to denote the matrix $\mathbf{\Sigma}$ but with all entries but the $k$ largest singular values removed and use $\mathbf{\Sigma}_{-k}$ to denote the matrix $\mathbf{\Sigma}$ but with all entries but the $n-k$ smallest singular values removed. Let $\AA_k = \U \Sigma_k \V^\top$ and  $\AA_{-k} = \U \Sigma_{-k}\V^\top$. The matirx $\AA_k$ is referred to as the \textit{truncated SVD} of $\AA$, and is the best rank-$k$ approximation to $\AA$: $\| \AA - \AA_k\|_F^2 = \sum_{i > k}\sigma^2(\AA) = \min_{\BB \text{ rank-k}} \|\AA - \BB\|_F^2$.
For the special case when $\AA \in \R^{n \times n}$ is symmetric, we use $\U \mathbf{\Lambda} \U^\top$ to denote the Eigenvalue Decomposition of $\AA$, where $\lambda_{\max}(\AA) = \lambda_1(\AA) \geq \lambda_2(\AA) \geq \dots \geq \lambda_n(\AA) = \lambda_{\min}(\AA)$ denote the eigenvalues of $\AA$.  A real-symmetric matrix $\AA$ is said to be \textit{Positive Semi-Definite} (PSD) if $\lambda_{\min} \geq 0$, which is equivalent to having $x^\top \AA x \geq 0$ for all $x \in \R^n$.   We will utilize the Loewner ordering on symmetric matrices.

\begin{definition}[Loewner Ordering]
For symmetric matrices $\BB,\DD$, we write $\BB \succeq \DD$ if $\BB - \DD$ is PSD. 
\end{definition}
\noindent
Notice that if $\BB \succeq \DD$, then by definition $x^\top \BB x \geq x^\top \DD x$ for all $x \in \R^n$. Then by an application of the Courant-Fischer variational principle for eigenvalues, we have that $\lambda_i(\BB) \geq \lambda_i(\DD)$ for all $i \in [n]$.

\paragraph{Matrix Norms and Submatrices.}
We use the notation $\|\AA\|_2 = \sigma_{\max}(\AA)$ to denote the spectral norm of $\AA$, $\|\AA\|_F = (\sum_{i,j} A_{i,j}^2 )^{1/2}= (\sum_{i=1}^n \sigma_i^2(\AA))^{1/2}$ to denote the Frobenius norm of $\AA$.  For $p \geq 1$, we write
$\|\AA\|_{\mathcal{S}_p} = (\sum_{i=1}^n \sigma_{i}^p(\AA))^{1/p}$ to denote the Schatten $p$-norm of $\AA$, and $\|\AA\|_{\KF(p,k)} = (\sum_{i=1}^k \sigma^p_{i}(\AA))^{1/p}$ to denote the $(p,k)$-Ky-Fan norm of $\AA$. If $p$ is not specified for a Ky-Fan norm, it is assumed to be $1$, namely $\|\AA\|_{\KF(k)} = \|\AA\|_{\KF(1,k)}$. 
For subsets $S,T\subseteq [n]$, we denote the matrix $\AA_{S \times T} \in \R^{|S| \times |T|}$ as the matrix $\AA$ restricted to the submatrix of the rows in $S$ and the columns in $T$. If $S = T$, then the square submatrix $\AA_{S \times T} = \AA_{S \times S}$ is called a \textit{principal submatrix} of $\AA$. For a vector $x \in \R^n$ and subset $S \subset [n]$, we write $x_S \in \R^n$ to denote the vector obtained after setting equal to zero all coordinates $x_i$ with $i \notin S$. Finally, we use the notation $\AA_{i,*}$ to denote the $i$-th row of $\AA$, and $\AA_{*,i}$ to denote the $i$-th column of $\AA$.

\section{PSD Testing with \texorpdfstring{$\ell_\infty$}{L-inf} Gap}\label{sec:Linfty}


In this section, we introduce our algorithm for the PSD testing problem with $\ell_\infty$-gap. As discussed earlier, we consider a more general version of the $\ell_\infty$ gap than the definition presented in Problem \ref{prob:linf}, which allows one to test a notion of positive semi-definitness which applies to non-symmetric matrices as well. Specifically, we define the \emph{PSD} case as when $x^\top \AA x \geq 0$ for all $x \in \R^n$, and the \emph{far} case as when $x^\top \AA x < - \eps n$ for a unit vector $x$. We note that if $\AA$ is symmetric, this definition is equivalent to  Problem \ref{prob:linf}. In fact, as we will see shortly, one can always reduce the non-symmetric case to the symmetric case, so this distinction will not matter algorithmically. 
Formally, we solve the following problem:

\begin{definition}[General PSD Testing with $\ell_{\infty}$-Gap.]\label{def:linfty}
Fix, $\eps \in (0,1]$ and let $\AA \in \R^{n \times n}$ be any matrix satisfying $\|\AA\|_\infty \leq 1$. The goal is to distinguish between the following two cases:
\begin{itemize}
	\item \textbf{YES Instance}: $\AA$ satisfies $x^\top \AA x  \geq 0$, for all $x \in \R^n$. 
	\item \textbf{NO Instance}: There exists a unit vector $x \in \R^n$ such that $x^\top \AA x < - \eps n$.
\end{itemize}
with probability at least $2/3$.
\end{definition}

\paragraph{Reducing to the symmetric case} In the case where $\AA$ is symmetric, as in Problem \ref{prob:linf}, the above gap instance can be restated in terms of the minimum eigenvalue of $\AA$. Specifically, we are promised that either  $\lambda_{\min}(\AA) \geq 0$ or $\lambda_{\min}(\AA) \leq -\eps n$. 
However, we now observe that one can reduce to the symmetric case with only a factor of $2$ loss in the query complexity, by simply querying the symmetrization $(\AA + \AA^\top)/2$. First note, that for any $x \in \R^{n}$ and any matrix $\AA \in \R^{n \times n}$, we have $x^\top \AA x = x^\top \AA^\top x$, thus for any $x$ we have $x^\top \AA x = x^\top \frac{\AA + \AA^\top}{2}x$. Thus $x^\top \AA x \geq 0$ for all $x$ if and only if $x^\top \frac{\AA + \AA^\top}{2} x \geq 0$ for all $x$, which occurs if and only if the matrix $\frac{\AA + \AA^\top}{2}$ is PSD. Similarlly, we have that $x^\top \AA x \leq - \eps n$ for some unit vector $x$ if and only if $x^\top \frac{\AA + \AA^\top}{2} x \leq - \eps n$ for some unit vector $x$, which occurs if and only if $\lambda_{\min}(\frac{\AA + \AA^\top}{2} )\leq - \eps n$. Note also that the matrix $\frac{\AA + \AA^\top}{2} $ has bounded entries $\|\frac{\AA + \AA^\top}{2} \|_\infty \leq 1$ if $\|\AA\|_\infty \leq 1$. Moreover, query access to $\frac{\AA + \AA^\top}{2} $ can be simulated via query access to $\frac{\AA + \AA^\top}{2}$ with a loss of at most a factor of $2$ in the query complexity, by symmetrizing the queries. In fact, our algorithms will not even incur this factor of $2$ loss, since all queries our algorithms make will belong to principal submatrices of $\AA$. Thus, in what follows, we can restrict ourselves to the original formulation as specified in Problem \ref{prob:linf}, and assume our input $\AA$ is symmetric.


The goal of this section is now to prove the following theorem, which demonstrate the existence of a $\tilde{O}(1/\eps^2)$ query one-sided error tester for the above problem. In Section \ref{sec:lb}, we demonstrate that this complexity is optimal (up to $\log(1/\eps)$ factors), even for testers with two sided error (Theorem \ref{thm:linftyLB}).

\begin{theorem}[Query Optimal One-Sided Tester for $\ell_{\infty}$ Gap (see Theorem \ref{thm:inftymain})]
	There is an algorithm which, given $\AA$ with $\|\AA\|_\infty \leq 1$ such that either $x^\top \AA x \geq 0$ for all $x$ (\texttt{YES} case), or such that $x^\top\AA x \leq - \eps n$ for some $x \in \R^n$ with $\|x\|_2 \leq 1$  (\texttt{NO} case), distinguishes the two cases with probability at least $3/4$, while making at most $\tilde{O}(\frac{1}{\eps^2})$ queries to the entries of $\AA$, and runs in time $\tilde{O}(1/\eps^{\omega})$, where $\omega < 2.373$ is the exponent of matrix multiplication. Moreover, in the first case when  $x^\top \AA x \geq 0$ for all $x$, is PSD, the algorithm always correctly outputs \texttt{YES}.
\end{theorem}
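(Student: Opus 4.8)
The plan is to combine the two warm-up ideas (analyzing a fixed quadratic form at a single scale, and re-sampling to reduce variance) with the multi-scale subsampling strategy sketched in Section~\ref{sec:techlinf}. First I would set up the analysis of the scalar random variable $\mathcal{Q}_T(x) = x_T^\top \AA_{T\times T} x_T$ for $T$ a random subset of $[n]$ obtained by including each index independently with probability $k/n$, where $x$ is the (unknown) unit vector witnessing $x^\top \AA x \le -\eps n$. Proposition~\ref{prop:exp} and Lemma~\ref{lem:var} give $\ex{\mathcal{Q}_T(x)} \le -\eps k^2/(4n)$ and $\mathrm{Var}(\mathcal{Q}_T(x)) = O(k^3/n^2)$, so Chebyshev shows a single $k \times k$ submatrix is non-PSD with constant probability once $k = \Omega(1/\eps^2)$; averaging over $k$ independent samples of size $\Theta(1/\eps)$ reduces the variance to $O(k^2/n^2)$ and gives the $\tilde O(1/\eps^3)$ bound. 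The target $\tilde O(1/\eps^2)$ then requires a finer case analysis.

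The key structural step I would carry out is a dichotomy on how the negative mass of $x^\top \AA x$ is distributed. Let $x$ be the witnessing unit vector; I would identify the ``heavy'' set $S = \{ i : x_i^2 > 1/(\eps n)\}$ (up to constants and log factors), and split the quadratic form into the contribution from rows/columns in $S$ versus the rest. The claim to prove is: either (a) a constant fraction of the $-\eps n$ negative mass is already witnessed by the restriction of $\AA$ and $x$ to $\overline{S}$ --- in which case the coordinates of $x$ restricted there are all $O(1/\sqrt{\eps n})$, the variance bound $\mathrm{Var}(\mathcal{Q}_T(x_{\overline{S}})) = O(k^3/n^2)$ applies cleanly, and $k = \tilde O(1/\eps^2)$ submatrices of a single appropriate size suffice; or (b) a constant fraction of the negative mass is ``hidden'' in the interaction between $S$ and $[n]$, in which case I would bound $|S| \le \eps n$ (it cannot be too large without violating $\|x\|_2 \le 1$ given the heaviness threshold) and further stratify: writing $|S| = \alpha n$ with $\eps^2 \lesssim \alpha \lesssim \eps$, the correct level of subsampling is $k = \tilde\Theta(\alpha/\eps^2)$ with $\tilde\Theta(\eps^2/\alpha^2)$ repetitions, for a per-scale cost of $\tilde O(1/\eps^2)$, and there are only $O(\log(1/\eps))$ dyadic scales to union over.

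Within case (b), the crucial technical device is the ``eigenvector switching'' described in the overview: conditioned on the random sample $T$ hitting $S$ (which happens with the right probability at the matching scale), the form $\mathcal{Q}_T(x)$ may stay positive because the diagonal terms $x_i^2 \AA_{ii}$ for $i \in S$ dominate, so I would instead test the form $\mathcal{Q}_T(y)$ against a re-scaled vector $y$ with $y_i^2 \approx (|S|/n)(k/n)$ for $i \in S$ and $y_i = x_i$ otherwise. I would then prove $\ex{\mathcal{Q}_T(y) \mid |T \cap S| \ge 1} < 0$ and a matching variance bound, by carefully accounting for (i) the diagonal contribution from $S$, now suppressed by the rescaling, (ii) the cross terms $y_i \AA_{ij} y_j$ with $i \in S$, $j \in T \setminus S$, which carry the negative mass, and (iii) the contribution of $\AA_{\overline{S} \times \overline{S}}$, which is controlled since $y = x$ there and $x_{\overline{S}}$ is light. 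Finally I assemble the algorithm: for each dyadic $\alpha \in [\eps^2, \eps]$, sample $\tilde\Theta(\eps^2/\alpha^2)$ principal submatrices of size $\tilde\Theta(\alpha/\eps^2)$, compute the eigenvalues of each symmetrized submatrix (cost $\tilde O(k^\omega)$ each, summing to $\tilde O(1/\eps^\omega)$ since the largest scale dominates), and output \texttt{Not PSD} iff some submatrix has a negative eigenvalue; one-sidedness is immediate since a PSD matrix has all principal submatrices PSD.

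\textbf{Main obstacle.} I expect the hardest part to be case (b): making the eigenvector-switching vector $y$ work uniformly across all dyadic scales $\alpha$ and simultaneously controlling the interaction of the three contribution types above, in particular ruling out that the positive terms from $\AA_{\overline{S}\times \overline{S}}$ or from residual diagonal mass in $S$ overwhelm the negative cross terms in a small submatrix. This requires the careful balancing of $y_i$ for $i \in S$ --- not so large that the diagonal dominates, not so small that the negative cross-mass vanishes --- together with a second-moment (variance) computation for $\mathcal{Q}_T(y)$ conditioned on $|T \cap S| \ge 1$, which is messier than the unconditional Lemma~\ref{lem:var} because the conditioning couples the coordinates of $T$.
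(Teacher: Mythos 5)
Your proposal follows essentially the same architecture as the paper's proof of \cref{thm:inftymain}: a dichotomy on how the negative mass is distributed, a re-scaled ``eigenvector switching'' direction $y$ when the mass is concentrated in heavy coordinates, multi-scale subsampling with a per-scale budget of $\tilde O(1/\eps^2)$ and $O(\log(1/\eps))$ scales, and variance control via \cref{lem:var} (and its refinement, \cref{cor:var}) when the mass is spread out. You also correctly identify the main obstacle as the balancing in case (b).

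One imprecision worth flagging. You propose to stratify the ``hidden'' case by the \emph{size} $|S|=\alpha n$ of the heavy set, with subsampling scale $k = \tilde\Theta(\alpha/\eps^2)$ determined by $\alpha$. The paper instead stratifies by \emph{coordinate magnitude}: it partitions the heavy coordinates into dyadic level sets $T_a = \{i : 100\cdot 2^{a-1}/(\eps n) \le x_i^2 \le 100 \cdot 2^a/(\eps n)\}$ and identifies a single contributing level $a$ (with $2^a \ge 10^6\zeta^3$) whose cross term $x_S^\top \AA x_{T_a} + x_{T_a}^\top \AA x_S$ carries an $\Omega(\eps n/\log(1/\eps))$ share of the negativity. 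The subsampling scale is then driven by $2^a$, i.e.\ by the \emph{magnitude} of the heavy coordinates, via $\lambda \approx \beta^2\zeta^2\log(1/\eps)/(2^a\eps)$, not by $|S|$. The distinction matters: the heavy set can contain coordinates at many different magnitudes, in which case $|S|$ alone does not determine the right scale---a small sub-population of $S$ at very large magnitude can carry all the negative mass, and then the sample size must be tuned to that sub-population. (In the specific hard instances of \cref{fig:Matrix1} the heavy coordinates all share one magnitude, so the two parameterizations coincide there, which is perhaps why the overview reads as if $\alpha$ suffices.) The paper additionally refines each level $T_a$ by a ``fill'' parameter $\beta$ picking out a sub-level $T_a^*$ whose per-coordinate row/column contribution is large, which is what makes the conditional expectation in \cref{lem:conditionalex} go through; your plan doesn't anticipate this second refinement, and without it the per-coordinate contribution $\RRR_t^S+\CCC_t^S$ in $T_a$ could be too weak to force $\mathcal{Q}_T(y) < 0$ after conditioning. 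These are fixable within your framework, but they are the load-bearing details of case (b).

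Your case (a) aligns with the paper's Case~2, though note the paper actually works with the slightly larger set $S^* = S \cup T^-$ (including mid-magnitude levels up to $10^6\zeta^3$), and needs \cref{propLl1rowcolumnboundCase2} to bound the row/column contribution vector $\|\ZZ\|_2^2$ before \cref{cor:var} gives a variance of order $\tilde O(k^2/n^2)$ rather than $O(k^3/n^2)$; that improvement is exactly what lets a single $k=\tilde\Theta(1/\eps)$ submatrix suffice, and it is a step your outline leaves implicit.
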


\paragraph{Algorithmic Setup} 

First recall that if $\AA$ is PSD, then then every \textit{principal} submatrix $\AA_{T \times T}$ of $\AA$ for $T \subseteq [n]$ is also PSD. Thus, it will suffice to query a collection of principal submatrices $\AA_{T_1 \times T_1} , \AA_{T_2 \times T_2} , \dots, \AA_{T_t \times T_t} $ of $\AA$, and return \texttt{Not PSD} if any one of them fails to be PSD. Such a algorithm then naturally has one-sided error, since if $\AA$ is PSD it will always return PSD. Thus, in the remainder of the section, it suffices to consider only the \texttt{NO}  case, and demonstrate that, if $\AA$ satisfies $x^\top\AA x \leq - \eps n$ for some unit vector $x \in \R^n$, then with good probability at least one of the sampled principal submatrices will fail to be PSD. 

Moreover, as shown above, it suffices to consider the case where $\AA$ is symmetric. In this case, we will fix $x$ to be the eigenvector associated with the smallest eigenvalue of $\AA$. Thus, in what follows, we can fix $x$ so that $\min_{z \in \R^n : \|z\|_2 = 1} z^\top \AA z = x^\top \AA x = \lambda_{\min}(\AA) =  -\eps n$. 
Notice here we define $\eps$ to satisfy the \emph{equality} $\lambda_{\min}(\AA)= -\eps n$, however our algorithm need only know a lower bound $\eps_0 < \eps$ on $\eps$. The reason for this is that the input parameter $\eps_0$ will only effect the sizes of the random submatrices being sampled (smaller $\eps_0$ increases the size). Thus, an algorithm run with parameter $\eps_0$ can be simulated by first running the algorithm with parameter $\eps > \eps_0$, and then randomly adding rows and columns to the sampled submatrices from the correct marginal distribution. Thus, there is a coupling such that the submatrices chosen by an algorithm with any input $\eps_0 < \eps$ will always contain the submatrices sampled by an algorithm given the input exactly $\eps$, so if the latter algorithm sampled a non-PSD submatrix, so would the former.
Thus, for the sake of analysis, we can assume that the value $\eps$ is known.

Throughout the following section, we will assume $1/\eps < c \cdot n$ for some sufficiently small constant $c$. Notice that if this was not the case, we would have $1/\eps^2 = \Omega(n^2)$, and we would be permitted to read  the entire matrix $\AA$, as this is within our target budget of $\tilde{O}(1/\eps^2)$.

\subsection{Warm-up: a \texorpdfstring{$O(1/\eps^3)$}{epsilon cube}  algorithm}\label{sec:warmup}
We first describe a $O(1/\eps^3)$ query algorithm for the problem of PSD testing with $\ell_\infty$-gap. The general approach and results of this algorithm will be needed for the more involved $\wt{O}(1/\eps^2)$ query algorithm which we shall develop in the sequel. As noted above, it suffices to analyze the NO case, where we have $x^\top \AA x = \lambda_{\min}(\AA) = - \eps n$ for a unit vector $x \in \R^n$. 
Our goal will be to analyze the random variable $Z = x_T^\top \AA_{T \times T} x_T$, where $T \subset [n]$ is a random subset, where each $i \in [n]$ is selected independently with some probability $\delta$. Notice that if $\delta_i \in \{0,1\}$ is an indicator variable indicating that we sample $i \in T$, then we have $Z = x_T^\top \AA_{T \times T} x_T= \sum_{i,j} x_i \AA_{i,j} x_j \delta_i \delta_j$. 

Now, algorithmically, we do not know the vector $x$. However, if we can demonstrate concentration of $Z$, and show that $Z < 0$ for our sampled set $S$, then we can immediately conclude that $\AA_{T \times T}$ is not PSD, a fact which \textit{can} be tested. Thus, the analysis will proceed by pretending that we did know $x$, and analyzing the concentration of $x_T^\top \AA_{T \times T} x_T$. In the following section, however, we will ultimately analyze the concentration of this random variable with respects a slightly different vector than $x$.

We first remark that we can assume, up to a loss in a constant factor in the value of $\eps$, that the diagonal of $\AA$ is equal to the identity.

 \begin{proposition}\label{prop:1diag}
We can assume $\AA_{i,i} = 1$, for all $i \in [n]$. Specifically, by modifying $\AA$ so that $\AA_{i,i} = 1$, for all $i \in [n]$, the completeness (PSD) case is preserved and the soundness (not PSD) case is preserved up to a factor of $1/2$ in the parameter $\eps$.
\end{proposition}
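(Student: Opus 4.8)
The plan is to show that replacing the diagonal of $\AA$ by all ones is harmless in both directions of the promise. The completeness direction is immediate, so the real content is the soundness direction: if the original matrix $\AA$ has $x^\top \AA x \le -\eps n$ for some unit $x$, then the modified matrix $\AA'$ (with $\AA'_{i,i}=1$ for all $i$ and $\AA'_{i,j} = \AA_{i,j}$ off-diagonal) still has a unit vector witnessing a quadratic form below $-\tfrac{\eps}{2} n$.

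First I would handle completeness. Write $\AA' = \AA + \D$, where $\D$ is diagonal with $\D_{i,i} = 1 - \AA_{i,i} \ge 0$ (this uses $\|\AA\|_\infty \le 1$, so $\AA_{i,i} \le 1$, hence $\D \succeq 0$; note also $\|\AA'\|_\infty \le 1$ is preserved since the diagonal entries only increase in magnitude from at most $1$ to exactly $1$, or stay the same). If $\AA$ is PSD (more precisely, $x^\top \AA x \ge 0$ for all $x$, which after symmetrization means $\AA$ is PSD), then $x^\top \AA' x = x^\top \AA x + x^\top \D x \ge 0$ for all $x$, so $\AA'$ is PSD. Thus the YES case is preserved.

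Next, soundness. Here I would argue in the reverse direction: write $\AA = \AA' - \D$, where now $\D$ is diagonal with $\D_{i,i} = 1 - \AA_{i,i} \in [0,2]$, so $0 \preceq \D \preceq 2\II$. Given a unit vector $x$ with $x^\top \AA x \le -\eps n$, we get $x^\top \AA' x = x^\top \AA x + x^\top \D x \le -\eps n + 2\|x\|_2^2 = -\eps n + 2$. Since we may assume $1/\eps < cn$ for a small constant $c$ (as stated in the algorithmic setup), we have $\eps n > 1/c$, so for $c$ small enough $2 \le \tfrac{\eps n}{2}$, giving $x^\top \AA' x \le -\tfrac{\eps}{2} n$. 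Hence the same unit vector $x$ certifies that $\AA'$ is in the NO case with parameter $\eps/2$. (Alternatively, one can absorb the additive $2$ into the constant without the $1/\eps < cn$ assumption by noting $\eps \le 1$ so $\eps n \le n$ and being slightly more careful, but the clean statement is the factor-of-$2$ loss.)

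The main "obstacle" is really just bookkeeping: one must be careful that the perturbation $\D$ is semidefinite in the correct direction for each implication — $\D \succeq 0$ makes adding it preserve PSD-ness (completeness), while the bound $\D \preceq 2\II$ controls how much it can destroy negativity of the quadratic form (soundness) — and that the bounded-entry constraint $\|\AA'\|_\infty \le 1$ is maintained, which holds because we are only shrinking or fixing the diagonal entries to magnitude exactly $1$. No nontrivial calculation is needed beyond the one-line estimate $|x^\top \D x| \le 2\|x\|_2^2 = 2$.
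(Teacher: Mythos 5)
Your proof is correct and takes essentially the same approach as the paper: completeness follows because the modification adds a nonnegative diagonal, and soundness follows by bounding the change in the quadratic form by $O(\|x\|_2^2)$ and absorbing it using $\eps = \Omega(1/n)$. If anything, your bound of $2$ on the diagonal perturbation is slightly more careful than the paper's stated bound of $1$, which implicitly assumes $\AA_{i,i}\ge 0$; the difference is immaterial to the conclusion.
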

\begin{proof}
	Every time we observe an entry $\AA_{i,i}$ we set it equal to $1$. In this new matrix, if $\AA$ was PSD to begin with, then $\AA$ will still be PSD, since this modification corresponds to adding a non-negative diagonal matrix to $\AA$. If $x\AA x \leq - \eps n$ originally for some $x \in \R^n$, then $x^\top\AA x \leq - \eps n + 1$ after this change, since the diagonal contributes at most $\sum_i \AA_{i,i} x_i^2 \leq \|x\|_2^2 \leq 1$ to the overall quadratic form. Note this additive term of $1$ is at most $(\eps n)/2$ since we can assume $\eps = \Omega(1/n)$. 
\end{proof}

We now notice that if $x$ is the eigenvector associated with a a large enough eigenvalue, the $\ell_2$ mass of $x$ cannot be too concentrated.

\begin{proposition}\label{prop:spread}
Let $\AA \in \R^{n \times n}$ be a symmetric matrix with $\lambda_{\min}(\AA) = - \eps n$, and let $x \in \R^n$ be the (unit) eigenvector associated with $\lambda_{\min}(\AA)$. Then we have that  $\|x\|_\infty \leq \frac{1}{\eps\sqrt{ n}}$.
\end{proposition}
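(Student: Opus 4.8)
The plan is to read off the bound directly from the eigenvalue equation $\AA x = \lambda_{\min}(\AA)\, x = -\eps n\, x$, combined with the entry-wise bound $\|\AA\|_\infty \leq 1$ (the standing assumption of this section). First I would fix an arbitrary coordinate $i \in [n]$ and examine the $i$-th entry of both sides of $\AA x = -\eps n\, x$, which gives $\sum_{j=1}^n \AA_{i,j} x_j = -\eps n\, x_i$, hence $\eps n\, |x_i| = \big|\sum_{j} \AA_{i,j} x_j\big|$.

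Next I would bound the right-hand side by the triangle inequality and $|\AA_{i,j}| \leq 1$: $\big|\sum_j \AA_{i,j} x_j\big| \leq \sum_j |x_j| = \|x\|_1$. Then I would apply Cauchy--Schwarz to convert the $\ell_1$ norm into an $\ell_2$ norm, using that $x$ is a unit vector: $\|x\|_1 \leq \sqrt{n}\,\|x\|_2 = \sqrt n$. Combining these, $\eps n\, |x_i| \leq \sqrt n$, so $|x_i| \leq \frac{1}{\eps \sqrt n}$. Since $i$ was arbitrary, $\|x\|_\infty \leq \frac{1}{\eps\sqrt n}$, as claimed.

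There is no real obstacle here: the only things to be careful about are (i) invoking the section-wide hypothesis $\|\AA\|_\infty \le 1$ so that the entry-wise bound $|\AA_{i,j}|\le 1$ is available, and (ii) that the Cauchy--Schwarz step $\|x\|_1 \le \sqrt n\,\|x\|_2$ is tight enough for our purposes (it is, since we only need an upper bound on $\|x\|_\infty$). No use of symmetry beyond guaranteeing that $\lambda_{\min}$ has a genuine eigenvector is needed, and the argument applies verbatim to any unit eigenvector whose eigenvalue has magnitude $\eps n$.
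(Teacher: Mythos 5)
Your proof is correct and follows essentially the same approach as the paper: fix a coordinate $i$, use the eigenvalue equation $\AA x = -\eps n\, x$, and bound the resulting inner product by $\sqrt n$. The paper applies Cauchy--Schwarz directly to $|\langle \AA_{i,*}, x\rangle| \le \|\AA_{i,*}\|_2\,\|x\|_2 \le \sqrt n$, whereas you bound $|\sum_j \AA_{i,j}x_j|$ by $\|x\|_1$ via the entry-wise bound and then apply Cauchy--Schwarz to get $\|x\|_1 \le \sqrt n$; this is a trivially different choice of H\"older pair, not a genuinely different argument.
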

\begin{proof}
By Cauchy-Schwartz, for any $i \in [n]$:
\[ |\lambda_{\min}| \cdot | x_i| = |\langle \AA_{i,*} , x\rangle| \leq  \|  \AA_{i,*} \|_2 \leq \sqrt{n}   \]
from which the proposition follows using $\lambda_{\min}(\AA) = - \eps n$.
\end{proof}

Recall that our goal is to analyze the random variable $Z = x_T^\top \AA_{T \times T} x_T= \sum_{i,j} x_i \AA_{i,j} x_j \delta_i \delta_j$. To proceed, we bound the moments of $Z$. Our bound on these moments can be tightened as a function of the \textit{row and column contributions} of the target vector $x$, which we now define.

\begin{definition}\label{def:RRCC1}
Fix any $y \in \R^n$. Then for any $i \in [n]$, define the total row and column contributions of $i$ as $\RRR_i(y) = \sum_{j \in [n] \setminus i} y_i \AA_{i,j}y_j$ and $\CCC_i(y) = \sum_{j \in [n] \setminus i} y_j \AA_{j,i} y_i$ respectively.  
\end{definition}\noindent
Notice from the above definition, we have $\sum_i \RRR_i(y) + \CCC_i(y) = 2\left(y^\top \AA y - \sum_i \AA_{i,i} y_i^2\right) $.
 \begin{fact}\label{fact:case1OPT}
 Let $x \in \R^n$ be the eigenvector associated with $\lambda_{\min}(\AA)$. Then we have $\RRR_i(x) + \CCC_i(x) \leq 0$ for all $i \in [n]$.
 \end{fact}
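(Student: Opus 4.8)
The plan is to compute $\RRR_i(x) + \CCC_i(x)$ directly using the eigenvector equation $\AA x = \lambda_{\min}(\AA)\, x$. Since by the reduction preceding Proposition~\ref{prop:1diag} we may assume $\AA$ is symmetric, we have $\AA_{i,j} = \AA_{j,i}$, so the two contributions in Definition~\ref{def:RRCC1} coincide:
\[
\RRR_i(x) = \CCC_i(x) = x_i \sum_{j \in [n]\setminus i} \AA_{i,j} x_j,
\qquad\text{hence}\qquad
\RRR_i(x) + \CCC_i(x) = 2 x_i \sum_{j \in [n]\setminus i} \AA_{i,j} x_j .
\]

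Next I would peel off the diagonal entry from the sum, writing $\sum_{j \in [n]\setminus i} \AA_{i,j} x_j = (\AA x)_i - \AA_{i,i} x_i$. Using that $x$ is the eigenvector of $\AA$ associated with $\lambda_{\min}(\AA)$, we have $(\AA x)_i = \lambda_{\min}(\AA)\, x_i$, and therefore
\[
\RRR_i(x) + \CCC_i(x) = 2 x_i\big( \lambda_{\min}(\AA)\, x_i - \AA_{i,i} x_i \big) = 2 x_i^2\big( \lambda_{\min}(\AA) - \AA_{i,i} \big).
\]

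Finally I would observe that $\lambda_{\min}(\AA) - \AA_{i,i} \leq 0$: indeed $\AA_{i,i} = e_i^\top \AA e_i \geq \lambda_{\min}(\AA)\,\|e_i\|_2^2 = \lambda_{\min}(\AA)$ by the variational (Courant--Fischer) characterization of the minimum eigenvalue. Since $x_i^2 \geq 0$, the product $2 x_i^2 (\lambda_{\min}(\AA) - \AA_{i,i})$ is nonpositive, which is exactly the claim. (As a sanity check, summing this identity over $i$ recovers the stated relation $\sum_i \RRR_i(x)+\CCC_i(x) = 2(x^\top \AA x - \sum_i \AA_{i,i} x_i^2)$, using $\|x\|_2=1$.)

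There is essentially no obstacle in this argument; the only points requiring care are purely bookkeeping: ensuring the diagonal term $\AA_{i,i} x_i^2$ is correctly excluded from $\RRR_i$ and $\CCC_i$ (it is, via the index set $[n]\setminus i$ in Definition~\ref{def:RRCC1}), and invoking $\AA_{i,i} \geq \lambda_{\min}(\AA)$, which holds for every symmetric matrix irrespective of whether the diagonal has been normalized to $1$ as in Proposition~\ref{prop:1diag}. If that normalization has been applied, the bracket simply equals $-\eps n - 1 < 0$, making the inequality strict whenever $x_i \neq 0$.
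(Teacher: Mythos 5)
Your proof is correct, and it takes a genuinely different route from the paper's. The paper argues by contradiction: assuming $\RRR_i(x)+\CCC_i(x) > 0$, it considers the perturbed vector $z = x_{[n]\setminus i}$, expands $z^\top \AA z = x^\top \AA x - (\RRR_i + \CCC_i) - \AA_{i,i}x_i^2$, and uses the normalization $\AA_{i,i}=1$ from Proposition~\ref{prop:1diag} (so that the last term is nonnegative) to conclude $z^\top \AA z < x^\top \AA x = \lambda_{\min}(\AA)$, contradicting the minimality of the Rayleigh quotient. You instead compute $\RRR_i(x)+\CCC_i(x)$ in closed form via the eigenvector equation $\AA x = \lambda_{\min}(\AA)x$, arriving at the explicit identity $\RRR_i(x)+\CCC_i(x) = 2x_i^2(\lambda_{\min}(\AA) - \AA_{i,i})$, and then invoke the universal bound $\AA_{i,i} \geq \lambda_{\min}(\AA)$ for symmetric matrices. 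Your version has two small advantages: it gives a clean closed-form expression rather than just the sign, and it does not rely on the diagonal having been normalized to $1$ (the inequality $\AA_{i,i} \geq \lambda_{\min}(\AA)$ holds unconditionally). The paper's perturbation argument, on the other hand, generalizes more easily to non-eigenvector optimizers over restricted domains, which is the style of reasoning reused elsewhere in that section (e.g.\ Proposition~\ref{prop:Sbound}). One minor caveat in the paper's phrasing, which your approach sidesteps entirely: the vector $z$ there has $\|z\|_2 < 1$ when $x_i \ne 0$, so the ``contradiction with optimality'' implicitly requires normalizing by $\|z\|_2^2 \le 1$; since $\lambda_{\min}(\AA) < 0$, this only strengthens the inequality, so the argument is sound but slightly compressed.
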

 \begin{proof}
     Suppose there was an $i$ with $\RRR_i(x) + \CCC_i(x) > 0$. Then setting $z = x_{[n] \setminus i}$ we have $z^\top \AA z = \langle x , \AA x\rangle - (\RRR_i(x) + \CCC_i(x))  - \AA_{i,i} (x_i)^2$. Recall from Proposition \ref{prop:1diag} that we can assume $\AA_{i,i}  = 1$ for all $i$, thus it follows that $z^\top \AA z < \langle x , \AA x\rangle$, which contradicts the optimality of $x$.
     \end{proof}

\noindent
	We now bound the expectation of the random quadratic form.
	\begin{proposition}[Expectation Bound]\label{prop:exp} Let $\AA \in \R^{n \times n}$ be a matrix with $\|\AA\|_\infty \leq 1$, and let $y \in \R^n$ be any vector with $\|y\|_2 \leq 1$ and $y^\top \AA y <  - \eps n$.
		Let $Z = \sum_{i,j} y_i \AA_{i,j} y_j \delta_i \delta_j$, where $\delta_1,\dots,\delta_n \sim \text{Bernoulli}(\frac{k}{n})$. Then if $k \geq 8/\eps$, we have $\ex{Z} \leq - \frac{\eps k^2}{ 4n }$.
	\end{proposition}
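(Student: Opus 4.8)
The plan is to compute $\ex{Z}$ directly by linearity of expectation, splitting the double sum into diagonal terms ($i=j$) and off-diagonal terms ($i \neq j$). Since each $\delta_i \sim \text{Bernoulli}(k/n)$ independently, we have $\ex{\delta_i^2} = \ex{\delta_i} = k/n$ for the diagonal, and $\ex{\delta_i \delta_j} = (k/n)^2$ for $i \neq j$. Therefore
\[
\ex{Z} = \frac{k}{n}\sum_{i} \AA_{i,i} y_i^2 + \frac{k^2}{n^2}\sum_{i \neq j} y_i \AA_{i,j} y_j = \frac{k}{n}\sum_{i} \AA_{i,i} y_i^2 + \frac{k^2}{n^2}\left( y^\top \AA y - \sum_i \AA_{i,i} y_i^2 \right).
\]
Now I would bound the two pieces. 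For the off-diagonal contribution, $y^\top \AA y < -\eps n$ by hypothesis and $\sum_i \AA_{i,i} y_i^2 \geq 0$ would be the wrong sign bound; instead note $|\sum_i \AA_{i,i} y_i^2| \leq \|y\|_2^2 \leq 1$ since $\|\AA\|_\infty \leq 1$, so $y^\top \AA y - \sum_i \AA_{i,i}y_i^2 < -\eps n + 1 \leq -\eps n/2$ using $\eps = \Omega(1/n)$ (equivalently $1/\eps < cn$). For the diagonal term, $\frac{k}{n}\sum_i \AA_{i,i} y_i^2 \leq \frac{k}{n}\|y\|_2^2 \leq \frac{k}{n}$.

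Combining, $\ex{Z} \leq \frac{k}{n} - \frac{k^2}{n^2}\cdot\frac{\eps n}{2} = \frac{k}{n} - \frac{\eps k^2}{2n}$. Finally I would use the assumption $k \geq 8/\eps$ to absorb the positive diagonal term: since $k \geq 8/\eps$ we have $\frac{k}{n} \leq \frac{\eps k^2}{8n}$, so
\[
\ex{Z} \leq \frac{\eps k^2}{8n} - \frac{\eps k^2}{2n} = -\frac{3\eps k^2}{8n} \leq -\frac{\eps k^2}{4n},
\]
which is the desired bound. (In fact the computation gives a slightly better constant, but $-\eps k^2/(4n)$ suffices.)

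I don't expect any genuine obstacle here — this is a routine second-moment-style expectation computation. The only points requiring minor care are: (i) correctly separating diagonal from off-diagonal terms so that the $\ex{\delta_i\delta_j}$ factor is $k/n$ in the former and $(k/n)^2$ in the latter; (ii) controlling the additive $+1$ slack from the diagonal of $\AA$ using the standing assumption $\eps = \Omega(1/n)$; and (iii) checking that the threshold $k \geq 8/\eps$ is exactly what is needed to dominate the $\frac{k}{n}$ diagonal term by a constant fraction of $\frac{\eps k^2}{n}$. None of these is subtle. If one wishes to avoid even invoking $\eps = \Omega(1/n)$, one could instead use Proposition~\ref{prop:1diag} to assume $\AA_{i,i}=1$ and keep the diagonal term as exactly $\frac{k}{n}\|y\|_2^2$, but the bound above works regardless.
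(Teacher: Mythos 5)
Your proof is correct and follows essentially the same route as the paper's: both split $\ex{Z}$ by linearity into diagonal and off-diagonal contributions, apply $|\AA_{i,i}|\le 1$, $\|y\|_2^2\le 1$, and $y^\top \AA y < -\eps n$, and then absorb the positive remainder using $k\ge 8/\eps$. The only cosmetic difference is bookkeeping of the slack: you fold the $|\sum_i \AA_{i,i}y_i^2|\le 1$ term into the $-\eps n$ gap (invoking $\eps=\Omega(1/n)$), whereas the paper keeps the two diagonal-weighted terms together and bounds them by $2k/n$ (implicitly using $k\le n$); both yield the stated bound.
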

\begin{proof} Let $c_{i,j} = \AA_{i,j}y_i y_j$.
	 First note, for any $i \in [n]$, the term $c_{i,j}$ is included in $T$ with probability $k/n$ if $i = j$, and with probability $k^2/n^2$ if $i \neq j$. So 
	\begin{equation}
	\begin{split}
	\bex{Z} &= \sum_{i \neq j}  \frac{k^2}{n^2} c_{i,j} + \sum_{i \in [n]} \frac{k}{n} c_{i,i}\\
	&=   \frac{k^2}{n^2} \left( \langle y ,\AA y\rangle - \sum_{i \in [n]}\AA_{i,i} y_i^2 \right) +  \frac{ k}{n}\sum_{i \in [n]}\AA_{i,i} y_i^2 \\
	& \leq - \frac{\eps  k^2}{ 2n } + \left(\frac{ k}{n} +  \frac{ k^2}{n^2}\right) \sum_{i \in [n]} y_i^2\\
		& \leq - \frac{\eps k^2}{ 2n } + \frac{2k}{n} \leq - \frac{\eps k^2}{ 4n }\\
	\end{split}
\end{equation}
Where in the last inequality, we assume $k \geq 8 /\eps$.	
\end{proof}\noindent
Next, we bound the variance of $Z$. We defer the proof of the following Lemma to Section \ref{sec:variance}.

	\begin{lemma}[Variance Bound]\label{lem:var}
		Let $\delta_1,\dots,\delta_n \sim \text{Bernoulli}(\frac{k}{n})$. Let $y \in \R^n$ be any vector such that $\|y\|_2 \leq 1, \|y\|_\infty \leq \frac{1}{\eps \sqrt{n}}$, and $y^\top \AA y = - \eps n$, where $\AA \in \R^{n \times n}$ satisfies $\|\AA\|_\infty \leq 1$. Further suppose that $\RRR_i(y) + \CCC_i(y) \leq 0$ for each $i \in [n]$. Then, assuming $k \geq 6/\eps$, we have
		 \[\mathbf{Var}\left[\sum_{i,j} y_i \AA_{i,j} y_j \delta_i \delta_j	\right] \leq O\left(\frac{k^3}{n^2}\right)  \]
		  Moreover, if the tighter bound $\|y\|_\infty \leq \frac{\alpha}{\eps \sqrt{n}}$ holds for some $\alpha \leq 1$, we have
\[\mathbf{Var}\left[\sum_{i,j} y_i \AA_{i,j} y_j \delta_i \delta_j\right] \leq  O\left(\frac{k^2}{n^2}+ \frac{\alpha k^3}{n^2} \right) 	\]
\end{lemma}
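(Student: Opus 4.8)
Write $Z=\sum_{i,j}c_{i,j}\,\delta_i\delta_j$ with $c_{i,j}:=y_i\AA_{i,j}y_j$ (so $c_{i,j}=c_{j,i}$ by symmetry of $\AA$) and $p:=k/n$, $\delta_i\sim\mathrm{Bernoulli}(p)$ independent. Since $\delta_i^2=\delta_i$, $Z$ is a multilinear polynomial of degree $2$ in the independent variables $\delta_1,\dots,\delta_n$, so the plan is to use its orthogonal (Hoeffding/ANOVA) decomposition: $Z-\ex{Z}=\sum_i Z_{\{i\}}+\sum_{i<j}Z_{\{i,j\}}$, where $Z_{\{i\}}=\ex{Z\mid\delta_i}-\ex{Z}$ and $Z_{\{i,j\}}=\ex{Z\mid\delta_i,\delta_j}-\ex{Z\mid\delta_i}-\ex{Z\mid\delta_j}+\ex{Z}$. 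These summands are pairwise uncorrelated, so $\mathrm{Var}(Z)=\sum_i\mathrm{Var}(Z_{\{i\}})+\sum_{i<j}\mathrm{Var}(Z_{\{i,j\}})$, and it remains to bound the two sums. A direct computation, conditioning on $\delta_i$ (resp.\ on $\delta_i,\delta_j$), yields the closed forms $Z_{\{i\}}=(\delta_i-p)\big(c_{i,i}+p(\RRR_i(y)+\CCC_i(y))\big)$ and $Z_{\{i,j\}}=2c_{i,j}(\delta_i-p)(\delta_j-p)$; in the latter, the cross terms $c_{a,b}\delta_a\delta_b$ sharing exactly one index with $\{i,j\}$ cancel, leaving only the $c_{i,j},c_{j,i}$ terms, which combine via symmetry. (If one prefers to avoid the ANOVA machinery, the identical estimates follow from expanding $\ex{Z^2}-(\ex{Z})^2$ directly and grouping monomials by their number of distinct indices; the decomposition just makes the cancellations against $(\ex{Z})^2$ transparent.)

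\textbf{Degree-two part.} Here $\mathrm{Var}(Z_{\{i,j\}})=4c_{i,j}^2\,p^2(1-p)^2\le 4p^2 c_{i,j}^2$, so
$\sum_{i<j}\mathrm{Var}(Z_{\{i,j\}})\le 2p^2\sum_{i,j}y_i^2\AA_{i,j}^2 y_j^2\le 2p^2\|y\|_2^4\le 2p^2=2k^2/n^2$, using $\|\AA\|_\infty\le1$ and $\|y\|_2\le1$. This is the source of the $O(k^2/n^2)$ term.

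\textbf{Degree-one part.} Using $p(1-p)\le p$ and $(a+b)^2\le 2a^2+2b^2$, $\mathrm{Var}(Z_{\{i\}})\le 2p\,c_{i,i}^2+2p^3(\RRR_i(y)+\CCC_i(y))^2$. For the first piece, $\sum_i c_{i,i}^2=\sum_i y_i^4\AA_{i,i}^2\le\|y\|_\infty^2\|y\|_2^2$, which is $\le 1/(\eps^2 n)$ in general and $\le\alpha^2/(\eps^2 n)$ under the refined hypothesis; since $k\ge6/\eps$ this gives $2p\sum_i c_{i,i}^2\le 2(k/n)(1/\eps^2 n)=O(k^3/n^2)$ (resp.\ $O(\alpha^2 k^3/n^2)$). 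For the second piece I will invoke the structural hypothesis $\RRR_i(y)+\CCC_i(y)\le0$ for all $i$: since all of these quantities share the same sign, $\sum_i(\RRR_i+\CCC_i)^2\le\big(\max_i|\RRR_i+\CCC_i|\big)\cdot\big|\sum_i(\RRR_i+\CCC_i)\big|$. Now $|\RRR_i+\CCC_i|\le 2|y_i|\,\|\AA_{i,*}\|_2\,\|y\|_2\le 2|y_i|\sqrt n\le 2\|y\|_\infty\sqrt n\le 2/\eps$ (resp.\ $2\alpha/\eps$), while $\sum_i(\RRR_i+\CCC_i)=2\big(y^\top\AA y-\sum_i\AA_{i,i}y_i^2\big)$ has magnitude at most $2\eps n+2=O(\eps n)$ because $y^\top\AA y=-\eps n$, $|\sum_i\AA_{i,i}y_i^2|\le\|y\|_2^2\le1$, and $\eps n=\Omega(1)$ (the standing assumption $1/\eps<cn$). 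Hence $\sum_i(\RRR_i+\CCC_i)^2=O(n)$ (resp.\ $O(\alpha n)$), so $2p^3\sum_i(\RRR_i+\CCC_i)^2=O(k^3/n^2)$ (resp.\ $O(\alpha k^3/n^2)$). Adding the two contributions proves $\mathrm{Var}(Z)=O(k^3/n^2)$ in general, and $\mathrm{Var}(Z)=O(k^2/n^2+\alpha k^3/n^2)$ under the refined bound on $\|y\|_\infty$.

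\textbf{Main obstacle.} The only genuinely delicate point — and the sole place the hypothesis $\RRR_i(y)+\CCC_i(y)\le0$ is needed — is the estimate on $\sum_i(\RRR_i+\CCC_i)^2$. Without the common-sign property one could only write $\sum_i|\RRR_i+\CCC_i|\le 2\sqrt n\|y\|_1\le 2n$, which would blow the term up to $2p^3\cdot(2/\eps)\cdot 2n=O(k^4/n^2)$ and destroy the bound. The sign condition is exactly what allows replacing $\sum_i|\RRR_i+\CCC_i|$ by the telescoped quantity $|\sum_i(\RRR_i+\CCC_i)|=O(\eps n)$, and thereby gaining the crucial factor of $\eps n$. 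Everything else is routine second-moment bookkeeping.
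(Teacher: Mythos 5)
Your proof is correct, and it reaches the same bound by a genuinely cleaner route than the paper's. The paper expands $\mathrm{Var}(Z)=\ex{Z^2}-(\ex Z)^2$ directly into roughly a dozen monomial families indexed by index-coincidence patterns, then argues (somewhat informally) about which of them cancel against $(\ex Z)^2$ and bounds the survivors one at a time; it uses the bound $\sum_i(\RRR_i+\CCC_i)^2=O(\alpha n)$ to control the dominant $\frac{k^3}{n^3}\mathcal{T}$ term. You instead observe that $Z$ is a degree-$2$ multilinear polynomial of independent Bernoullis, so the Hoeffding/ANOVA decomposition $Z-\ex Z=\sum_i Z_{\{i\}}+\sum_{i<j}Z_{\{i,j\}}$ with $Z_{\{i\}}=(\delta_i-p)\bigl(c_{i,i}+p(\RRR_i+\CCC_i)\bigr)$ and $Z_{\{i,j\}}=2c_{i,j}(\delta_i-p)(\delta_j-p)$ makes the cancellations automatic and exact: $\mathrm{Var}(Z)=\sum_i\mathrm{Var}(Z_{\{i\}})+\sum_{i<j}\mathrm{Var}(Z_{\{i,j\}})$ isolates exactly the two contributions that survive in the paper's computation (the $c_{i,j}^2$ terms and the $\bigl(c_{i,i}+p(\RRR_i+\CCC_i)\bigr)^2$ terms). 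I verified your closed forms for $Z_{\{i\}}$ and $Z_{\{i,j\}}$ by direct conditioning; they are right, and symmetry of $\AA$ correctly collapses $(c_{i,j}+c_{j,i})$ to $2c_{i,j}$. The critical insight — that the sign hypothesis $\RRR_i+\CCC_i\le 0$ lets one replace $\sum_i|\RRR_i+\CCC_i|$ by the telescoped $|\sum_i(\RRR_i+\CCC_i)|=O(\eps n)$, so that $\sum_i(\RRR_i+\CCC_i)^2\le\max_i|\RRR_i+\CCC_i|\cdot|\sum_i(\RRR_i+\CCC_i)|=O(\alpha n)$ — is the same in both proofs; you reach the $\ell_\infty$ bound $|\RRR_i+\CCC_i|\le 2|y_i|\sqrt{n}$ via Cauchy–Schwarz where the paper uses an $\ell_1/\ell_\infty$ Hölder estimate, to identical effect. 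Your bookkeeping also gives a marginally sharper $O(\alpha^2 k^3/n^2)$ on the diagonal piece where $O(\alpha k^3/n^2)$ suffices, which is fine since $\alpha\le 1$. The net advantage of your approach is transparency: no need to argue case-by-case about which monomials cancel, and no sloppiness of the form ``an overall variance bound of $2B$ will follow.'' This is a worthwhile simplification of the paper's argument.
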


We note that the variance of the random quadratic form can be improved if we have tighter bounds on certain properties of the target vector $y$. We demonstrate this fact in the following Corollary, which we will use in Section \ref{sec:improving}. Note that the assumptions of Corollary \ref{cor:var} differ in several minor ways from those of Lemma \ref{lem:var}. For instance, we do not require $k \geq 6/\eps$ (we note that this assumption was required only to simply the expression in Lemma \ref{lem:var}). Also notice that we do not bound the diagonal terms in Corollary \ref{cor:var}.   
We defer the proof of Corollary \ref{cor:var} to Section \ref{sec:variance}.

\begin{corollary}[Tighter Variance Bound]\label{cor:var}		Let $\delta_1,\dots,\delta_n \sim \text{Bernoulli}(\frac{k}{n})$.
	Let $\AA \in \R^{n \times n}$ with $\|\AA\|_\infty \leq 1$ be any matrix and $y$ a vector such that $|y^\top\AA y| \leq c_1\eps n$ for some value $c_1>0$, and such that $\|y\|_\infty \leq \frac{\alpha}{\eps \sqrt{n}}$ for some $\alpha > 0$. 
	Let $\ZZ \in \R^n$ be defined by $\ZZ_i = \RRR_i(y) + \CCC_i(y)$ for $i \in [n]$, and suppose we have $\|\ZZ\|_2^2 \leq c_2 \eps n$. 
	Then we have	
	\[\mathbf{Var}\left[\sum_{i\neq j} y_i \AA_{i,j} y_j \delta_i \delta_j	\right] \leq  O\left(\frac{k^2}{n^2} +  \frac{c_1^2 k^4 \eps^2}{n^2} +  \frac{(c_1 +c_2 )\eps  k^3}{n^2}	+ \frac{\alpha^2 k^3}{n^2} \right) 	\]
\end{corollary}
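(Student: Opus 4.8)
The plan is to prove Corollary \ref{cor:var} by a direct second-moment computation of the off-diagonal random quadratic form $W = \sum_{i \neq j} y_i \AA_{i,j} y_j \delta_i \delta_j$, closely following the strategy used for Lemma \ref{lem:var} but carrying along the extra quantitative hypotheses ($|y^\top \AA y| \le c_1 \eps n$, $\|\ZZ\|_2^2 \le c_2 \eps n$, $\|y\|_\infty \le \alpha/(\eps\sqrt n)$) instead of discarding them. Writing $c_{i,j} = \AA_{i,j} y_i y_j$, we have $W = \sum_{i \neq j} c_{i,j} \delta_i \delta_j$, and $\mathbf{Var}(W) = \ex{W^2} - (\ex W)^2$. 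First I would expand $W^2 = \sum_{i\neq j}\sum_{k \neq \ell} c_{i,j} c_{k,\ell} \delta_i \delta_j \delta_k \delta_\ell$ and group the terms according to how many distinct indices appear among $\{i,j,k,\ell\}$: the case of all four distinct contributes a term that, after subtracting $(\ex W)^2$, collapses to a manageable remainder because $\ex{\delta_i\delta_j\delta_k\delta_\ell} = (k/n)^4$ exactly matches the corresponding piece of $(\ex W)^2$; the genuine contributions to the variance come from the configurations with a repeated index, i.e. three distinct indices (where a row/column of $c$ gets ``pinched'') and two distinct indices.

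Next I would bound each group of surviving terms. For the three-distinct-index terms, the sums that appear are exactly of the form $\sum_i \big(\sum_{j} c_{i,j}\big)^2$ and cross terms, which by Definition \ref{def:RRCC1} are controlled by $\sum_i (\RRR_i(y) + \CCC_i(y))^2 = \|\ZZ\|_2^2 \le c_2 \eps n$ together with a bound on $\sum_{i\neq j} c_{i,j}^2$; the latter is at most $\|y\|_\infty^2 \sum_{i,j} y_i^2 \AA_{i,j}^2 \le \|y\|_\infty^2 \cdot n \cdot \|y\|_2^2 \le \alpha^2/(\eps^2 n) \cdot n = \alpha^2/\eps^2$, using $\|\AA\|_\infty \le 1$ and $\|y\|_2 \le 1$ (the latter being implied here, or provable from the hypotheses). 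Each such sum is multiplied by the appropriate probability weight: a three-index term carries $\ex{\delta_i\delta_j\delta_k} - (k/n)^{\#} = \Theta(k^3/n^3)$ after the cancellation with $(\ex W)^2$, giving contributions of order $(k^3/n^3)\|\ZZ\|_2^2 = O(c_2 \eps k^3/n^2)$ and $(k^3/n^3)\sum c_{i,j}^2$-type terms of order $O(\alpha^2 k^3/n^2)$. The two-index terms carry weight $\Theta(k^2/n^2)$ and are bounded by $\sum_{i\neq j} c_{i,j}^2 \le \alpha^2/\eps^2$ and by $(\sum_{i\neq j}|c_{i,j}|)$-type quantities; the $(\sum_{i\neq j} c_{i,j})^2$ piece of the all-distinct cancellation residual is where $(y^\top\AA y)^2 \le c_1^2\eps^2 n^2$ enters, producing the $c_1^2 k^4\eps^2/n^2$ term, and a cross term between $\sum_{i\neq j}c_{i,j}$ and the pinched sums produces the $c_1\eps k^3/n^2$ contribution. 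Collecting all of these yields exactly the claimed bound $O\!\big(\tfrac{k^2}{n^2} + \tfrac{c_1^2 k^4 \eps^2}{n^2} + \tfrac{(c_1+c_2)\eps k^3}{n^2} + \tfrac{\alpha^2 k^3}{n^2}\big)$.

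The main obstacle, and where care is needed, is the precise bookkeeping of which moment configurations survive the subtraction of $(\ex W)^2$ and with exactly which probability weights — in particular correctly identifying that the four-distinct-index terms cancel up to a residual of lower order, and that the ``pinched'' configurations split cleanly into a part governed by $\|\ZZ\|_2^2$ (the $\RRR_i + \CCC_i$ structure) and a part governed by $\sum c_{i,j}^2$. One must also be careful that the hypothesis does not include $\|y\|_2 \le 1$ verbatim, so I would either note it follows from the other assumptions in the regime of interest or track an explicit $\|y\|_2$ factor; since the Corollary is applied in Section \ref{sec:improving} to vectors that are truncations/rescalings of a unit eigenvector, $\|y\|_2 \le 1$ will hold there. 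The rest is routine: bound $|\ex{\delta_{i_1}\cdots\delta_{i_r}}| \le (k/n)^r$, use $\|\AA\|_\infty \le 1$ to pass from $c_{i,j}$ to $|y_i||y_j|$, and apply Cauchy–Schwarz to convert $\sum |c_{i,j}|$-type sums into the $\ell_2$ quantities $\|y\|_2$, $\|y\|_\infty$, $\|\ZZ\|_2$ and $|y^\top \AA y|$ already appearing in the statement.
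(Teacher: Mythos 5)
Your plan is essentially the same as the paper's: the paper literally opens its proof with ``we proceed as in Lemma \ref{lem:var}, except that we may remove the terms with $c_{i,j}$ for $i=j$'' and then rebounds the surviving moment terms using the new hypotheses, exactly as you propose — cancel the four-distinct-index piece against $(\ex W)^2$, bound the ``pinched'' three-index sum $\mathcal{T}$ via $\|\ZZ\|_2^2 \le c_2 \eps n$, bound the remaining $(\sum_{i\ne j} c_{i,j})^2$-type residual via $|y^\top\AA y|\le c_1\eps n$, and bound the two-index terms via $\sum_{i\ne j}c_{i,j}^2$ and $\sum_{i\ne j}c_{i,j}c_{j,i}$. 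Your observation that the statement tacitly requires $\|y\|_2 \le 1$ is accurate and worth making explicit; the paper does use it repeatedly.

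Two small points of bookkeeping where you diverge from the paper, neither fatal. First, for the two-index terms the paper bounds $\sum_{i\ne j}c_{i,j}^2 \le \sum_{i,j}y_i^2y_j^2 = \|y\|_2^4 \le 1$, giving a clean $O(k^2/n^2)$; your bound $\sum_{i\ne j}c_{i,j}^2 \le \alpha^2/\eps^2$ via $\|y\|_\infty$ is looser and unnecessary. Second, and related: you attribute the $\alpha^2 k^3/n^2$ term in the conclusion to $(k^3/n^3)\sum c_{i,j}^2$, but with your bound this yields $\alpha^2 k^3/(\eps^2 n^3)$, which is only $O(\alpha^2 k^3/n^2)$ under the extra assumption $\eps \ge 1/\sqrt n$; with the paper's simpler bound the extra additive error in $\mathcal T$ is $O(1)$, contributing only $O(k^3/n^3) = O(k^2/n^2)$. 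In fact the $\alpha^2 k^3/n^2$ term is not actually produced by the paper's proof either — it appears to be slack carried over from the general form of Lemma \ref{lem:var} — so you should not try to manufacture it; proving the stated upper bound does not require that term to ``show up.'' With those adjustments (use $\|y\|_2^4\le 1$ for $\sum c_{i,j}^2$, drop the attempted derivation of the $\alpha^2 k^3/n^2$ term), the outline matches the paper's argument.
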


We now observe that the variance computations from Lemma \ref{lem:var} immediately gives rise to a $O(1/\eps^3)$ algorithm.

\begin{theorem}
	There is a non-adaptive sampling algorithm which queries $O(\eps^{-3})$ entries of $\AA$, and distinguishes the case that $\AA$ is PSD from the case that $\lambda_{\min}(\AA) < - \eps n$ with probability $2/3$. 
\end{theorem}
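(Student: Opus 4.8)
The plan is to combine the warm-up ingredients already assembled: the expectation bound (Proposition~\ref{prop:exp}), the variance bound (Lemma~\ref{lem:var}), and a re-sampling (averaging) trick to drive the variance below the square of the expectation. First I would reduce to the \texttt{NO} case where $\AA$ is symmetric with $\lambda_{\min}(\AA) = -\eps n$, and fix $x$ to be the corresponding unit eigenvector. By Proposition~\ref{prop:1diag} I may assume $\AA_{i,i}=1$ (losing only a constant in $\eps$), by Proposition~\ref{prop:spread} the vector $x$ satisfies $\|x\|_\infty \le \tfrac{1}{\eps\sqrt n}$, and by Fact~\ref{fact:case1OPT} we have $\RRR_i(x)+\CCC_i(x)\le 0$ for all $i$, so the hypotheses of both Proposition~\ref{prop:exp} and Lemma~\ref{lem:var} are met.

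Next I would set $k = \Theta(1/\eps)$ (say $k \ge 8/\eps$), and let $T_1,\dots,T_m \subset [n]$ be $m$ independent Bernoulli($k/n$) samples, where $m = \Theta(1/\eps)$. For each $r$, let $Z_r = x_{T_r}^\top \AA_{T_r\times T_r} x_{T_r}$, and consider the average $\bar Z = \tfrac1m\sum_{r=1}^m Z_r$. By Proposition~\ref{prop:exp}, $\ex{\bar Z} = \ex{Z_1} \le -\tfrac{\eps k^2}{4n}$, and by independence and Lemma~\ref{lem:var}, $\mathbf{Var}(\bar Z) = \tfrac1m \mathbf{Var}(Z_1) \le O\!\big(\tfrac{k^3}{m n^2}\big)$. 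Choosing $m$ a large enough constant times $1/\eps$ makes $\mathbf{Var}(\bar Z) \le O\!\big(\tfrac{\eps^2 k^4}{n^2}\big) = o\big(\ex{\bar Z}^2\big)$, so Chebyshev's inequality gives $\bar Z < 0$ with probability at least $2/3$. When $\bar Z < 0$, at least one $Z_r < 0$, so the principal submatrix $\AA_{T_r\times T_r}$ is not PSD; the algorithm detects this by computing the eigenvalues of each sampled submatrix (equivalently of $\AA_{T_r\times T_r}+\AA_{T_r\times T_r}^\top$) and outputting \texttt{Not PSD} if any is non-PSD. Since $\AA$ PSD forces every principal submatrix PSD, this has one-sided error. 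The query count is $\sum_r |T_r\times T_r|$, which concentrates around $m k^2 = \Theta(1/\eps^3)$ (and one can truncate / condition on $|T_r| = O(k\log(1/\eps))$ to get a worst-case bound, or simply argue in expectation).

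The main obstacle is purely bookkeeping rather than conceptual: one must verify that the target vector $x$ simultaneously satisfies all three technical preconditions of Lemma~\ref{lem:var} — the $\ell_2$ bound ($\|x\|_2=1$), the $\ell_\infty$ bound ($\|x\|_\infty\le \tfrac{1}{\eps\sqrt n}$), and the row/column-sum sign condition $\RRR_i(x)+\CCC_i(x)\le 0$ — after the diagonal normalization of Proposition~\ref{prop:1diag}; since that normalization only adds a nonnegative diagonal, it preserves the eigenvector structure up to the constant-factor change in $\eps$, so the sign condition and spread bounds survive. The only other point requiring a little care is passing from the in-expectation query bound $mk^2$ to a high-probability bound, which follows from a standard Chernoff bound on $\sum_r |T_r|$ (each $|T_r|$ is Binomial($n,k/n$) with mean $k$), at the cost of an extra $\log(1/\eps)$ factor if one insists on a deterministic query bound — but for the clean $O(1/\eps^3)$ statement it suffices to bound the expected number of queries, or to note the algorithm can simply abort (and re-run) if any $|T_r|$ exceeds, say, $10k$.
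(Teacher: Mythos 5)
Your proposal is correct and follows essentially the same approach as the paper's own proof: set $k=\Theta(1/\eps)$, average $\Theta(1/\eps)$ independent Bernoulli-sampled quadratic forms $Z_r$, invoke Proposition~\ref{prop:exp} and Lemma~\ref{lem:var} (via Proposition~\ref{prop:1diag}, Proposition~\ref{prop:spread}, and Fact~\ref{fact:case1OPT}) to bound mean and variance, and apply Chebyshev to conclude $\bar Z<0$ hence some $\AA_{T_r\times T_r}$ is not PSD. The only cosmetic difference is that the paper's eigenvector $x$ is simply taken to be that of the diagonal-normalized matrix (so there is no need to argue the modification "preserves" anything), but this does not affect correctness.
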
	\begin{proof}
Let $x \in \R^n$ be the eigenvector associated with $\lambda_{\min}(\AA) = - \eps n$ (recall that we can assume equality), and let $Z_1,\dots,Z_d$ be independent repetitions of the above process, with $k = 10/\eps $ and $d = 3840/\eps$. Let $Z = \frac{1}{d}\sum_{i=1}^d Z_i$. Then $\bvar{Z} \leq \frac{6}{d}\frac{k^3}{n^2}$ by Lemma \ref{lem:var}, where we used the bounds on $\|x\|_\infty$ from Proposition \ref{prop:spread} and the property that $\RRR_i(x) + \CCC_i(x) \leq 0$ for all $i$ from Fact \ref{fact:case1OPT} to satisfies the assumptions of Lemma \ref{lem:var}. By Chebysev's inequality:
\begin{equation}
\begin{split}
\bpr{ Z \geq -\frac{\eps k^2}{4n} +\frac{\eps k^2}{8n} }	&\leq \left(\frac{64 n^2}{\eps^2 k^4}\right)\left(\frac{6k^3}{dn^2}\right) \\
& \leq \frac{1}{10 \eps k}  \\ 
& \leq \frac{1}{100}
\end{split}
\end{equation}
It follows that with probability $99/100$, the average of the $Z_i$'s will be negative. Thus at least one of the $Z_i$'s must be negative, thus the submatrix corresponding to this $Z_i$ will not be PSD. The total query complexity is $O(k^2 d) = O(\eps^{-3})$.

\end{proof}

\subsection{Variance Bounds}
\label{sec:variance}

In this section, we provide the proofs of the variance bounds in Lemma \ref{lem:var} and Corollary \ref{cor:var}. For convenience, we restate the Lemma and Corollary here before the proofs.

	\smallskip \smallskip \smallskip 
\noindent \textbf{Lemma \ref{lem:var}}  \textit{ 
	Let $\delta_1,\dots,\delta_n \sim \text{Bernoulli}(\frac{k}{n})$. Let $y \in \R^n$ be any vector such that $\|y\|_2 \leq 1, \|y\|_\infty \leq \frac{1}{\eps \sqrt{n}}$, and $y^\top \AA y = - \eps n$, where $\AA \in \R^{n \times n}$ satisfies $\|\AA\|_\infty \leq 1$. Further suppose that $\RRR_i(y) + \CCC_i(y) \leq 0$ for each $i \in [n]$. Then, assuming $k \geq 6/\eps$, we have
		 \[\mathbf{Var}\left[\sum_{i,j} y_i \AA_{i,j} y_j \delta_i \delta_j	\right] \leq O\left(\frac{k^3}{n^2}\right)  \]
		  Moreover, if the tighter bound $\|y\|_\infty \leq \frac{\alpha}{\eps \sqrt{n}}$ holds for some $\alpha \leq 1$, we have
\[\mathbf{Var}\left[\sum_{i,j} y_i \AA_{i,j} y_j \delta_i \delta_j\right] \leq  O\left(\frac{k^2}{n^2}+ \frac{\alpha k^3}{n^2} \right) 	\]
}\smallskip \smallskip \smallskip 

\begin{proof}
Let $c_{i,j} = \AA_{i,j}y_i y_j$. We have
	\begin{equation} \label{eqn:varbig}
\begin{split}
&\mathbf{Var}\left[\sum_{i,j} y_i \AA_{i,j} y_j \delta_i \delta_j	\right] \leq  \frac{k}{n}\sum_{i  }  c_{i,i}^2 + \frac{k^2}{n^2} \sum_{i \neq j} c_{i,j}^2 + \frac{k^2}{n^2} \sum_{i \neq j} c_{i,j} c_{j,i} + \frac{k^2}{n^2 }\sum_{i \neq j } c_{i,i} c_{j,j} + \frac{k^2}{n^2} \sum_{i \neq j}c_{i,i}c_{i,j}   \\
 +& \frac{k^2}{n^2} \sum_{i \neq j}c_{i,i}c_{j,i} + \frac{k^3}{n^3}\sum_{i\neq j \neq u} c_{i,j} c_{u,j}  + \frac{k^3}{n^3}\sum_{j\neq i \neq u} c_{i,j} c_{i,u}  + \frac{k^3}{n^3}\sum_{i\neq j \neq u} c_{i,j} c_{j,u}+ \frac{k^3}{n^3}\sum_{j\neq i \neq u} c_{i,j} c_{u,i} \\ 
 & +  \frac{2k^3}{n^3} \sum_{i \neq j \neq u} c_{i,i} c_{j,u} + \frac{k^4}{n^4}\sum_{i \neq j \neq v \neq u} c_{i,j} c_{u,v}  - \left( \frac{k^2}{n^2}\sum_{i \neq j} y_i \AA_{i,j} y_j - \frac{k}{n} \sum_i \AA_{i,i} y_i^2 \right)^2 \\
\end{split}
\end{equation}

We first consider the last term $\frac{k^4}{n^4}\sum_{i \neq j \neq v \neq u} c_{i,j} c_{u,v} = \sum_{i \neq j} y_i \AA_{i,j} y_j \sum_{u \neq v \neq i \neq j} y_u \AA_{u,v} y_v$. Here $i \neq j \neq v \neq u$ means all $4$ indices are distinct.  Note that this term is canceled by a subset of the terms within $ \left( \frac{k^2}{n^2}\sum_{i \neq j} y_i \AA_{i,j} y_j - \frac{k}{n} \sum_i \AA_{i,i} y_i^2 \right)^2$.  Similarly, the term $\frac{k^2}{n^2 }\sum_{i \neq j } c_{i,i} c_{j,j}$ cancels. Moreover, after expanding $ \left( \frac{k^2}{n^2}\sum_{i \neq j} y_i \AA_{i,j} y_j - \frac{k}{n} \sum_i \AA_{i,i} y_i^2 \right)^2$, every remaining term which does not cancel with another term exactly is equal to another term in the variance above, but with an additional one (or two) factors of $\frac{k}{n}$ attached. Thus, if we can bound the remaining terms in Equation \ref{eqn:varbig} by some value $B$, then an overall variance bound of $2\cdot B$ will follow.

We now consider $\mathcal{T} = \left(\sum_{j\neq i \neq u} c_{i,j} c_{i,u} +\sum_{i\neq j \neq u} c_{i,j} c_{u,j} +\sum_{j\neq i \neq u} c_{i,j} c_{u,i} +\sum_{i\neq j \neq u} c_{i,j} c_{j,u} \right)$. We have
 \[\sum_{i\neq j \neq u} c_{i,j} c_{i,u} = \sum_i \sum_{j \neq i} y_i \AA_{i,j} y_j \sum_{u \neq i \neq j} y_i \AA_{i,u} y_u\]
  \[\sum_{i\neq j \neq u} c_{i,j} c_{u,j} = \sum_j \sum_{i \neq j} y_i \AA_{i,j} y_j \sum_{u \neq i \neq j} y_u \AA_{u,j} y_j\]
  \[\sum_{j\neq i \neq u} c_{i,j} c_{u,i} = \sum_i \sum_{j \neq i} y_i \AA_{i,j} y_j \sum_{u \neq i \neq j} y_u \AA_{u,i} y_i\]
  \[\sum_{i\neq j \neq u} c_{i,j} c_{j,u} = \sum_j \sum_{i \neq j}  y_i \AA_{i,j} y_j \sum_{u \neq i \neq j} y_j \AA_{j,u} y_u\]

Now for simplicity, we write $\RRR_i = \RRR_i(y)$ and $\CCC_i = \CCC_i(y)$ for $i \in [n]$. Then by assumption, we have 
$\RRR_i + \CCC_i \leq 0$ for each $i$, thus $|\sum_i (\RRR_i + \CCC_i) | = \sum_i |(\RRR_i + \CCC_i)|$. 
Also note that we have $|\sum_i (\RRR_i + \CCC_i)| = |2 y^\top \AA y -2\sum_i \AA_{i,i} y_i^2| \leq  4\eps n $. Now observe  
\[\left|\left( \sum_i\sum_{j \neq i} y_i \AA_{i,j} y_j \sum_{u \neq i \neq j} y_i \AA_{i,u} y_u\right)  - \sum_i\RRR_i^2 \right| =  \sum_i\sum_{u \in [n] \setminus i} y_i^2 \AA_{i,u}^2 y_u^2\leq  \sum_iy_i^2 \leq 1\]
And similarly
\[\left|\left(\sum_j \sum_{i \neq j} y_i \AA_{i,j} y_j \sum_{u \neq i \neq j} y_u \AA_{u,j} y_j\right)  -\sum_j\CCC_i^2 \right| = \sum_j\sum_{u \in [n] \setminus j} y_u^2 \AA_{u,j}^2 y_j^2\leq \sum_jy_j^2 \leq 1\]
\[\left|\left( \sum_i \sum_{j \neq i} y_i \AA_{i,j} y_j \sum_{u \neq i \neq j} y_u \AA_{u,i} y_i\right)  - \sum_i\RRR_i\CCC_i \right| =  \sum_i\sum_{u \in [n] \setminus i} y_i^2 \AA_{i,u} \AA_{u,i} y_u^2\leq  \sum_iy_i^2 \leq 1\]
\[\left|\left( \sum_j \sum_{i \neq j}  y_i \AA_{i,j} y_j \sum_{u \neq i \neq j} y_j \AA_{j,u} y_u\right)  -\sum_j\RRR_j\CCC_j \right| = \sum_j\sum_{u \in [n] \setminus j} y_u^2 \AA_{u,j} \AA_{j,u} y_j^2\leq \sum_jy_j^2 \leq 1\]
Taking these four equations together, we obtain $\left| \mathcal{T} - \sum_i (\RRR_i + \CCC_i)^2 \right| \leq 4$, so it will suffice to upper bound the value $ \sum_i (\RRR_i + \CCC_i)^2$ instead. First note that since $|y_i| \leq \frac{1}{\eps \sqrt{n}}$ for all $i$, so for any $i \in [n]$ we have  \[|(\RRR_i + \CCC_i)| \leq  |\sum_{j \neq i} y_i \AA_{i,j} y_j| + |\sum_{j \neq i} y_j \AA_{j,i} y_i| \leq \frac{1}{\eps \sqrt{n}}(\sum_{j }  2y_j) \leq \frac{2}{\eps \sqrt{n}} \|y\|_1  \leq  \frac{2}{\eps} \] 
Combining this bound with the fact that $\sum_i |(\RRR_i + \CCC_i)|  \leq 4 \eps n$ from earlier, it follows that the sum $\sum_i (\RRR_i + \CCC_i)^2$ is maximized by setting $2 \eps^2 n$ of the terms $(\RRR_i + \CCC_i)$ equal to the largest possible value of $(2/\eps)$, so that $\sum_i (\RRR_i + \CCC_i)^2 \leq 2 \eps^2 n (2/\eps)^2 = O(n)$. This yields an upper bound of $\frac{k^3}{n^3}\mathcal{T} = O(\frac{k^3}{n^2})$. Note, that in general, given the bound $\|y\|_\infty \leq \frac{\alpha}{\eps \sqrt{n}}$ for some value $\alpha \leq 1$, then each term $|(\RRR_i + \CCC_i)| \leq \frac{2\alpha }{\eps}$. On the other hand, $\sum_i |(\RRR_i + \CCC_i)| \leq 4 \eps n$.
Thus, once again, $\sum_i |(\RRR_i + \CCC_i)|^2$ is maximized by setting $\Theta(\eps^2 n/\alpha )$ inner terms equal to $\Theta((\frac{\alpha }{\eps})^2)$, giving $\mathcal{T} \leq \alpha n$ for general $\alpha < 1$. 
Thus, for general $\alpha \leq 1$, we have $\frac{k^3}{n^3}\mathcal{T} = O(\frac{\alpha k^3}{n^2})$.

Next, we bound $\frac{k^2}{n^2} \sum_{i \neq j}c_{i,i}c_{i,j} + \frac{k^2}{n^2} \sum_{i \neq j}c_{i,i}c_{j,i}$ by $\frac{k^2}{n^2}\sum_i y_i^2( \RRR_i + \CCC_i) $. As shown above, $ | \RRR_i + \CCC_i| \leq 2y_i\sqrt{n}$, thus altogether we have
\begin{equation}\label{eqn:secondorderterms}
  \frac{k^2}{n^2}\left( \sum_{i \neq j}c_{i,i}c_{i,j} +\sum_{i \neq j}c_{i,i}c_{j,i} \right)\leq \frac{k^2}{n^2}\sum_i y_i^3 \sqrt{n}   
\end{equation}
 Using that $\|y\|_\infty \leq \frac{\alpha}{\eps \sqrt{n}}$ for $\alpha \leq 1$, and the fact that $\|y\|_2^2 \leq 1$, it follows that $\|y\|_3^3$ is maximized by having $\frac{n \eps^2}{\alpha^2}$ terms equal to $\|y\|_\infty \leq\frac{\alpha}{\eps \sqrt{n}}$, which gives an upper bound of $\|y\|_3^3 \leq \frac{\alpha}{\eps \sqrt{n}}$. Thus, we can bound the right hand side of Equation \ref{eqn:secondorderterms} by $\frac{k^2 \alpha}{n^2 \eps}$, which is $O(k^3/n^2)$ when $\alpha = 1$ using that $k = \Omega(1/\eps)$.

Now, we bound $\frac{k^3}{n^3} \sum_{i \neq j \neq u} c_{i,i} c_{j,u} $ by 

\begin{equation}
\begin{split}
\frac{k^3}{n^3} \sum_{i \neq j \neq u} c_{i,i} c_{j,u} &\leq \frac{k^3}{n^3} \sum_i y_i^2 \AA_{i,i} \sum_{j \neq u \neq i} y_j\AA_{j,u} y_u  \\ 
& \leq \frac{k^3}{n^3} \sum_i y_i^2 \AA_{i,i} \sum_{j \neq u \neq i} y_j\AA_{j,u} y_u    \\
& \leq \frac{k^3}{n^3} \sum_i y_i^2 \AA_{i,i} \left(\eps n + O(1)\right) \\
&\leq \frac{\eps k^3}{n^2} \\
& = O( \frac{k^2}{n^2})
\end{split}
\end{equation}

 Also observe that $\sum_{i ,j }  c_{i,j}^2 \leq \sum_{i,j} y_i^2 y_j^2 = \|y|_2^4 \leq 1$, so $\sum_{i \neq j} c_{i,j}^2 \leq   \sum_{i,  j} c_{i,j}^2  \leq 1$, and also $\sum_{i \neq j} c_{i,j} c_{j,i} \leq  \sum_{i,j} y_i^2 y_j^2  \leq 1$, which bounds their corresponding terms in the variance by $O(k^2/n^2)$.
Finally, we must bound the last term $\frac{k}{n}\sum_{i  }  c_{i,i}^2  = \frac{k}{n}\sum_i y_i^4 \AA_{i,i}^2 \leq\frac{k}{n} \sum_i y_i^4 $. Note that $|y_i| \leq 1/(\eps \sqrt{n})$ for each $i$, and $\|y\|_2 \leq 1$. Thus $\sum_i y_i^4$ is maximized when one has $\eps^2 n$ terms equal to  $1/(\eps \sqrt{n})$ , and the rest set to $0$. So 
 $\sum_i y_i^4 \leq \eps^2 n(\frac{1}{\eps \sqrt{n}})^4 \leq \frac{1}{\eps^2 n}$. 
 In general, if $\|y\|_\infty \leq \frac{\alpha}{\eps \sqrt{n}}$, we have $\sum_i y_i^4 \leq \frac{\eps^2 n}{\alpha^2 }(\frac{\alpha}{\eps \sqrt{n}})^4 \leq \frac{\alpha^2}{\eps^2 n}$. Thus we can bound $\frac{k}{n}\sum_i c_{i,i}^2$ by $O(\frac{k^3 \alpha^2}{n^2})$

Altogether, this gives 

	\begin{equation}
\begin{split}
&\mathbf{Var}\left[\sum_{i,j} y_i \AA_{i,j} y_j \delta_i \delta_j	\right] \leq  O(\frac{k^2}{n^2}+ \frac{\alpha k^3}{n^2} + \frac{\alpha k^2  }{n^2 \eps}  + \frac{\alpha^2 k^3}{n^2}) \\ 
& = O(\frac{k^2}{n^2}+ \frac{\alpha k^3}{n^2}  + \frac{\alpha^2 k^3}{n^2})\\
	 \end{split}
	 \end{equation}
	 which is $O(k^3/n^2)$ in general (where $\alpha \leq 1$), 
	 where we assume $k \geq 6/\eps$ throughout. 
 \end{proof}

	\smallskip \smallskip \smallskip 
\noindent \textbf{Corollary \ref{cor:var}}  \textit{ 
Let $\delta_i \in \{0,1\}$ be an indicator random variable with $\ex{\delta_i} = k/n$. 
	Let $\AA \in \R^{n \times n}$ with $\|\AA\|_\infty \leq 1$ be any matrix and $y$ a vector such that $|y^\top\AA y| \leq c_1\eps n$ for some value $c_1>0$, and such that $\|y\|_\infty \leq \frac{\alpha}{\eps \sqrt{n}}$ for some $\alpha > 0$. 
	Let $\ZZ \in \R^n$ be defined by $\ZZ_i = \RRR_i(y) + \CCC_i(y)$ for $i \in [n]$, and suppose we have $\|\ZZ\|_2^2 \leq c_2 \eps n$. 
	Then we have	
	\[\mathbf{Var}\left[\sum_{i\neq j} y_i \AA_{i,j} y_j \delta_i \delta_j	\right] \leq  O\left(\frac{k^2}{n^2} +  \frac{c_1^2 k^4 \eps^2}{n^2} +  \frac{(c_1 +c_2 )\eps  k^3}{n^2}	+ \frac{\alpha^2 k^3}{n^2} \right) 	\]
}\smallskip \smallskip \smallskip

\begin{proof}
	We proceed as in Lemma \ref{lem:var}, except that we may remove the terms with $c_{i,j}$ for $i=j$, yielding
		\begin{equation} 
	\begin{split}
	&\mathbf{Var}\left[\sum_{i\neq j} y_i \AA_{i,j} y_j \delta_i \delta_j	\right] \leq \frac{k^2}{n^2} \sum_{i \neq j} c_{i,j}^2 + \frac{k^2}{n^2} \sum_{i \neq j} c_{i,j} c_{j,i}
 + \frac{k^3}{n^3}\sum_{i\neq j \neq u} c_{i,j} c_{u,j}  \\ 
 & + \frac{k^3}{n^3}\sum_{j\neq i \neq u} c_{i,j} c_{i,u}  + \frac{k^3}{n^3}\sum_{i\neq j \neq u} c_{i,j} c_{j,u}+ \frac{k^3}{n^3}\sum_{j\neq i \neq u} c_{i,j} c_{u,i} \\ 
	& +  \frac{k^3}{n^3} \sum_{i \neq j \neq u} c_{i,i} c_{j,u} + \frac{k^4}{n^4}\sum_{i \neq j \neq v \neq u} c_{i,j} c_{u,v}- \left( \frac{k^2}{n^2}\sum_{i \neq j} y_i \AA_{i,j} y_j \right)^2 \\
	\end{split}
	\end{equation}
As in Lemma \ref{lem:var}, we can cancel the term $\frac{k^4}{n^4}\sum_{i \neq j \neq v \neq u} c_{i,j} c_{u,v}$ with a subterm of $-\left( \frac{k^2}{n^2}\sum_{i \neq j} y_i \AA_{i,j} y_j \right)^2$, and bound the remaining contribution of $-\left( \frac{k^2}{n^2}\sum_{i \neq j} y_i \AA_{i,j} y_j \right)^2$ by individually bounding the other terms in the sum.

First, we can similarly bound the last term by $c_1^2 \eps^2  k^4/n^2$ as needed. Now when bounding  $$\mathcal{T} = \left(\sum_{j\neq i \neq u} c_{i,j} c_{i,u} +\sum_{i\neq j \neq u} c_{i,j} c_{u,j} +\sum_{j\neq i \neq u} c_{i,j} c_{u,i} +\sum_{i\neq j \neq u} c_{i,j} c_{j,u} \right)$$ we first observe that in the proof of  Lemma \ref{lem:var}, we only needed a bound on $\|\ZZ\|_2^2$ to give the bound on $\mathcal{T}$. 
So by assumption, $\|\ZZ\|_2^2 \leq c_2 \eps n$, which gives a total bound of $\frac{c_2 k^3 \eps }{n^2}$ on $\frac{k^3}{n^3}\mathcal{T}$.  

	Also, we bound we bound $\frac{k^3}{n^3} \sum_{i \neq j \neq u} c_{i,i} c_{j,u} $ by 
	
	\begin{equation}
	\begin{split}
	\frac{k^3}{n^3} \sum_{i \neq j \neq u} c_{i,i} c_{j,u} &\leq \frac{k^3}{n^3} \sum_i y_i^2 \AA_{i,i} \sum_{j \neq u \neq i} y_j\AA_{j,u} y_u  \\ 
	& \leq \frac{k^3}{n^3} \sum_i y_i^2 \AA_{i,i} \sum_{j \neq u \neq i} y_j\AA_{j,u} y_u    \\
	& \leq \frac{k^3}{n^3} \sum_i y_i^2 \AA_{i,i} \left(c_1\eps n + O(1)\right) \\
	&\leq \frac{\eps c_1 k^3}{n^2} \\
	\end{split}
	\end{equation}
	which is within our desired upper bound. Finally observe that $\sum_{i ,j }  c_{i,j}^2 \leq \sum_{i,j} y_i^2 y_j^2 = \|y|_2^4 \leq 1$, so $\sum_{i \neq j} c_{i,j}^2 \leq   \sum_{i,  j} c_{i,j}^2  \leq 1$, and also $\sum_{i \neq j} c_{i,j} c_{j,i} \leq  \sum_{i,j} y_i^2 y_j^2  \leq 1$, which bounds their corresponding terms in the variance by $O(k^2/n^2)$, which completes the proof.

\end{proof}

\subsection{Improving the complexity to \texorpdfstring{$\tilde{O}(1/\eps^{2})$}{epsilon square} }\label{sec:improving}

We now demonstrate how to obtain an improved sample complexity of $\tilde{O}(1/\eps^2)$ using different scales of sub-sampling, as well as a careful ``eigenvector switching'' argument. As before, we can assume that $\AA$ is symmetric, and $x = \arg \min_{v \in \R^n, \|v\|_2 \leq 1} v^\top \AA v$ is the smallest eigenvector of $\AA$, so that that $\langle x, \AA x \rangle = \lambda_{\min}(\AA) = - \eps n$. Also recall that our algorithms will not need to explicitly know the value $\eps = \min_{v \in \R^n, \|v\|_2 \leq 1} v^\top \AA v/n$, only a lower bound on it, since when run on smaller $\eps$ our algorithm only samplers larger submatrices. 
By Proposition \ref{prop:spread}, we have $\|x\|_\infty \leq \frac{1}{\eps \sqrt{n}}$. We now partition the coordinates of $x$ into \textit{level-sets}, such that all the coordinates $x_i^2$ within a level set have magnitudes that are close to each other.
\begin{definition}
Given $(\AA,x)$, where $x$ is as defined above, define the base level set $S$ as $S = \{ i \in [n] \; : \; |x_i|^2 \leq \frac{100}{\eps n } \}$, and let $T_a  = \{ i \in [n] \; : \; \frac{100\cdot 2^{a-1}}{\eps n}  \leq |x_i|^2 \leq \frac{ 100 \cdot2^{a}}{\eps n } \}$ for an integer $a \geq 1$.
\end{definition}

 We now break the analysis into two possible cases. In the first case, the coordinates in one of the sets $T_a$ contributed a substantial fraction of the ``negativeness'' of the quadratic form $x^\top_T \AA x$, for some $a$ sufficiently large. Since the sets $T_a$ can become smaller as $a$ increases while still contributing a large fraction of the negative mass, this case can be understood as the negativeness of $x^\top \AA x$ being highly concentrated in a small fraction of the matrix, which we must then find to determine that $\AA$ is not PSD. In the second case, no such contributing $T_a$ exists, and the negative mass is spread out more evenly across the terms in the quadratic form $x^\top \AA x$. If this is the case, we will show that our variance bounds from the prior section can be made to obtain a proof that a single, large sampled principal submatrix $T \subset [n]$ will satisfies $x^\top_T \AA_{T \times T} x_{T}$ will be negative with non-negigible probability. 
 
Formally, we define the two cases as follows:

\paragraph{Case 1:} We have $x_S \AA x_{T_a} + x_{T_a} \AA x_S  \leq -\frac{\eps n }{10\log(1/\eps)}$ for some $a$ such that $2^a  \geq 10^6 \zeta^3$, for some $\zeta = \Theta(\log^2(1/\eps))$ with a large enough constant.

\paragraph{Case 2:} The above does not hold; namely,  we have $x_S \AA x_{T_a} + x_{T_a} \AA x_S  > -\frac{\eps n }{10\log(1/\eps)}$ for every $2^a  \geq 10^6 \zeta^3$.

\subsubsection{Case 1: Varied Subsampling and Eigenvector Switching}
In this section, we analyze the \textbf{Case 1}, which specifies that $x_S \AA x_{T_a} + x_{T_a} \AA x_S  \leq -\frac{\eps n }{10\log(1/\eps)}$ for some $T_a$ such that $2^a  \geq 10^6 \zeta^3$, where $\zeta = \Theta(\log^2(1/\eps))$ is chosen with a sufficiently large constant. Recall here that $x \in \R^n$ is the (unit) eigenvector associated with $\lambda_{\min}(\AA) = - \eps n$. We now fix this value $a$ associated with $T_a$. 
In order to find a principal submatrix $\AA_{T \times T}$ that is not PSD for some sampled subset $T \subset [n]$, we will need to show that $T \cap T_a$ intersects in at least one coordinate.

 As discussed in Section \ref{sec:techlinf}, we will need to switch our analysis from $x$ to a different vector $y$, in order to have $y^\top_T \AA_{T \times T} y_T <  0$ with non-negligible probability conditioned on $|T \cap T_a| \geq 1$.
To construct the appropriate vector $y$, we will first proceed by proving several propositions which bound how the quadratic form $x^\top \AA x$ changes as we modify or remove some of the coordinates of $x$.
For the following propositions, notice that by definition of $T_b$, using the fact that $\|x\|_2^2 \leq 1$, we have that $|T_b| \leq \frac{\eps n}{100 2^{b-1}}$ for any $b \geq 1$, which in particular holds for $b=a$. 

  \begin{proposition}\label{prop:rectangularcontribution}
  Let $\AA \in \R^{n \times n}$ satisfy $\|\AA\|_\infty \leq 1$.	Let $S,T \subset [n]$, and let $v \in \R^n$ be any vector such that $\|v\|_2 \leq 1$. Then $|v_S^\top \AA v_T| \leq \sqrt{|S|\cdot | T|}$
  \end{proposition}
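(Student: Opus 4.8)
The plan is to reduce the bilinear form $v_S^\top \AA v_T$ to an honest matrix--vector computation on the $S \times T$ submatrix and then apply Cauchy--Schwarz together with the trivial bound $\|\AA_{S \times T}\|_2 \leq \|\AA_{S \times T}\|_F$. Concretely, observe that since $v_S$ (resp. $v_T$) vanishes on coordinates outside $S$ (resp. $T$), only the entries of $\AA$ indexed by $S \times T$ appear in the sum, so that $v_S^\top \AA v_T = \tilde v_S^\top \AA_{S \times T} \tilde v_T$, where $\tilde v_S \in \R^{|S|}$ and $\tilde v_T \in \R^{|T|}$ denote the restrictions of $v$ to the coordinates in $S$ and $T$ respectively. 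Note $\|\tilde v_S\|_2 = \|v_S\|_2 \leq \|v\|_2 \leq 1$ and likewise $\|\tilde v_T\|_2 \leq 1$.

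Now I would simply chain the standard inequalities:
\[
\left| v_S^\top \AA v_T \right| = \left| \tilde v_S^\top \AA_{S \times T} \tilde v_T \right| \leq \|\tilde v_S\|_2 \, \|\AA_{S \times T}\|_2 \, \|\tilde v_T\|_2 \leq \|\AA_{S \times T}\|_2 \leq \|\AA_{S \times T}\|_F .
\]
Finally, since $\|\AA\|_\infty \leq 1$ every entry of $\AA_{S \times T}$ has magnitude at most $1$, and the submatrix has exactly $|S| \cdot |T|$ entries, so $\|\AA_{S \times T}\|_F = \big( \sum_{i \in S, j \in T} \AA_{i,j}^2 \big)^{1/2} \leq \sqrt{|S| \cdot |T|}$, which yields the claim.

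There is essentially no obstacle here; the one point worth being careful about is to pass to the $S \times T$ \emph{submatrix} before invoking the operator-norm bound, rather than bounding $\|\AA v_T\|_2$ by $\|\AA\|_F \leq n$. Working with the full matrix would only give the weaker estimate $\sqrt{n \cdot |T|}$, whereas restricting to rows in $S$ at the outset is exactly what produces the sharper $\sqrt{|S| \cdot |T|}$ needed in the sequel.
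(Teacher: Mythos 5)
Your proof is correct, and it takes a slightly different route from the paper's. The paper argues entirely with scalar inequalities: it expands $|v_S^\top \AA v_T| = |\sum_{i \in S}\sum_{j \in T} v_i \AA_{i,j} v_j|$, applies the triangle inequality and $|\AA_{i,j}| \leq 1$ to get $\|v_S\|_1 \|v_T\|_1$, and then uses the $\ell_1$--$\ell_2$ comparison $\|v_S\|_1 \leq \sqrt{|S|}\,\|v_S\|_2$ (and similarly for $T$). You instead pass to the $|S|\times|T|$ submatrix, bound the bilinear form by $\|\AA_{S\times T}\|_2$ via Cauchy--Schwarz, and then use $\|\AA_{S\times T}\|_2 \leq \|\AA_{S\times T}\|_F \leq \sqrt{|S|\cdot|T|}$. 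Both arrive at exactly the same constant. The paper's argument is a bit more elementary (no matrix norms needed), while yours makes the structural point that the bound is really a spectral-norm estimate for the submatrix, which is a cleaner way to see why restricting to $S\times T$ up front is what buys the $\sqrt{|S||T|}$ rather than $\sqrt{n\,|T|}$ --- a point you rightly flag. Either proof is acceptable; the difference is stylistic.
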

\begin{proof} 
	We have $|v_S^\top \AA v_T| = |\sum_{i \in S} \sum_{j \in T} v_i \AA_{i,j}v_j|\leq  \sum_{i \in S}|v_i| \sum_{j \in T} |v_j| \leq  \sum_{i \in S}|v_i| \|v_T\|_1 \leq \|v_S\|_1\|v_T\|_1\allowbreak \leq \sqrt{|S| |T|}$ as needed.
\end{proof}

  \begin{proposition}\label{prop:rectangularcontribution2}
  Let $\AA \in \R^{n \times m}$ satisfy $\|\AA\|_\infty \leq 1$ for any $n,m$. and let $v \in \R^{n}, u \in \R^{m}$ satisfy $\|u\|_2^2, \|v\|_2^2 \leq 1$. Then 
  \[\sum_{j=1}^m \left(\sum_{i=1}^{n} v_i \AA_{i,j} u_j \right)^2 \leq n\]
  \end{proposition}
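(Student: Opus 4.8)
The plan is a direct application of the Cauchy--Schwarz inequality, exploiting that a matrix with bounded entries can only blow up the $\ell_2$ norm of a unit vector by a factor of $\sqrt{n}$. First I would pull the scalar $u_j$ out of the inner sum, writing
\[
\sum_{j=1}^m \left(\sum_{i=1}^{n} v_i \AA_{i,j} u_j \right)^2 \;=\; \sum_{j=1}^m u_j^2 \left(\sum_{i=1}^{n} v_i \AA_{i,j}\right)^2 \;=\; \sum_{j=1}^m u_j^2 \, \langle v, \AA_{*,j}\rangle^2 ,
\]
where $\AA_{*,j}$ is the $j$-th column of $\AA$.

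Next, for each fixed $j$, Cauchy--Schwarz gives $\langle v, \AA_{*,j}\rangle^2 \le \|v\|_2^2 \,\|\AA_{*,j}\|_2^2 \le \|\AA_{*,j}\|_2^2$, using $\|v\|_2 \le 1$; and since $\|\AA\|_\infty \le 1$ we have $\|\AA_{*,j}\|_2^2 = \sum_{i=1}^n \AA_{i,j}^2 \le n$. Substituting back, the quantity above is at most $\sum_{j=1}^m u_j^2 \cdot n = n\,\|u\|_2^2 \le n$, which is exactly the claimed bound.

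There is no real obstacle here: the only thing worth noting is that the estimate is tight (taking $\AA = \mathbf{1}$, $v_i = 1/\sqrt{n}$, and $u$ a standard basis vector attains equality), so the factor $n$ cannot be improved and this one-line Cauchy--Schwarz argument is the right one. The statement is the rectangular analogue of Proposition \ref{prop:rectangularcontribution}, and like it, it will be used later to control the contribution of off-diagonal blocks of $\AA$ under the eigenvector-switching argument.
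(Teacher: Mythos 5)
Your proof is correct and essentially matches the paper's: both pull out $u_j^2$, bound the remaining inner product by $\sqrt{n}$ using boundedness of $\AA$'s entries and $\|v\|_2\le 1$, and sum against $\|u\|_2^2\le 1$. The only cosmetic difference is that you apply Cauchy--Schwarz directly to $\langle v,\AA_{*,j}\rangle$, whereas the paper bounds it by $\|v\|_1$ (via $|\AA_{i,j}|\le 1$) and then uses $\|v\|_1\le\sqrt{n}\,\|v\|_2$ — the same estimate in slightly different clothing.
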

  \begin{proof}
  We have $\left(\sum_{i=1}^{n} v_i \AA_{i,j} u_j \right)^2  \leq u_j^2\left(\sum_{i=1}^{n} 
 |v_i|  \right)^2 \leq u_j^2 \|v\|_1^2$, so the sum can be bounded by $\sum_{j=1}^m  u_j^2 \|v\|_1^2 = \|v\|_1^2 \|u\|_2^2 \leq \|v\|_1^2 \leq n$ as needed.

  \end{proof}

  \begin{proposition}\label{prop:Sbound}
  Let $x$ be as defined above.  Then we have  $|\langle x_S, \AA x_S\rangle| \leq 10 \eps n$.
  \end{proposition}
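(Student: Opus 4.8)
The plan is to split $x$ into its restriction $x_S$ to the base level set and its restriction $x_{\overline{S}}$ to the complement $\overline{S} = \{i : |x_i|^2 > 100/(\eps n)\}$, expand $\langle x,\AA x\rangle = \lambda_{\min}(\AA) = -\eps n$ into the three pieces $x_S^\top\AA x_S$, $2\,x_S^\top \AA x_{\overline{S}}$, and $x_{\overline{S}}^\top\AA x_{\overline{S}}$, and then solve for $x_S^\top \AA x_S$. The key point — and the one step that needs an idea — is that a naive triangle-inequality estimate on the cross term only gives $|x_S^\top\AA x_{\overline{S}}| \leq \sqrt{|S|\cdot|\overline{S}|} \leq \sqrt{n\cdot \eps n/100}$ via Proposition~\ref{prop:rectangularcontribution}, which is of order $\sqrt{\eps}\,n$ and far too weak for the claimed $O(\eps n)$ bound; the argument instead handles the cross term \emph{exactly}, using that $x$ is the eigenvector $\AA x = -\eps n\,x$.

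First I would record that $\overline{S}$ is small: every $i\in\overline{S}$ has $|x_i|^2 > 100/(\eps n)$, while $\sum_i |x_i|^2 \leq \|x\|_2^2 \leq 1$, so $|\overline{S}| < \eps n/100$; in particular $\|x_{\overline{S}}\|_2^2 \leq 1$.

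Next comes the key identity. From $\AA x = -\eps n\,x$ and $x_{\overline{S}}^\top x = \|x_{\overline{S}}\|_2^2$ we get $x_{\overline{S}}^\top \AA x = -\eps n\,\|x_{\overline{S}}\|_2^2$; on the other hand $x_{\overline{S}}^\top \AA x = x_{\overline{S}}^\top \AA x_S + x_{\overline{S}}^\top \AA x_{\overline{S}}$, so using symmetry of $\AA$,
$$x_S^\top \AA x_{\overline{S}} \;=\; x_{\overline{S}}^\top \AA x_S \;=\; -\eps n\,\|x_{\overline{S}}\|_2^2 \;-\; x_{\overline{S}}^\top \AA x_{\overline{S}}.$$
Substituting this into $-\eps n = \langle x,\AA x\rangle = x_S^\top\AA x_S + 2\,x_S^\top\AA x_{\overline{S}} + x_{\overline{S}}^\top\AA x_{\overline{S}}$ and simplifying yields
$$x_S^\top \AA x_S \;=\; -\eps n \;+\; 2\eps n\,\|x_{\overline{S}}\|_2^2 \;+\; x_{\overline{S}}^\top \AA x_{\overline{S}}.$$

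Finally I would bound the two error terms: $2\eps n\,\|x_{\overline{S}}\|_2^2 \leq 2\eps n$, and by Proposition~\ref{prop:rectangularcontribution} applied with $v = x$ and both index sets equal to $\overline{S}$, $|x_{\overline{S}}^\top\AA x_{\overline{S}}| \leq |\overline{S}| < \eps n/100$. Hence $|\langle x_S,\AA x_S\rangle| \leq \eps n + 2\eps n + \eps n/100 < 10\eps n$, as claimed (this simultaneously gives the matching lower bound $\langle x_S,\AA x_S\rangle \geq -(1+1/100)\eps n$). The only real content is the identity in the middle step; everything after it is a one-line estimate exploiting that $\overline{S}$ has size $O(\eps n)$.
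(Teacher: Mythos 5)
Your proof is correct, and it takes a genuinely different route from the paper's. The paper argues by contradiction using the \emph{variational} (minimality) property of $x$: it supposes $\langle x_S, \AA x_S\rangle = C\eps n$ with $|C|>10$, bounds the $\overline{S}\times\overline{S}$ block by $\eps n/100$ via Proposition~\ref{prop:rectangularcontribution}, deduces that the cross terms must be very negative to add up to $-\eps n$, and then exhibits a unit vector $v = x_S/2 + x_{\overline{S}}$ whose quadratic form drops below $\lambda_{\min}(\AA)$ — a contradiction (the case $C < -10$ contradicts $\lambda_{\min} = -\eps n$ directly). You instead use the \emph{eigenvector equation} $\AA x = -\eps n\, x$ to compute the cross term $x_{\overline{S}}^\top \AA x_S$ exactly, which turns the whole argument into a one-line identity
\[
x_S^\top \AA x_S \;=\; -\eps n + 2\eps n\,\|x_{\overline{S}}\|_2^2 + x_{\overline{S}}^\top \AA x_{\overline{S}},
\]
after which only the crude bounds $\|x_{\overline{S}}\|_2^2 \le 1$ and $|x_{\overline{S}}^\top \AA x_{\overline{S}}| \le |\overline{S}| < \eps n/100$ are needed. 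Both proofs hinge on $x$ being the bottom eigenvector, but you exploit the eigenvalue equation directly where the paper exploits optimality. Your route is more direct (no contradiction, no auxiliary test vector $v$) and incidentally gives a sharper constant, $|\langle x_S,\AA x_S\rangle| \le 3.01\,\eps n$; the paper's variational approach, on the other hand, would still work in settings where one only knows $x$ is a near-minimizer rather than an exact eigenvector.
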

  \begin{proof}
  	Suppose $\langle x_S, \AA x_S\rangle= C\eps n$ for a value $C$ with $|C|>10$. 
  	Note that $|\langle x_{[n] \setminus S}, \AA x_{[n] \setminus S} \rangle| \leq \frac{\eps n}{100}$ by Proposition \ref{prop:rectangularcontribution}, using that $|[n] \setminus S| = |\cup_{b \geq 1} T_b| \leq \frac{\eps n}{100}$ (here we use the fact that at most $\frac{\eps n}{100}$ coordinates of a unit vector can have squared value larger than $\frac{100}{\eps n}$). If $C>0$, then we must have that $(\langle x_S ,\AA x_{[n] \setminus S} \rangle +\langle x_{[n] \setminus S}, \AA x_S\rangle ) \leq - ( C +99/100) \eps n$ for us to have that $\langle x , \AA x\rangle = - \eps n$ exactly. 
  	Thus if $C$ is positive and larger than $10$, it would follow that by setting $v=x_S/2 + x_{[n] \setminus S}$, we would obtain a vector $v$ with $\|v\|_2 \leq 1$ such that $v$ has smaller quadratic form with $\AA$ than $x$, namely with $v^\top \AA v \leq - ( C +99/100) \eps n /2 +  \eps C n/4 + n \eps / 100 < -\eps n$ using that $C > 10$, which contradicts the optimality of $x$ as the eigenvector for $\lambda_{\min}(\AA)$. Furthermore, if $C < -10$, then $x_S^\top \AA x_S < - 10 \eps$, which again contradicts the optimality of $x$. 
  	
  \end{proof}

 Now recall that the total row and column contributions of $i$ are defined as $\RRR_i(x) = \sum_{j \in [n] \setminus i} x_i \AA_{i,j} x_j$ and $\CCC_i(x) = \sum_{j \in [n] \setminus i} x_j \AA_{j,i} x_i$ respectively. In the remainder of the section, we simply write $\RRR_i = \RRR_i(x)$ and $\CCC_i = \CCC_i(x)$. We now define the contribution of $i$ within the set $S \subset [n]$.
\begin{definition}
Let $S \subset [n]$ be as defined above. Then for any $i \in [n]$, define the row and column contributions of $i$ within $S$ as $\RRR_{i}^S = \sum_{j  \in S \setminus i} x_i \AA_{i,j} x_j$ and $\CCC_{i}^S = \sum_{j \in S \setminus i} x_j \AA_{j,i} x_i$ respectively. 
\end{definition}\noindent
Observe from the above definition, we have $\sum_{i \in T_a} (\RRR_{i}^S + \CCC_{i}^S)= (x_S \AA x_{T_a} + x_{T_a} \AA x_S) \leq  - \frac{\eps n}{10 \log(1/\eps)}$, where the inequality holds by definition of Case 1.

     \begin{proposition}\label{propLl1rowcolumnbound}
         We have $\sum_{i \in S} (\RRR_{i}^S + \CCC_{i}^S)^2 \leq 1601 \eps n$. 
     \end{proposition}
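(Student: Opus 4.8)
The plan is to put $\RRR_i^S+\CCC_i^S$ into closed form and then exploit the eigenvalue equation $\AA x=-\eps n\,x$ to bound it. Recall we have reduced to the case that $\AA$ is symmetric, that $\AA_{i,i}=1$ for all $i$ (Proposition \ref{prop:1diag}), and that $x$ is the unit eigenvector with $\AA x=\lambda_{\min}(\AA)x=-\eps n\,x$. Two elementary facts about $S$ are used throughout: for $i\in S$ we have $x_i^2\le 100/(\eps n)$ by definition of $S$, and $|[n]\setminus S|\le \eps n/100$, since $\|x\|_2^2\le 1$ forces at most $\eps n/100$ coordinates to exceed the threshold $100/(\eps n)$.

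First I would observe that for $i\in S$, adding and subtracting the diagonal term and using $\AA_{i,i}=1$ together with symmetry gives $\RRR_i^S=\CCC_i^S=x_i(\AA x_S)_i-x_i^2$, hence $\RRR_i^S+\CCC_i^S=2x_i(\AA x_S)_i-2x_i^2$. The key step is then to write $x_S=x-x_{[n]\setminus S}$, so that $\AA x_S=-\eps n\,x-\AA x_{[n]\setminus S}$, and therefore for $i\in S$,
\[
\RRR_i^S+\CCC_i^S \;=\; -2\eps n\,x_i^2 \;-\; 2x_i\bigl(\AA x_{[n]\setminus S}\bigr)_i \;-\; 2x_i^2 .
\]
Squaring, applying $(a+b+c)^2\le 3(a^2+b^2+c^2)$, and summing over $i\in S$ reduces the statement to bounding the three sums $\eps^2 n^2\sum_{i\in S}x_i^4$, $\sum_{i\in S}x_i^2\bigl(\AA x_{[n]\setminus S}\bigr)_i^2$, and $\sum_{i\in S}x_i^4$.

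For the first and third sums, $\sum_{i\in S}x_i^4\le(\max_{i\in S}x_i^2)\,\|x\|_2^2\le 100/(\eps n)$, so these terms contribute $12\eps^2 n^2\cdot 100/(\eps n)=1200\eps n$ and $12\cdot 100/(\eps n)=O(1)$ respectively (the latter is $\le \eps n$ under the standing assumption $1/\eps<cn$). For the middle sum, since $\|\AA\|_\infty\le 1$ and $x_{[n]\setminus S}$ is supported on a set of size at most $\eps n/100$, Cauchy–Schwarz gives $|(\AA x_{[n]\setminus S})_i|\le\|x_{[n]\setminus S}\|_1\le\sqrt{|[n]\setminus S|}\le\sqrt{\eps n/100}$ for every $i$, so $\sum_{i\in S}x_i^2(\AA x_{[n]\setminus S})_i^2\le(\eps n/100)\|x\|_2^2\le \eps n/100$, contributing at most $0.12\eps n$. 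Summing the three contributions gives a bound comfortably below $1601\eps n$, as claimed.

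The proof is short, and the only real subtlety — the "hard part" — is recognizing that one must route through the eigenvalue equation rather than estimating $\|\AA x_S\|_2$ directly: a naive bound $\|\AA x_S\|_2\le\|\AA\|_2\|x_S\|_2$ is useless here because $\|\AA\|_2$ can be as large as $n$, which would be lossy by a factor of $1/\eps$. Replacing $\AA x_S$ by $-\eps n\,x-\AA x_{[n]\setminus S}$ is exactly what converts the dangerous term into the harmless $x_i^4$ term, while the residual $\AA x_{[n]\setminus S}$ is uniformly small precisely because $[n]\setminus S$ is a small set.
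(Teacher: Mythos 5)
Your proof is correct, and it takes a genuinely different route from the paper's. The paper decomposes $z_i^S = \RRR_i^S + \CCC_i^S$ as $z_i - z_i^-$, where $z_i = \RRR_i + \CCC_i$ is the \emph{total} row/column contribution and $z_i^-$ is the piece coming from $[n]\setminus S$; it then bounds $\|z\|_2^2$ by the $\ell_1$-$\ell_\infty$ duality $\|z\|_2^2 \le \|z\|_1\|z\|_\infty$ (using Fact~\ref{fact:case1OPT}, i.e.\ $\RRR_i + \CCC_i \le 0$, to control $\|z\|_1$, and the eigenvalue equation to control $\|z\|_\infty$), and bounds $\|z^-\|_2^2$ via Proposition~\ref{prop:rectangularcontribution2}. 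You instead solve for $\RRR_i^S + \CCC_i^S$ in closed form, writing $\AA x_S = -\eps n\,x - \AA x_{[n]\setminus S}$ directly, which yields the identity $\RRR_i^S + \CCC_i^S = -2\eps n x_i^2 - 2x_i(\AA x_{[n]\setminus S})_i - 2x_i^2$ and reduces everything to three explicit sums. The ingredients are the same in both (eigenvalue equation, $x_i^2 \le 100/(\eps n)$ on $S$, $|[n]\setminus S| \le \eps n/100$), but your argument is self-contained and bypasses Fact~\ref{fact:case1OPT} entirely: where the paper needs the sign condition $\RRR_i + \CCC_i \le 0$ to turn $\sum_i |\RRR_i+\CCC_i|$ into a trace-like quantity, your closed form makes the dominant term $-2\eps n x_i^2$ manifestly of the right size. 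The paper's route is more modular (reusing lemmas that also appear elsewhere), while yours is shorter and more transparent about where the $\eps n$ scaling comes from; both land comfortably within the $1601\eps n$ budget, yours at roughly $1200\eps n$ plus lower-order corrections.
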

     \begin{proof}
         Let $z^S ,z,z^-\in \R^{|S|}$ be vectors defined for $i \in S$ as $z_i^S = \RRR_{i}^S+ \CCC_{i}^S$, $z_i = \RRR_i   + \CCC_i$, and $z^- = z - z^S$. Notice that our goal is to bound $\|z^S\|_2^2 $, which by triangle inequality satisfies $\|z^S\|_2^2 \leq 2\left(\|z\|_2^2 + \|z^-\|_2^2\right)$. First note that 
         \begin{equation}
             \begin{split}
                 \|z^-\|_2^2 &= \sum_{i \in S} \left(\sum_{j \notin S} x_i \AA_{i,j} x_j + \sum_{j \notin S}  x_j \AA_{j,i} x_i \right)^2 \\
                 & \leq 2  \sum_{i \in S} \left(\sum_{j \notin S} x_i \AA_{i,j} x_j \right)^2 + 2  \sum_{i \in S} \left(\sum_{j \notin S}  x_j \AA_{j,i} x_i \right)^2\\
             \end{split}
         \end{equation}
      Using that $[n] \setminus S < \eps n/100$, we have by Proposition \ref{prop:rectangularcontribution2} that $\sum_{i \in S} \left(\sum_{j \notin S} x_i \AA_{i,j} x_j \right)^2 \leq \eps n / 100$, so $\|z^-\|_2^2 \leq \eps n /25$. 
         
         We now bound $\|z\|_1 = \sum_{i \in S} | \RRR_i + \CCC_i|$. By Fact \ref{fact:case1OPT}, we have $\RRR_i + \CCC_i \leq 0$ for all $i \in [n]$, which means that $\|z\|_1  \leq \sum_{i \in [n]} | \RRR_i + \CCC_i| = |2 \langle x , \AA x \rangle - 2\sum_{i \in [n]} \AA_{i,i} (x_i)^2| \leq 2 \eps n$. 
         Next, we bound $\|z\|_\infty$. Notice that  $|\RRR_i + \CCC_i |= 2|  x_i \langle \AA_{i,*}, x \rangle - \AA_{i,i}(x_i)^2 |  \leq 2\eps n (x_i)^2 + 2\AA_{i,i}(x_i)^2 < 4 \eps n  (x_i)^2 $,  using that $\AA x = - \eps n x$, since $x$ is an eigenvector of $\AA$. Since for $i\in S$, it also follows that $(x_i)^2 \leq \frac{100}{\eps n}$, thus $\|z\|_\infty \leq 400$ as needed. It follows that $\|z\|_2^2$ is maximized by having $ 2 \eps n/400$ coordinates equal to $400$, giving  $\|z\|_2^2 \leq  2 \eps n/400 (400)^2 = 800 \eps n$. It follows then that $\|z^S\|_2^2 \leq 1601 \eps n$ as needed.
     \end{proof}

\paragraph{Eigenvector Setup: } We now define the ``target'' direction $y \in \R^n$ which we will use in our analysis for \textbf{Case 1}. First, we will need the following definitions. Let 
\[ D_a^p = \left\{ t \in T_a \; : \;   -\frac{ 2^{p+1} 2^{a}}{  \log(1/\eps)}  \leq \RRR_t^S + \CCC_t^S \leq -\frac{ 2^p 2^{a}}{  \log(1/\eps)} \right\}\]
Define the \textit{fill} $\beta$ of $T_a$ as the value such that $\beta = 2^{-p}$ where $p \geq 1$ is the smallest value of $p$ such that $-|D_a^p|(\frac{ 2^p 2^{a}}{  \log(1/\eps)}) \leq - \frac{\eps n}{40 \log^2(1/\eps)}$. Note that at least one such $p$ for $1 \leq p \leq \log(1/\eps)$ must exist. let $T_a^* = D_a^p$ where $\beta = 2^{-p}$. Observe that $x_S \AA x_{T_a^*} + x_{T_a^*} \AA x_S \leq- \frac{\eps n}{40 \log^2(1/\eps)}$.
Finally, we define our target ``eigenvector'' $y$ as
\begin{equation}\label{eqn:eigen}
   y = x_S + \zeta\beta \left(2^{-a} \cdot  x_{T_a}\right) 
\end{equation}
 where $\zeta  = \Theta(\log^2(1/\eps))$ is as above,  and we also define our target submatrix subsampling size as $\lambda  = \frac{2000\beta^2 \zeta^2 \log(1/\eps)}{2^a \eps}$. First, we prove that for a random submatrix $\AA_{T \times T}$, where $i \in [n]$ is sampled and added to $T$ with probability $\lambda/n$, we have that $y^\top \AA_{T \times T} y$ is negative in expectation conditioned on $|T \cap T_a^*| \geq 1$.

 \begin{lemma}\label{lem:conditionalex}
 	Suppose we are in case $1$ with $T_a$ contributing such that $2^a \geq 10^6 \zeta^3$.
 	Let $\delta_i$ be an indicator variable that we sample coordinate $i$, with $\ex{\delta_i} = \frac{\lambda}{n}$ and $\lambda  = \frac{2000\beta^2 \zeta^2 \log(1/\eps)}{2^a \eps}$. Then if $y = x_S + \zeta\beta (2^{-a} x_{T_a})$ where $\zeta  \geq 100 \log^2(1/\eps)$, and if $t \in T_a^*$, then
 	\[\mathbb{E} \left[ \sum_{i,j \in S \cup \{t\}} y_i \AA_{i,j} y_j \delta_i \delta_j \; \big| \; \delta_t = 1 \right] \leq - \frac{50 \zeta \lambda }{ n} \]
 \end{lemma}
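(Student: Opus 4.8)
The plan is a direct computation of the conditional expectation: expand the quadratic form over $S \cup \{t\}$, take expectations over the remaining independent sampling indicators, and show that the single ``$t$\nobreakdash-versus\nobreakdash-$S$'' cross term is negative of order $\Theta(\zeta\lambda/n)$ while every other term is of strictly lower order.

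First I would split, conditioning on $\delta_t = 1$ and using $\delta_t^2 = \delta_t$,
\[
\sum_{i,j \in S \cup \{t\}} y_i \AA_{i,j} y_j \delta_i \delta_j \;=\; y_t^2 \AA_{t,t} \;+\; \sum_{j \in S}\big(y_t \AA_{t,j} y_j + y_j \AA_{j,t} y_t\big)\delta_j \;+\; \sum_{i,j \in S} y_i \AA_{i,j} y_j \delta_i \delta_j .
\]
By Proposition \ref{prop:1diag} we may assume $\AA_{t,t} = 1$; on $S$ the vector $y$ agrees with $x$, while $y_t = \zeta\beta 2^{-a} x_t$, and $\RRR_t^S = \sum_{j\in S}x_t\AA_{t,j}x_j$, $\CCC_t^S = \sum_{j\in S}x_j\AA_{j,t}x_t$ since $t\notin S$. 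Taking expectations over the remaining independent $\delta_j \sim \mathrm{Bernoulli}(\lambda/n)$, exactly as in the computation in Proposition \ref{prop:exp}, the three pieces become $y_t^2$; then $\tfrac{\lambda}{n}\,\zeta\beta 2^{-a}(\RRR_t^S + \CCC_t^S)$; and finally $\tfrac{\lambda}{n}\sum_{i\in S}x_i^2 + \tfrac{\lambda^2}{n^2}\big(x_S^\top\AA x_S - \sum_{i\in S}x_i^2\big)$.

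The heart of the argument is that the second (cross) term is the dominant one and is negative: since $t \in T_a^* = D_a^p$ we have $\RRR_t^S + \CCC_t^S \le -2^p 2^a/\log(1/\eps)$, and because $\beta = 2^{-p}$ (so $\beta 2^p = 1$), this contribution is at most $-\Theta\!\big(\zeta\lambda/(n\log(1/\eps))\big) = -\Theta(\zeta\lambda/n)$. For the remaining (positive) terms I would use the hypotheses $2^a \ge 10^6\zeta^3$ and $\zeta \ge 100\log^2(1/\eps)$ together with the explicit value $\lambda = 2000\beta^2\zeta^2\log(1/\eps)/(2^a\eps)$: from $t \in T_a$ we get $x_t^2 \le 100\cdot 2^a/(\eps n)$, hence $y_t^2 \le 100\zeta^2\beta^2/(2^a\eps n)$, which is $o(\zeta\lambda/n)$ once $\lambda$ is substituted; the diagonal part of the $S$-quadratic form is at most $\tfrac{\lambda}{n}\|x\|_2^2 \le \lambda/n = o(\zeta\lambda/n)$; and its off-diagonal part is at most $O(\eps\lambda^2/n)$ by Proposition \ref{prop:Sbound} ($|x_S^\top\AA x_S|\le 10\eps n$), which is again $o(\zeta\lambda/n)$ after substituting $\lambda$ and using $2^a\ge 10^6\zeta^3$. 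Summing the four contributions, the negative cross term swamps the rest, yielding the claimed bound $-\tfrac{50\zeta\lambda}{n}$ for the stated choices of $\zeta$ and $\lambda$.

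The main obstacle is the bookkeeping in this last step: one must verify that each positive error term is genuinely dominated once the precise value of $\lambda$ is plugged in, which is exactly where the two hypotheses $2^a\ge 10^6\zeta^3$ and $\zeta\ge 100\log^2(1/\eps)$ enter. Conceptually, the delicate point — and the whole reason for passing from $x$ to the ``switched'' vector $y$ — is the scaling factor $\zeta\beta 2^{-a}$ on the $T_a$ coordinates: it must be small enough that the diagonal contribution $y_t^2\AA_{t,t}$ (which for $x$ itself would dominate a tiny submatrix, cf.\ the discussion in Section \ref{sec:techlinf}) becomes negligible, yet large enough that the genuinely negative row/column contribution $\RRR_t^S+\CCC_t^S$ of $t$ against the base set $S$ survives at scale $\Theta(\zeta\lambda/n)$.
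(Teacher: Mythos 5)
Your proposal follows the same route as the paper's proof: decompose the conditional expectation into the $t$–diagonal term $y_t^2$, the $t$–versus–$S$ cross term driven by $\RRR_t^S + \CCC_t^S$, and the $S$–block quadratic form; then show the cross term dominates because $t \in T_a^* = D_a^p$ gives $\RRR_t^S+\CCC_t^S \le -2^p 2^a/\log(1/\eps)$ and $\beta 2^p = 1$, while the other three pieces are controlled via $x_t^2 \le 100\cdot 2^a/(\eps n)$, $\|x\|_2 \le 1$, and Proposition~\ref{prop:Sbound}, exactly as in the paper. One small caution: both your sketch (at the step $-\Theta(\zeta\lambda/(n\log(1/\eps))) = -\Theta(\zeta\lambda/n)$) and the paper's own final line quietly drop a $\log(1/\eps)$ factor — since $\lambda$ itself carries a $\log(1/\eps)$, the dominant term $\tfrac{3\lambda\zeta}{4n\log(1/\eps)} = \tfrac{1500\beta^2\zeta^3}{2^a\eps n}$ is actually a $\log(1/\eps)$ factor \emph{smaller} than $\tfrac{50\zeta\lambda}{n} = \tfrac{10^5\beta^2\zeta^3\log(1/\eps)}{2^a\eps n}$, so the clean statement of the bound should carry a $1/\log(1/\eps)$ (this has no effect downstream, since there $\zeta \ge 100\log^2(1/\eps)$ still dwarfs the $O(\lambda/n)$ fluctuations).
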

\begin{proof}
 First observe $\ex{ \sum_{i \in S} y_i^2 \AA_{i,i} \delta_i} \leq \frac{\lambda}{n}\|x\|_2^2 \leq \frac{\lambda}{n}$ since $x$ is a unit vector. Note that  $y_S = x_S$ by construction, so we can use Proposition \ref{prop:Sbound} to bound $|\langle y_S , \AA y_S\rangle |$ by $10 \eps n$, which gives	
	\begin{equation}
	\begin{split}
	\mathbb{E} \left[ \sum_{i,j \in S \cup \{t\}} y_i \AA_{i,j} y_j \delta_i \delta_j \; \big| \; \delta_t = 1 \right] &\leq \frac{\lambda^2}{n^2}\left(\langle y_S , \AA y_S  \rangle- \sum_{i \in S} y_i^2 \AA_{i,i} \delta_i\right)+ \frac{\lambda}{n} + \frac{\lambda}{n}\sum_{i \in S} (\AA_{t,i} + \AA_{i,t}) y_i y_t  +y_t^2 \\ 
		& \leq \frac{20 \lambda^2 \eps }{n} +\frac{\lambda }{n}\left(1 +  \frac{ \zeta \beta }{2^{a}} \left(\RRR_t^S + \CCC_t^S\right) \right)+ (\frac{\zeta \beta}{2^a})^2 (\frac{100 2^a}{\eps n})\\
			& \leq \frac{20 \lambda^2 \eps }{n} +\frac{\lambda }{n}\left(1 +  \frac{ \zeta \beta }{2^{a}} \left(\RRR_t^S + \CCC_t^S\right) \right)+ \frac{100\zeta^2 \beta^2}{2^a \eps n}\\
	\end{split}
	\end{equation}
	Now by definition of $i \in T_a^*$, we have $\left(\RRR_t^S + \CCC_t^S\right) \leq- \frac{2^a}{\beta \log(1/\eps)}$. 
Thus
	\begin{equation}
	\begin{split}
\mathbb{E} \left[ \sum_{i,j \in S \cup \{t\}} y_i \AA_{i,j} y_j \delta_i \delta_j \; \big| \; \delta_t = 1  \right]& \leq \frac{20 \lambda^2 \eps }{n} +\frac{\lambda }{n}\left(1 -\frac{\zeta }{ \log(1/\eps)}  \right)+ \frac{ 100\zeta^2 \beta^2}{2^a \eps n}\\
\end{split}
	\end{equation}
	Setting $\zeta > 100 \log^2(1/\eps)$, we first note that $\frac{\lambda }{n}\left(1 -\frac{\zeta }{   \log(1/\eps)}  \right) \leq -\frac{99\lambda \zeta }{100n \log(1/\eps)}$. Since $$\lambda = \frac{2000\beta^2 \zeta^2 \log(1/\eps)}{2^a \eps} \leq \frac{2000\beta^2 }{2^6 \zeta \eps } \leq \frac{1}{\ \zeta \eps}$$ it follows that $\frac{20 \lambda^2 \eps}{n} \leq \frac{20 \lambda }{n \zeta} \leq  \frac{20 \lambda  }{n} < \frac{\lambda \zeta }{ 5\log(1/\eps) n}$. Thus $\frac{10 \lambda^2 \eps}{n}  -\frac{99\lambda \zeta }{100n \log(1/\eps)} \leq -\frac{3\lambda \zeta }{4n \log(1/\eps)}$. So we can simply and write 
		\begin{equation}
		\begin{split}
	\mathbb{E} \left[ \sum_{i,j \in S \cup \{t\}} y_i \AA_{i,j} y_j \delta_i \delta_j \; \big| \; \delta_t = 1 \right]  &\leq -\frac{3\lambda \zeta }{4n \log(1/\eps)}+ \frac{100 \zeta^2 \beta^2}{2^a \eps n}\\ 
		& =-\frac{1500\beta^2 \zeta^3 }{2^a\eps n }+ \frac{100 \zeta^2 \beta^2}{2^a \eps n}\\
			& \leq -\frac{1400\beta^2 \zeta^3 }{2^a\eps n }\\
		&	\leq - \frac{50 \zeta \lambda }{n} \\  
\end{split}
	\end{equation}
	as desired.
\end{proof}

  \begin{lemma}\label{lem:Spart}
  	Let $\delta_i$ be an indicator variable with $\ex{\delta_i} = \lambda/n$. Then 
	\[\mathbf{Pr}\left[	\left|\sum_{i , j\in S} y_i \AA_{i,j} y_j \delta_i \delta_j - \bex{\sum_{i , j\in S} y_i \AA_{i,j} y_j \delta_i \delta_j}	\right| \geq C\frac{\lambda }{n} \right] \leq \frac{49}{50}\]
		Where $C>0$ is some constant.
  \end{lemma}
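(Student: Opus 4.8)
The plan is to prove Lemma~\ref{lem:Spart} by a direct second-moment (Chebyshev) argument applied to the random variable $W = \sum_{i,j \in S} y_i \AA_{i,j} y_j \delta_i \delta_j$. Since $y_S = x_S$ by construction, this is exactly the quadratic form of the random principal submatrix restricted to $S$ evaluated at $x_S$, so all of the tools from Section~\ref{sec:variance} apply, but now with the sampling rate $\lambda/n$ in place of $k/n$. Concretely, I would invoke Corollary~\ref{cor:var} (the ``Tighter Variance Bound'') with the substitutions $k \leftarrow \lambda$, $y \leftarrow x_S$, and $\AA \leftarrow \AA_{S \times S}$. I first need to verify its hypotheses: by Proposition~\ref{prop:spread} we have $\|x_S\|_\infty \le \frac{1}{\eps\sqrt n}$, so $\alpha = 1$ (actually, since $S$ is the base level set, $\|x_S\|_\infty^2 \le \frac{100}{\eps n}$, giving $\alpha = 10$, a constant); by Proposition~\ref{prop:Sbound} we have $|\langle x_S, \AA x_S\rangle| \le 10 \eps n$, so $c_1 = 10$; and by Proposition~\ref{propLl1rowcolumnbound} we have $\sum_{i \in S}(\RRR_i^S + \CCC_i^S)^2 \le 1601 \eps n$, i.e. $\|\ZZ\|_2^2 \le c_2 \eps n$ with $c_2 = 1601$, which is exactly the vector $\ZZ$ that Corollary~\ref{cor:var} requires (note $\RRR_i(x_S), \CCC_i(x_S)$ computed with respect to $\AA_{S\times S}$ are precisely $\RRR_i^S, \CCC_i^S$). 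Feeding these into Corollary~\ref{cor:var} yields
\[
\mathbf{Var}(W) \le O\!\left( \frac{\lambda^2}{n^2} + \frac{\lambda^4 \eps^2}{n^2} + \frac{\eps \lambda^3}{n^2} \right).
\]

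Next I would check that each term on the right is $O(\lambda^2/n^2)$, so that $\mathbf{Var}(W) = O(\lambda^2/n^2)$. This is where the specific value $\lambda = \frac{2000\beta^2 \zeta^2 \log(1/\eps)}{2^a \eps}$ and the Case~1 assumption $2^a \ge 10^6 \zeta^3$ get used: as already computed inside the proof of Lemma~\ref{lem:conditionalex}, we have $\lambda \le \frac{1}{\zeta \eps} \le \frac{1}{\eps}$ (using $\beta \le 1/2$ and $\zeta$ large), hence $\lambda \eps \le 1$, which gives $\frac{\lambda^4 \eps^2}{n^2} = \frac{\lambda^2 (\lambda\eps)^2}{n^2} \le \frac{\lambda^2}{n^2}$ and $\frac{\eps\lambda^3}{n^2} = \frac{\lambda^2 (\lambda\eps)}{n^2} \le \frac{\lambda^2}{n^2}$. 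Therefore $\mathbf{Var}(W) \le C' \lambda^2/n^2$ for an absolute constant $C'$. Chebyshev's inequality then gives, for any $C$,
\[
\mathbf{Pr}\!\left[ \left| W - \ex{W} \right| \ge C\frac{\lambda}{n} \right] \le \frac{\mathbf{Var}(W)}{C^2 \lambda^2/n^2} \le \frac{C'}{C^2},
\]
and choosing $C$ to be a sufficiently large absolute constant (so that $C'/C^2 \le 49/50$) completes the proof.

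I expect the main obstacle to be bookkeeping rather than mathematical depth: one must be careful that the three ``input'' propositions (\ref{prop:spread}, \ref{prop:Sbound}, \ref{propLl1rowcolumnbound}) really do supply the exact quantities in the hypotheses of Corollary~\ref{cor:var} — in particular that the row/column contributions $\RRR_i^S + \CCC_i^S$ appearing in Proposition~\ref{propLl1rowcolumnbound} coincide with the vector $\ZZ$ defined via $\AA_{S\times S}$ and $x_S$ in Corollary~\ref{cor:var}, and that the mild constant-factor discrepancies ($\alpha = 10$ instead of $1$, the diagonal terms which Corollary~\ref{cor:var} omits but which contribute only $O(\lambda/n) \cdot \|x_S\|_2^2$ in expectation and are already folded into $\ex{W}$) do not affect the conclusion. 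A secondary point worth stating explicitly is the bound $\lambda \eps \le 1$, since it is the one place the Case~1 hypothesis $2^a \ge 10^6\zeta^3$ is silently invoked; everything else is a routine application of the machinery already developed.
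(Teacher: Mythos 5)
Your overall structure — invoke Corollary~\ref{cor:var} with $c_1,c_2$ given by Propositions~\ref{prop:Sbound} and~\ref{propLl1rowcolumnbound} and then Chebyshev — matches the paper's proof, but the proposal contains two genuine gaps.

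First, the value of $\alpha$. You claim $\alpha = 10$, reasoning from $\|x_S\|_\infty^2 \le \frac{100}{\eps n}$; but rewriting $\|x_S\|_\infty \le \frac{10}{\sqrt{\eps n}}$ in the required form $\frac{\alpha}{\eps\sqrt{n}}$ gives $\alpha = 10\sqrt{\eps}$, not $10$. This is not a ``mild constant-factor discrepancy'': with $\alpha = \Theta(1)$, the term $\frac{\alpha^2 \lambda^3}{n^2}$ from Corollary~\ref{cor:var} is $\Theta(\lambda^3/n^2)$, which dominates $\lambda^2/n^2$ whenever $\lambda$ is large (and here $\lambda$ can be as large as $\Theta(\frac{1}{\zeta\eps})$), so Chebyshev would only yield a deviation bound of order $\sqrt{\lambda^3}/n$, far weaker than the $O(\lambda/n)$ you need. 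The paper uses $\alpha \le \sqrt{\eps}$ precisely so that $\alpha^2\lambda^3/n^2 = O(\eps\lambda^3/n^2) = O(\lambda^2/n^2)$. Your displayed variance bound happens to coincide with what the correct $\alpha = O(\sqrt{\eps})$ would give, but it is inconsistent with the $\alpha = 10$ you assert, so the $\alpha^2\lambda^3/n^2$ term was dropped rather than controlled.

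Second, the diagonal contribution. Corollary~\ref{cor:var} only bounds $\mathbf{Var}(\sum_{i\neq j})$, whereas the lemma concerns $W = \sum_{i,j}$ including the diagonal. You write that the diagonal terms are ``already folded into $\ex{W}$,'' but that addresses only their mean; the random variable $D = \sum_i y_i^2 \AA_{i,i}\delta_i$ still fluctuates and must be controlled on its own. The paper handles this with a separate Markov argument: since $\AA_{i,i} = 1$ (Proposition~\ref{prop:1diag}), $D \ge 0$ and $\ex{D} \le \lambda/n$, so $D \le 100\lambda/n$ with probability $99/100$; the final bound is then a union over the Chebyshev event for the off-diagonal part and this Markov event for the diagonal. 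Without some such argument, applying Chebyshev to $W$ using a variance bound for only the off-diagonal part is not a complete step.
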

  \begin{proof}

  We can apply Corollary \ref{cor:var}, where we can set the values of $c_1,c_2$ to be bounded by constants by the results of Proposition \ref{prop:Sbound} and \ref{propLl1rowcolumnbound}, and by definition of the set $S$ we can set $\alpha \leq \sqrt{\eps}$ for the $\alpha$ in Corollary \ref{cor:var}, and using that $\lambda \leq O(1/\eps)$, we obtain:
  	\[ \mathbf{Var} \left[		\sum_{i \neq j\in S} y_i \AA_{i,j} y_j \delta_i \delta_j 		\right] \leq	\frac{C\lambda^2}{n^2} \]
  for some constant $C$.	Now note that $\ex{ 	\sum_{i \neq j\in S} y_i \AA_{i,j} y_j \delta_i \delta_j} \leq \frac{30 \lambda^2\eps }{n} \leq O(\frac{\lambda}{n})$, thus by Chebyshev's, with probability $99/100$, we have $|\sum_{i \neq j\in S} y_i \AA_{i,j} y_j \delta_i \delta_j| \leq O( \frac{ \lambda}{n})$. Moreover, note that $\sum_i \AA_{i,i}y_i^2 \delta_i$ can be assumed to be a positive random variable using that $\AA_{i,i} =1$, and note the expectation of this variable is $\frac{\lambda}{n}$, and is at most $100\lambda/n$ with probability $99/100$. Thus $|\sum_{i \in S} y_i^2 \AA_{i,i} \delta_i - \ex{\sum_{i \in S} y_i^2 \AA_{i,i}  }| \leq 100 \lambda/n$. By a union bound, we have:
		  	\[\mathbf{Pr}\left[	\left|\sum_{i , j\in S} y_i \AA_{i,j} y_j \delta_i \delta_j	\right| \geq C\frac{\lambda }{n} \right] \leq \frac{49}{50}\]
		  	Where $C = 150$.
  \end{proof}

  \begin{lemma}\label{lem:tpart}
  	Fix any $t \in T_a^*$. Then 
  	\[\mathbf{Pr}\left[		\left|\sum_{i \in S}y_t(y_i \AA_{i,t} +  \AA_{t,i} y_i) \delta_i - \ex{	\sum_{i \in S}y_t(y_i \AA_{i,t} +  \AA_{t,i} y_i) \delta_i} \right|\geq \frac{10\lambda}{n} \right] \leq \frac{1}{100}\]	
  \end{lemma}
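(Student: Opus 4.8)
The plan is to apply Chebyshev's inequality to the random variable $W = \sum_{i \in S} c_i \delta_i$, where $c_i = y_t(y_i \AA_{i,t} + \AA_{t,i} y_i)$ and the $\delta_i$ are independent $\text{Bernoulli}(\lambda/n)$ variables. Note that $t \in T_a^*$ is fixed and $t \notin S$ (since $S$ and $T_a$ are disjoint level sets), so each $c_i$ is a deterministic scalar and the summands are genuinely independent. First I would bound the coefficients: since $\|\AA\|_\infty \leq 1$, we have $(y_i\AA_{i,t} + \AA_{t,i}y_i)^2 \leq 4 y_i^2$, so $c_i^2 \leq 4 y_t^2 y_i^2$, and hence $\sum_{i \in S} c_i^2 \leq 4 y_t^2 \sum_{i \in S} y_i^2$. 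By construction $y_i = x_i$ for all $i \in S$, so $\sum_{i \in S} y_i^2 \leq \|x\|_2^2 = 1$, giving $\sum_{i \in S} c_i^2 \leq 4 y_t^2$. Consequently $\mathbf{Var}(W) = \frac{\lambda}{n}\bigl(1 - \frac{\lambda}{n}\bigr)\sum_{i \in S} c_i^2 \leq \frac{4\lambda}{n} y_t^2$.

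Next I would bound $y_t^2$. Since $t \in T_a^* \subseteq T_a$, the level-set definition gives $x_t^2 \leq \frac{100 \cdot 2^a}{\eps n}$, and $y_t = \zeta\beta 2^{-a} x_t$ by the definition of $y$ in Equation \ref{eqn:eigen}, so $y_t^2 \leq \zeta^2\beta^2 2^{-2a}\cdot \frac{100\cdot 2^a}{\eps n} = \frac{100\,\zeta^2\beta^2}{2^a \eps n}$. Plugging this in yields $\mathbf{Var}(W) \leq \frac{400\,\lambda\,\zeta^2\beta^2}{2^a \eps n^2}$. Recalling that $\lambda = \frac{2000\,\beta^2\zeta^2\log(1/\eps)}{2^a\eps}$, i.e. $\frac{\zeta^2\beta^2}{2^a\eps} = \frac{\lambda}{2000\log(1/\eps)}$, this becomes $\mathbf{Var}(W) \leq \frac{\lambda^2}{5\, n^2 \log(1/\eps)}$.

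Finally, Chebyshev's inequality with deviation $10\lambda/n$ gives
\[
\mathbf{Pr}\!\left[\,\bigl|W - \mathbb{E}[W]\bigr| \geq \tfrac{10\lambda}{n}\,\right] \;\leq\; \frac{\mathbf{Var}(W)}{100\lambda^2/n^2} \;\leq\; \frac{1}{500\log(1/\eps)} \;\leq\; \frac{1}{100},
\]
which is exactly the claimed bound (using $\log(1/\eps)\geq 1$, which we may assume since $\eps$ is a small constant at most). I do not expect any genuine obstacle in this lemma: it is a routine second-moment concentration estimate, and the only care needed is in tracking the parameters $\zeta = \Theta(\log^2(1/\eps))$, $\beta \leq 1$, $2^a \geq 10^6\zeta^3$, and $\lambda$ fixed in the eigenvector setup, together with the two structural facts that $y$ agrees with $x$ on $S$ and that $|x_t|$ obeys the $T_a$ level-set bound. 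If one prefers, the same computation can be run with a Bernstein-type bound instead of Chebyshev, but Chebyshev already suffices here.
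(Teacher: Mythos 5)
Your proof is correct and follows essentially the same route as the paper's: bound the variance of the independent sum via $\sum_{i\in S}c_i^2 \leq 4y_t^2\sum_{i\in S}y_i^2$, plug in the level-set bound on $y_t^2$ and the identity relating $\zeta^2\beta^2/(2^a\eps)$ to $\lambda$, then apply Chebyshev. Your bookkeeping is slightly more careful than the paper's (you correctly keep the factor $4$ where the paper writes $2$, and you retain the $\log(1/\eps)$ factor), but neither discrepancy affects the final bound, and the argument is the same.
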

\begin{proof}By independence of the $\delta_i$'s
	\begin{equation}
	\begin{split}
	 \bvar{	\sum_{i \in S} (y_i \AA_{i,t} y_t + y_t \AA_{t,i} y_i) \delta_i } & \leq \frac{\lambda}{n} \sum_{i } (y_i \AA_{i,t} y_t + y_t \AA_{t,i} y_i)^2 \\
	  & \leq   \frac{\lambda}{n} \sum_{i }2 y_t^2 y_i^2	\\ &\leq \frac{2\lambda}{n} (\zeta \beta 2^{-a})^2 \frac{100 2^a}{\eps n} \\
	 & 	\leq \frac{2\lambda}{n} \left(\frac{100 \zeta^2 \beta^2}{2^a\eps n} \right) \\ 
	 & \leq \frac{2\lambda^2 }{5n^2} \\
	\end{split}
	\end{equation}

Now since $\ex{	\sum_{i \in S}(y_i \AA_{i,t} y_t + y_t \AA_{t,i} y_i) \delta_i} \leq - \frac{\lambda}{n}\frac{\zeta}{\log(1/\eps)} \leq -\frac{100\lambda}{n}$, the desired result follows by Chebyshev's inequality.
\end{proof}

\begin{lemma}
	Fix any $t \in T_a^*$. Then 
\[\bpr{	 \sum_{i,j \in S \cup \{t\}} y_i \AA_{i,j} y_j \delta_i \delta_j 	\leq \frac{-25\zeta \lambda}{n}	\; \big| \; \delta_t = 1} \geq 24/25\]
\end{lemma}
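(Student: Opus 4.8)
The plan is to condition on $\delta_t = 1$ and split the quadratic form over $S \cup \{t\}$ into three pieces, each of whose behaviour has already been controlled by the preceding lemmas. Writing
\[
A = \sum_{i,j \in S} y_i \AA_{i,j} y_j \delta_i \delta_j, \qquad B = \sum_{i \in S} y_t\bigl(y_i \AA_{i,t} + \AA_{t,i} y_i\bigr)\delta_i, \qquad c = \AA_{t,t}\, y_t^2,
\]
one has, conditioned on $\delta_t = 1$ (using $\AA_{t,t} = 1$ from Proposition~\ref{prop:1diag} together with $\delta_t^2 = \delta_t$),
\[
\sum_{i,j \in S \cup \{t\}} y_i \AA_{i,j} y_j \delta_i \delta_j = A + B + c .
\]
The first thing I would note is that, since $t \notin S$, both $A$ and $B$ depend only on $\{\delta_i\}_{i \in S}$ and are therefore independent of $\delta_t$; hence conditioning on $\delta_t = 1$ leaves their joint law (and in particular $\mathbb{E}[A]$, $\mathbb{E}[B]$) unchanged, while $c$ becomes deterministic. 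This lets me write
\[
A + B + c = \mathbb{E}\!\left[A + B + c \; \big| \; \delta_t = 1\right] + \bigl(A - \mathbb{E}[A]\bigr) + \bigl(B - \mathbb{E}[B]\bigr) .
\]

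Next I would plug in the three prior lemmas. Lemma~\ref{lem:conditionalex} bounds the conditional expectation by $-\tfrac{50 \zeta \lambda}{n}$ (this is exactly where Case~1, the choice $2^a \geq 10^6 \zeta^3$, and $\zeta \geq 100 \log^2(1/\eps)$ enter). Lemma~\ref{lem:Spart} gives $|A - \mathbb{E}[A]| \leq \tfrac{150 \lambda}{n}$ with probability at least $\tfrac{49}{50}$, and Lemma~\ref{lem:tpart} gives $|B - \mathbb{E}[B]| \leq \tfrac{10 \lambda}{n}$ with probability at least $\tfrac{99}{100}$. A union bound then shows that, conditioned on $\delta_t = 1$, both fluctuation bounds hold simultaneously with probability at least $1 - \tfrac{1}{50} - \tfrac{1}{100} = \tfrac{97}{100} \geq \tfrac{24}{25}$. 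On this event,
\[
A + B + c \;\leq\; -\frac{50 \zeta \lambda}{n} + \frac{150 \lambda}{n} + \frac{10 \lambda}{n} \;=\; -\frac{50 \zeta \lambda}{n} + \frac{160 \lambda}{n} \;\leq\; -\frac{25 \zeta \lambda}{n},
\]
where the final inequality holds as long as $25 \zeta \geq 160$, i.e. $\zeta \geq 7$, which is ensured by taking $\zeta = \Theta(\log^2(1/\eps))$ with a large enough constant (we may assume $\eps$ is below a fixed constant, since otherwise $1/\eps^2 = \Omega(n^2)$ and the algorithm simply reads all of $\AA$). This is exactly the claimed bound.

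I do not expect a genuine obstacle here: the argument is mechanical assembly, and the real work was done in establishing Lemmas~\ref{lem:conditionalex}, \ref{lem:Spart}, and \ref{lem:tpart}. The two points that need a little care are (i) the independence bookkeeping — confirming that $A$ and $B$ do not involve $\delta_t$, so that conditioning on $\delta_t = 1$ can be removed from their laws and the decomposition above is valid — and (ii) checking that the $O(\lambda/n)$ losses from the Chebyshev/Markov estimates in Lemmas~\ref{lem:Spart} and \ref{lem:tpart} are dominated by the $\Theta(\zeta \lambda / n)$ negative drift supplied by Lemma~\ref{lem:conditionalex}, which is precisely why $\zeta$ must carry a large constant.
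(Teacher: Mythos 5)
Your proposal is correct and follows essentially the same route as the paper's proof: split the quadratic form over $S \cup \{t\}$ into the $S\times S$ block, the cross terms involving $t$, and the deterministic diagonal term $y_t^2$, then combine the drift from Lemma~\ref{lem:conditionalex} with the concentration of the first two pieces from Lemmas~\ref{lem:Spart} and~\ref{lem:tpart} via a union bound. The paper's write-up leaves the independence bookkeeping (that $A$ and $B$ do not involve $\delta_t$, so conditioning does not alter their laws) implicit, whereas you make it explicit; the paper also uses a slightly looser constant ($C \leq 200$ instead of your $160$) and closes by "scaling $\zeta$," but these are presentational differences, not a different argument.
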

\begin{proof}
Conditioned on $\delta_t = 1$, we have 
	\begin{equation}
	\begin{split}
&\left|	\left[ \sum_{i,j \in S \cup \{t\}} y_i \AA_{i,j} y_j \delta_i \delta_j \right] - \bex{\sum_{i,j \in S \cup \{t\}} y_i \AA_{i,j} y_j \delta_i \delta_j \; \big| \; \delta_t = 1| }  \right| \\
& \leq 	\left|\sum_{i \in S}y_t(y_i \AA_{i,t} +  \AA_{t,i} y_i) \delta_i - \ex{	\sum_{i \in S}y_t(y_i \AA_{i,t} +  \AA_{t,i} y_i) \delta_i} \right| \\
& + \left|\sum_{i , j\in S} y_i \AA_{i,j} y_j \delta_i \delta_j - \bex{\sum_{i , j\in S} y_i \AA_{i,j} y_j \delta_i \delta_j}	\right| \\
& \leq C \frac{\lambda}{n}
	\end{split}
	\end{equation}
	for some constant $C \leq 200$, where the last fact follows from Lemmas \ref{lem:Spart} and \ref{lem:tpart} with probability $24/25$. Since $\bex{\sum_{i,j \in S \cup \{t\}} y_i \AA_{i,j} y_j \delta_i \delta_j \; \big| \; \delta_t = 1| } \leq - \frac{50 \zeta \lambda }{n}$ by Lemma \ref{lem:conditionalex}, by scaling $\zeta$ by a sufficiently large constant, the result follows. 
\end{proof}

\begin{theorem}\label{thm:case1}
		Suppose we are in case $1$ with $T_a$ contributing such that $2^a \geq 10^6 \zeta^3$. Then there is an algorithm that queries at most $O(\frac{\log^7(1/\eps)}{\eps^2})$ entries of $\AA$, and finds a principal submatrix of $\AA$ which is not PSD with probability at least $9/10$ in the NO case. The algorithm always returns YES on a YES instance.
\end{theorem}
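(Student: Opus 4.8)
The plan is to package the estimates of this subsection into one non-adaptive tester and verify its correctness and query cost. Since the contributing scale $a$, the fill $\beta=2^{-p}$, the sets $S,T_a,T_a^*$ and the target direction $y$ of~\eqref{eqn:eigen} are all determined by the unknown pair $(\AA,x)$, the tester enumerates all candidates: for each integer $a$ with $10^6\zeta^3\le 2^a\le\tfrac1{50\eps}$ and each $p\in\{1,\dots,\lceil\log(1/\eps)\rceil\}$ --- that is, $O(\log^2(1/\eps))$ guesses --- it fixes a sampling size $\lambda=\max\{\lambda(a,\beta),\ \mathrm{poly}\log(1/\eps)\}$ with $\lambda(a,\beta)=\tfrac{2000\beta^2\zeta^2\log(1/\eps)}{2^a\eps}$, and repeats the following $R=R(a,\beta)$ times: sample $T\subseteq[n]$ including each index independently with probability $\lambda/n$, discard the trial if $|T|>c(\lambda+\log(1/\eps))$, and otherwise check whether $\AA_{T\times T}$ has a negative eigenvalue. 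It outputs \texttt{Not PSD} if some sampled principal submatrix is not PSD, and \texttt{YES} otherwise. One-sidedness is immediate (principal submatrices of a PSD matrix are PSD, so on a \texttt{YES} instance it always answers \texttt{YES}), and the tester is non-adaptive since the trials do not depend on observed entries.

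In the \texttt{NO} case it suffices to analyze the guess matching the true $(a,\beta)$. Fix $t\in T_a^*$. Conditioned on $t\in T$, the scalar $\sum_{i,j\in S\cup\{t\}}y_i\AA_{i,j}y_j\delta_i\delta_j$ is exactly $z^\top\AA_{T'\times T'}z$ for $T':=\{t\}\cup(S\cap T)\subseteq T$ and $z:=y_{T'}$, hence by the final lemma of this subsection (which combines Lemmas~\ref{lem:conditionalex},~\ref{lem:Spart} and~\ref{lem:tpart}) it is at most $-25\zeta\lambda/n<0$ with probability $\ge 24/25$, certifying that $\AA_{T'\times T'}$, and so $\AA_{T\times T}$, is not PSD. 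Summing over the disjoint events $\{T\cap T_a^*=\{t\}\}$, $t\in T_a^*$ --- and using that conditioning on $T_a^*\setminus\{t\}$ being excluded, being disjoint from $S$, does not change the law of the form over $T'$ --- a single trial certifies a non-PSD principal submatrix with probability $\Omega(\min\{\mu,1\})$, where $\mu=|T_a^*|\lambda/n$. The fill lower bound $|T_a^*|\ge\tfrac{\eps\beta n}{40\log(1/\eps)\,2^a}$ makes $\mu$ at least a known quantity $\mu_0$ depending only on $(a,\beta)$, so taking $R=\Theta(1/\min\{\mu_0,1\})$ raises the success probability of this guess above $9/10$; a Chernoff bound on $|T|$ shows the truncation discards a helpful trial with probability $o(1)$, since the threshold dominates $\log R=O(\log(1/\eps))$.

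For the query bound, each kept trial reads $O((\lambda+\log(1/\eps))^2)=O(\lambda^2)$ entries (as $\lambda\ge\log(1/\eps)$ by construction), so one guess costs $O(R\lambda^2)$. When $\lambda=\lambda(a,\beta)$, the fill bound gives $\mu_0=\tfrac{50\beta^3\zeta^2}{2^{2a}}$, whence $R\lambda^2=O\!\big(\beta\zeta^2\log^2(1/\eps)/\eps^2\big)=\tilde O(1/\eps^2)$ uniformly in $a$, using $\beta\le1$ and $\zeta=\Theta(\log^2(1/\eps))$. When $\lambda$ has been inflated to $\mathrm{poly}\log(1/\eps)$, the fill bound gives $\mu_0=\Omega\big(\tfrac{\eps\beta\lambda}{\log(1/\eps)\,2^a}\big)$, so $R\lambda^2=O\big(\tfrac{\lambda\log(1/\eps)\,2^a}{\eps\beta}\big)=\tilde O(1/\eps^2)$ using $2^a\le\tfrac1{50\eps}$ and $\beta\ge\eps$. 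Summing over the $O(\log^2(1/\eps))$ guesses and unfolding the suppressed logarithmic factors gives $O(\log^7(1/\eps)/\eps^2)$ queries, as claimed.

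The main obstacle is this query accounting --- showing that the total over all $O(\log^2(1/\eps))$ scale-and-fill guesses, each with its own repetition count, collapses to $\tilde O(1/\eps^2)$ rather than a larger power of $1/\eps$. It works because $\lambda(a,\beta)$ and $R(a,\beta)$ are tuned so that $R\lambda^2$ is independent of $a$ up to a factor of $\beta\le1$ and polylogs, and it requires (i) truncating atypically large sampled submatrices and (ii) treating the regime of very small $\lambda(a,\beta)$ by mildly inflating the sampling size, which the conditional-expectation bound of Lemma~\ref{lem:conditionalex} still tolerates since the inflated size stays $O(\log(1/\eps)/\eps)$. Everything else is a direct consequence of the conditional first- and second-moment bounds already established in Lemmas~\ref{lem:conditionalex}--\ref{lem:tpart}.
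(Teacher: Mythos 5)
Your proposal takes essentially the same route as the paper: use the conditional first-moment/concentration lemmas (Lemmas~\ref{lem:conditionalex}--\ref{lem:tpart} and the unnamed lemma before the theorem) to show that any trial hitting $T_a^*$ certifies a non-PSD principal submatrix with constant probability, count the repetitions needed to hit $T_a^*$, and then handle the unknown $(a,\beta)$ by enumeration.

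Two small remarks. First, the paper enumerates by guessing the sampling size $\lambda$ directly in powers of two --- only $O(\log(1/\eps))$ guesses --- and then sets the repetition count $k$ to saturate the per-guess budget; you instead enumerate all $(a,p)$ pairs, which is $O(\log^2(1/\eps))$ guesses. Your claimed per-guess cost is $O(\beta\zeta^2\log^2(1/\eps)/\eps^2)$, and if you only use $\beta\le 1$ as stated, multiplying by $O(\log^2(1/\eps))$ guesses gives $O(\log^8(1/\eps)/\eps^2)$, a log factor too much. To recover the claimed $O(\log^7(1/\eps)/\eps^2)$ with your enumeration you need to sum the geometric series in $\beta=2^{-p}$ over $p$ (so the total over all $p$ at a fixed $a$ is $O(\log^6(1/\eps)/\eps^2)$, and then $\times O(\log(1/\eps))$ values of $a$); this step should be made explicit. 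Second, the precaution of inflating $\lambda$ up to $\mathrm{poly}\log(1/\eps)$ is harmless but turns out to be unnecessary: the constraints defining $T_a^*$ (the lower bound on $|\RRR_t^S+\CCC_t^S|$) together with the $\ell_\infty$ bound on $y_t$ and Cauchy--Schwarz already force $\lambda(a,\beta)=\Omega(\zeta^2/\log^2(1/\eps))$ in every realizable Case~1 configuration, which is why the paper never discusses the small-$\lambda$ regime. Including it costs nothing and arguably tightens the exposition, but one should at least note that the already-established bounds tolerate the inflated $\lambda$ (which you do). Beyond these points, the argument matches the paper.
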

\begin{proof}
	By the above, we just need to sample a expected size $O(\lambda^2)$ submatrix from the conditional distribution of having sampled at least one entry from $T_a^*$. 
	 Since $|T_a^*| \geq \beta/10 \frac{\eps n}{2^a \log(1/\eps)}$, and since $\lambda = \Theta(\frac{\beta^2 \zeta^2 \log(1/\eps)}{2^a \eps})$, we see that this requires a total of $k$ samples of expected size $O(\lambda^2)$ , where 
	\begin{equation}
	\begin{split}
		k &= (n/|T_a^*|) /\lambda \leq  (\frac{2^a 10 \log(1/\eps)}{ \beta \eps }) (\frac{2^a \eps }{\beta^2 \zeta^2 \log(1/\eps)}	) \\
		&\leq 10 \frac{2^{2a}}{ \beta^3 \zeta^2 }
	\end{split}
	\end{equation}

	Thus the total complexity is $O(k \lambda^2)$, and we have
	\begin{equation}
	\begin{split}
		k \lambda^2 &\leq  10 \frac{2^{2a}}{ \beta^3 \zeta^2 }	(\frac{\beta^4 \zeta^4 \log^2(1/\eps)}{2^{2a} \eps^2}) \\
		&\leq 10 \frac{\beta \zeta^2 \log^2(1/\eps )}{\eps^2} \\
		& = O(\frac{\zeta^2 \log^2(1/\eps)}{\eps^2})
	\end{split}
	\end{equation}
 we use the fact that we can set $\zeta = O(\log^2(1/\eps))$. Finally, note that we do not know $\beta$ or $2^a$, but we can guess the value of $\lambda$ in powers of $2$, which is at most $O(\frac{\zeta^2}{\eps^2})$, and then set $k$ to be the value such that $k \lambda^2$ is within the above allowance. This blows up the complexity by a $\log(1/\eps)$ factor to do the guessing.

\end{proof}

  \subsubsection{Case 2: Spread Negative Mass and Main Theorem}
  In the prior section, we saw that if the quadratic form $x^T \AA x$ satisfies the condition for being in Case 1, we could obtain a $\tilde{O}(1/\eps^2)$ query algorithm for finding a principal submatrix $A_{T \times T}$ such that $y^\top \AA_{T \times T} y < 0$ for some vector $y$. Now recall that  $S = \{ i \in [n] \; : \; |x_i|^2 \leq \frac{1}{\eps n } \}$, and let $T_a  = \{ i \in [n] \; : \; \frac{100 2^{a-1}}{\eps n}  \leq |x_i|^2 \leq \frac{ 100 2^{a}}{\eps n } \}$ for $a \geq 1$. Recall that the definition of Case $1$ was that $x_S^\top \AA x_{T_a} + x_{T_a}^\top \AA x_S  \leq -\eps n /(10\log(1/\eps))$ for some $2^a \geq 10^6 \zeta^3$. In this section, we demonstrate that if this condition does not hold, then we will also obtain a  $\tilde{O}(1/\eps^2)$ query algorithm  for the problem.

  Thus, suppose now that we are in Case $2$; namely that  $x_S \AA x_{T_a} + x_{T_a} \AA x_S  > -\eps n /(10\log(1/\eps))$ for all $2^a \geq 10^6 \zeta^3$. Now let $T^+ = \cup_{2^a > 10^6 \zeta^3} T_a$ and let $T^- = \cup_{2^a \leq 10^6 \zeta^3} T_a$. Let $S^* = S \cup T^-$. We now observe an important fact, which sates that if we are not in Case 1, then  $x_{S^*} \AA x_{S^*}$ contributes a substantial fraction of the negativeness in the quadratic form.
  
  \begin{fact}\label{fact:case2}
  Suppose we are in Case $2$: meaning that $x_S^\top \AA x_{T_a} + x_{T_a} \AA x_S > -\eps n /(10\log(1/\eps))$ for all $2^a \geq 10^6 \zeta^3$.  Then we have $x_{S^*}^\top \AA x_{S^*} \leq - \eps n/2$. 
  \end{fact}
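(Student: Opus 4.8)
The plan is to expand the optimal quadratic form $\langle x,\AA x\rangle = -\eps n$ along the partition $[n] = S^* \sqcup T^+$ (recall $S^* = S \cup T^-$, and $S$, $T^-$, $T^+$ partition $[n]$ since the $T_a$ together with $S$ do), and to show that discarding the $T^+$ part perturbs the quadratic form by at most $\eps n/2$. Using $x = x_{S^*} + x_{T^+}$ and expanding, we get
\[
x_{S^*}^\top \AA x_{S^*} \;=\; -\eps n \;-\; \bigl(x_{S^*}^\top \AA x_{T^+} + x_{T^+}^\top \AA x_{S^*}\bigr) \;-\; x_{T^+}^\top \AA x_{T^+},
\]
so it suffices to bound the last two (negated) terms by $\eps n/2$ in total.

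The term $x_{T^+}^\top \AA x_{T^+}$ is tiny because $T^+$ is small: every $i \in T^+$ lies in some $T_a$ with $2^a > 10^6\zeta^3$, hence $|x_i|^2 \ge \tfrac{100 \cdot 2^{a-1}}{\eps n} > \tfrac{5\cdot 10^7 \zeta^3}{\eps n}$, and since $\|x\|_2 \le 1$ this forces $|T^+| \le \tfrac{\eps n}{5\cdot 10^7\zeta^3}$; Proposition~\ref{prop:rectangularcontribution} then gives $|x_{T^+}^\top \AA x_{T^+}| \le |T^+| \le \eps n/100$. For the cross term I would split $x_{S^*} = x_S + x_{T^-}$, writing it as $(x_S^\top \AA x_{T^+} + x_{T^+}^\top \AA x_S) + (x_{T^-}^\top \AA x_{T^+} + x_{T^+}^\top \AA x_{T^-})$. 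The second bracket is bounded by $2\sqrt{|T^-|\cdot|T^+|}$ via Proposition~\ref{prop:rectangularcontribution}; since $T^- \subseteq [n]\setminus S$ has size at most $\eps n/100$ (at most $\eps n/100$ coordinates of a unit vector can have square exceeding $\tfrac{100}{\eps n}$), this is at most $\eps n/1000$. The first bracket equals $\sum_{a\,:\,T_a\subseteq T^+}(x_S^\top \AA x_{T_a} + x_{T_a}^\top \AA x_S)$; by Proposition~\ref{prop:spread} we have $\|x\|_\infty \le \tfrac{1}{\eps\sqrt n}$, so $T_a$ is nonempty only for $a \le \log(1/\eps)$, giving at most $\log(1/\eps)$ nonzero summands, each $> -\tfrac{\eps n}{10\log(1/\eps)}$ by the Case 2 hypothesis. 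Hence the first bracket is $> -\eps n/10$.

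Putting these together yields $x_{S^*}^\top \AA x_{S^*} \le -\eps n + \eps n/10 + \eps n/1000 + \eps n/100 < -\eps n/2$, as claimed. I do not expect any real obstacle here; the argument is pure bookkeeping, and the required estimates on rectangular quadratic forms are all instances of Proposition~\ref{prop:rectangularcontribution}. The one place that needs genuine care is bounding the number of nonempty level sets $T_a$ by $O(\log(1/\eps))$ — which relies on the a priori bound $\|x\|_\infty \le 1/(\eps\sqrt n)$ from Proposition~\ref{prop:spread} — because the Case 2 hypothesis only controls each $x_S^\top \AA x_{T_a} + x_{T_a}^\top \AA x_S$ up to a $1/\log(1/\eps)$ factor, and one must ensure these losses sum to only a small constant fraction of $\eps n$ rather than accumulating uncontrollably.
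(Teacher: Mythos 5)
Your proof is correct and follows essentially the same route as the paper: expand $x^\top \AA x$ along the partition $[n]=S^* \sqcup T^+$, bound the $T^+$ diagonal block and the $T^-$-vs-$T^+$ cross term by Proposition~\ref{prop:rectangularcontribution} using smallness of $|T^+|$ and $|T^-|$, and sum the Case~2 hypothesis over the $O(\log(1/\eps))$ nonempty level sets in $T^+$ (justified by Proposition~\ref{prop:spread}). You are in fact a bit more careful than the paper's write-up in tracking the factor of $2$ on the cross term and in making explicit the bound $|T^+|\le \eps n/(5\cdot 10^7\zeta^3)$, but these are cosmetic; the decomposition, the lemmas invoked, and the bookkeeping are the same.
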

  \begin{proof}
      Notice that this implies that $x_S^\top\AA x_{T^+} + x_{T^+} \AA x_S \geq - \eps n / 10$, since there are at most $\log(1/\eps)$ level sets included in $T^+$ by Proposition \ref{prop:spread}. Note since the contribution of $|x_{T^+}^\top \AA x_{T^+}| \leq - \eps n 10^{-6}/ \zeta^3$ and $|x_{T^-} \AA x_{T^+} + x_{T^+} \AA x_{T^-}| \leq \sqrt{|T^-| |T^+|} \leq \eps n /100$ by Proposition \ref{prop:rectangularcontribution}. Thus if $x^\top\AA x \leq - \eps n$ to begin with, it follows that we must have 
      \begin{equation}
          \begin{split}
            x_{S^*}^\top \AA x_{S^*} &\leq x^\top A x -\left( (x_S^\top\AA x_{T^+} + x_{T^+} \AA x_S)- (x_{T^+}^\top \AA x_{T^+}) - (x_{T^-} \AA x_{T^+} + x_{T^+} \AA x_{T^-} ) \right)\\
            & \leq - \eps n +\eps n/10 + + \eps n 10^{-6}/ \zeta^3 \eps n /100 \\
            &< - \eps n/2       \\  
          \end{split}
      \end{equation}
    \end{proof}

  We now proceed by analyzing the result of sampling a principal submatrix from the quadratic form $x_{S^*}^\top \AA x_{S^*}$, which by the prior fact is already sufficently negative. Specifically, we will demonstrate that the variance of the standard estimator from Lemma \ref{lem:var}, and specifically Corollary \ref{cor:var}, is already sufficiently small to allow for a single randomly chosen $O(1/\eps) \times O(1/\eps)$ principal submatrix of $\AA$ to have negative quadratic form with $x_{S^*}$ with good probability. In order to place a bound on the variance of this estimator and apply Corollary \ref{cor:var}, we will need to bound the row and column contributions of the quadratic form $x_{S^*}^\top \AA_{S^* \times S^*}x_{S^*}$, which we now formally define.
\begin{definition}
For $i \in [n]$, define the row and column contributions of $i$ within $S^*$ as $\RRR_i^*= \sum_{j  \in S^* \setminus i} x_i \AA_{i,j} x_j$ and $\CCC_i^* = \sum_{j \in S^* \setminus i} x_j \AA_{j,i} x_i$ respectively. 
\end{definition}\noindent
Recall that the \textit{total} row and column contributions of $i$ are defined via $\RRR_i = \sum_{j \in [n] \setminus i} x_i \AA_{i,j} x_j$ and $\CCC_i = \sum_{j \in [n] \setminus i} x_j \AA_{j,i} x_i$ respectively, and recall that we have $\RRR_i + \CCC_i \leq 0$ for all $i \in [n]$ by Fact \ref{fact:case1OPT}

     \begin{proposition}\label{propLl1rowcolumnboundCase2}
         We have $\sum_{i \in S^*} (\RRR_i^* + \CCC_i^*)^2 \leq 10^9 \cdot \zeta^3 \eps n$.
     \end{proposition}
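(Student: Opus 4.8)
The plan is to run the same splitting argument used for the set $S$ in Proposition~\ref{propLl1rowcolumnbound}, but now along $S^* = S \cup T^-$, taking a little more care with constants because the coordinates of $x$ inside $T^-$ may be as large as $\sqrt{10^8\zeta^3/(\eps n)}$ rather than $\sqrt{100/(\eps n)}$. Concretely, I would work with three vectors indexed by $i \in S^*$: the total contribution $z_i = \RRR_i + \CCC_i$, the restricted contribution $z_i^* = \RRR_i^* + \CCC_i^*$, and the difference $z_i^{*-} = z_i - z_i^*$. Since $z^* = z|_{S^*} - z^{*-}$, I would bound $\|z^*\|_2 \le \|z|_{S^*}\|_2 + \|z^{*-}\|_2$ and estimate the two pieces separately; importantly I would use this triangle inequality rather than the lossy $\|a+b\|_2^2 \le 2\|a\|_2^2 + 2\|b\|_2^2$, since the $z^{*-}$ piece is genuinely lower order and I want the final constant to stay below $10^9$.

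For the error term $\|z^{*-}\|_2$: since $i \in S^*$ means $i \notin T^+$ and $\AA$ is symmetric, one has $z_i^{*-} = 2\sum_{j \in T^+} x_i \AA_{i,j} x_j$, so $\|z^{*-}\|_2^2 \le 4\sum_{i\in S^*}\big(\sum_{j \in T^+}x_i\AA_{i,j}x_j\big)^2 \le 4|T^+|$ by Proposition~\ref{prop:rectangularcontribution2}, applied to the $|T^+|\times|S^*|$ submatrix of $\AA$ with the unit vectors $x_{T^+}$ and $x_{S^*}$. Every coordinate $i \in T^+$ lies in a level set $T_a$ with $2^a > 10^6\zeta^3$, hence $x_i^2 > 5\cdot 10^7\zeta^3/(\eps n)$; since $\|x\|_2^2 = 1$ this forces $|T^+| < \eps n/(5\cdot 10^7\zeta^3)$, so $\|z^{*-}\|_2^2 < \eps n \le \zeta^3\eps n$ (using $\zeta \ge 1$).

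For the main term I would bound $\|z|_{S^*}\|_2^2 \le \|z|_{S^*}\|_1 \cdot \|z|_{S^*}\|_\infty$. By Fact~\ref{fact:case1OPT} we have $\RRR_i + \CCC_i \le 0$ for all $i$, so $\|z|_{S^*}\|_1 \le \|z\|_1 = -\sum_i(\RRR_i + \CCC_i) = 2\eps n + 2\sum_i \AA_{i,i}x_i^2 = 2\eps n + 2$, using $\AA_{i,i} = 1$ (Proposition~\ref{prop:1diag}) and $\|x\|_2 = 1$. For the $\ell_\infty$ bound, since $x$ is the eigenvector of $\lambda_{\min}(\AA) = -\eps n$ one computes, exactly as in Proposition~\ref{propLl1rowcolumnbound}, $\RRR_i + \CCC_i = -2x_i^2(\eps n + \AA_{i,i})$, hence $|\RRR_i + \CCC_i| \le 2x_i^2(\eps n + 1)$; for $i \in S^* = S \cup T^-$ we have $x_i^2 \le 10^8\zeta^3/(\eps n)$ (the top level set contained in $T^-$ satisfies $x_i^2 \le 100\cdot 2^a/(\eps n)$ with $2^a \le 10^6\zeta^3$), so $\|z|_{S^*}\|_\infty \le 2\cdot 10^8\zeta^3(1 + 1/(\eps n))$. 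Multiplying and using that $\eps n$ is large (as $1/\eps < cn$) gives $\|z|_{S^*}\|_2^2 \le 5\cdot 10^8\zeta^3\eps n$, and therefore $\|z^*\|_2^2 \le \big(\sqrt{5\cdot 10^8\zeta^3\eps n} + \sqrt{\zeta^3\eps n}\big)^2 \le 10^9\zeta^3\eps n$, which is the claim.

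I do not expect a genuine obstacle: the argument is a direct analogue of Proposition~\ref{propLl1rowcolumnbound}, and the only thing to watch is the bookkeeping of constants — specifically, recognizing that the cost of enlarging the index set from $S$ to $S^*$ is confined to the $\ell_\infty$ bound (which grows by a factor $\Theta(\zeta^3)$), while the $\ell_1$ bound and the $T^+$-error term are essentially unchanged, so that the $\ell_1$-times-$\ell_\infty$ product lands at $O(\zeta^3\eps n)$ with room to spare, and the triangle inequality keeps the cross term negligible.
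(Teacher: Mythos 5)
Your proof is correct and mirrors the paper's argument exactly: decompose $z^* = z|_{S^*} - z^{*-}$, bound the error term $\|z^{*-}\|_2^2$ via Proposition~\ref{prop:rectangularcontribution2} together with the smallness of $T^+$, and bound the main term by an $\|z\|_1\cdot\|z\|_\infty$ estimate using Fact~\ref{fact:case1OPT} (non-positivity of $\RRR_i+\CCC_i$, hence $\|z\|_1 \le 2\eps n + 2$) and the eigenvector identity $\RRR_i+\CCC_i = -2x_i^2(\eps n + \AA_{i,i})$ for $\ell_\infty$. The only differences from the paper are in bookkeeping — you use the exact triangle inequality $\|z^*\|_2 \le \|z|_{S^*}\|_2 + \|z^{*-}\|_2$ rather than the lossy $2a^2+2b^2$ split, and you track the $\ell_\infty$ bound as $2\cdot 10^8\zeta^3(1+1/(\eps n))$ rather than the paper's coarser $4\eps n\,x_i^2$ step — both of which are the right instincts for landing the constant comfortably below $10^9\zeta^3\eps n$.
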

     \begin{proof}
     The proof proceeds similarly to Proposition \ref{propLl1rowcolumnbound}.
         Let $z^* ,z,z^-\in \R^{|S^*|}$ be defined for $i \in S^*$ via $z_i^* = \RRR_i^* + \CCC_i^*$, $z_i = \RRR_i   + \RRR_i$, and $z^- = z - z$. Notice that our goal is to bound $\|z^*\|_2^2 $, which by triangle inequality satisfies $\|z^*\|_2^2 \leq 2\left(\|z\|_2^2 + \|z^-\|_2^2\right)$. First note that 
         \begin{equation}
             \begin{split}
                   \|z^-\|_2^2 &= \sum_{i \in S^*} \left(\sum_{j \notin S^*} x_i \AA_{i,j} x_j + \sum_{j \notin S^*}  x_j \AA_{j,i} x_i \right)^2 \\
                   &\leq 2  \sum_{i \in S^*} \left(\sum_{j \notin S^*} x_i \AA_{i,j} x_j \right)^2 + 2  \sum_{i \in S^*} \left(\sum_{j \notin S^*}  x_j \AA_{j,i} x_i \right)^2\\
             \end{split}
         \end{equation}
        Using that $|[n] \setminus S^*| < \eps n/100$, we have by Proposition \ref{prop:rectangularcontribution2} that $\sum_{i \in S^*} \left(\sum_{j \notin S} x_i \AA_{i,j} x_j \right)^2 \leq \eps n / 100$, so $\|z^-\|_2^2 \leq \eps n /25$. 
         
         We now bound $\|z\|_1 = \sum_{i \in S} | \RRR_i + \RRR_i|$. Recall that we have $\RRR_i + \RRR_i \leq 0$ for all $i \in [n]$, which means that $\|z\|_1  \leq \sum_{i \in [n]} | \RRR_i + \RRR_i| = |2 \langle x , \AA x \rangle - 2\sum_{i \in [n]} \AA_{i,i} (x_i)^2| \leq 2 \eps n$.
         Next, we bound $\|z\|_\infty$. Notice that  $|\RRR_i + \RRR_i |= 2|  x_i \AA_{i,*} x|  = 2\eps n (x_i)^2 - 2\AA_{i,i}(x_i)^2 < 4 \eps n  (x_i)^2 $,  using that $\AA x = - \eps n x$, since $x$ is an eigenvectror of $\AA$. Since $ i\in S^*$, by definition we have $(x_i)^2 \leq \frac{100 \cdot 10^6 \cdot \zeta^3}{\eps n}$, thus $\|z\|_\infty \leq 100 \cdot 10^6 \cdot \zeta^3$. It follows that $\|z\|_2^2$ is maximized by having $ 2 \eps n/( 10^8 \cdot \zeta^3)$ coordinates equal to $ 10^8 \cdot \zeta^3$, giving  $\|z\|_2^2 \leq  2 \eps n/( 10^8 \cdot \zeta^3) ( 10^8 \cdot \zeta^3)^2 = 2 \cdot 10^8 \cdot \zeta^3 \eps n$. It follows then that $\|z\|_2^2 \leq  \cdot 10^9 \cdot \zeta^3 \eps n$ as needed.
     \end{proof}

\begin{theorem}\label{thm:inftymain}
	There is an algorithm which, given $\AA$ with $\|\AA\|_\infty \leq 1$ such that either $x^\top \AA x \geq 0$ for all $x \in \R^{n}$ (\texttt{YES} Case), or $x^\top\AA x \leq - \eps n$ for some $x \in \R^n$ with $\|x\|_2 \leq 1$ (\texttt{NO} Case), distinguishes the two cases with probability $3/4$ using at most $\wt{O}(\frac{1}{\eps^2})$ queries, and running in time $\tilde{O}(1/\eps^\omega)$, where $\omega< 2.373$ is the exponent of fast matrix multiplication. Moreover, in the \texttt{YES} case the, the algorithm always outputs \texttt{YES} (with probability $1$), and in the \texttt{NO} case, the algorithm returns a certificate in the form of a principal submatrix which is not PSD. 
\end{theorem}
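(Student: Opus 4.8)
The plan is to assemble the theorem from the reductions of this section, the Case~1 subroutine already established in Theorem~\ref{thm:case1}, and a short Case~2 argument. First I would record the reductions: by the symmetrization discussion at the start of the section we may assume $\AA=\AA^\top$, by Proposition~\ref{prop:1diag} we may assume $\AA_{i,i}=1$ for every $i$ (at the cost of a factor $2$ in $\eps$), and by the coupling argument on subsampling rates we may assume the exact value of $\eps$ with $\lambda_{\min}(\AA)=-\eps n$ is known in the \texttt{NO} case (the algorithm only needs a lower bound). The algorithm queries a collection of principal submatrices $\AA_{T_1\times T_1},\dots,\AA_{T_t\times T_t}$, computes the eigenvalues of each, and outputs \texttt{Not PSD} together with the offending submatrix as soon as one of them is not PSD. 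Since every principal submatrix of a PSD matrix is PSD, this is automatically one-sided: on a \texttt{YES} instance the algorithm always outputs \texttt{PSD}. It therefore remains to exhibit, in the \texttt{NO} case, a sampling scheme using $\wt O(1/\eps^2)$ queries under which some sampled principal submatrix is non-PSD with probability at least $3/4$; the running time will be $\wt O(1/\eps^\omega)$, dominated by eigendecompositions of the $\wt O(1/\eps)$-dimensional submatrices that get sampled. As usual we may assume the query budget $\wt O(1/\eps^2)$ is smaller than $n^2$, since otherwise we read all of $\AA$.

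Next I would set up the \texttt{NO} case: fix the unit eigenvector $x$ with $\langle x,\AA x\rangle=\lambda_{\min}(\AA)=-\eps n$, which by Proposition~\ref{prop:spread} has $\|x\|_\infty\le 1/(\eps\sqrt n)$, and form the level sets $S,\{T_a\}_{a\ge 1}$ and the parameter $\zeta=\Theta(\log^2(1/\eps))$. The algorithm does not know which of Case~1 or Case~2 holds, so it runs both subroutines in parallel (the Case~1 subroutine already absorbs the $\log(1/\eps)$-factor overhead for guessing its scale $\lambda$). If the instance is in Case~1, Theorem~\ref{thm:case1} already finds a non-PSD principal submatrix with probability $\ge 9/10$ using $O(\log^7(1/\eps)/\eps^2)=\wt O(1/\eps^2)$ queries, so it suffices to handle Case~2.

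For Case~2 I would argue as follows. Let $T^+=\bigcup_{2^a>10^6\zeta^3}T_a$, $T^-=\bigcup_{2^a\le 10^6\zeta^3}T_a$, $S^*=S\cup T^-$, and put $y:=x_{S^*}$. By Fact~\ref{fact:case2} the truncated direction already has strongly negative quadratic form, $y^\top\AA y\le -\eps n/2$, while $\|y\|_2\le\|x\|_2\le 1$; by construction every $i\in S^*$ satisfies $(x_i)^2\le O(\zeta^3/(\eps n))$, so $\|y\|_\infty\le\alpha/(\eps\sqrt n)$ with $\alpha=O(\sqrt{\zeta^3\eps})$; and Proposition~\ref{propLl1rowcolumnboundCase2} gives $\sum_{i\in S^*}(\RRR_i^*+\CCC_i^*)^2\le O(\zeta^3\eps n)$, i.e.\ the vector $\ZZ$ of row-plus-column contributions of $y$ has $\|\ZZ\|_2^2\le O(\zeta^3\eps n)$. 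Now sample $d=\Theta(1)$ independent principal submatrices $T_1,\dots,T_d$, each obtained by including every index of $[n]$ independently with probability $k/n$ for $k=\Theta(\zeta^3/\eps)=\wt\Theta(1/\eps)$, and set $Z_\ell=\sum_{i,j}y_i\AA_{i,j}y_j\,\delta_i^{(\ell)}\delta_j^{(\ell)}$. Proposition~\ref{prop:exp} (applied with $\eps/2$ in place of $\eps$) gives $\ex{Z_\ell}\le -\eps k^2/(8n)$; writing $\eps':=|y^\top\AA y|/n\in[\eps/2,O(1)]$ we in fact have $|\ex{Z_\ell}|\ge \eps' k^2/(8n)$, the diagonal of $\AA$ contributing only a nonnegative term of mean $\le k/n\ll \eps' k^2/n$. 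Taking $c_1=\eps'/\eps$, $c_2=O(\zeta^3)$ and $\alpha=O(\sqrt{\zeta^3\eps})$ in Corollary~\ref{cor:var}, and bounding the diagonal contribution to the variance separately (it is $O(\alpha^2 k/(\eps^2 n^2))$, hence negligible), the only variance term of order $(\ex{Z_\ell})^2$ is $c_1^2 k^4\eps^2/n^2=(\eps')^2k^4/n^2=O((\ex{Z_\ell})^2)$, while every remaining term is at most $(\ex{Z_\ell})^2/100$ once $k=\Theta(\zeta^3/\eps)$ is taken with a large enough constant. Hence $\bar Z=\tfrac1d\sum_\ell Z_\ell$ has $\mathbf{Var}(\bar Z)\le\tfrac1{50}(\ex{\bar Z})^2$ for $d$ a large enough constant, so Chebyshev gives $\bar Z<0$ with probability $\ge 99/100$; when this happens some $Z_\ell<0$, i.e.\ $y$ restricted to $T_\ell$ certifies that $\AA_{T_\ell\times T_\ell}$ is not PSD. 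The number of queried entries is $d\cdot k^2=\wt O(1/\eps^2)$. Combining the two subroutines with the one-sided check and a union bound over the (constant) failure probabilities finishes the plan; the total time is $\wt O(1/\eps^\omega)$ since each sampled submatrix (in either case) has dimension $\wt O(1/\eps)$.

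The main obstacle is the Case~2 variance estimate: one must verify that after passing to the spread-out direction $y=x_{S^*}$—whose $\ell_\infty$-norm is small by the definition of $S^*$, and whose row/column contributions are controlled by Proposition~\ref{propLl1rowcolumnboundCase2}—the variance of a single $\wt\Theta(1/\eps)$-dimensional random principal-submatrix quadratic form is already within a constant factor of the square of its (negative) expectation, so that only $O(1)$ independent repetitions, rather than the $\Theta(1/\eps)$ needed in the warm-up, suffice to push the failure probability below a constant. Everything else—the reductions, the case split, the Case~1 subroutine, and the running-time bound—is already in place.
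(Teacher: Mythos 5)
Your proposal is correct and follows essentially the same route as the paper's proof: it reduces to the symmetric, unit-diagonal case, splits into Cases 1 and 2, delegates Case 1 to Theorem~\ref{thm:case1}, and for Case 2 works with the truncated vector $y=x_{S^*}$, combining Fact~\ref{fact:case2}, Proposition~\ref{propLl1rowcolumnboundCase2}, Proposition~\ref{prop:exp}, and Corollary~\ref{cor:var} to show a single $\wt\Theta(1/\eps)$-sized random principal submatrix is non-PSD with constant probability. The differences from the paper's writeup are cosmetic but in your favor: you correctly take $\alpha=\Theta(\sqrt{\eps\zeta^3})=\Theta(\sqrt\eps\,\zeta^{3/2})$ (the paper writes $\alpha=O(\sqrt\eps\,\zeta^3)$, an apparent typo, since the definition of $S^*$ only gives $|x_i|^2\le O(\zeta^3/(\eps n))$); you observe that once $k=\Theta(\zeta^3/\eps)$ is chosen with a sufficiently large constant the variance of a single $Z_\ell$ is already $O((\ex{Z_\ell})^2)$, so $O(1)$ rather than $O(\log^{12}(1/\eps))$ repetitions suffice; and you control the diagonal contribution by a direct mean/variance bound rather than the paper's bucketed Chernoff argument. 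All three changes leave the final guarantee $\wt O(1/\eps^2)$ queries and $\wt O(1/\eps^\omega)$ time unchanged.

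One small point worth tightening: you write $\eps'\in[\eps/2,O(1)]$, but since $y=x_{S^*}$ has $\|y\|_2\le 1$ the variational principle actually gives $y^\top\AA y\ge\lambda_{\min}(\AA)\|y\|_2^2\ge -\eps n$, so $\eps'\in[\eps/2,\eps]$ and $c_1\in[1/2,1]$ automatically; this is the justification for taking $c_1=\Theta(1)$ in Corollary~\ref{cor:var} and is worth stating explicitly, since otherwise the term $c_1^2k^4\eps^2/n^2$ is not \emph{a priori} of the same order as $(\ex{Z_\ell})^2$.
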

   \begin{proof}
   	By Theorem \ref{thm:case1} which handles Case 1, we can restrict ourselves to Case 2. Using Fact \ref{fact:case2} as well as Proposition \ref{propLl1rowcolumnboundCase2}, can  apply Corollary \ref{cor:var} with the vector $y = x_{S^*}$, setting $c_1 = \Theta(1)$ and $c_2 = \Theta(\zeta^3)$, and $\alpha = O(\sqrt{\eps} \zeta^3) = O(\sqrt{\eps} \log^6(1/\eps))$, to obtain that $$\mathbf{Var}[\sum_{i \neq j} y_i \AA_{i,j} y_j \delta_i \delta_j] \leq O( \log^{12}(1/\eps) \frac{k^2}{n^2})$$ where $k = \tilde{\Theta}(1/\eps)$. 
   	Since by Proposition \ref{prop:exp} and Fact \ref{fact:case2} , we have $\ex{\sum_{i \neq j} y_i \AA_{i,j} y_j \delta_i \delta_j} \leq \frac{k^2}{4 n^2} \langle x_{S^*},\AA x_{S^*}\rangle \leq -\frac{\eps k^2}{8n}$, it follows that by repeating the sampling procedure $O(\log^{12}(1/\eps))$, by Chebyshev's we will have that at least one sample satisfies  $\sum_{i \neq j} y_i \AA_{i,j} y_j \delta_i \delta_j \leq - \frac{\eps k^2}{4n}$ with probability $99/100$.
   	
   	Now note that this random variable does not take into account the diagonal. Thus, it will suffice to bound the contribution of the random variable $ \sum_{i \in [n]} \delta_i \AA_{i,i}(y_i)^2$ $\tilde{O}((1/\eps)/n)$. First observe that $\ex{ \sum_{i \in [n]} \delta_i \AA_{i,i}(y_i)^2 } = \frac{k}{n}$. The proof proceeds by a simple bucketing argument; let $\Lambda_i = \{i \in S^* \; | \; \frac{2^i}{n} \leq(y_i)^2 \leq   \frac{2^{i+1}}{n} \}$, and for a single $k \times k$ sampled submatrix, let $T \subset [n]$ be the rows and columns that are sample. Note that $\ex{| T \cap \Lambda_i|} \leq k 2^{-i}$, since $|\Lambda_i| \leq 2^{-i}$. Note also that $|\Lambda_i| = 0$ for every $i$ such that $2^i \geq \frac{100^8 \zeta^3 }{\eps}$ by definition of $S^*$ and the fact that $y$ is zero outisde of $S^*$. Then by Chernoff bounds we have that with probability $\pr{| T \cap \Lambda_i| > \log(1/\eps) \max\{ k 2^{-i} ,1\} } \leq  	1-\frac{\eps^{10}}{C}$ for some constant $C$ for our choosing. We can then union bound over all $O(\log(1/\eps))$ sets $\Lambda_i$, to obtain 
   	\[\sum_{i \in [n]} \delta_i \AA_{i,i}(y_i)^2 \leq \sum_{i : 2^i \leq \frac{100^8 \zeta^3 }{\eps} } \frac{2^{i+1}}{n} | T \cap \Lambda_i|  \leq  \sum_{i : 2^i \leq \frac{100^8 \zeta^3 }{\eps} } \frac{2}{n}\log(1/\eps) \max\{ k , 2^i\}\] with probability at least $	1-\frac{\eps^9}{C}$. Setting $k = \Theta(\log^6(1/\eps)/\eps)$, we have that $\sum_{i \in [n]} \delta_i \AA_{i,i}(y_i)^2  \leq  \sum_{i : 2^i \leq \frac{100^8 \zeta^3 }{\eps} }  (2/n)\log(1/\eps) k = O(\log^2(1/\eps)k/n)$ .
 Thus we can condition on $ \sum_{i \in [n]} \delta_i \AA_{i,i}(y_i)^2=  O(\log^2(1/\eps)k/n)$  for all $\tilde{O}(1)$ repetitions of sampling a submatrix. Since at least one sampled submatrix satisfied $\sum_{i \neq j} y_i \AA_{i,j} y_j \delta_i \delta_j \leq - \frac{\eps k^2}{4n}$, and since  $k = \Theta(\log^6(1/\eps)/\eps)$, this demonstrates that at least one sampled submatrix will satisfy  $\sum_{i ,j} y_i \AA_{i,j} y_j \delta_i \delta_j < - \frac{\eps k^2}{8n}$ as needed in the NO instance. The resulting query complexity is then $O(\log^2 (1/\eps)k^2) = O(\frac{\log^{24}(1/\eps)}{\eps^2}) = \wt{O}(\frac{1}{\eps^2})$ as desired.  Finally, for runtime, notice that the main computation is computing the eigenvalues of a $k \times k$ principal submatirx, for $k = \tilde{O}(1/\eps)$, which can be carried out in time $\tilde{O}(1/\eps^\omega)$ \cite{demmel2007fast2,banks2019pseudospectral}.
   \end{proof}

\section{PSD Testing with \texorpdfstring{$\ell_2^2$}{L-2-square}  Gap}
Let $\AA \in \R^{n \times n}$ be a symmetric matrix with eigenvalues $\lambda_{\max} = \lambda_1 \geq \lambda_2 \geq \dots \geq \lambda_n = \lambda_{\min}.$   In this section, we consider the problem of testing positive semi-definiteness with an $\ell_2^2$ gap. Formally, the problem statement is as follows.

\begin{definition}[PSD Testing with $\ell_{2}^2$-Gap.]\label{def:l2}
Fix, $\eps \in (0,1]$ and let $\AA \in \R^{n \times n}$ be a symmetric matrix satisfying $\|\AA\|_\infty \leq 1$, with the promise that either 
\begin{itemize}
	\item \textbf{YES Instance}: $\AA$ is PSD.
	\item \textbf{NO Instance}: $\AA$ is $\eps$-far from PSD in $\ell_2^2$, meaning that $\min_{\BB \succeq 0} \|\AA - \BB\|_F^2 = \sum_{i: \lambda_i < 0} \lambda_i^2 = \eps n^2$.
\end{itemize}
The PSD Testing problem with $\ell_{2}^2$-gap is to design an algorithm which distinguish these two cases with probability at least $2/3$, using the minimum number of queries possible to the entires of $\AA$.
\end{definition}

Our algorithm for this problem will query a principal submatrix $\AA_{S  \times S}$ and return PSD if $\AA_{S  \times S}$ is  PSD, otherwise it will return not PSD. Since all principal submatrices of PSD matrices are PSD, we only need show that if $\AA$ is $\eps$-far from PSD, then we can find a non-PSD principal submatrix with small size. Note again that this implies that our algorithm will be one-sided. Thus, in the following, we can focus on the case where $\AA$ is $\eps$-far from PSD.  We begin by stating two fundamental observations, which, along with an application of our algorithm from Section \ref{sec:Linfty}, will allow us to reduce the problem of PSD testing with $\ell_2$ gap to the problem of testing certain functions of the \textit{singular values} of $\AA$.

\begin{proposition}[PSD matrices are top heavy]\label{prop:topheavy} Fix any $n  \in \mathbb{N}$, $1 \leq k \leq n$, and $\DD \in \R^{n \times n}$. Then if $\DD$ is PSD, we have
	\[\sum_{i > k} \sigma_i(\DD)^2 \leq \frac{1}{k}\left( \Tr(\DD)\right)^2\] 
In particular, if $\DD$ has bounded entries $\|\DD\|_\infty \leq 1$, we have $\sum_{i > k} \sigma_i(\DD)^2 \leq \frac{1}{k}n^2$.
\end{proposition}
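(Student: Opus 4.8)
The plan is to reduce everything to the elementary observation that for a PSD matrix the singular values coincide with the (nonnegative) eigenvalues, and then apply a one-line averaging bound on the $(k+1)$-st singular value. Concretely: since $\DD$ is PSD, write $\sigma_1(\DD) \geq \sigma_2(\DD) \geq \cdots \geq \sigma_n(\DD) \geq 0$ for its singular values, which are exactly its eigenvalues, so that $\Tr(\DD) = \sum_{i=1}^n \sigma_i(\DD)$ and this quantity is nonnegative.

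The key step is to bound $\sigma_{k+1}(\DD)$. Since $\sigma_{k+1}(\DD)$ is the smallest of the top $k+1$ singular values, it is at most the average of the top $k$, i.e. $\sigma_{k+1}(\DD) \leq \tfrac1k \sum_{i=1}^k \sigma_i(\DD)$; and by nonnegativity of all the $\sigma_i(\DD)$ this average is at most $\tfrac1k\sum_{i=1}^n \sigma_i(\DD) = \tfrac1k \Tr(\DD)$. Then I would control the tail by pulling out the largest term in it: for every $i > k$ we have $\sigma_i(\DD) \leq \sigma_{k+1}(\DD)$, hence
\[
\sum_{i > k} \sigma_i(\DD)^2 \;\leq\; \sigma_{k+1}(\DD)\sum_{i > k}\sigma_i(\DD) \;\leq\; \frac{\Tr(\DD)}{k}\cdot \Tr(\DD) \;=\; \frac{1}{k}\bigl(\Tr(\DD)\bigr)^2,
\]
where the last inequality again uses $\sum_{i>k}\sigma_i(\DD) \leq \sum_{i=1}^n \sigma_i(\DD) = \Tr(\DD)$ by nonnegativity. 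For the ``in particular'' clause, note that if additionally $\|\DD\|_\infty \leq 1$ then each diagonal entry satisfies $0 \leq \DD_{i,i} \leq 1$ (nonnegativity of diagonal entries of a PSD matrix), so $\Tr(\DD) = \sum_i \DD_{i,i} \leq n$, and the claimed bound $\sum_{i>k}\sigma_i(\DD)^2 \leq n^2/k$ follows immediately.

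There is no real obstacle here; the proof is a two-step estimate. The only point worth flagging is that PSD-ness is genuinely used twice — once so that $\Tr(\DD) = \sum_i \sigma_i(\DD)$ (rather than merely $\Tr(\DD) \leq \sum_i \sigma_i(\DD)$ in absolute value), and once to ensure $\Tr(\DD) \leq n$ in the bounded-entry case — and that both the statement and this argument fail without it (e.g. $-\mathbf{1}\mathbf{1}^\top$ has $\Tr = 0$ but a huge tail). I would keep the writeup to essentially the displayed chain of inequalities plus these two sentences of justification.
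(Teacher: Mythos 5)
Your proof is correct and takes essentially the same route as the paper's: both bound the $(k{+}1)$-st (the paper uses the $k$-th) singular value by $\Tr(\DD)/k$ and then combine this $\ell_\infty$ bound with the $\ell_1$ bound $\sum_{i>k}\sigma_i(\DD)\le\Tr(\DD)$ to control the sum of squares. The only cosmetic difference is that you make the final step explicit as the chain $\sum_{i>k}\sigma_i^2 \le \sigma_{k+1}\sum_{i>k}\sigma_i$, whereas the paper phrases it as an extremal ``mass at the top'' argument; these are the same estimate.
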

\begin{proof}
	We first show that $\sigma_k(\DD) \leq k^{-1} \Tr(\DD)$. To see this, suppose $\sigma_{k}(\DD) > k^{-1} \Tr(\DD)$. Then because $\DD$ is PSD, we would have $\sum_i \sigma_i = \sum_i \lambda_i  = \text{Tr}(\AA) > k \cdot k^{-1} \Tr(\DD)$, a contradiction. Thus, $\sigma_{i}(\DD) \leq k^{-1} \Tr(\DD)$ for all $i \geq k$. Using this and the bound  $\sum_{i > k} \sigma_i(\DD) \leq \Tr(\DD)$, it follows that the quantity $\sum_{i > k} \sigma_i(\DD)^2$ is maximimized by having $k$ singular values equal to $\Tr(\DD)/k$, yielding $\sum_{i > t} \sigma_i(\DD)^2 \leq k \cdot (\Tr(\DD)/k)^2 =  k^{-1} (\Tr(\DD))^2 $ as needed. 
\end{proof}

\begin{proposition}\label{prop:nottopheavy} 
	Let $\DD \in \R^{n \times n}$ be a symmetric matrix such that $\|\DD\|_\infty \leq 1$, and let $\sigma_1 \geq \sigma_2 \geq \dots \geq \sigma_n$ be its singular values. Suppose $\DD$ is at least $\eps$-far in $L_2$ from PSD, so that $\sum_{i : \lambda_i(\DD) < 0} \lambda_i^2(\DD) \geq   \eps n^2$, and suppose further that $\min_i \lambda_i(\DD) > - \frac{1}{2k} n$ for any $k \geq \frac{2}{\eps}$. Then we have
	\[\sum_{i >k}\sigma_{i}^2(\DD) > \frac{\eps}{2} n^2\] 
\end{proposition}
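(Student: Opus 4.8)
The plan is to split the eigenvalues of $\DD$ into the negative part and the nonnegative part, and argue that the negative mass, which we know is at least $\eps n^2$, must all live in the tail past index $k$ because each negative eigenvalue is too small (in magnitude) to be among the top $k$ singular values. First I would observe the key numerical fact: since $|\lambda_i(\DD)| > -\frac{1}{2k}n$ fails for no $i$ — i.e. every negative eigenvalue satisfies $|\lambda_i(\DD)| < \frac{n}{2k}$ — the total negative mass coming from any $k$ of them is at most $k \cdot \big(\frac{n}{2k}\big)^2 = \frac{n^2}{4k} \le \frac{\eps}{8}n^2$, using $k \ge 2/\eps$. In particular, even if the top $k$ singular values of $\DD$ were all associated with negative eigenvalues, they could account for at most $\frac{\eps}{8}n^2$ of the negative mass.

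Next I would make the indexing precise. Let $N = \{ i : \lambda_i(\DD) < 0\}$, so $\sum_{i \in N}\lambda_i^2(\DD) \ge \eps n^2$. The singular values of a symmetric matrix are the absolute values of the eigenvalues, sorted in decreasing order; so $\sum_{i > k}\sigma_i^2(\DD) = \sum_{i \in N}\lambda_i^2(\DD) + \sum_{i \notin N}\lambda_i^2(\DD) - (\text{sum of the top } k \text{ values of }\lambda_i^2)$. The quantity subtracted is the sum of squares of the $k$ largest eigenvalues in magnitude; I want to lower bound $\sum_{i>k}\sigma_i^2(\DD)$ by $\sum_{i\in N}\lambda_i^2(\DD)$ minus whatever negative-eigenvalue contribution could have been removed by that truncation. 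The cleanest way: among the top-$k$ magnitudes, at most $k$ of them can correspond to indices in $N$, and each such contributes at most $(n/2k)^2$; hence the total squared mass of negative eigenvalues that is "lost" to the top-$k$ truncation is at most $k(n/2k)^2 = n^2/(4k) \le \eps n^2/8$. Therefore $\sum_{i>k}\sigma_i^2(\DD) \ge \sum_{i\in N}\lambda_i^2(\DD) - \eps n^2/8 \ge \eps n^2 - \eps n^2/8 > \tfrac{\eps}{2}n^2$.

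I expect the main obstacle to be purely bookkeeping: being careful that "top $k$ singular values" is exactly "the $k$ eigenvalues of largest absolute value," and that removing those $k$ values from the full multiset $\{\lambda_i^2\}$ removes at most the full squared mass of negative eigenvalues that happen to be among them — which is bounded by $k \cdot \max_{i\in N}\lambda_i^2 \le k\,(n/2k)^2$. No concentration or spectral inequalities are needed here; it is a one-line counting argument once the indexing is set up correctly, and the slack ($\eps/8$ versus $\eps/2$) is comfortable. I would write it as: $\sum_{i > k}\sigma_i^2(\DD) \ge \sum_{i \in N}\lambda_i^2(\DD) - \sum_{i \in N,\ \sigma_i \text{ among top }k}\lambda_i^2(\DD) \ge \eps n^2 - k\left(\tfrac{n}{2k}\right)^2 \ge \eps n^2 - \tfrac{n^2}{4k} \ge \eps n^2 - \tfrac{\eps n^2}{8} > \tfrac{\eps}{2}n^2$, then remark that $k \ge 2/\eps$ gives the last step.
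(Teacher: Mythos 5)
Your proof is correct and uses essentially the same counting argument as the paper: both observe that every negative eigenvalue has magnitude less than $n/(2k)$, so the top-$k$ singular values can absorb at most $k\cdot(n/(2k))^2 = n^2/(4k) \le \eps n^2/8$ of the negative mass, leaving more than $\eps n^2/2$ in the tail. The paper phrases it via the auxiliary set $W'$ of small singular values while you subtract directly from the total negative mass, but these are cosmetic variants of the identical one-line bookkeeping.
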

\begin{proof}
	Let $W \subseteq [n]$ be the set of values $i \in [n]$ such that $\lambda_{i} < 0$.  Let $W' \subseteq [n]$ be the set of values $i \in [n]$ such that $\sigma_{i} < \frac{1}{2k} n$. By assumption: $\sum_{i \in W'} \sigma_{i}^2 \geq  \sum_{i \in W} \lambda_{i}^2 \geq  \eps n^2$.  Now  $\sum_{i \in W'} \sigma_{i}^2 = \sum_{i \in W', i \leq k} \sigma_{i}^2 + \sum_{i \in W', i > k} \sigma_{i}^2$, so the fact that $|\sigma_i| \leq (1/2k)n$ for every $i \in W'$, we have that  $ \sum_{i \in W', i \leq k} \sigma_{i}^2 \leq k (n/(2k))^2 = n^2/4k < \eps n^2/2$. Thus we must have $\sum_{i \in W, i > t} \sigma_{i}^2 > \eps n^2/2 $, giving
	
	\begin{equation}
	\begin{split}
	\sum_{i > k} \sigma_{i}^2  & \geq \sum_{i \in W', i > k} \sigma_{i}^2 \\
	& > \eps  n^2/2 \\
	\end{split}
	\end{equation}
	as required.
\end{proof}
\noindent

\subsection{Analysis of the Algorithm}
Our analysis will require several tools, beginning with the following interlacing lemma. 



\begin{lemma}[Dual Lidskii Inequality,  \cite{tao2011topics} Chapter 1.3]\label{lem:duallidskii}
	Let $\M_1,\M_2$ be $t \times t$ symmetric Matrices, and fix $1 \leq i_1 < i_2 < \dots < i_k \leq n$. Then we have
	\[\sum_{j=1}^{k} \lambda_{i_j} (\M_1 + \M_2) \geq \sum_{j=1}^{k}\lambda_{i_j} (\M_1 )  +  \sum_{j=1}^{k} \lambda_{n-j+1}(\M_2)		\]
	
\end{lemma}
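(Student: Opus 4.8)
I would derive the inequality from Wielandt's ``flag'' minimax characterization of partial eigenvalue sums, and note at the end a one-line reduction to the ordinary Lidskii inequality. Throughout, the eigenvalues of an $n \times n$ symmetric matrix are listed nonincreasingly, $\lambda_1 \geq \dots \geq \lambda_n$, so that $\lambda_{n-j+1}$ for $j = 1,\dots,k$ are its $k$ smallest eigenvalues; I take the common dimension to be $n$ (the statement writes $t \times t$ but indexes up to $n$). Two classical ingredients are needed. The first is the \emph{easy half of Ky Fan's principle}: for any orthonormal $x_1,\dots,x_k \in \R^n$ and symmetric $\M$, writing $P=\sum_j x_jx_j^\top$ for the resulting rank-$k$ orthogonal projection,
\[ \sum_{j=1}^{k} x_j^\top \M x_j \;=\; \mathrm{tr}(\M P) \;\geq\; \sum_{j=1}^{k}\lambda_{n-j+1}(\M), \]
since, after diagonalizing $\M$, $\mathrm{tr}(\M P)$ equals the linear functional $d\mapsto \sum_i \lambda_i(\M)\,d_i$ evaluated at the diagonal $d$ of $P$ in an orthonormal eigenbasis, where $d\in[0,1]^n$ with $\sum_i d_i=k$, so it is minimized by placing weight $1$ on the $k$ smallest eigenvalues. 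The second is \emph{Wielandt's minimax principle}: for $1\leq i_1<\dots<i_k\leq n$,
\[ \sum_{j=1}^{k}\lambda_{i_j}(\M) \;=\; \max_{V_1\subseteq\dots\subseteq V_k}\ \min_{\substack{x_j\in V_j,\ (x_j)\text{ orthonormal}}}\ \sum_{j=1}^{k} x_j^\top \M x_j, \]
the outer maximum over nested subspaces with $\dim V_j = i_j$; this is the standard flag generalization of Courant--Fischer and may simply be quoted from the cited text.

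Granting these, the proof is two steps. First, fix a nested family $V_1\subseteq\dots\subseteq V_k$ with $\dim V_j=i_j$ attaining the maximum in Wielandt's principle \emph{for $\M_1$}, so that $\sum_{j} x_j^\top \M_1 x_j \geq \sum_j \lambda_{i_j}(\M_1)$ for every orthonormal tuple $(x_j)$ with $x_j\in V_j$. Second, apply Wielandt's principle to $\M_1+\M_2$ but retain this one (possibly suboptimal) flag; since the formula is a maximum over all flags, restricting to a single flag can only lower the value, whence
\[ \sum_{j=1}^{k}\lambda_{i_j}(\M_1+\M_2) \;\geq\; \min_{\substack{x_j\in V_j\\ \text{orthonormal}}}\Big(\sum_{j=1}^k x_j^\top \M_1 x_j \;+\; \sum_{j=1}^k x_j^\top \M_2 x_j\Big). \]
For every admissible tuple the first sum is at least $\sum_j\lambda_{i_j}(\M_1)$ by the choice of flag, and the second is at least $\sum_{j=1}^k \lambda_{n-j+1}(\M_2)$ by the easy Ky Fan bound; hence the right-hand side, and therefore the left-hand side, is at least $\sum_{j=1}^k\lambda_{i_j}(\M_1)+\sum_{j=1}^k\lambda_{n-j+1}(\M_2)$, which is exactly the claim.

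As an alternative I would observe that this is precisely the ``dual'' of the usual Lidskii inequality $\sum_{j=1}^k\lambda_{i_j}(A+B)\leq\sum_{j=1}^k\lambda_{i_j}(A)+\sum_{j=1}^k\lambda_j(B)$, recovered by setting $A=\M_1+\M_2$, $B=-\M_2$, using $\lambda_j(-\M_2)=-\lambda_{n-j+1}(\M_2)$, and rearranging. On either route the only genuine content is a single classical fact (Wielandt's flag minimax, or equivalently the majorization behind Lidskii's theorem), so I do not anticipate a real obstacle; the only care needed is to use the $\max$-$\min$ (rather than $\min$-$\max$) form of the minimax, so that a suboptimal flag yields a \emph{lower} bound, and to keep the ascending versus descending eigenvalue indexing straight.
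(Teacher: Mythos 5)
The paper states this lemma without proof, deferring to the cited reference (Tao, \emph{Topics in Random Matrix Theory}, Ch.~1.3), so there is no in-paper argument to compare against. Your proof is correct: the Wielandt flag-minimax route (fix the flag optimal for $\M_1$, then bound the two summands separately via the choice of flag and the easy half of Ky Fan's principle) is exactly the standard derivation, and your alternative observation that the claim is the ordinary Lidskii inequality applied to $A = \M_1 + \M_2$, $B = -\M_2$ together with $\lambda_j(-\M_2) = -\lambda_{n-j+1}(\M_2)$ is an equally valid one-line reduction. You also correctly flag the typo in the statement ($t\times t$ should read $n\times n$, or the indices should run to $t$).
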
\noindent
 We will also need the following result of Rudelson and Vershynin \cite{rudelson2007sampling} on the decay of spectral norms of random submatrices. 
\begin{proposition}[\cite{rudelson2007sampling}]\label{prop:vershy}
	Let $\AA \in \R^{n \times m}$ be a rank $r$ matrix with maximum Euclidean row norm bounded by $M$, in other words $\max_i |(\AA \AA^\top)_{i,i}| \leq M$. Let $Q \subset [n]$ be a random subset of rows of $\AA$ with expected cardinality $q$. Then there is a fixed constant $\kappa \geq 1$ such that 
	
	\[  \ex{\|\AA_{Q \times [m] }\|_2 } \leq \kappa ( \sqrt{\delta} \|\AA\|_2 +  \sqrt{\log q} M)  \]
\end{proposition}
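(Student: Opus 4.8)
The plan is to follow the classical argument of Rudelson and Vershynin, reducing the operator-norm bound to a non-commutative Khintchine (Rudelson-type) inequality by symmetrization and then closing the loop through a quadratic inequality. Write $a_1,\dots,a_n\in\R^m$ for the rows of $\AA$ regarded as column vectors, so the hypothesis becomes $\max_i\|a_i\|_2 \le M$, and let $\delta_1,\dots,\delta_n\in\{0,1\}$ be independent $\text{Bernoulli}(\delta)$ indicators with $\delta := q/n$, signalling membership $i\in Q$. Since $\AA_{Q\times[m]}^\top\AA_{Q\times[m]} = \sum_i \delta_i\,a_i a_i^\top$ is PSD, we have $\|\AA_{Q\times[m]}\|_2^2 = \bigl\|\sum_i \delta_i\,a_i a_i^\top\bigr\|_2$, and I would aim to bound the expectation of this operator norm and take a square root at the end via Jensen's inequality.

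First I would center the random sum, $\sum_i \delta_i\,a_i a_i^\top = \delta\,\AA^\top\AA + \sum_i(\delta_i-\delta)\,a_i a_i^\top$, so that by the triangle inequality
\[
  \ex{\|\AA_{Q\times[m]}\|_2^2} \;\le\; \delta\,\|\AA\|_2^2 \;+\; \ex{\Bigl\|\sum_i(\delta_i-\delta)\,a_i a_i^\top\Bigr\|_2}.
\]
Next, a standard symmetrization step (introduce an independent copy $\delta_i'$ of the indicators, then Rademacher signs $\varepsilon_i$) bounds the fluctuation term by $2\,\ex{\bigl\|\sum_i \varepsilon_i\delta_i\,a_i a_i^\top\bigr\|_2}$. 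Conditioning on the $\delta_i$'s and invoking Rudelson's lemma for the vectors $\{a_i : i\in Q\}$ then gives, for an absolute constant $\kappa_0$,
\[
  \mathbb{E}_{\varepsilon}\!\left[\Bigl\|\sum_{i\in Q}\varepsilon_i\,a_i a_i^\top\Bigr\|_2\right]
  \;\le\; \kappa_0\sqrt{\log q}\,\Bigl(\max_{i\in Q}\|a_i\|_2\Bigr)\Bigl\|\sum_{i\in Q} a_i a_i^\top\Bigr\|_2^{1/2}
  \;\le\; \kappa_0\sqrt{\log q}\;M\;\|\AA_{Q\times[m]}\|_2.
\]
Taking the expectation over $Q$ and using Jensen ($\ex{\|\AA_{Q\times[m]}\|_2} \le \ex{\|\AA_{Q\times[m]}\|_2^2}^{1/2}$), the fluctuation term is at most $2\kappa_0\sqrt{\log q}\,M\,\ex{\|\AA_{Q\times[m]}\|_2^2}^{1/2}$.

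Combining these estimates and abbreviating $E_2 := \ex{\|\AA_{Q\times[m]}\|_2^2}$ yields the quadratic inequality $E_2 \le \delta\,\|\AA\|_2^2 + 2\kappa_0\sqrt{\log q}\,M\,\sqrt{E_2}$; solving it for $\sqrt{E_2}$ and using $\ex{\|\AA_{Q\times[m]}\|_2}\le\sqrt{E_2}$ gives $\ex{\|\AA_{Q\times[m]}\|_2} \le \sqrt{\delta}\,\|\AA\|_2 + 4\kappa_0\sqrt{\log q}\,M$, which is the claim with $\kappa = 4\kappa_0$.

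The one genuinely non-routine ingredient, and hence the step I expect to be the main obstacle, is Rudelson's lemma itself — the bound $\mathbb{E}_{\varepsilon}\bigl\|\sum_j \varepsilon_j x_j x_j^\top\bigr\|_2 \lesssim \sqrt{\log(\#\mathrm{terms})}\;\max_j\|x_j\|_2\;\bigl\|\sum_j x_j x_j^\top\bigr\|_2^{1/2}$ — which is proved through the non-commutative Khintchine inequality in the Schatten-$p$ norm with $p\approx\log(\#\mathrm{terms})$ (or, as in Rudelson's original argument, via an entropy/covering bound on the relevant Rademacher process). A secondary point requiring care is to make the logarithmic factor $\sqrt{\log q}$ rather than $\sqrt{\log n}$: for this I would use the dimension-free form of Rudelson's estimate together with concentration of $|Q|$ around $q$ (a Chernoff bound), dispatching the rare event $|Q|\gg q$ separately via the crude bounds $\|\AA_{Q\times[m]}\|_2\le\|\AA\|_2$ and $\mathrm{rank}(\AA_{Q\times[m]})\le r$. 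The remaining steps — centering, symmetrization, Jensen, and solving the quadratic — are all routine.
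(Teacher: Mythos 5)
The paper does not prove this proposition --- it is cited directly from Rudelson and Vershynin --- so there is no internal argument to compare against. Your sketch correctly reproduces the standard proof behind that source: rewrite $\AA_Q^\top\AA_Q$ as a sum of independent Bernoulli-weighted rank-one terms, center and symmetrize, apply Rudelson's lemma (non-commutative Khintchine at $p$ on the order of the logarithm of the number of surviving terms) conditionally on the indicators to get a bound that reintroduces $\sqrt{\ex{\|\AA_{Q\times[m]}\|_2^2}}$, and close the loop by solving the resulting quadratic inequality.

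Two small remarks. (i) The literal hypothesis $\max_i |(\AA\AA^\top)_{i,i}| \leq M$ gives $\max_i\|a_i\|_2\le\sqrt{M}$, not $\le M$; your reading is the intended one (it matches how the proposition is invoked later, e.g.\ with $M=\sqrt{n}$ when $\|(\AA_{-k})_{i,*}\|_2^2 \le n$), so the printed statement should really say $\le M^2$. (ii) For the $\sqrt{\log |Q|}\to\sqrt{\log q}$ passage you propose a Chernoff bound plus a crude truncation on the rare event $|Q|\gg q$; this works, but a slightly cleaner route is Cauchy--Schwarz in the form $\ex{\sqrt{\log(1+|Q|)}\,\|\AA_{Q\times[m]}\|_2}\le\sqrt{\ex{\log(1+|Q|)}}\,\sqrt{\ex{\|\AA_{Q\times[m]}\|_2^2}}$ followed by Jensen on the concave map $t\mapsto\log(1+t)$, giving $\ex{\log(1+|Q|)}\le\log(1+q)$ directly. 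Otherwise the decomposition, symmetrization, and quadratic steps are all sound, and the constant bookkeeping is inessential.
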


\noindent
Finally, we will need a generalized Matrix Chernoff bound for the interior eigenvalues of sums of random matrices, which was derived by Gittens and Tropp \cite{gittens2011tail}.

\begin{theorem}[Interior Eigenvalue Matrix Chernoff, Theorem 4.1 of \cite{gittens2011tail}]\label{thm:matcher}
Consider a finite sequence $\{\X_j\}$ of independent, random, positive-semidefinite matrices with dimension $m$, and assume that $\|\X_j\|_2 \leq L$ for some value $L$ almost surely. Given an integer $k \leq n$, define 
\[	\mu_k = \lambda_k\left(\sum_j \ex{\X_j}\right)	\] 
then we have the tail inequalities

\begin{equation}
\begin{cases}
\; \; \; \; \bpr{ \lambda_k( \sum_j \X_j) \geq (1 +\delta ) \mu_k } \leq (n-k+1) \cdot \left[ 	\frac{e^\delta}{(1+\delta)^{1+\delta}}\right]^{\mu_k/L} & \text{ for } \delta > 0 \\[12pt]
\; \; \; \;  \bpr{ \lambda_k( \sum_j \X_j) \leq (1 -\delta ) \mu_k } \leq k \cdot \left[ 	\frac{e^{-\delta}}{(1-\delta)^{1-\delta}}\right]^{\mu_k/L} & \text{ for } \delta \in [0,1) \\
\end{cases}
\end{equation}
\end{theorem}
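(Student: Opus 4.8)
The plan is to derive this interior-eigenvalue statement from the usual \emph{extremal} matrix Chernoff inequality — the two-sided tail bound for $\lambda_{\max}$ and $\lambda_{\min}$ of a sum of independent positive-semidefinite matrices, as in \cite{tropp2015introduction} — by a \emph{compression} argument that restricts attention to a cleverly chosen subspace. Write $\M = \sum_j \ex{\X_j}$ and $\Y = \sum_j \X_j$, and recall the two Courant--Fischer characterizations of the $k$-th largest eigenvalue of an $n \times n$ Hermitian matrix $\mathbf{H}$: $\lambda_k(\mathbf{H}) = \min_{\dim V = n-k+1} \lambda_{\max}(\mathbf{H}|_V)$ and $\lambda_k(\mathbf{H}) = \max_{\dim V = k} \lambda_{\min}(\mathbf{H}|_V)$, where $\mathbf{H}|_V$ denotes the compression of $\mathbf{H}$ to $V$.

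For the upper tail I would take $\P$ to be the orthogonal projector onto the span of the eigenvectors of $\M$ for its $n-k+1$ smallest eigenvalues, so that $\lambda_{\max}(\P\M\P) = \lambda_k(\M) = \mu_k$. Feeding $V = \mathrm{range}(\P)$ into the first Courant--Fischer identity, and using that $\P\Y\P = \sum_j \P\X_j\P \succeq 0$ (each $\P\X_j\P$ is PSD), one obtains $\lambda_k(\Y) \le \lambda_{\max}(\P\Y\P)$. The matrices $\{\P\X_j\P\}$ are independent, PSD, live on the $(n-k+1)$-dimensional space $\mathrm{range}(\P)$, satisfy $\|\P\X_j\P\|_2 \le \|\X_j\|_2 \le L$, and their expectation-sum $\P\M\P$ has top eigenvalue $\mu_k$; applying the extremal matrix Chernoff upper-tail inequality \emph{on this $(n-k+1)$-dimensional subspace} then gives $\bpr{\lambda_{\max}(\P\Y\P) \ge (1+\delta)\mu_k} \le (n-k+1)\big[e^\delta/(1+\delta)^{1+\delta}\big]^{\mu_k/L}$, and the claimed bound follows since $\lambda_k(\Y) \le \lambda_{\max}(\P\Y\P)$. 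The lower tail is symmetric: I would take $\Q$ to be the projector onto the eigenvectors of $\M$ for its $k$ largest eigenvalues (so $\lambda_{\min}$ of the compression of $\M$ to $\mathrm{range}(\Q)$ equals $\mu_k$), use the dual Courant--Fischer identity to get $\lambda_k(\Y) \ge \lambda_{\min}(\Q\Y\Q|_{\mathrm{range}(\Q)})$, and apply the extremal matrix Chernoff lower-tail inequality to the independent PSD matrices $\{\Q\X_j\Q\}$ on the $k$-dimensional space $\mathrm{range}(\Q)$.

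The one delicate point is the book-keeping of the dimension prefactors: the whole content of the statement, as opposed to the naive bound with an $n$ in front, is that one obtains $n-k+1$ and $k$, and this works precisely because the extremal bound is invoked \emph{after} compressing the $\X_j$ to the range of the projector, where they are honestly lower-dimensional matrices, rather than to the ambient space. Everything else is a black-box reduction — the analytic core, namely the matrix Laplace-transform method and Lieb's concavity theorem underpinning the extremal matrix Chernoff inequality, I would simply cite from \cite{tropp2015introduction}, and the interior refinement itself from \cite{gittens2011tail}.
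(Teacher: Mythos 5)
Your proposal is correct, and it is worth noting that the paper itself does not prove this statement: it is cited verbatim as Theorem 4.1 of \cite{gittens2011tail}, so there is no internal proof to compare against. Your compression argument is the right reduction, and it is in fact the same core idea that Gittens and Tropp use: their ``minimax Laplace transform'' (their Proposition 3.2) is exactly the Courant--Fischer restriction $\lambda_k(\Y) \le \lambda_{\max}(\Y|_V)$ over $(n-k+1)$-dimensional $V$, followed by the matrix Laplace-transform argument \emph{run on the compressed operators}; the choice of $V$ as the span of the bottom $n-k+1$ eigenvectors of $\ex{\sum_j \X_j}$ appears in their Section~4.1. Where you differ is purely in packaging: you invoke the extremal matrix Chernoff inequality of \cite{tropp2015introduction} as a black box on the $(n-k+1)$-dimensional (resp.\ $k$-dimensional) compressed space, rather than re-deriving it there via Lieb's concavity theorem as Gittens and Tropp do. This buys a shorter, modular proof; what their more general framework buys is a uniform treatment of Bennett/Bernstein-type bounds and other moment assumptions without redoing the compression step each time. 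Two small remarks: the fixed projector $\P$ (resp.\ $\Q$) must be chosen from the spectral decomposition of $\M = \sum_j \ex{\X_j}$ \emph{before} the randomness is drawn, which you do implicitly but should state, since that is what makes $\P \M \P|_V$ have top eigenvalue exactly $\mu_k$; and the ambient dimension in the theorem statement should be read as $m$ throughout, the $n$ in $n-k+1$ being a typo carried over in the paper's restatement.
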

\vspace{.2in}

\paragraph{The Algorithm.} Our first step is to  run the $\ell_\infty$-gap algorithm of Section \ref{sec:Linfty} with $\eps_0 = \frac{2}{k}$, where we set  $k = \frac{2 \cdot 400^2 \kappa^4}{\eps}$, 
 where $\kappa \geq 1$ is the constant in Proposition \ref{prop:vershy}. 
 This allows us to assume that $\lambda_i \geq - \eps_0 n / 1000 \geq - \frac{1}{2k}n$ for all $i$, otherwise we have a $\wt{O}(1/\eps^2)$-query algorithm from the  Section \ref{sec:Linfty}, and since our target complexity is $\wt{O}(1/\eps^4)$, we can safely disregard the cost of running this algorithm in parallel. We begin by demonstrating that the Frobenius norm of $\S \AA$ is preserved (up to scaling), where $\S$ is a random row sampling matrix with sufficiently many rows. 

\begin{proposition}\label{prop:trace}
	Let $\M \in \R^{m \times m}$. Fix $t \geq 1$ and let $\S$ be a row sampling matrix which samples each row of $\M$ with probability $p = \frac{t}{m}$, and let $\S \in \R^{t_0 \times m}$ be a row sampling matrix drawn from this distribution, where $\ex{t_0} = t$.
	Then we have
	\[ \ex{\frac{1}{p}\text{Tr}(\S \M \S^\top)} = \sum_i \lambda_i(\M)  = \text{Tr}(\M)\]
	and 
	\[ \text{Var}\left(\frac{1}{p}\text{Tr}(\S \M \S^\top)\right)\leq \frac{m}{t}\sum_i \M_{i,i}^2   \]
	
\end{proposition}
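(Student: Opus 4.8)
The plan is to observe that the trace of a principal submatrix depends only on the diagonal entries that survive the sampling, which reduces everything to a sum of independent Bernoulli random variables. Concretely, for $i \in [m]$ let $\delta_i \in \{0,1\}$ be the indicator that row $i$ is included in the sample, so that $\delta_1,\dots,\delta_m$ are i.i.d.\ $\text{Bernoulli}(p)$ with $p = t/m$ and $t_0 = \sum_i \delta_i$. Since $\S$ is a row‑sampling matrix, $\S\M\S^\top$ is exactly the principal submatrix of $\M$ on the index set $\{i : \delta_i = 1\}$, and therefore
\[
\Tr(\S\M\S^\top) \;=\; \sum_{i=1}^m \delta_i\, \M_{i,i}.
\]
All off‑diagonal structure of $\M$ is irrelevant.

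For the first claim, I would apply linearity of expectation: $\ex{\tfrac{1}{p}\Tr(\S\M\S^\top)} = \tfrac{1}{p}\sum_{i=1}^m \ex{\delta_i}\,\M_{i,i} = \sum_{i=1}^m \M_{i,i} = \Tr(\M)$, and then invoke the standard identity that the trace of a matrix equals the sum of its eigenvalues, giving $\Tr(\M) = \sum_i \lambda_i(\M)$.

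For the variance bound, I would use independence of the $\delta_i$'s together with $\text{Var}(\delta_i) = p(1-p)$:
\[
\text{Var}\!\left(\frac{1}{p}\sum_{i=1}^m \delta_i\,\M_{i,i}\right)
= \frac{1}{p^2}\sum_{i=1}^m \M_{i,i}^2\,\text{Var}(\delta_i)
= \frac{1-p}{p}\sum_{i=1}^m \M_{i,i}^2
\;\le\; \frac{1}{p}\sum_{i=1}^m \M_{i,i}^2
= \frac{m}{t}\sum_{i=1}^m \M_{i,i}^2,
\]
where the last step substitutes $p = t/m$ and drops the factor $1-p \le 1$.

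There is essentially no obstacle here: the only ``idea'' needed is the first reduction to $\sum_i \delta_i \M_{i,i}$, after which both parts are one‑line computations using linearity and independence of Bernoulli indicators. (If the sampling model were \emph{with} replacement rather than Bernoulli, one would instead track multiplicities, but under the Bernoulli model stated the computation above is exact.)
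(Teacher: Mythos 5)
Your proof is correct and uses essentially the same approach as the paper: reduce $\Tr(\S\M\S^\top)$ to $\sum_i \delta_i \M_{i,i}$, apply linearity for the expectation, and compute the variance exactly. The only cosmetic difference is that you invoke $\mathbf{Var}\bigl(\sum_i a_i\delta_i\bigr)=\sum_i a_i^2\,\mathbf{Var}(\delta_i)$ directly, whereas the paper expands $\ex{X^2}-(\ex{X})^2$ and cancels the cross terms; both yield $\frac{1-p}{p}\sum_i \M_{i,i}^2 \le \frac{m}{t}\sum_i \M_{i,i}^2$.
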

\begin{proof}
	For $i \in [m]$, let $\delta_i \in \{0,1\}$ indicate that we sample row $i$.	We have $\ex{\text{Tr}(\S \M\S^\top)} = \frac{1}{p} \ex{\sum_{i=1}^n \delta_i \M_{i,i} } = \text{Tr}(\M)$. Moreover, 
	\begin{equation}
	\begin{split}
	\text{Var}\left(\frac{1}{p}\text{Tr}(\S \M \S^\top)\right)& \leq \frac{1}{p^2}  \sum_{i=1}^n \delta_i \M_{i,i} - \left(\text{Tr}(\M)\right)^2 \\
	& \leq \sum_{i\neq j} \M_{i,i}\M_{j,j} + \frac{1}{p} \sum_i \M_{i,i}^2  - \left(\text{Tr}(\M)\right)^2\\
	& \leq  \frac{1}{p}\sum_i \M_{i,i}^2   \\
	\end{split}
	\end{equation}
	as stated.
\end{proof}


We now fix $t = \Theta(\log(1/\eps)/\eps^2)$, and draw row independent sampling matrices $\S, \T$ with an expected $t$ rows. Let $S ,T \subset [n]$ be the rows and columns sampled by $\S,\T^\top$ respectively. We then compute $\ZZ = \S \AA \T^\top$ with an expected $O(t^2)$ queries. Finally, we query the principal submatrix $\AA_{(S \cup T) \times (S \cup T)}$, and test whether $\AA_{(S \cup T) \times (S \cup T)}$ is PSD. Clearly if $\AA$ is PSD, so is $\AA_{(S \cup T) \times (S \cup T)}$, so it suffices to anaylzie the \texttt{NO} case, which we do in the remainder.

\begin{lemma}\label{lem:smallk}
	Let $\AA \in \R^{n \times n}$ be $\eps$-far from PSD with  $\|\AA\|_\infty \leq 1$.  Then let $\ZZ = \S \AA \T^\top$ be samples as described above, so that $\ZZ$ has an expected $t = \Theta(\log(1/\eps)/\eps^2)$ rows and columns, where $t$ is scaled by a larger enough constant, and let $k = \frac{2 \cdot 400^2 \kappa^4}{\eps}$, where $\kappa \geq 1$ is the constant in Proposition \ref{prop:vershy}. Suppose further that $\sigma_{k+1}(\AA) \leq  10 n/k$. Then  with probability $19/20$, we have 
	\[  \frac{n^2}{t^2}\sum_{i > k} \sigma_i^2(\ZZ) > \eps n^2/16  \]
\end{lemma}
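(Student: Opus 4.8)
The plan is to split $\AA$ at rank $k$ into a head $\AA_k$ (its best rank-$k$ approximation, so $\mathrm{rank}(\AA_k)\le k$) and a tail $\AA_{-k}=\AA-\AA_k$, track how the spectral mass of the tail survives first a row-sample and then a column-sample, and at each stage use a rank-$O(k)$ Weyl (interlacing) inequality to absorb the head. Two elementary facts about $\AA_{-k}$ are used throughout: first, by hypothesis $\|\AA_{-k}\|_2=\sigma_{k+1}(\AA)\le 10n/k$; second, since $\AA_{-k}$ equals $\AA$ post-composed with an orthogonal projection and $\|\AA\|_\infty\le1$, every row of $\AA_{-k}$ has Euclidean norm at most $\sqrt n$. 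Also, having run the $\ell_\infty$-tester first we may assume $\lambda_{\min}(\AA)>-n/(2k)$, so (as $\AA$ is $\eps$-far and $k\ge 2/\eps$) Proposition~\ref{prop:nottopheavy} gives $\|\AA_{-k}\|_F^2=\sum_{i>k}\sigma_i^2(\AA)>\tfrac\eps2 n^2$. Write $p=t/n$.

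\textbf{The row-sample.} First I would analyze $\BB:=\S\AA=\S\AA_{-k}+\S\AA_k$. Proposition~\ref{prop:trace} applied to $\M=\AA_{-k}\AA_{-k}^\top$ bounds $\mathrm{Var}\big(\tfrac{1}{p}\|\S\AA_{-k}\|_F^2\big)\le\tfrac{n}{t}\sum_i\|(\AA_{-k})_{i,*}\|_2^4\le\tfrac{n}{t}\cdot n\cdot\|\AA_{-k}\|_F^2$; dividing this by the squared mean $(\|\AA_{-k}\|_F^2/2)^2$ gives $O(n/\|\AA_{-k}\|_F^2)=O(1/(t\eps))$, so Chebyshev gives $\|\S\AA_{-k}\|_F^2\ge\tfrac{p\eps}{3}n^2$ except with a small constant probability. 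Proposition~\ref{prop:vershy} (maximum row norm $\sqrt n$, sampling probability $p$) plus Markov gives $\|\S\AA_{-k}\|_2\le L_1$ with large constant probability, where $L_1=O\!\big(\kappa(\sqrt{tn}/k+\sqrt{n\log t})\big)$. Since $\S\AA_k$ has rank $\le k$, Weyl's inequality yields $\sigma_i(\BB)\ge\sigma_{i+k}(\S\AA_{-k})$, so
\[
\sum_{i>2k}\sigma_i^2(\BB)\ \ge\ \sum_{i>3k}\sigma_i^2(\S\AA_{-k})\ \ge\ \|\S\AA_{-k}\|_F^2-3k\,\|\S\AA_{-k}\|_2^2\ \ge\ \tfrac{p\eps}{3} n^2-3kL_1^2 .
\]
Crucially $L_1^2=O(\kappa^2 n\log t)$ scales linearly in $n$, exactly like $p\eps n^2=t\eps n$, so choosing $t=\Theta(\log(1/\eps)/\eps^2)$ and $k=\Theta(1/\eps)$ with the constant in $k$ a large enough power of $\kappa$ (and the constant in $t$ larger still) makes $3kL_1^2\le\tfrac{1}{12}p\eps n^2$; hence $\sum_{i>2k}\sigma_i^2(\BB)\ge\tfrac{p\eps}{4} n^2$. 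I would also record, again by Weyl, $\sigma_{2k+1}(\BB)\le\sigma_{k+1}(\S\AA_{-k})\le L_1$.

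\textbf{The column-sample.} Next I would iterate for $\ZZ=\BB\T^\top$, conditioning on the above good events for $\S$. Let $\BB_{>2k}$ be the rank-$2k$ SVD tail of $\BB$; then $\|\BB_{>2k}\|_F^2=\sum_{i>2k}\sigma_i^2(\BB)\ge\tfrac{p\eps}{4} n^2$, $\|\BB_{>2k}\|_2=\sigma_{2k+1}(\BB)\le L_1$, and since the SVD tail may be realized as $\BB$ left-multiplied by an orthogonal projection, every column of $\BB_{>2k}$ has norm at most that of $\BB$, hence $\le\sqrt{|S|}\le\sqrt{2t}$ (using $|S|\le 2t$ w.h.p., as $\|\BB\|_\infty\le1$). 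Since $(\BB-\BB_{>2k})\T^\top$ has rank $\le 2k$, Weyl gives
\[
\sum_{i>k}\sigma_i^2(\ZZ)\ \ge\ \sum_{i>3k}\sigma_i^2(\BB_{>2k}\T^\top)\ \ge\ \|\BB_{>2k}\T^\top\|_F^2-3k\,\|\BB_{>2k}\T^\top\|_2^2 .
\]
For the Frobenius term, Proposition~\ref{prop:trace} with $\M=\BB_{>2k}^\top\BB_{>2k}$ and the column-norm bound gives $\mathrm{Var}\big(\tfrac{1}{p}\|\BB_{>2k}\T^\top\|_F^2\big)\le\tfrac{n}{t}(2t)\|\BB_{>2k}\|_F^2=2n\|\BB_{>2k}\|_F^2$, and since $\|\BB_{>2k}\|_F^2\ge\tfrac{p\eps}{4}n^2\gg n$, Chebyshev gives $\|\BB_{>2k}\T^\top\|_F^2\ge\tfrac{p}{2}\|\BB_{>2k}\|_F^2\ge\tfrac{p^2\eps}{8}n^2=\tfrac{\eps t^2}{8}$ with large constant probability. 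For the spectral term, applying Proposition~\ref{prop:vershy} a second time to $\BB_{>2k}^\top$ (maximum row norm $\le\sqrt{2t}$, operator norm $\le L_1$, sampling probability $p$) plus Markov gives $\|\BB_{>2k}\T^\top\|_2\le L_2$ with large constant probability, where $L_2=O\!\big(\kappa^2(t/k+\sqrt{t\log t})\big)$; the key point is that $L_2^2=O(\kappa^4 t\log t)$ now carries no factor of $n$, so $3kL_2^2$ is dominated by $\tfrac{\eps t^2}{16}$ for the same parameter choices. Combining, $\sum_{i>k}\sigma_i^2(\ZZ)\ge\tfrac{\eps t^2}{8}-3kL_2^2>\tfrac{\eps t^2}{16}$, i.e. $\tfrac{n^2}{t^2}\sum_{i>k}\sigma_i^2(\ZZ)>\tfrac{\eps n^2}{16}$. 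A union bound over the $O(1)$ failure events (two Frobenius-concentration events, two Rudelson--Vershynin/Markov events, and $|S|,|T|\le 2t$), each arranged to have probability at most $1/120$, gives overall success probability $\ge 19/20$.

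\textbf{Main obstacle.} The hard part is the column-sampling step: after the row-sample the tail part $\S\AA_{-k}$ has spectral norm inflated to roughly $L_1\approx\kappa\sqrt{n\log(1/\eps)}$ and $\BB$ is now a tall matrix, so one cannot simply reuse $L_1$ for the second sample. The resolution is (i) to invoke Rudelson--Vershynin again, which replaces $L_1$ by $L_2\approx\kappa^2\sqrt{t\log(1/\eps)}$ and thereby eliminates the $n$-dependence, and (ii) to observe that the SVD tail of a bounded-entry row-submatrix is itself that submatrix left-multiplied by a projection, so its columns stay bounded by $\sqrt{|S|}$ (which is what keeps the Frobenius-preservation variance free of $n$). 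The heart of the argument is then purely quantitative: balancing the interlacing losses $3kL_1^2$ and $3kL_2^2$ against the surviving masses $p\eps n^2$ and $\eps t^2$ respectively — this balancing is exactly what forces $t=\Theta(\log(1/\eps)/\eps^2)$ and $k=\Theta(1/\eps)$ with the constant in $k$ a large power of $\kappa$.
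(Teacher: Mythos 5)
Your proof is correct and follows essentially the same structure as the paper's: split $\AA$ into a rank-$k$ head and tail, show the tail Frobenius mass survives row- and then column-sampling via Proposition~\ref{prop:trace} (Chebyshev) while its spectral norm is controlled via Proposition~\ref{prop:vershy} (Rudelson--Vershynin plus Markov), and absorb the low-rank head with an interlacing argument, then pass from the rectangular sample to the principal submatrix. The only cosmetic differences are that you use generic Weyl interlacing for singular values (shifting the rank threshold to $2k$/$3k$ along the way) where the paper invokes the Dual Lidskii inequality (Lemma~\ref{lem:duallidskii}), which keeps the threshold at $k$ because the head Gram matrix is PSD, and that you work directly with the SVD tail $\BB_{>2k}$ of the row sample where the paper first forms the rescaled transpose $\W=\tfrac{1}{\sqrt p}(\S\AA)^\top$ before re-splitting.
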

\begin{proof} 
	
	Now write $\AA = \U \Lambda \V^\top$, $\AA_{k} = \U \Lambda_{k} \V^\top, \AA_{-k} = \U \Lambda_{-k} \V^\top$. Then $\AA = \AA_{k} + \AA_{-k}$, and the rows of $\AA_{k}$ are orthogonal to the rows of $\AA_{-k}$. Note that this implies that $\|\AA_{i,*}\|_2^2 = \|(\AA_{k})_{i,*}\|_2^2 + \|(\AA_{-k})_{i,*}\|_2^2$ for each $i \in [n]$ by the Pythagorean theorem, and since $\|\AA\|_\infty \leq 1$ we have $\|(\AA_{-k})_{i,*}\|_2^2 \leq n$. 
	
	Now set $\M_1 = \S \AA_{k}  \AA_{k}^\top \S^\top$, and   $\M_2 = \S \AA_{-k}  \AA_{-k}^\top \S^\top$. Notice that $\M_1 + \M_2 = \S (\AA_{k}  \AA_{k}^\top + \AA_{-k}  \AA_{-k}^\top )\S^\top = \S \AA \AA^\top \S^\top$, using the fact that the rows and columns of $\AA_{k}$ are orthogonal to the rows and columns (respectively) of $\AA_{-k}$.  Let $p = \frac{t}{n}$ be the row sampling probability. Now suppose $\|(\AA_{-k})\|_F^2 = \alpha n^2$. Note that we have shown that $\alpha > \eps /2$. By Proposition \ref{prop:trace}, we have $\ex{\text{Tr}(\M_2)/p} = \sum_{i > k}= \alpha n^2 > \eps n^2/2$ for some $\alpha \geq  \eps/2 $, where the last inequality follows from Proposition \ref{prop:nottopheavy}. Moreover, we have
	
	\begin{equation}
	\begin{split}
	\text{Var}\left(\frac{1}{p}\text{Tr}(\M_2)\right)&\leq \frac{1}{p}\sum_i (\M_2)_{i,i}^2  \\
	&= \frac{1}{p}\sum_i \|(\AA_{-k})_{i,*}\|_2^4\\
	\end{split}
	\end{equation} 
	It follows that since each row satisfies $\|(\AA_{-k})_{i,*}\|_2^2 \leq n$ and $\|(\AA_{-k})\|_F^2 = \alpha n^2$., the quantity $\sum_i \|(\AA_{-k})_{i,*}\|_2^4$ is maximized having $\alpha n$ rows with squared norm equal to $n$.  This yields 
	
	\begin{equation}
	\begin{split}
	\text{Var}\left(\frac{1}{p}\text{Tr}(\M_2)\right)& \leq  \frac{1}{p}\sum_i 2 \alpha n \cdot n^2\\
	& \leq 2 \frac{\alpha n^4}{t} \\
	& \leq \frac{ \alpha^2}{100^2} n^4 \\
	\end{split}
	\end{equation} 
	Where in the last line, we used that $t > \frac{4\cdot 100^2}{\eps_0} \geq \frac{2 \cdot 100^2}{\alpha}$. Then by Chebyshev's inequality, with probability $99/100$, we have $\frac{1}{p}\text{Tr}(\M_2) >  \alpha n^2  -  (\alpha/10) n^2 = (9/10)  \alpha n^2 \geq (9/20) \eps n^2$. Call this event $\mathcal{E}_1$, and condition on it now. 
	Next, by Proposition \ref{prop:vershy}, since $\sigma_{k+1}(\AA) \leq  10 n/k$ we have $\ex{\|\S\AA_{-k}\|_2} \leq  \kappa (10\sqrt{tn}/k + \sqrt{2\log(1/\eps)}\sqrt{n}) < 20 \kappa \sqrt{tn}/k$. Then by Markovs, we have $\|\S\AA_{-k}\|_2^2 = \|\M_2\|_2 \leq 200^2 \kappa^2   t n/k^2$ with probability $99/100$, which we condition on now, and call this event $\mathcal{E}_2$.
	Then by the Dual Lidskii inequality \ref{lem:duallidskii}, we have 
	\begin{equation}
	\begin{split}
	\frac{1}{p}\sum_{j> k} \lambda_j(\M_1 + \M_2) &\geq \frac{1}{p} \left(\sum_{j> k}\lambda_j( \M_2) \right)	\\
	&\geq \frac{1}{p}( \text{Tr}(\M_2) - k \|\M_2\|_2 )	\\
	&\geq  (9/20) \eps n^2 - 200^2 \kappa^2   n^2/k\\
	&\geq   \eps n^2/4 \\
	\end{split}
	\end{equation}
	using that $k > \frac{2 \cdot 400^2 \kappa^4}{ \eps}$. Now let $\W = \frac{1}{\sqrt{p}}(\S\AA)^\top$, and note that we took the transpose, so $\W$ has $n$ rows and $t_1$ columns, where $\ex{t_1} = t$. Now by Chernoff bounds, with probability $99/100$ we have $t_2 \leq 2t$; call this event $\mathcal{E}_3$ and condition on it now. The above demonstrates that $\frac{1}{p} \sum_{j > k} \lambda_j(\M_1 + \M_2) = \sum_{j > k}\lambda_j^2(\W) \leq \eps n^2 / 4$. Now note that $\sigma_{k+1}(\W) = \frac{1}{\sqrt{p}}(\sigma_{k+1}(\S\AA_{k} + \S \AA_{-k}) <  \frac{1}{\sqrt{p}} \|\S\AA_{-k}\|_2 \leq 200 \kappa n/k$, where we used the Weyl inequality for singular values: namely that for any two matrices $\AA,\BB$ and value $i$,  $|\sigma_i(\AA + \BB) - \sigma_i(\AA)| \leq \|\BB\|_2$, and using that $\S\AA_k$ is rank at most $k$, so $\sigma_{k+1}(\S\AA_{k}) = 0$.

	 Now draw a random row sampling matrix $\T$ with an expected $t$ rows, and write $\NN_1 = \T \W_k \W_k^\top \T$ and $\NN_2 = \T \W_{-k} \W_{-k}^\top \T$, and note again that $\NN_1 + \NN_2 = \T \W \W^\top \T$. Moreover, the rows of $\W_k$ live in a subspace orthogonal to the rows of $\W_{-k}$, so again by the Pythagorean theorem and boundedness of the entries in $\AA$, we have $\|(\W_{-k})_{i,*}\|_2^2 \leq \frac{1}{p} t_1 \leq 2 n$ for all $i \in [n]$.  
 Then by Proposition \ref{prop:trace}, we have $\ex{\text{Tr}(\NN_2)/p} = \|\W_{-k}\|_F^2 = \alpha n^2 \geq \eps n^2 / 4$, and 
\begin{equation}
\begin{split}
\text{Var}\left(\frac{1}{p}\text{Tr}(\NN_2)\right)& \leq \frac{1}{p}\sum_{i=1}^n \| (\W_{-k})_{i,*}\|_2^4 \\
& \leq \frac{1}{p} n^3 \\
& \leq \frac{1}{t}  n^4 \\
& \leq \frac{\eps^2}{100^2} n^4 \\
\end{split}
\end{equation}   Then by Chebyshev's inequality, with probability $99/100$, we have $\frac{1}{p}\text{Tr}(\NN_2) >  \eps n^2/4  -  (\eps /10) n^2 = \eps n^2/8$. Call this event $\mathcal{E}_4$, and condition on it now. Now as shown above, we have $\|W_{-k}\|_2 \leq 200 \kappa n/k$, thus by Proposition \ref{prop:vershy} we have $\ex{\|\T \W_{-k}\|_2 } \leq \kappa(200 \kappa \sqrt{tn} /k + 4\sqrt{\log(1/\eps)}\sqrt{n} ) \leq 400 \kappa^2 \sqrt{tn}$, again where we take $t = \Theta(\frac{\log(1/\eps)}{\eps^2})$ with a large enough constant. Then by Markov's inequality, with probability $99/100$ we have $\|\NN_2\|_2 \leq 400^2 \kappa^4 n^2/k^2$, and again by the Dual Lidskii inequality \ref{lem:duallidskii}, we have 
	\begin{equation}
	\begin{split}
	\frac{1}{p}\sum_{j> k} \lambda_j(\NN_1 + \NN_2) &\geq \frac{1}{p} \left(\sum_{j> k}\lambda_j( \NN_2) \right)	\\
	&\geq \frac{1}{p}( \text{Tr}(\NN_2) - k \|\NN_2\|_2 )	\\
	&\geq  \eps n^2/8 - 400^2 \kappa^4 n^2/k\\
	&\geq   \eps n^2/16 \\
	\end{split}
	\end{equation}
	Using that $k \geq \frac{2 \cdot 400^2 \kappa^4}{\eps}$. 	Note moreover that $$\frac{1}{p}\sum_{j> k} \lambda_j(\NN_1 + \NN_2) = \frac{1}{p}\sum_{j> k} \sigma_j^2(\T \W) = \frac{1}{p^2}\sum_{j> k} \sigma_j^2(\S \AA^\top \T^\top)$$ Using that $\AA = \AA^\top$ so that $ \ZZ = \S \AA^\top \T^\top$ we conclude that $ \frac{1}{p^2}\sum_{i > k} \sigma_i^2(\ZZ) =  \frac{n^2}{t^2}\sum_{i > k} \sigma_i^2(\ZZ) > \eps n^2/16  $ as desired. Note that we conditioned on $\mathcal{E}_i$ for $i=1,2,3,4,5$, each of which held with probability $99/100$, thus the result holds with probability $19/20$ by a union bound. 
		
\end{proof}

We will now address the case where $\sigma_k(\AA) > 10 n/k$.

\begin{lemma}\label{lem:bigk}
	Let $\AA \in \R^{n \times n}$ be $\eps$-far from PSD with  $\|\AA\|_\infty \leq 1$.  Then let $\ZZ = \S \AA \T^\top$ be samples as described above, so that $\ZZ$ has an expected $t = \Theta(\log(1/\eps)/\eps^2)$ rows and columns, where $t$ is scaled by a larger enough constant, and let $k = \frac{2 \cdot 400^2 \kappa^4}{\eps}$, where $\kappa \geq 1$ is the constant in Proposition \ref{prop:vershy}. Suppose further that $\sigma_{k}(\AA) > 10 n/k$. Then  with probability $49/50$,  we have 
	\[  \frac{n}{t} \sigma_k(\ZZ) \geq 8 n/k  \]
\end{lemma}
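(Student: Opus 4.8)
The plan is to transport the hypothesis $\sigma_k(\AA) > 10n/k$ through the two independent subsampling steps that produce $\ZZ = \S\AA\T^\top$, losing only a small multiplicative constant, by invoking the lower‑tail half of the Interior Eigenvalue Matrix Chernoff bound (Theorem \ref{thm:matcher}) once for the row sampling and once for the column sampling. Writing $p = t/n$, I would first handle the row step: letting $a_{(1)},\dots,a_{(n)}\in\R^n$ be the rows of $\AA$ (as column vectors) and $\delta_j\in\{0,1\}$ indicate $j\in S$ with $\ex{\delta_j}=p$, set $\X_j = \delta_j\, a_{(j)}a_{(j)}^\top$, so that $\sum_j\X_j = (\S\AA)^\top(\S\AA)$ and hence $\lambda_k(\sum_j\X_j) = \sigma_k^2(\S\AA)$ (this uses $t_0 := |S| \ge k$, which holds since $t \gg k$). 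Since $\|\AA\|_\infty\le 1$, each $\X_j$ is PSD with $\|\X_j\|_2 \le n =: L$, and $\sum_j\ex{\X_j} = p\,\AA^\top\AA$, so $\mu_k := \lambda_k(p\AA^\top\AA) = p\,\sigma_k^2(\AA) > 100pn^2/k^2$. The key quantitative point is that $\mu_k/L > 100pn/k^2 = 100t/k^2$, and since $t = \Theta(\log(1/\eps)/\eps^2)$ while $k = \Theta(1/\eps)$, this ratio is $\Theta(\log(1/\eps))$ with a constant that grows linearly in the constant hidden in $t$. Applying the lower tail with a fixed small $\delta$ — I would use $\delta = 1/5$ — the failure probability is at most $k\cdot\rho^{\mu_k/L}$ with $\rho = e^{-1/5}/(4/5)^{4/5}<1$; since $\mu_k/L = \Omega(\log(1/\eps))$ with a constant made large via the constant in $t$, $\rho^{\mu_k/L}$ is an arbitrarily large power of $\eps$, which swallows the $k = \Theta(1/\eps)$ prefactor and leaves a failure probability $\le 1/100$. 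On the good event, $\sigma_k^2(\S\AA) \ge \tfrac45\, p\,\sigma_k^2(\AA)$.

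Next I would repeat the argument for the column sampling matrix $\T$ applied to $\S\AA$. Letting $m_{(1)},\dots,m_{(n)}\in\R^{t_0}$ be the columns of $\S\AA$ and $\Y_j = \delta'_j\, m_{(j)}m_{(j)}^\top$ with $\delta'_j$ indicating $j\in T$, one has $\sum_j\Y_j = \ZZ\ZZ^\top$, hence $\lambda_k(\sum_j\Y_j) = \sigma_k^2(\ZZ)$, and $\sum_j\ex{\Y_j} = p(\S\AA)(\S\AA)^\top$, so $\mu'_k := \lambda_k\!\big(p(\S\AA)(\S\AA)^\top\big) = p\,\sigma_k^2(\S\AA) \ge \tfrac45 p^2\sigma_k^2(\AA)$. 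The only new ingredient is the spectral bound on the summands: $L' := \max_j\|m_{(j)}\|_2^2 = \max_j\sum_{i\in S}\AA_{i,j}^2 \le t_0$, and since $t_0 = |S| \le 2t$ with probability $1-\eps^{10}$ by a Chernoff bound, I may take $L' = 2t$. Then $\mu'_k/L' \ge \tfrac45\cdot 100\, p^2n^2/(2tk^2) = 40\,t/k^2 = \Theta(\log(1/\eps))$ again, and a second application of the lower tail with $\delta = 1/5$ gives $\sigma_k^2(\ZZ) \ge \big(\tfrac45\big)^2 p^2\sigma_k^2(\AA)$ with probability $\ge 99/100$ (conditioned on the stage‑one event and on $|S|,|T|\le 2t$, which also guarantees $t_1 \ge k$).

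Finally I would union‑bound over the two matrix‑Chernoff events, the event $|S|\le 2t$, and the event $|T|\le 2t$ — each holding with probability at least $1-1/200$ once the constant in $t$ is large enough — to obtain overall probability $\ge 49/50$. On this event $\sigma_k(\ZZ) \ge \tfrac45\, p\,\sigma_k(\AA) > \tfrac45\cdot 10\, p\,n/k = 8pn/k$, and dividing by $p = t/n$ yields $\frac{n}{t}\sigma_k(\ZZ) > 8n/k$, exactly as claimed; note that $\big(\tfrac45\big)^2\cdot 100 = 64$ is precisely what converts the hypothesis constant $10$ into the conclusion constant $8$. The main obstacle, and the step demanding the most care, is exactly this bookkeeping: the hypothesis affords only a constant‑factor slack (from $10n/k$ down to $8n/k$), so $\delta$ cannot be chosen too large, which in turn forces $\mu_k/L$ to be $\Omega(\log(1/\eps))$ so that the tail $\rho^{\mu_k/L}$ absorbs both its own base decay and the $k = \Theta(1/\eps)$ prefactor in Theorem \ref{thm:matcher}; making the interdependence of the constants in $t$, $k$, and $\delta$ close is where the real work lies, the rest being routine second‑moment and Chernoff estimates.
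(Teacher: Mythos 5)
Your proposal is correct and follows essentially the same route as the paper's proof: two successive applications of the Interior Eigenvalue Matrix Chernoff bound (Theorem \ref{thm:matcher}), first to the row-sampled Gram matrix $(\S\AA)^\top\S\AA$ with summand bound $L=n$, then to the column-sampled Gram matrix $\ZZ\ZZ^\top$ with summand bound $L'\le 2t$ after conditioning on $|S|\le 2t$, each time using that $\mu_k/L=\Theta(t/k^2)=\Omega(\log(1/\eps))$ (with constant tunable via $t$) to beat the prefactor $k$ in the tail bound. The only difference is cosmetic: you take $\delta=1/5$ yielding $\sigma_k(\ZZ)\ge\frac{4}{5}p\sigma_k(\AA)>8pn/k$, whereas the paper takes $\delta=0.1$ yielding $\sigma_k(\ZZ)\ge 9t/k>8t/k$; both comfortably land above the claimed $8n/k$.
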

\begin{proof}
	The proof is by application of Theorem \ref{thm:matcher} twice. We first generate a random row  sampling matrix $\S$ with an expected $t$ rows, and bound $\lambda_k( (\S\AA)^\top \S \AA ) = \sigma_k^2(\S \AA)$. Let $\X_j \in \R^{n \times n}$ be a random variable such that $\X_j = \AA_{(j)}^\top \AA_{(j)}$, where $\AA_{(j)}$ is the $j$-th row of $\AA$ that was sampled in $\S$. Then $\sum_{j} \X_j = (\S\AA)^\top \S \AA $, and $\ex{\X_j} = \frac{t}{n}\sum_{j=1}^n \AA_j^\top \AA_j = \frac{t}{n} \AA \AA^\top$, where $ \AA_j$ is the $j$-th row of $\AA$.  Moreover, note that $\|\X_j\|_2 \leq \max_i \|\AA_{i,*}\|_2^2 \leq n$ for all $j$, by the boundedness of $\AA$. Thus note that $\mu_k = \lambda_k((t/n) \AA^\top \AA) \geq (t/n)  100 n^2 / k^2 = \frac{100 t n }{k^2}$. 
	Thus by the Interior Matrix Chernoff Bound \ref{thm:matcher}, we have that for some constant $c$:
	
	\begin{equation}
	\begin{split}
	\bpr{ \lambda_k((\S\AA)^\top \S \AA) \leq .9 \mu_k } &\leq k \cdot c^{\mu_k/L} \\
	 &\leq k \cdot c^{\frac{100 t n }{k^2 } \cdot \frac{1}{n}} \\  
	 	 &\leq k \cdot e^{-100 \log(k)} \\
	 	 & \leq   1/1000
	\end{split} 
	\end{equation}
Where we use $t = \Theta(\frac{\log(1/\eps)}{\eps^2})$ with a large enough constant.  Also condition on the fact that $\S$ has at most $2t$ rows, which holds with probability $999/1000$. Call the union of the above two event $\mathcal{E}_1$, which holds with probability $99/100$, and condition on it now. Given this, we have $\sigma_k^2(\S\AA) \geq  \frac{90 tn}{k^2}$. Now again, let $\Y_j = (\S\AA)_{(j)} (\S\AA)_{(j)}^\top$, where $(\S\AA)_{(j)}$ is the $j$-th column of $\S\AA$ sampled by the column sampling matrix $\T$. Let $\M = (\S \AA)^\top$. Then again we have $\|\Y_j\|_2 \leq 2t$, using that $\S\AA$ has at most $2t$ rows, and each entry is bounded by $1$. Moreover, $\sum_j \Y_j =   \T \M \M^\top \T^\top  $ We also have  $\lambda_k(\ex{ \sum_j \Y_j}) =  \lambda_k(\frac{t}{n} \M \M^\top ) > \frac{90 t^2}{k^2}$. Applying the Interior Matrix Chernoff Bound again, we have that for some constant $c$:
	
\begin{equation}
\begin{split}
\bpr{ \lambda_k(  \T (\S \AA)^\top(\S \AA) \T^\top ) \leq .9 \mu_k } &\leq k \cdot c^{\mu_k/L} \\
&\leq k \cdot c^{\frac{90 t^2  }{k^2 } \cdot \frac{1}{2t} }\\  
&\leq k \cdot e^{-100 \log(k)} \\
& \leq   1/1000 \\
\end{split} 
\end{equation}
Call the above event $\mathcal{E}_2$. Conditioned on $\mathcal{E_1} \cup \mathcal{E}_2$, which hold together with probability $49/50$, we have that 
 $\sigma_k(  \S \AA \T^\top ) \leq .9 \sqrt{\frac{90 t^2}{k^2}} > 8 t/k$. Since $\ZZ = \S \AA \T^\top $, we have $\frac{n}{t} \sigma_k( \ZZ ) > 8n/k$ as needed. 

\end{proof}

\begin{theorem}\label{thm:l2premain}
	Let $\AA \in \R^{n \times n}$ be $\eps$-far from PSD with $\|\AA\|_\infty \leq 1$. Then if $S,T \subset [n]$ are random subsets with expected each size $t = O(\log(1/\eps)/\eps^2)$, then with probability $9/10$ the principal submatrixx $\AA_{(S \cup T) \times (S \cup T)}$ is not PSD.
\end{theorem}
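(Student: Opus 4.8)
The plan is to feed the dichotomy established by Lemmas~\ref{lem:smallk} and~\ref{lem:bigk} into a short argument that turns a spectral statement about the rectangular submatrix $\ZZ = \S\AA\T^\top = \AA_{S\times T}$ into a contradiction with positive semi-definiteness of the principal submatrix $\BB := \AA_{(S\cup T)\times(S\cup T)}$. Throughout I work under the standing assumption of this section --- guaranteed by first running the $\ell_\infty$-tester of Theorem~\ref{thm:inftymain} with $\eps_0 = 2/k$, where $k = \tfrac{2\cdot 400^2 \kappa^4}{\eps}$ --- that $\lambda_{\min}(\AA) \geq -\tfrac{1}{2k}n$; this is exactly the hypothesis required by Proposition~\ref{prop:nottopheavy}, hence by Lemmas~\ref{lem:smallk} and~\ref{lem:bigk}, and if it fails the $\ell_\infty$-tester has already output a non-PSD certificate. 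Since $\ex{|S|} = \ex{|T|} = t = \Theta(\log(1/\eps)/\eps^2)$, a Chernoff bound gives $|S|,|T| \leq 2t$ with probability at least $99/100$; condition on this, so that $m := |S\cup T| \leq 4t$.

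Next I observe that $\ZZ$ is obtained from $\BB$ by deleting the rows outside $S$ and the columns outside $T$, and that deleting a row or column of a matrix cannot increase any of its singular values (it corresponds to multiplying by a coordinate projection of operator norm at most $1$); hence $\sigma_i(\ZZ) \leq \sigma_i(\BB)$ for every $i$, and any lower bound on $\sum_{i>k}\sigma_i^2(\ZZ)$ or on $\sigma_k(\ZZ)$ transfers verbatim to $\BB$.

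Now split on the size of $\sigma_{k+1}(\AA)$. If $\sigma_{k+1}(\AA) \leq 10n/k$, Lemma~\ref{lem:smallk} gives, with probability at least $19/20$, that $\tfrac{n^2}{t^2}\sum_{i>k}\sigma_i^2(\ZZ) > \eps n^2/16$, whence $\sum_{i>k}\sigma_i^2(\BB) \geq \sum_{i>k}\sigma_i^2(\ZZ) > \eps t^2/16$. If instead $\sigma_{k+1}(\AA) > 10n/k$ then also $\sigma_k(\AA) > 10n/k$, so Lemma~\ref{lem:bigk} gives, with probability at least $49/50$, that $\tfrac{n}{t}\sigma_k(\ZZ) \geq 8n/k$, whence $\sigma_k(\BB) \geq 8t/k$ and $\|\BB\|_{\mathcal{S}_1} = \sum_i \sigma_i(\BB) \geq k\,\sigma_k(\BB) \geq 8t$. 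In either case I conclude that $\BB$ is not PSD by contradiction, using $\|\BB\|_\infty \leq 1$ and that $\BB$ is $m\times m$ with $m \leq 4t$: if $\BB$ were PSD then Proposition~\ref{prop:topheavy} would force $\sum_{i>k}\sigma_i^2(\BB) \leq m^2/k \leq 16t^2/k < \eps t^2/16$ --- the last step since $k = \tfrac{2\cdot 400^2\kappa^4}{\eps} > 256/\eps$ as $\kappa\geq 1$ --- contradicting the first case, and a PSD $\BB$ would have $\|\BB\|_{\mathcal{S}_1} = \Tr(\BB) \leq m \leq 4t < 8t$, contradicting the second case. A union bound over the $\leq 1/20$ (resp.\ $\leq 1/50$) failure probability of the relevant lemma and the $\leq 1/100$ loss from the Chernoff bound on $|S|,|T|$ gives success probability at least $9/10$.

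I expect the main difficulty to be conceptual rather than computational: essentially all of the genuinely hard spectral estimates --- preservation of the tail $\sum_{i>k}\sigma_i^2$ under row/column sampling, control of $\sigma_{k+1}$ of a random submatrix through the Rudelson--Vershynin decay bound (Proposition~\ref{prop:vershy}), and the lower bound on $\sigma_k$ through the interior-eigenvalue matrix Chernoff bound (Theorem~\ref{thm:matcher}) --- are already packaged inside Lemmas~\ref{lem:smallk} and~\ref{lem:bigk}. The work left is (i) choosing the case split on $\sigma_{k+1}(\AA)$ so it exactly matches the hypotheses of those lemmas, and (ii) not losing the witness when passing from the rectangular $\ZZ$ to the principal $\BB$, for which singular-value monotonicity under deleting rows and columns is the right (and essentially only) tool. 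The sole place constants must be tracked is ensuring $16t^2/k < \eps t^2/16$, i.e.\ that the $\Theta(\kappa^4)$ slack in $k$ swamps the factor-$4$ blow-up from replacing $|S|$ and $|T|$ by $|S\cup T|$.
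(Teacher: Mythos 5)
Your proof is correct and follows essentially the same route as the paper: condition on $|S\cup T|\le 4t$, split on whether $\sigma_{k}(\AA)$ (equivalently $\sigma_{k+1}(\AA)$) exceeds $10n/k$, apply Lemma~\ref{lem:smallk} or~\ref{lem:bigk} to the rectangular $\AA_{S\times T}$, transfer the conclusion to the principal submatrix $\AA_{(S\cup T)\times(S\cup T)}$, and derive a contradiction with Proposition~\ref{prop:topheavy} (resp.\ with $\|\cdot\|_{\mathcal S_1}=\Tr(\cdot)\le m$). The one small stylistic difference is that you invoke pointwise singular-value monotonicity under row/column deletion once to handle both the tail transfer and the Schatten-$1$ transfer, whereas the paper proves a separate ``tail of a submatrix $\le$ tail of the matrix'' claim via best rank-$k$ approximation for the first case and cites Thompson's interlacing theorem for the second; your unification is cleaner but relies on the same underlying fact.
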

\begin{proof}
First, by Chernoff bounds, with probability $99/100$ we have $|S \cup T| \leq |S| + |T| \leq 4t$, which we call $\mathcal{E}_1$ and condition on now. 
	First, consider the case that $\sigma_k(\AA) \leq 10 n/k$, where $k = \frac{2 \cdot 400^2 \kappa^4}{\eps}$. Then by Lemma \ref{lem:smallk}, with probability $19/20$, we have that $\sum_{i >k}\sigma_i^2(\AA_{S \times T}) > \eps t^2/16$. 
	Now  we first prove the following claim:
	
	\begin{claim}
		Let $\ZZ \in \R^{n \times m}$ be any matrix, and let $\tilde{\ZZ}$ be a rectangular submatrix of $\ZZ$. for any Let $\ZZ_k,\tilde{\ZZ}_k$ be the truncated SVD of $\ZZ,\tilde{\ZZ}$ respectively, for any $1 \leq k \leq \min\{n,m\}$. Then we have 
		\[\|\ZZ - \ZZ_k\|_F^2 \geq \|\tilde{\ZZ}- \tilde{\ZZ}_k\|_F^2		\]
	\end{claim}
\begin{proof}
	Note that $\|\ZZ - \ZZ_k\|_F^2 \geq \|\tilde{\ZZ}_k - \ZZ_k'\|_F^2$, where $\ZZ_k'$ is the matrix $\ZZ_k$ restricted to the submatrix containing $\tilde{\ZZ}$. But $\tilde{\ZZ}_k$ is the \textit{best} rank-$k$ approximation to $\tilde{\ZZ}$, so $ \|\tilde{\ZZ}- \tilde{\ZZ}_k\|_F^2	 = \min_{\BB \text{rank-k}}  \|\tilde{\ZZ}- \BB\|_F^2 \leq  \|\tilde{\ZZ}_k - \ZZ_k'\|_F^2$, using the fact that a submatrix of a rank-k matrix is at most rank $k$.
\end{proof}
It follows that $\|\AA_{(S \cup T) \times (S \cup T)} - (\AA_{(S \cup T) \times (S \cup T)})_k\|_F^2 = \sum_{j > k} \sigma_j^2(\AA_{(S \cup T) \times (S \cup T)}) \geq  \sum_{j > k} \sigma_j^2(\AA_{ S \times T}) > \eps t^2/16 > \eps |S \cup T|^2/256$. But note that if $\AA_{(S \cup T) \times (S \cup T)}$ was PSD, then we would have  $\sum_{j > k} \sigma_j^2(\AA_{ S \times T}) < \leq \frac{16}{k}t^2$, which is a contradiction since $k = \frac{2 \cdot 400^2 \kappa^4}{\eps} > \frac{100^2}{\eps}$.

Now consider the case that $\sigma_k(\AA) > 10 n/k$. Then by Lemma \ref{lem:bigk}, we have $\sigma_k((\AA_{ S \times T}) \geq 8 t/k$ with probability at least $49/50$. Then $\|\AA_{ S \times T}\|_{\mathcal{S}_1} \geq \sum_{i=1}^k \sigma_i((\AA_{ S \times T}) \geq 8t$. Using the fact that the Schatten norm of a matrix is always at least as large as the Schatten norm of any submatrix (this follows from the fact that the singular values of the submatrix are point-wise dominated by the larger matrix, see Theorem 1 \cite{thompson1972principal}), we have  $\|\AA_{(S \cup T) \times (S \cup T)}\|_{\mathcal{S}_1} \geq 8t$. But note that if $\AA_{(S \cup T) \times (S \cup T)}$ was PSD, then we would have $\|\AA_{(S \cup T) \times (S \cup T)}\|_{\mathcal{S}_1}  = \text{Tr}(\AA_{(S \cup T) \times (S \cup T)}) \leq |S \cup T| \leq 4t$, which is a contradiction. This completes the proof of the theorem.
\end{proof}

\begin{theorem}\label{thm:l2Main}
Fix  $\AA \in \R^{n \times n}$ with $\|A\|_\infty \leq 1$.
	There is a non-adaptive sampling algorithm that, with probability $9/10$, correctly distinguishes the case that $\AA$ is PSD from the case that $\AA$ is $\eps$-far from PSD in $\ell_2$, namely that $\sum_{i : \lambda_i(\AA) < 0} \frac{ \lambda_i^2(\AA)}{n^2} \geq \eps$. The algorithm queries a total of $O(\frac{\log^2(1/\eps)}{\eps^4})$ entries of $\AA$, and always correctly classifies $\AA$ as PSD if $\AA$ is indeed PSD. Moreover, the algorithm runs in time $\tilde{O}(1/\eps^{2\omega})$, where $\omega< 2.373$ is the exponent of fast matrix multiplication.
\end{theorem}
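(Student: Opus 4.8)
The plan is to assemble the theorem from the one-sided $\ell_\infty$-gap tester of Section~\ref{sec:Linfty} and from Theorem~\ref{thm:l2premain}, which already does the spectral heavy lifting. First I would describe the algorithm: set $k = \Theta(1/\eps)$ (concretely $k = 2\cdot 400^2\kappa^4/\eps$ as in Section~4.1) and $t = \Theta(\log(1/\eps)/\eps^2)$. Step one is to run the tester of Theorem~\ref{thm:inftymain} with gap parameter $\eps_0 = \Theta(\eps)$ chosen small enough that a \textsf{PSD} output justifies assuming $\lambda_{\min}(\AA)\geq -\tfrac{1}{2k}n$, amplifying its failure probability below $1/20$ by $O(1)$ independent repetitions; if it ever returns \textsf{Not PSD} it has exhibited a non-PSD principal submatrix, so we halt and return that certificate. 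Step two: sample $S,T\subseteq[n]$ independently, each coordinate included with probability $t/n$, query the principal submatrix $\AA_{(S\cup T)\times(S\cup T)}$ (which contains $\AA_{S\times T}$ as a sub-block, so nothing extra is queried for the analysis behind Lemmas~\ref{lem:smallk}–\ref{lem:bigk}), compute its eigenvalues, and return \textsf{Not PSD} with this submatrix as certificate if any eigenvalue is negative, else \textsf{PSD}.

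For correctness, note that in the \texttt{YES} case every principal submatrix of a PSD matrix is PSD, so both the $\ell_\infty$ pre-pass (one-sided) and the eigenvalue check always report \textsf{PSD} and the algorithm has zero error. In the \texttt{NO} case, $\AA$ is $\eps$-far from PSD in $\ell_2^2$; either $\lambda_{\min}(\AA) < -\eps_0 n$, in which case the amplified $\ell_\infty$ tester returns \textsf{Not PSD} with probability $\geq 19/20$, or else $\lambda_{\min}(\AA)\geq -\tfrac1{2k}n$, which is precisely the standing assumption behind Theorem~\ref{thm:l2premain} (it removes the obstruction that would otherwise break Proposition~\ref{prop:nottopheavy}), and then with probability $\geq 9/10$ the sampled $\AA_{(S\cup T)\times(S\cup T)}$ is not PSD and we correctly output \textsf{Not PSD}. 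Union-bounding the pre-pass and the submatrix test in either branch gives overall success probability $\geq 9/10$.

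For the query complexity, the pre-pass contributes $\tilde O(1/\eps^2)$ queries, and by a Chernoff bound $|S\cup T|\leq |S|+|T|\leq 4t$ with probability $\geq 99/100$, so $\AA_{(S\cup T)\times(S\cup T)}$ has $O(t^2) = O(\log^2(1/\eps)/\eps^4)$ entries; summing, the total is $O(\log^2(1/\eps)/\eps^4) = \tilde O(1/\eps^4)$. For the running time, the dominant cost is computing the eigenvalues of the $O(t)\times O(t)$ matrix $\AA_{(S\cup T)\times(S\cup T)}$, which takes $\tilde O(t^\omega) = \tilde O(1/\eps^{2\omega})$ time by fast symmetric eigensolvers~\cite{demmel2007fast2,banks2019pseudospectral}; the $\ell_\infty$ pre-pass runs in $\tilde O(1/\eps^\omega)$ time, which is subsumed.

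Since Theorem~\ref{thm:l2premain} (and through it Lemmas~\ref{lem:smallk} and~\ref{lem:bigk}) is available, the remaining work is purely the reduction bookkeeping, and the one genuinely delicate point is the choice of the pre-pass gap parameter $\eps_0$: it must be $\Theta(\eps)$, small enough relative to $1/k$ that a \textsf{PSD} verdict yields $\lambda_{\min}(\AA)\geq -\tfrac1{2k}n$, yet large enough that the pre-pass stays within $\tilde O(1/\eps^2)$ queries and $\tilde O(1/\eps^\omega)$ time — both of which hold because $k = \Theta(1/\eps)$. I expect this balancing, together with verifying that the one-sidedness of the pre-pass composes correctly with the submatrix test, to be the only subtlety in this final step.
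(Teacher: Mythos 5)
Your proof is correct and follows essentially the same route as the paper's: run the one-sided $\ell_\infty$ pre-pass from Section~\ref{sec:Linfty} to rule out $\lambda_{\min}(\AA) < -\Theta(\eps) n$, then invoke Theorem~\ref{thm:l2premain} on a sampled principal submatrix $\AA_{(S\cup T)\times(S\cup T)}$ and count queries and eigendecomposition time. In fact you are slightly more careful than the paper's exposition in one place: the paper sets $\eps_0 = 2/k$ and then asserts $\lambda_i \geq -\eps_0 n/1000$, whereas what the $\ell_\infty$ tester run at parameter $\eps_0$ actually lets you conclude (upon a \textsf{PSD} verdict, with good probability) is only $\lambda_{\min} > -\eps_0 n$; your phrasing — choose $\eps_0 = \Theta(\eps)$ small enough that $\eps_0 \leq 1/(2k)$ — is the clean way to justify the hypothesis of Proposition~\ref{prop:nottopheavy}. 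Your extra $O(1)$ amplification of the pre-pass is also the right instinct, since in the branch where $\lambda_{\min} \leq -\eps_0 n$ the submatrix-test analysis is not guaranteed to apply, so the pre-pass must carry the whole burden there.
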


\begin{proof}
We first apply the algorithm of Section \ref{sec:Linfty} with $\eps_0 = \frac{2}{k}$, which as discussed allows us to assume that $\lambda_i \geq - \eps_0 n / 1000 \geq - \frac{1}{2k}n$ for all $i$. The cost of doing so is  $\wt{\Theta}(1/\eps^2)$ queries, and this algorithm also yields one-sided error as desired. The remainder of the theorem follows directly from Theorem \ref{thm:l2premain}, using that all principal submatrices of PSD matrices are PSD. Finally, for runtime, notice that the main computation is computing the eigenvalues of a $k \times k$ principal submatirx, for $k = \tilde{O}(1/\eps^2)$, which can be carried out in time $\tilde{O}(1/\eps^{2\omega})$ \cite{demmel2007fast2,banks2019pseudospectral}.
\end{proof}

	\section{Lower bounds}\label{sec:lb}

\subsection{Lower Bound for PSD Testing with \texorpdfstring{$\ell_\infty$}{L-infinity} Gap}
	We begin by demonstrating a $O(1/\eps^2)$ lower bound for the problem of testing postive semi-definiteness with an $\ell_\infty$ gap. Our lower bound holds even when the algorithm is allowed to adaptively sample entryies of $\AA$.

	\begin{theorem}\label{thm:linftyLB}
		Any adaptive or non-adaptive algorithm which receives query access to $A \in \R^{n \times n}$ with $\|\AA\|_\infty \leq 1$, and distinguishes with probability at least $2/3$ whether 
		\begin{itemize}
			\item $\AA$ is PSD.
			\item $x^T\AA x< - \eps n$ for some unit vector $x \in \R^n$ and $\eps \in (0,1)$
		\end{itemize}
		must make $\Omega(1/\eps^2)$ queries to $A$.
	\end{theorem}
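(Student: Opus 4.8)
The plan is to exhibit two families of inputs --- one PSD, one $\Omega(\eps)$-far from PSD in spectral norm --- whose transcripts under any $q$-query adaptive algorithm differ in total variation by only $O(q\eps^2)$, forcing $q=\Omega(1/\eps^2)$. The YES input is the single matrix $\mathbf{I}_n$. For the NO input, include each index of $[n]$ in a set $S$ independently with probability $\eps^2$, and let $\AA^{(S)}$ be the symmetric matrix with $\AA^{(S)}_{i,i}=1$, with $\AA^{(S)}_{i,j}=-1$ when exactly one of $i,j$ lies in $S$, and $\AA^{(S)}_{i,j}=0$ otherwise; then $\mathbf{I}_n$ is PSD, $\|\AA^{(S)}\|_\infty\le1$, and (since we may assume $\eps\ge n^{-1/2}$, as otherwise $1/\eps^2\ge n$ and reading all $O(n^2)$ entries is within budget) a Chernoff bound gives $|S|\in[\frac12\eps^2n,\,2\eps^2n]$ except with probability $o(1)$. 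On that event the vector $x$ with $x_i^2=\frac1{2n}$ for $i\notin S$ and $x_i^2=\frac1{2|S|}$ for $i\in S$ satisfies $\|x\|_2\le1$ and $x^\top\AA^{(S)}x=\|x\|_2^2-\frac{\sqrt{|S|}\,(n-|S|)}{\sqrt n}\le -\frac{\eps n}{4}$ for $n$ large, so (after rescaling $\eps$ by a constant) $\AA^{(S)}$ is $\eps$-far from PSD.

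\textbf{Key claim.} Run any adaptive algorithm $\mathcal A$ on $\AA^{(S)}$ with $S$ as above; then the probability that $\mathcal A$ ever observes the value $-1$ in its first $q$ queries is at most $4q\eps^2$. The plan for the claim is a step-by-step union bound. Note first that the only entries on which $\AA^{(S)}$ and $\mathbf I_n$ disagree are the ``cross'' pairs $(i,j)$ with $|\{i,j\}\cap S|=1$, which equal $-1$; all diagonal queries return $1$ and all other off-diagonal queries return $0$ in both. Condition on any prefix of the transcript in which no $-1$ has yet appeared. Such a prefix reveals about $S$ exactly that $a$ and $b$ lie on the same side of the partition $(S,[n]\setminus S)$ for every off-diagonal pair $(a,b)$ queried so far; equivalently, each connected component of the query graph on the touched indices is monochromatic for $S$. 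Hence the posterior on $S$ factorizes over these components and the untouched singletons, and a component $C$ of size $c\ge1$ satisfies $\Pr[C\subseteq S\mid\text{prefix}]=\frac{(\eps^2)^c}{(\eps^2)^c+(1-\eps^2)^c}\le\frac{\eps^2}{1-\eps^2}\le2\eps^2$. Because $\mathcal A$'s next query is a deterministic function of the prefix, it is some fixed pair $(u,v)$, and its answer is $-1$ only if exactly one of $u,v$ lies in $S$, an event of probability at most $\Pr[u\in S\mid\text{prefix}]+\Pr[v\in S\mid\text{prefix}]\le4\eps^2$. Averaging over prefixes and summing over the $q$ steps gives the claim.

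\textbf{Finishing.} Couple the run of a fixed adaptive tester $\mathcal A$ on $\mathbf I_n$ with its run on $\AA^{(S)}$ via $\mathcal A$'s internal randomness. On the event that $\mathcal A$ never sees a $-1$, the two transcripts are literally identical (every answer $1$ on the diagonal, $0$ off it), so $\mathcal A$ produces the same output; therefore $\bigl|\Pr[\mathcal A(\mathbf I_n)=\textsf{PSD}]-\Pr_S[\mathcal A(\AA^{(S)})=\textsf{PSD}]\bigr|\le\Pr_{\mathrm{NO}}[\exists\,-1]\le4q\eps^2$ by the key claim. But a tester succeeding with probability $\ge2/3$ outputs \textsf{PSD} with probability $\ge2/3$ on $\mathbf I_n$ and, since it must output \textsf{Not PSD} with probability $\ge2/3$ on every $\eps$-far instance and all but an $o(1)$-fraction of the NO distribution is $\eps$-far, outputs \textsf{PSD} with probability $\le1/3+o(1)$ on the mixture. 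Hence $4q\eps^2\ge1/3-o(1)$ and $q=\Omega(1/\eps^2)$. I expect the one genuinely delicate ingredient to be the conditional bound $\Pr[u\in S\mid\text{prefix}]=O(\eps^2)$ --- the place where one must argue that the ``same-class'' information leaked by $0$-answers never concentrates enough posterior mass on any index lying in $S$; the far-ness computation and the coupling are routine.
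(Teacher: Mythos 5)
Your proof is correct, and it is the same construction morally as the paper's --- a block of $\sim\eps^2 n$ ``heavy'' rows/columns hidden in an otherwise benign matrix --- but the technical execution differs in two places, and one of those differences is exactly what makes your argument more delicate than it needs to be.

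The paper takes the YES instance to be the all-zeros matrix and the NO instance to be a \emph{random permutation of a fixed matrix} $\BB$ whose non-zero entries sit in a fixed $t$-subset of rows and columns, $t=2\eps^2 n$, with the crucial feature that $\BB$ also puts $-1$ on the $S\times S$ block. That last choice means every non-zero entry of the NO matrix has at least one endpoint in the bad set, so a $0$-answer on $(a,b)$ certifies that \emph{both} $a$ and $b$ lie outside $S$. Combined with the random-permutation model, this collapses the whole adaptive analysis: if all answers so far are $0$, a deterministic adaptive algorithm's next query is predetermined, so the entire query sequence $s_1,\dots,s_q$ it would make on the all-zeros matrix is fixed in advance, and one simply union-bounds the probability (over $\sigma$) that any of the $\le 2q$ touched indices falls in $S$. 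No conditioning, no posterior, no factorization over components.

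Your NO matrix instead puts $0$ on the $S\times S$ block, so a $0$-answer only tells you ``$a$ and $b$ are on the same side,'' which could mean both inside $S$. This is precisely why you are forced into the posterior analysis --- tracking connected components of the query graph, showing they stay monochromatic, and bounding $\Pr[C\subseteq S\mid\text{prefix}]\le\eps^2/(1-\eps^2)$. That step is correct (the prior is a product measure and the ``no $-1$ yet'' event factorizes across components, so the posterior factorizes, and $\frac{(\eps^2)^c}{(\eps^2)^c+(1-\eps^2)^c}$ is decreasing in $c\ge1$), and your union bound $\Pr[\text{ever see }-1]\le 4q\eps^2$ follows. But you have correctly identified this as ``the one genuinely delicate ingredient,'' and it is worth noting that it is avoidable: had you made the $S\times S$ block $-1$ as well (which only makes the instance farther from PSD, so far-ness is preserved), a $0$-answer would clear both endpoints and the i.i.d.\ prior would make the posterior on untouched vertices literally unchanged, reducing the key claim to a one-line union bound. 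So the proof is right, but you paid for a constructional choice that buys nothing. One other small remark: your aside that $\eps\ge n^{-1/2}$ may be assumed ``as otherwise reading all entries is within budget'' is not quite right as stated (for $n^{-1}<\eps<n^{-1/2}$ one has $n<1/\eps^2<n^2$), but the real point --- that $n$ may be taken as any sufficiently large function of $\eps$ so that $\eps^2 n\ge1$, since the theorem is a lower bound parameterized by $\eps$ --- is fine, and the paper is equally informal about it.
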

\begin{proof}
	We construct two distributions $\mathcal{D}_1,\mathcal{D}_2$ over matrices, and draw the input $A$ from the mixture $(\mathcal{D}_1 + \mathcal{D}_2 )/2$. $\mathcal{D}_1$ is supported on one matrix: the zero matrix $\mathbf{0}^{n \times n}$, which is PSD. Now set $t = 2 \eps^2 n$ and let $B \in \R^{n \times n}$ be the matrix given by
	
	\[	\BB = \begin{bmatrix}
	0 & -\mathbf{1}^{ n  - t \times t} \\
-	\mathbf{1}^{ t \times n  -t} &  -\mathbf{1}^{ t\times t} \\
	\end{bmatrix}	\]	
	Where $-	\mathbf{1}^{ n\times m} $ is the $n \times m$ matrix consisting of a $-1$ in each entry. Now let $x \in \R^{n \times n}$ be defined by $x_i = 1$ for $i=1,2,\dots,n-t$, and let  $x_j = 1/\eps$ for $j > n - t$. Then note that $x^T \BB x < - \frac{1}{\eps} \cdot 2 \eps^2 n^2 < -  \eps n \|x\|_2^2 $, thus $\BB$ is $\eps$-far from PSD in $\ell_\infty$ gap. To sample $A \sim \mathcal{D}_1$, we set $\AA = \P_\Sigma \BB \P_{\sigma}^T$, where $\P_{\sigma}$ is a randomly drawn permutation matrix, namely $\sigma \sim S_n$ uniformly at random. Notice that to distinguish $A \sim \mathcal{D}_1$ from $A \sim \mathcal{D}_2$, the algorithm must read a non-zero entry. By Yao's min-max principle, we can assume that there is a deterministic algorithm that solves the problem with probability $2/3$ over the randomness of the distribution. Fix any $k < 1/(100\eps^2)$, and let $s_1,s_2,\dots s_k$ be the adaptive sequence of entries it would sample if $A_{s_i} = 0$ for each $i=1,2,\dots,k$. Then then the probability that any of the the $s_i$'s land in a row or a column of  $\AA = \P_\Sigma \BB \P_{\sigma}^T$ with non-zero entries is at most $1/50$. Thus with probability $49/50$ under input from $A \sim \mathcal{D}_2$, the algorithm will output the same value had $A$ been the all zero matrix. Thus the algorithm succeeds with probability at most $51/100$ when $A$ is drawn from the mixture, demonstrating that $\Omega(1/\eps^2)$ samples are required for probability $2/3$ of success.
\end{proof}

\subsection{Lower Bound for PSD Testing with \texorpdfstring{$\ell_2$}{L-2}  Gap}
We now present our main lower bound for PSD testing. Our result relies on the construction of explicit graphs with gaps in their spectrum, which have the property that they are indistinguishable given only a small number of queries to their adjacency matrices. 
Our lower bound is in fact a general construction, which will also result in lower bounds for testing the Schatten $1$ norm, Ky-Fan norm, and cost of the best rank $k$ approximation.

\paragraph{Roadmap}

In the following, we will first introduce the notation and theory required for the section, beginning with the notion of subgraph equivalence of matrices. We then construct our hard distributions $\mathcal{D}_1,\mathcal{D}_2$, and prove our main conditional results, Lemma \ref{lem:ifBDthenLB}, which demonstrates a lower bound for these hard distributions conditioned on the existence of certain pairs of subgraph equivalent matrices. Finally, we prove the existence of such matrices, which is carried out in the following Section \ref{sec:CnLemma}. Putting these pieces together, we obtain our main lower bound in Theorem \ref{thm:lbmain}. 

\paragraph{Preliminaries and Notation}
In the following, it will be useful to consider \textit{signed graphs}. 
A signed graph $\Sigma$ is a pair $(|\Sigma|, s)$, where $|\Sigma| = (V,E)$ is a simple graph, called the \textit{underlying graph}, and $s:E \to \{1,-1\}$ is the \textit{sign function}. We will sometimes abbreviate the signs equivalently as $\{+,-\}$. We will write $E^+,E^-$ to denote the set of positive and negative edges. If $\Sigma$ is a signed graph, we will often write $\Sigma = (V(\Sigma),E(\Sigma))$, where $E(\Sigma)$ is a set of \textit{signed} edges, so $E(\Sigma) \subset \binom{|V(\Sigma)|}{2} \times \{+,-\}$ with the property that for each $e \in \binom{|V(\Sigma)|}{2}$, at most one of  $(e,+,),(e,-)$ is contained in $E(\Sigma)$. For a signed graph $G$ on $n$ vertices, let $\AA_G \in \{1,0,-1\}^{n \times n}$ be its adjacency matrix, where $(\AA_G)_{i,j}$ is the sign of the edge $e=(v_i,v_j)$ if $e \in E(G)$, and is $0$ otherwise.

For a graph $H$, let $\|H\|$ denote the number of vertices in $H$. For any simple (unsigned) graph $G$, let $\overline{G}$ be the signed graph obtained by having $E^+(\overline{G}) = E(G)$, and $E^-(\overline{G}) = \binom{|V|}{2} \setminus E(G)$. In other words, $\overline{G}$ is the complete signed graph obtained by adding all the edges in the complement of $G$ with a negative sign, and giving a positive sign the edges originally in $G$. We remark that the negation of the adjacency matrix of $\overline{G}$ is known as the \textit{Seidel matrix} of $G$. In what follows, we will often not differentiate between a signed graph $G$ and its signed adjacency matrix $\AA_G$. For graphs $G,H$, let $G \oplus H$ denote the disjoint union of two graphs $G,H$
 We will assume familiarity with basic group theory. For groups $G,H$, we write $H \leq G$ if $H$ is a subgroup of $G$. For a set $T$, let $2^T$ denote the power set of $T$.  Throughout, let $S_n$ denote the symmetric group on $n$ letters. 
  For two signed graphs $\Sigma,H$, let $\mathcal{F}_H(\Sigma) = \{G  = (V(\Sigma),E(G)) \; | \; E(G) \subseteq E(\Sigma), G \cong H  \}$ be the set of signed subgraphs of $\Sigma$ isomorphic to $H$. For a permutation $\sigma \in S_n$, we write $\P_\sigma \in \R^{n \times n}$ to denote the row permutation matrix associated with $\sigma$. For $k \geq 3$, let $C_k$ denote the cycle graph on $k$ vertices. 
  
 For signed graphs $G,H$, a signed graph isomorphism (or just isomorphism) is a graph isomorphism that preserve the signs of the edges. For any set $U  \subset [n] \times [n]$ and matrix $\AA \in \R^{n \times n}$, we write $\AA_U$ to denote the matrix obtained by setting the entries $(A_U)_{i,j} = A_{i,j}$ for $(i,j) \in U$, and  $(\AA_U)_{i,j} = 0$ otherwise. A set $U  \subset [n] \times [n]$ is called symmetric if $(i,j) \in U \iff (j , i) \in U$. We call $U$ simple if it does not contain any elements of the form $(i,i)$. We will sometimes refer to a simple symmetric $U$ by the underlying simple undirected graph induced $U$.
 
 \paragraph{Subgraph Equivalence} 
We now formalize the indistinguishably property which we will require. For matrices $\AA,\BB$, when thought of as adjacency matrices of graphs, this property can be thought of as a more general version of ``locally indistinguishability'', in the sense that, for any small subgraph $H$ of $\AA$, there is a \textit{unique} subgraph of $\BB$ that is isomorphic to $H$. The following definition is more general, in the sense that a subgraph can also have ``zero valued edges'', corresponding to the fact that an algorithm can learn of the non-existence of edges, as well as their existence.

 \begin{definition}[Sub-graph Equivalence]\label{def:subgraphequiv}
 Fix any family $\mathcal{U}$ of symmetric subsets $\mathcal{U} = \{U_i\}_i \in 2^{[n] \times [n]}$, and let $\Gamma \leq S_n$ be a subgroup of the symmetric group on $n$ letters. Let $\AA,\BB \in \R^{n \times n}$. Then we say that $\AA$ is $(\mathcal{U},\Gamma)$-subgraph isomorphic to $\BB$, and write $\AA \cong_{\mathcal{U},\Gamma} \BB$, if for every $U_i \in \mathcal{U}$ there is a bijection $\psi_i: \Gamma \to \Gamma$ such that 
 \[  \left(\P_{\sigma} \AA \P_{\sigma}^T \right)_{U_i}    = \left(\P_{\psi_i(\sigma)} \BB  \P_{\psi_i(\sigma)}^T \right)_{U_i}   \]
 for all $\sigma \in \Gamma$. If $G,H$ are  two signed graphs on $n$ vertices with adjacency matrices $\AA_G,\AA_H$, then we say that $G$ is $(\mathcal{U},\Gamma)$-subgraph  equivalent to $H$, and write  $G \cong_{\mathcal{U},\Gamma} H$, if $\AA_G  \cong_{\mathcal{U},\Gamma}  \AA_H$. 
 \end{definition}
 \noindent  Note we do not require the $U_i$'s to be simple in the above definition. At times, if $\Gamma = S_n$, then we may omit $\Gamma$ and just write $G \cong_{\mathcal{U}} H$ or $\AA  \cong_{\mathcal{U}}  \BB$.

\begin{example}
Let $G,H$ be arbitrary graphs on $n$ vertices, and let each $\mathcal{U} = \{U_i\}$ be a simple graph consisting of a single edge. Then $G \cong_{\mathcal{U},S_n} H$ if and only if $|E(G)| = |E(H)|$. 
\end{example}

\begin{example}
Let $G,H$ be arbitrary graphs on $n$ vertices, and let $\mathcal{U} = \{U_i\}$  be a single graph, where $U_i$ is a triangle on any three vertices. Then $G \cong_{\mathcal{U},S_n} H$ if and only if the number of induced subgraphs on three vertices that are triangles, wedges, and single edges, are each the the same in $G$ as in $H$.
\end{example}

In what follows, we will consider graphs that are $\mathcal{U}$ subgraph isomorphic, for a certain family of classes $\mathcal{U}$, which we now define. In what follows, recall that the matching number $\nu(G)$ of a graph $G$ is the size of a maximum matching in $G$, or equivalently the maximum size of any subset of pairwise vertex disjoint edges in $G$.
\begin{definition}
For $1 \leq t \leq n$, let $\mathcal{U}^t_n$ be the set of all undirected, possibly non-simple graphs $U_i$ on $n$ vertices, with the property that after removing all self-loops, $U_i$ does not contains any set of $t$ vertex disjoint edges. Equivalently, after removing all self-loops from $U_i$, the matching number $\nu(U_i)$ of $U_i$ is less than $t$.
\end{definition}
 In other words, $\mathcal{U}^t_n$ is the set of graphs with no set of $t$ pair-wise non-adjacent edges $e_1,\dots,e_t$ such that each $e_i$ is not a self loop. 
Notice by the above definition that $\mathcal{U}^{t}_n \subset \mathcal{U}^{t+1}_n$. We will also need the following definition.
\begin{definition}
For any $n,m \leq 1$, let $\Gamma_{n,m} \leq S_{nm}$ be the subgroup defined $\Gamma_{n,m} = \{\sigma \in S_{nm} \; | \; \sigma(i,j) = (\pi(i),j), \; \pi \in S_{n} \}$, where the tuple $(i,j) \in [n] \times [m]$ indexes into $[nm]$ in the natural way. 
\end{definition}
\noindent
Notice in particular, if $\AA \in \R^{n \times n}$ and $\DD \in\R^{m \times m}$, then we have \[\{ \P_{\sigma}(\AA \otimes \DD) \P_{\sigma}^T \; | \; \sigma \in \Gamma_{n,m}\} =\{ (\P_{\pi} \otimes \mathbb{I}_m) (\AA \otimes \DD) (\P_{\pi} \otimes \mathbb{I}_m)^T \; | \; \pi \in S_n\}\]
Note also by elementary properties of Kronecker products, we have $(\P_{\pi} \otimes \mathbb{I}_m) (\AA \otimes \DD) (\P_{\pi} \otimes \mathbb{I}_m)^T = ( \P_{\pi} \AA \P_{\pi}^T) \otimes \DD$. For such a $\sigma \in \Gamma_{n,m}$, we write $\sigma = \pi \otimes \text{id}$, where $\pi \in S_n$

\begin{lemma}\label{len:kron}
    Fix any $t ,m \geq 1$, and let $\AA,\BB \in \R^{n \times n}$ be matrices with  $\AA \cong_{\mathcal{U}^t_n,S_n} \BB$, where $\mathcal{U}^t_n$ is defined as above, and let $\T \in \R^{m \times m}$ be any matrix. Then $\AA \otimes \T \cong_{\mathcal{U}^t_{nm},\Gamma_{n,m}} \BB \otimes \T$, where $\Gamma \leq S_{nm}$ is as defined above.\footnote{Note that this fact extends naturally to tensoring with rectangular matrices $\T$.}
\end{lemma}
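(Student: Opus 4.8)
The plan is to reduce the claim about $\AA\otimes\T$ and $\BB\otimes\T$ over the group $\Gamma_{n,m}$ to the hypothesis $\AA\cong_{\mathcal{U}^t_n,S_n}\BB$ over $S_n$, by ``projecting'' query sets from $[nm]\times[nm]$ down to $[n]\times[n]$. Throughout I identify $\Gamma_{n,m}$ with $S_n$ via $\sigma=\pi\otimes\mathrm{id}\leftrightarrow\pi$, and use the identity $\P_\sigma(\AA\otimes\T)\P_\sigma^\top=(\P_\pi\AA\P_\pi^\top)\otimes\T$ recorded just before the lemma. Unpacking Definition~\ref{def:subgraphequiv}, it suffices to fix an arbitrary $U\in\mathcal{U}^t_{nm}$ and produce a bijection $\psi_U:\Gamma_{n,m}\to\Gamma_{n,m}$ with $\big(\P_\sigma(\AA\otimes\T)\P_\sigma^\top\big)_U=\big(\P_{\psi_U(\sigma)}(\BB\otimes\T)\P_{\psi_U(\sigma)}^\top\big)_U$ for all $\sigma\in\Gamma_{n,m}$.

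First I would define the projection $V\subseteq[n]\times[n]$ by $V=\{(i,j):((i,a),(j,b))\in U\text{ for some }a,b\in[m]\}$, viewing $[nm]=[n]\times[m]$ in the natural way; since $U$ is symmetric, so is $V$. The key combinatorial step is to show $V\in\mathcal{U}^t_n$, i.e.\ that after deleting self-loops $V$ has matching number $<t$. Given vertex-disjoint non-loop edges $(i_1,j_1),\dots,(i_s,j_s)$ of $V$ (so the $2s$ labels $i_1,j_1,\dots,i_s,j_s\in[n]$ are distinct), pick for each $k$ indices $a_k,b_k$ with $((i_k,a_k),(j_k,b_k))\in U$; these are non-loop edges of $U$ whose endpoints have pairwise-distinct first coordinates, hence they form a matching of size $s$ in $U$ after deleting self-loops. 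Thus $\nu(V\text{ w/o loops})\le\nu(U\text{ w/o loops})<t$, so $V\in\mathcal{U}^t_n$. (Self-loops of $V$ arising from non-loop edges of $U$ are harmless, since $\mathcal{U}^t_n$ only constrains the matching number after loop deletion.)

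Next I would observe that the restriction $\big(\P_\sigma(\AA\otimes\T)\P_\sigma^\top\big)_U$ is a \emph{fixed} function of $\big(\P_\pi\AA\P_\pi^\top\big)_V$ that does not depend on $\AA$: for $((i,a),(j,b))\in U$ the corresponding entry of $(\P_\pi\AA\P_\pi^\top)\otimes\T$ equals $(\P_\pi\AA\P_\pi^\top)_{i,j}\cdot\T_{a,b}$, and since $(i,j)\in V$ this is just the $(i,j)$ entry of $\big(\P_\pi\AA\P_\pi^\top\big)_V$ scaled by the constant $\T_{a,b}$. Writing $F_U$ for the resulting linear ``expansion'' map on matrix restrictions, we get $\big(\P_\sigma(\AA\otimes\T)\P_\sigma^\top\big)_U=F_U\big((\P_\pi\AA\P_\pi^\top)_V\big)$ and, by the identical formula, $\big(\P_\tau(\BB\otimes\T)\P_\tau^\top\big)_U=F_U\big((\P_\rho\BB\P_\rho^\top)_V\big)$ for $\tau=\rho\otimes\mathrm{id}$.

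Finally I would invoke the hypothesis: since $V\in\mathcal{U}^t_n$ and $\AA\cong_{\mathcal{U}^t_n,S_n}\BB$, there is a bijection $\psi_V:S_n\to S_n$ with $(\P_\pi\AA\P_\pi^\top)_V=(\P_{\psi_V(\pi)}\BB\P_{\psi_V(\pi)}^\top)_V$ for all $\pi$. Setting $\psi_U(\pi\otimes\mathrm{id}):=\psi_V(\pi)\otimes\mathrm{id}$ gives a bijection of $\Gamma_{n,m}$, and applying $F_U$ to both sides of the previous equation yields exactly $\big(\P_\sigma(\AA\otimes\T)\P_\sigma^\top\big)_U=\big(\P_{\psi_U(\sigma)}(\BB\otimes\T)\P_{\psi_U(\sigma)}^\top\big)_U$ for every $\sigma\in\Gamma_{n,m}$. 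As $U\in\mathcal{U}^t_{nm}$ was arbitrary, this proves $\AA\otimes\T\cong_{\mathcal{U}^t_{nm},\Gamma_{n,m}}\BB\otimes\T$. The main obstacle here is the matching-number projection step (showing $V\in\mathcal{U}^t_n$); everything else is bookkeeping with Kronecker indices, and the same argument runs verbatim when $\T$ is rectangular (interpreting $\Gamma_{n,m}$ appropriately on the two index sets), which gives the extension mentioned in the footnote.
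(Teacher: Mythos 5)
Your proposal is correct and follows essentially the same route as the paper's proof: project the query set $U\subseteq[nm]\times[nm]$ down to $V\subseteq[n]\times[n]$, show that $\nu(V)<t$ by lifting a matching in $V$ to a matching in $U$, invoke the hypothesis on $V$ to get $\psi_V$, and define $\psi_U(\pi\otimes\mathrm{id})=\psi_V(\pi)\otimes\mathrm{id}$, using the fact that the restriction of $(\P_\pi\AA\P_\pi^\top)\otimes\T$ to $U$ is determined entrywise by the restriction of $\P_\pi\AA\P_\pi^\top$ to $V$. The only cosmetic difference is that the paper argues the matching-number step by contradiction while you argue it directly, and you phrase the entrywise-determination step as an $\AA$-independent ``expansion map'' $F_U$ rather than as a containment of nonzero supports; neither changes the substance.
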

\begin{proof}
    Fix any $U_i' \in \mathcal{U}^t_{nm}$. Note that every edge of $U_i'$ corresponds to a unique edge of a graph on $n$ vertices. This can be seen as every edge of $U_i'$ is of the form $((i_1,j_1),(i_2,j_2))$ where $i_1,i_2 \in [n], j_1 ,j_2 \in [m]$, which corresponds to the edge $(i_1,i_2) \in [n] \times [n]$. So let $U_i \subset [n] \times [n]$ be the set of all such edges induced by the edges of $U_i'$. Observe, of course, that many distinct edges of $U_i'$ could result in the same edge of $U_i$. We claim that $U_i \in \mathcal{U}^t_n$. Suppose this was not the case, and let $e_1,\dots,e_{t} \in U_i$ be vertex disjoint non-self loop edges, where $e_\ell = (i_\ell,j_\ell)$, $i_\ell \neq j_\ell$. Then for each $\ell \in [t]$, there must be at least one edge $e_\ell' \in U_i'$ such that $e_{\ell}' = ( (i_\ell, a_\ell) (j_\ell,b_\ell)) \in U_i'$, and we can fix $e_{\ell}'$ to be any such edge. Then since each vertex $i_{\ell} \in [n]$ occured in at most one edge of $e_1,\dots,e_{t}$ by assumption, it follows that each vertex $(i_\ell,j_\ell) \in [n] \times [m]$ occurs at most once in $e_1',\dots,e_{t}'$, which contradictts the fact that the $U_i' \in \mathcal{U}^t_{nm}$. 
    
    Now that we have $U_i \in \mathcal{U}^t_n$, since $\AA \cong_{\mathcal{U}^t_n,S_n} \BB$ we have a bijection function $\psi_i:S_n \to S_n$ such that $\left(\P_{\pi} \AA \P_{\pi}^T \right)_{U_i}    = \left(\P_{\psi_i(\pi)} \BB \P_{\psi_i(\pi)}^T \right)_{U_i}$. We now define the mapping $\hat{\psi}_i : \Gamma_{n,m} \to \Gamma_{n,m}$  by $\hat{\psi}_i(\pi \otimes \text{id}) = \psi_i(\pi) \otimes \text{id}$, and show that it satisfies the conditions of Definition \ref{def:subgraphequiv}. Now note that each $\sigma = \pi \otimes \text{id} \in \Gamma_{n,m}$ satisfies $\P_\sigma = \P_{\pi} \otimes \mathbb{I}$, and so $\P_\sigma (\AA \otimes \T )\P_\sigma^T = \P_{\pi}\AA \P_{\pi}^T \otimes\T$. 
 
    We now claim that for any $U_i' \in \mathcal{U}^t_{nm}$, if we construct $U_i \in \mathcal{U}^t_n$ as above, we have that for any matrix $\ZZ \in \R^{n \times n}$ the non-zero entries of  $(\ZZ)_{U_i} \otimes \T$ contain the non-zero entries of $(\ZZ\otimes \T )_{U_i'}$. As a consequence, if $(\ZZ)_{U_i} \otimes \T = (\Y)_{U_i} \otimes \T$ for some other matrix $\Y \in \R^{n \times n}$, we also have $(\ZZ \otimes \T)_{U_i'} = (\Y \otimes \T)_{U_i'}$. But the claim in question just follows from the construction of $U_i'$, since for every entry $((i_1,j_1),(i_2,j_2)) \in U_i'$ we added the entry $(i_1,i_2) \in U_i$.    Now since we have  that$\left(\P_{\pi} \AA \P_{\pi}^T \right)_{U_i}    = \left(\P_{\psi_i(\pi)} \BB  \P_{\psi_i(\pi)}^T \right)_{U_i}$, we also obtain \[\left(\P_{\pi} \AA \P_{\pi}^T \right)_{U_i} \otimes  \T    = \left(\P_{\psi_i(\pi)}\BB  \P_{\psi_i(\pi)}^T \right)_{U_i}  \otimes \T\] which as just argued implies that 
    
    \[      \left(\P_{\pi} \AA \P_{\pi}^T\otimes \T\right)_{U_i'}    = \left(\P_{\psi_i(\pi)} \BB  \P_{\psi_i(\pi)}^T   \otimes  \T\right)_{U_i'} \] 
    Since $\left(\P_{\pi} \AA \P_{\pi}^T\otimes\T \right)_{U_i'} = \left(\P_{\sigma}( \AA \otimes  \T) \P_{\sigma}^T \right)_{U_i'}$ and $(\P_{\psi_i(\pi)} \BB  \P_{\psi_i(\pi)}^T   \otimes \T)_{U_i'} =  (\P_{\hat{\psi}(\sigma)}( \BB \otimes   \T) \P_{\hat{\psi}(\sigma)}^T )_{U_i'}$, it follows that $\AA \otimes \T \cong_{\mathcal{U}^t_{nm},\Gamma_{n,m}} \BB \otimes \T$ as required.
    
\end{proof}


\paragraph{The Hard Instance.} We now describe now distributions, $\mathcal{D}_1,\mathcal{D}_2$, supported on $n \times n$ matrices $\AA$ and paramterized by a value $k \geq 1$, such that distinguishing $\mathcal{D}_1$ from $\mathcal{D}_2$ requires $\Omega(k^2)$ samples. The distributions are parameterized by \textit{three} matrices, $(\BB,\DD,\ZZ)$, which are promised to satisfy the properties that $\BB,\DD \in \R^{d \times d}$ with $\BB \cong_{\mathcal{U}^t_{d} , S_d} \DD$ for some $t \leq d$, and $\ZZ \in \R^{m \times m}$, where $m=n/(dk)$. Also define $\wt{\BB} = \BB \otimes \ZZ$, $\wt{\DD} = \DD \otimes \ZZ$. We now define the distribution. We first define $\mathcal{D}_1$. In $\mathcal{D}_1$, we select a random partition of $[n]$ into $L_1,\dots,L_k$, where each $|L_i| = n/k$ exactly. Then for each $i \in [k]$, we select a uniformly random $\sigma_i \in \Gamma_{d,m}$ and set $\AA_{L_i \times L_i} = \P_{\sigma_i} \wt{\BB} \P_{\sigma_i}^T$, and the remaining elements of $\AA$ are set to $0$. In $\mathcal{D}_2$, we perform the same procedure, but set $\AA_{L_i \times L_i} = \P_{\sigma_i} \wt{\DD} \P_{\sigma_i}^T$. So if $\AA \sim \frac{\mathcal{D}_1 + \mathcal{D}_2}{2}$, then $\AA$ is block-diagonal, with each block having size $n/k$. We first demonstrate that for any matrices $(\BB,\DD,\ZZ)$ satisfying the above properties, distinguishing these distributions requires $\Omega(k^2)$ samples. We assume in the following that $dk$ divides $n$, which will be without loss of generality since we can always embed a small instance of the lower bound with size $n'$ such that $ n/2<n-dk\leq  n' \leq n$, and such that $dk$ divides $n$. 

\begin{lemma}\label{lem:ifBDthenLB}
Fix any $1 \leq k ,d\leq n$. Let $(\BB,\DD,\ZZ)$ be any three matrices such that $\BB,\DD \in \R^{d \times d}$,  $\BB \cong_{\mathcal{U}^t_{d} , S_d} \DD$ where $t = \log k$, and $\ZZ \in \R^{m \times m}$, where $m = n/(dk)$. Then any non-adaptive sampling algorithm which receives $\AA \sim \frac{\mathcal{D}_1 + \mathcal{D}_2}{2}$ where the distributions are defined by the tuple $(\BB,\DD,\ZZ)$ as above, and distinguishes with probability at least $2/3$ whether $\AA$ was drawn from $\mathcal{D}_1$ or $\mathcal{D}_2$ must sample $\Omega(k^2)$ entries of $\AA$. 

\end{lemma}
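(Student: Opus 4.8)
## Proof proposal for Lemma \ref{lem:ifBDthenLB}

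The plan is to prove indistinguishability via Yao's principle: reduce to a deterministic algorithm that queries a fixed set $Q \subseteq [n]\times[n]$ of $s = o(k^2)$ entries, and show that the distribution of the observed values is statistically close (in fact, within total variation $o(1)$) between $\AA \sim \mathcal{D}_1$ and $\AA \sim \mathcal{D}_2$. First I would identify the query set $Q$ with a graph $H$ on vertex set $[n]$ (with possible self-loops), where $|E(H)| = s$. The key structural observation is that since $\AA$ is block-diagonal with blocks $L_1,\dots,L_k$ chosen as a uniformly random partition into equal parts, the only queried entries that can be nonzero are those edges of $H$ lying entirely within a single block $L_i$; all cross-block queries deterministically return $0$ in both distributions and carry no information. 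So it suffices to analyze, for each block $L_i$, the restriction $H[L_i]$ of the query graph to that block, and argue that its ``footprint'' inside $\P_{\sigma_i}\wt{\BB}\P_{\sigma_i}^\top$ versus $\P_{\sigma_i}\wt{\DD}\P_{\sigma_i}^\top$ is identically distributed provided $H[L_i]$ has matching number less than $t = \log k$.

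The second step is the matching-number bound. Conditioned on the partition $L_1,\dots,L_k$, each block receives roughly $s/k$ edges of $H$ in expectation, but an adversary could concentrate edges; the right quantity to control is $\nu(H[L_i])$, the size of a maximum matching. I would argue that if $s = o(k^2)$, then with high probability over the random partition, \emph{every} block satisfies $\nu(H[L_i]) < \log k$. The intuition (made precise in the companion combinatorial argument referenced in Section \ref{sec:techlb}) is a balls-and-bins style estimate: a matching of size $\ell$ inside a single block requires $\ell$ vertex-disjoint edges of $H$ all of whose $2\ell$ endpoints land in the same block; the number of size-$\ell$ matchings in $H$ is at most $s^\ell$, each survives into a common block with probability roughly $(1/k)^{2\ell - 1}$ (fixing one endpoint's block, the other $2\ell-1$ endpoints must follow), so the expected number of blocks containing a matching of size $\ell = \log k$ is at most $s^{\log k} \cdot k \cdot k^{-(2\log k - 1)} = k^2 (s/k^2)^{\log k} = o(1)$ when $s \le k^{2-\Omega(1)}$; a union bound over $\ell \ge \log k$ finishes it. I'd then condition on the good event that all blocks have matching number $< t$.

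Given this, the third step invokes the subgraph-equivalence hypothesis directly. By Lemma \ref{len:kron}, $\BB \cong_{\mathcal{U}^t_d, S_d} \DD$ implies $\wt{\BB} = \BB\otimes\ZZ \cong_{\mathcal{U}^t_{dm}, \Gamma_{d,m}} \DD\otimes\ZZ = \wt{\DD}$, where $dm = n/k$ is exactly the block size. By Definition \ref{def:subgraphequiv}, for the set $U \subseteq [n/k]\times[n/k]$ corresponding to $H[L_i]$ (which lies in $\mathcal{U}^t_{n/k}$ by the matching-number bound), there is a bijection $\psi$ on $\Gamma_{d,m}$ such that $(\P_{\sigma}\wt{\BB}\P_{\sigma}^\top)_U = (\P_{\psi(\sigma)}\wt{\DD}\P_{\psi(\sigma)}^\top)_U$ for all $\sigma$. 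Since $\sigma_i$ is drawn uniformly from $\Gamma_{d,m}$ and $\psi$ is a bijection, $\psi(\sigma_i)$ is also uniform, so the restriction of block $i$ to the queried coordinates has \emph{identical} distribution under $\mathcal{D}_1$ and $\mathcal{D}_2$. As the blocks are mutually independent (given the partition) and the partition is common to both distributions, the full transcript of observed values is identically distributed conditioned on the good partition event, hence within TV distance $o(1)$ unconditionally. Therefore no algorithm can distinguish with advantage bounded away from $0$, and in particular not with probability $2/3$, unless $s = \Omega(k^2)$.

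The main obstacle I anticipate is the second step — establishing that a sublinear-in-$k^2$ query budget forces \emph{every} block to have small matching number with high probability, uniformly over all (even adversarially chosen) query graphs $H$. The naive expectation bound sketched above handles the ``typical'' case, but one must be careful that a clever non-adaptive strategy (e.g. a union of a few large stars plus a sparse expander-like piece) cannot sneak a large matching into some block; this is exactly where the careful combinatorial lemma about arrangements of paths (foreshadowed as the heart of the lower bound) does the real work, and getting the exponents exactly right to yield the clean $\tilde\Omega(k^2)$ bound rather than something weaker like $k^{2(\ell-1)/\ell}$ requires the subgraph equivalence to hold up to matching number $\log k$ rather than a constant — which is precisely why $t = \log k$ appears in the hypothesis.
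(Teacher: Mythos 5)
Your proposal is correct and follows essentially the same route as the paper's proof: Yao's principle to fix a deterministic query set, a union bound over candidate $t$-matchings (each with probability $\le k^{-2t}$ of landing inside a single block) to show the per-block query footprint lies in $\mathcal{U}^t_{n/k}$ with high probability, then Lemma~\ref{len:kron} plus the bijections $\psi_i$ to couple the two distributions block-by-block so the observed transcript is identically distributed. One small correction to your closing worry: the counting bound $\binom{s}{t}\le s^t$ on size-$t$ matchings already holds uniformly over \emph{all} query graphs $H$ with $s$ edges, so no adversarial structure can defeat it; the ``combinatorial lemma about arrangements of paths'' you cite is doing a different job — it establishes the \emph{hypothesis} $\BB\cong_{\mathcal{U}^t_d,S_d}\DD$ for the cycle instances (Proposition~\ref{prop:cong} and Section~\ref{sec:CnLemma}), and plays no role inside the proof of Lemma~\ref{lem:ifBDthenLB} itself, where subgraph equivalence is assumed.
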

\begin{proof}

We show that any algorithm cannot distinguish $\mathcal{D}_1$ from $\mathcal{D}_2$ with probability greater than $2/3$ unless it makes at least $\ell > C\cdot k^2$ queries, for some constant $C>0$. So suppose the algorithm makes at most $C\cdot k^2/100$ queries in expectation and is correct with probability $2/3$. Then by Markov's there is a algorithm that always makes at most $\ell = C k^2$ queries which is correct with probability $3/5$. By Yao's min-max principle, there is a determinstic algorithm making this many queries which is correct with probability $3/5$ over the distribution $\frac{\mathcal{D}_1 + \mathcal{D}_2}{2}$. So fix this algorithm, which consists of a single subset $U \subset[n] \times [n]$ with $|U| = \ell$. 

We now generate the randomness used to choose the partition $L_1,\dots,L_k$ of $[n]$. Let $U_i = U \cap L_i \times  L_i = \{(i,j) \in U \; | \; i,j \in L_i\}$. Let $\mathcal{E}_i$ be the event that $U_i \in \mathcal{U}_{md}^t$. We first bound $\pr{\neg \mathcal{E}_i}$, where the probability is over the choice of the partition $\{L_i\}_{i \in [k]}$. For $\neg \mathcal{E}_i$, there must be $t$ pairwise vertex disjoint non-self loop edges $e_1,\dots,e_t \in U$ such that $e_{j} = (a_j,b_j)$ and $a_j, b_j \in L_i$. In other words, we must have $2t$ distinct vertices $a_1,b_1,\dots,a_t,b_t \in L_i$. For a fixed vertex $v \in [n]$, this occurs with probability $1/k$, and the probability that another $u \in [n] \cap L_i$ conditioned on $v \in [n]$ is strictly less than $1/k$ as have have $|L_i| = n/k$ exactly. Thus, the probability that all $2t$ vertices are contained in $L_i$ can then be bounded $\frac{1}{k^{2t}}$. Now there are $\binom{|U|}{t} \leq \ell^t$ possible choices of vertex disjoint edges $e_1,\dots,e_t \in U$ which could result in $\mathcal{E}_i$ failing to hold, thus $\pr{ \mathcal{E}_i } \geq 1 - \frac{\ell^t}{k^{2t}}$
and by a union bound
\begin{equation}
    \begin{split}
         \pr{\cap_{i=1}^k \mathcal{E}_i } & \geq 1 - \frac{\ell^t}{k^{2t-1}} \\
         & \geq   1 - \frac{C^t k^{2t}}{k^{2t-1}} \\
           & \geq   1 - C^t k \\
            & \geq   \frac{99}{100} \\
    \end{split}
\end{equation}
Where in the last line, we took $C \leq 1/10$ and used the fact that $t = \log(k)$. Then if $\mathcal{E} = \cap_{i=1}^k \mathcal{E}_i$, we have $\pr{\mathcal{E}} > 99/100$, which we condition on now, along with any fixing of the $L_i$'s that satisfies $\mathcal{E}$. Conditioned on this, it follows that  $U_i \in \mathcal{U}_{md}$ for each $i \in [k]$.  Using that $B\cong_{\mathcal{U}^t_{d} , S_n} D$, we can and apply Lemma \ref{len:kron} to obtain  $\BB \otimes \ZZ \cong_{\mathcal{U}^t_{dm},\Gamma_{d,m}} \DD \otimes \ZZ$. Thus, for each $i \in [k]$ we can obtain a bijection function $\psi_i:\Gamma_{d,m} \to \Gamma_{d,m}$ such that $(\P_{\sigma_i} \wt{\BB} \P_{\sigma_i}^T)_{U_i} = (\P_{\psi_i(\sigma_i)} \wt{\DD} \P_{\psi(\sigma_i)}^T)_{U_i}$ for each $\sigma_i \in \Gamma_{d,m}$. Thus we can create a coupling of draws from $\mathcal{D}_1$ with those from $\mathcal{D}_2$ conditioned on $\mathcal{E}$, so for any possible draw from the remaining randomness of $\mathcal{D}_1$, which consists only of drawing some $(\sigma_1,\dots,\sigma_k) \in S_d^k$ generating a matrix $\AA_1$, we have a unique corresponding draw  $(\psi_1(\sigma_1),\dots,\psi_k(\sigma_k)) \in S_d^k$ of the randomness in $\mathcal{D}_2$ which generates a matrix $\AA_2$, such that $(\AA_1)_U = (\AA_2)_U$. Thus conditioned on $\mathcal{E}$, any algorithm is correct on $\frac{\mathcal{D}_1 + \mathcal{D}_2}{2}$ with probability exactly $1/2$. Since $\mathcal{E}$ occured with probability $99/100$, it follows than the algorithm is correct with probability $51/100 < 3/5$, which is a contradiction. Thus we must have $\ell \geq  Ck^2 = \Omega(k^2)$ as needed.

\end{proof}

We are now ready to introduce our construction of the matrices as required in the prior lemma. Recall that $k \geq 3$, let $C_k$ denote the cycle graph on $k$ vertices. 

\begin{fact}\label{fact:cycle}
Fix any $n \geq 3$.
We have $\lambda_{\min}(C_{2n + 1}) = -2 + \Theta(1/n^2)$ and $\lambda_{\min}(C_{n} \oplus C_{n+1}) = -2$. 
\end{fact}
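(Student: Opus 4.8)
The plan is to diagonalize the cycle adjacency matrices explicitly, using the fact that $\AA_{C_m}$ is a circulant matrix: it is generated by the vector $(0,1,0,\dots,0,1) \in \R^m$, so its eigenvalues are exactly $\{\, 2\cos(2\pi j/m) \;:\; j = 0,1,\dots,m-1 \,\}$ (with the columns of the $m \times m$ discrete Fourier transform matrix as eigenvectors). In particular, since $\cos\theta \ge -1$ always, every eigenvalue of every cycle is at least $-2$, and $\lambda_{\min}(C_m) = 2\cos\!\big(2\pi \lfloor m/2\rfloor / m\big)$.

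For $m = 2n+1$ odd, the eigenvalue-minimizing index is $j = n$ (or $j=n+1$), for which $2\pi n/(2n+1) = \pi - \pi/(2n+1)$, hence $\lambda_{\min}(C_{2n+1}) = -2\cos\!\big(\pi/(2n+1)\big)$. A second-order Taylor expansion of cosine gives $\cos\!\big(\pi/(2n+1)\big) = 1 - \tfrac12\big(\pi/(2n+1)\big)^2 + O(n^{-4})$, so $\lambda_{\min}(C_{2n+1}) = -2 + \pi^2/(2n+1)^2 + O(n^{-4}) = -2 + \Theta(1/n^2)$, where $n \ge 3$ is used only to guarantee the quartic error term is dominated and that the constant in $\Theta(\cdot)$ is a genuine positive constant (namely $\pi^2 + o(1)$). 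For $C_n \oplus C_{n+1}$, the spectrum is the multiset union of the spectra of $C_n$ and $C_{n+1}$, so $\lambda_{\min}(C_n \oplus C_{n+1}) = \min\{\lambda_{\min}(C_n), \lambda_{\min}(C_{n+1})\}$. Exactly one of $n$ and $n+1$ is even, say $m \ge 4$; for that $m$ the index $j = m/2 \in \{0,\dots,m-1\}$ yields the eigenvalue $2\cos\pi = -2$, so $\lambda_{\min}(C_m) = -2$, and combined with the general lower bound $\lambda_{\min} \ge -2$ for cycles we conclude $\lambda_{\min}(C_n \oplus C_{n+1}) = -2$.

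There is no real obstacle here; the argument is entirely elementary once the circulant eigenvalue formula is invoked. The only points requiring a little care are the index arithmetic $2\pi n/(2n+1) = \pi - \pi/(2n+1)$ that isolates the smallest eigenvalue of the odd cycle, verifying that for the odd cycle the minimum is \emph{strictly} above $-2$ (so the two instances are genuinely separated), and checking that for the disjoint union the value $-2$ is actually \emph{attained} rather than merely a lower bound — both of which are immediate from the formula together with $n \ge 3$.
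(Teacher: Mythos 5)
Your proof is correct and follows essentially the same route as the paper: both invoke the cycle eigenvalue formula $\lambda_t(C_\ell) = 2\cos(2\pi t/\ell)$ and then Taylor-expand cosine near $\pi$ for the odd cycle and observe that $-2$ is attained at $t = m/2$ for the even cycle. The only difference is that you spell out the circulant diagonalization and the index arithmetic that the paper leaves implicit.
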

\begin{proof}
    The eigenvalues of the cycle $C_\ell$ are given by $2\cos(\frac{2 \pi t}{\ell})$ \cite{chung1996lectures} for $t=0,\dots,\ell-1$, which yields the result using the fact that $\cos(\pi (1 + \eps)) = 1 + \Theta(\eps^2)$ for small $\eps$
\end{proof}

\begin{proposition} \label{prop:cong}
Fix any $n = n_1 + n_2$. 
   For any $t \leq \min \{n_1,n_2\}/4$, we have $C_{n} \cong_{\mathcal{U}_{n}^t, S_{n}}
   C_{n_1} \oplus C_{n_2}$
\end{proposition}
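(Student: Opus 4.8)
Let me think about what needs to be shown. We have $n = n_1 + n_2$, and we want that for every graph $U \in \mathcal{U}_n^t$ (i.e., matching number $< t$ after removing self-loops), there's a bijection $\psi: S_n \to S_n$ with $(\mathbf{P}_\sigma \mathbf{A}_{C_n} \mathbf{P}_\sigma^\top)_U = (\mathbf{P}_{\psi(\sigma)} \mathbf{A}_{C_{n_1} \oplus C_{n_2}} \mathbf{P}_{\psi(\sigma)}^\top)_U$ for all $\sigma$.

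Note that $U$ consists of at most $2t - 2 < n_1/2 \le n/2$ non-loop edges spanning a vertex set (plus possibly self-loops, which carry no information since both adjacency matrices are zero on the diagonal). Since $\nu(U) < t \le \min\{n_1,n_2\}/4$, after deleting self-loops $U$ is a disjoint union of paths (and isolated vertices); in particular it contains no cycle and no vertex of degree $\ge 3$ if we restrict to a maximal matching... actually more carefully, any graph with matching number $< t$ that's "short" — let me reconsider. A graph with matching number $\nu$ can still have high-degree vertices (stars), but the total structure is limited.

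Here's the approach. Fix $U \in \mathcal{U}_n^t$. For a permutation $\sigma$, placing $C_n$ via $\sigma$ and restricting to $U$ is the same as asking: for each edge $\{a,b\} \in U$, is $\{\sigma^{-1}(a), \sigma^{-1}(b)\}$ an edge of $C_n$ (value $1$) or not (value $0$)? So the restricted matrix records which edges of $U$ get "pulled back" onto cycle-edges. I want a bijection $\psi$ on $S_n$ so that the pattern of pulled-back edges is identical when $C_n$ is replaced by $C_{n_1} \oplus C_{n_2}$. The cleanest route: show that for every possible "edge-pattern" $F \subseteq U$, the number of $\sigma \in S_n$ realizing exactly $F$ (meaning $\sigma^{-1}$ maps precisely the edges of $F$, and no others of $U$, onto $C_n$-edges) equals the corresponding count for $C_{n_1} \oplus C_{n_2}$; any two finite sets of equal size admit a bijection, and we can glue these per-pattern bijections into a global $\psi$. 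This reduces the whole statement to a purely combinatorial counting identity about embeddings of a disjoint-union-of-paths (together with a "forbidden" set) into $C_n$ versus $C_{n_1}\oplus C_{n_2}$.

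To prove that counting identity I would build an explicit bijection on the level of \emph{labeled subgraph arrangements}, following the "swap two paths, then split" idea sketched in Section 1.4.3 / to be detailed in Section 6.3 (\texttt{sec:CnLemma}). Concretely: an arrangement of $U$'s pattern-$F$ inside $C_n$ is a choice of where the path-fragments of $F$ sit on the $n$-cycle, plus where the remaining "non-edges" of $U$ land (which is automatic once the vertices are placed). Since $F$ uses at most $2t - 2 < n_1/2$ vertices and the fragments are short paths, there is a gap of length $\ge n_2$ somewhere; cutting the cycle in two gaps (one of length contributing to $n_1$, one to $n_2$) and redistributing the arcs between the two fragments turns $C_n$ into $C_{n_1} \oplus C_{n_2}$ while preserving which pairs in $U$ are adjacent and which are not — this is where the bound $t \le \min\{n_1,n_2\}/4$ is used, to guarantee enough room to perform the cut without disturbing any edge or forced non-edge of $U$. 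Running this construction in reverse gives the inverse map. Because the diagonal is forced to zero in both matrices, self-loops in $U$ cause no issue and can be ignored throughout.

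I expect the main obstacle to be the bijection in the last paragraph: making the "swap and split" operation genuinely well-defined and invertible, handling all the ways path-fragments of $F$ can be positioned relative to each other on the cycle (adjacent fragments, fragments wrapping around, the interaction with vertices of $U$ that are isolated in $F$ but whose non-adjacency to other $U$-vertices must be preserved). The case analysis is finicky but the room provided by $t \le \min\{n_1,n_2\}/4$ should make each case go through; the rest of the argument (reduction to per-pattern counting, gluing bijections, discarding self-loops) is routine. Since the excerpt defers the detailed bijection to Section~\ref{sec:CnLemma}, the proof of this proposition will mostly cite a lemma proved there establishing exactly this counting identity, and then assemble the global $\psi$ from it.
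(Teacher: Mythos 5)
Your approach is essentially the same as the paper's: you reduce to a per-pattern counting identity (the paper's Corollary~\ref{cor:CycleEquiv} via Lemma~\ref{lem:cyclesmain}, phrased there in terms of signed graphs $\Sigma$ rather than your edge-patterns $F\subseteq U$, but carrying identical information since the diagonal is zero in both matrices), and then glue the per-pattern bijections into a global $\psi$ exactly as the paper does with $\psi_i(\sigma)=\psi_i^{(\P_\sigma \AA \P_\sigma^\top)_{U_i}}(\sigma)$. One small note: the opening claim that $U$ has at most $2t-2$ non-loop edges is false (a star has matching number $1$ but arbitrarily many edges), but you catch this yourself and the correct argument --- which bounds the \emph{pulled-back} subgraph of the cycle, a disjoint union of paths with fewer than $2t$ edges, not $U$ itself --- never relies on it.
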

\begin{proof}
  We begin by fixing any set $U_i \in \mathcal{U}_n^t$. For any signed graph $\Sigma$ and graph $G$ on $n$ vertices such that the maximum set of vertex disjoint edges in $\Sigma$ is $t \leq \min \{n_1,n_2\}/4$, let $\mathcal{H}_{\Sigma}(G) = \{\sigma  \in S_n \; | \; \P_{\sigma} \AA_{\Sigma} \P_{\sigma}^T  = \AA_{H} , H \subset G \} $ and let $\mathcal{H}^{-1}_{\Sigma}(G) = \{\sigma  \in S_n \; | \; \AA_{\Sigma}  = \P_{\sigma} \AA_{H}\P_{\sigma}^T , H \subset G \} $.  By Corollary \ref{cor:CycleEquiv}, we have $|\mathcal{H}_{\Sigma}(\overline{C}_n)| =| \mathcal{H}_{\Sigma}(\overline{C_{n_1} \oplus C_{n_2} } )|$ whenever $|\Sigma|$ has no set of at least $\min \{n_1,n_2\}/4$ vertex disjoint edges. Since $S_n$ is a group and has unique inverses, we also have$|\mathcal{H}^{-1}_{\Sigma}(\overline{C}_n)| =|\mathcal{H}_{\Sigma}(\overline{C}_n)| = | \mathcal{H}_{\Sigma}(\overline{C_{n_1} \oplus C_{n_2} } )| = | \mathcal{H}_{\Sigma}^{-1}(\overline{C_{n_1} \oplus C_{n_2} } )|$.
  
   We now define a function $\psi_i:S_n \to S_n$ such that $(\P_\sigma \AA_{\overline{C}_n} \P_{\sigma}^T)_{U_i} =(\P_{\psi_i(\sigma)} \AA_{\overline{C_{n_1} \oplus c_{n_2}}} \P_{\psi_i(\sigma)}^T)_{U_i} $ for every $\sigma \in S_n$. Now fix any signed graph $\Sigma$ such that $\Sigma = (\P_\sigma \AA_{\overline{C}_n} \P_{\sigma}^T)_{U_i}$ for some $\sigma \in S_n$. Note that the set of $\pi \in S_n$ such that $\Sigma = (\P_\pi A_{\overline{C}_n} \P_{\pi}^T)_{U_i}$ is precisely $\mathcal{H}^{-1}_{\Sigma}(\overline{C}_n)$. Similarly, the set $\pi \in S_n$ such that $\Sigma = (\P_\pi A_{\overline{C_{n_1} \oplus C_{n_2} }} \P_{\pi}^T)_{U_i}$ is precisely $\mathcal{H}^{-1}_{\Sigma}(\overline{C_{n_1} \oplus C_{n_2} })$. Also, by construction of $\mathcal{U}_{n}^t$, we know that the maximum set of vertex disjoint edges in $U_i$, and therefore in $\Sigma$ is $t \leq \min \{n_1,n_2\}/4$, So by the above, we know there is a bijection $\psi_{i}^{\Sigma} : \mathcal{H}^{-1}_{\Sigma}(\overline{C}_n) \to \mathcal{H}^{-1}_{\Sigma}(\overline{C_{n_1} \oplus C_{n_2} })$ for every such realizable matrix $\Sigma$. Taking $\psi_i(\sigma) = \psi_i^{ (\P_\sigma \AA_{\overline{C}_n} \P_{\sigma}^T)_{U_i}}(\sigma)$ satisfies the desired properties for $\overline{C_{n}} \cong_{\mathcal{U}_{n}^t, S_{n}}
   \overline{C_{n_1} \oplus C_{n_2}}$. Notice that this implies that $C_{n} \cong_{\mathcal{U}_{n}^t, S_{n}}
   C_{n_1} \oplus C_{n_2}$, since $C_{n}$ and $C_{n_1} \oplus C_{n_2}$ are both obtained from obtained  $\overline{C_{n}}$ and $ \overline{C_{n_1} \oplus C_{n_2}}$ by changing every entry with the value $-1$ to $0$. 
   
\end{proof}

\begin{proposition}\label{prop:LBfinal}
Fix any $t > 1$, and set either $d_0 = 4t$, and $d = 2d_0+1$. Set $\BB = 1/2(\AA_{C_{d}} + \lambda \mathbb{I}_{d})$ and $\DD = 1/2(\AA_{C_{d_0} \oplus C_{d_0 + 1}} + \lambda \mathbb{I}_{d})$, where $\lambda = - 2 \cos( \frac{2 \pi d_0 }{2d_0+1})$. Then we have that $\BB$ is PSD, $\lambda_{\min}(\DD) < -\delta$ where $\delta = \Theta(1/d^2)$, $\|\BB\|_\infty , \|\DD\|_\infty \leq 1$, and  $\BB\cong_{\mathcal{U}^t_{d} , S_{d}} \DD$.
\end{proposition}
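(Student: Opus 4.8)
The plan is to check the four assertions in order; the first three are short spectral/entrywise computations and the fourth is an immediate consequence of Proposition~\ref{prop:cong} once we observe that subgraph equivalence is unaffected by scaling and by adding a fixed multiple of the identity.

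First I would record the spectra. By the standard formula (and Fact~\ref{fact:cycle}), the eigenvalues of $\AA_{C_\ell}$ are $2\cos(2\pi j/\ell)$ for $j=0,\dots,\ell-1$; since $d=2d_0+1$ is odd, $\lambda_{\min}(\AA_{C_d}) = 2\cos\!\big(\tfrac{2\pi d_0}{2d_0+1}\big) = -2\cos\!\big(\tfrac{\pi}{2d_0+1}\big) = -\lambda$, so $\lambda_{\min}(\BB) = \tfrac12(\lambda_{\min}(\AA_{C_d})+\lambda)=0$ and $\BB\succeq 0$. For $\DD$, since $d_0=4t$ is even, $\AA_{C_{d_0}}$ has $-2$ in its spectrum, hence $\lambda_{\min}(\AA_{C_{d_0}\oplus C_{d_0+1}})=-2$ and $\lambda_{\min}(\DD)=\tfrac12(\lambda-2)$. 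Using $\lambda = 2\cos\!\big(\tfrac{\pi}{2d_0+1}\big) = 2-\Theta(1/d_0^2) = 2 - \Theta(1/d^2)$ (equivalently, $\lambda=-\lambda_{\min}(\AA_{C_d})$ together with Fact~\ref{fact:cycle}) gives $\lambda_{\min}(\DD) = -\Theta(1/d^2) < -\delta$ for an appropriate $\delta=\Theta(1/d^2)$.

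Next, for the entry bound, the adjacency matrices $\AA_{C_d}$ and $\AA_{C_{d_0}\oplus C_{d_0+1}}$ have entries in $\{0,1\}$ and vanishing diagonal, so the off-diagonal entries of $\BB$ and $\DD$ lie in $\{0,\tfrac12\}$ and every diagonal entry equals $\lambda/2$; since $0<\lambda=2\cos\!\big(\tfrac{\pi}{2d_0+1}\big)<2$ (using $d_0\ge 1$), all entries lie in $[0,1]$, so $\|\BB\|_\infty,\|\DD\|_\infty\le 1$.

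Finally, for subgraph equivalence, I would apply Proposition~\ref{prop:cong} with $n=d$, $n_1=d_0$, $n_2=d_0+1$: the hypothesis $t\le\min\{n_1,n_2\}/4 = d_0/4 = t$ holds (with equality), so $\AA_{C_d}\cong_{\mathcal{U}^t_d,S_d}\AA_{C_{d_0}\oplus C_{d_0+1}}$, i.e. for each $U_i\in\mathcal{U}^t_d$ there is a bijection $\psi_i:S_d\to S_d$ with $(\P_\sigma\AA_{C_d}\P_\sigma^\top)_{U_i} = (\P_{\psi_i(\sigma)}\AA_{C_{d_0}\oplus C_{d_0+1}}\P_{\psi_i(\sigma)}^\top)_{U_i}$ for all $\sigma$. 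Now, restriction to $U_i$ is linear, $\P_\sigma\mathbb{I}\P_\sigma^\top=\mathbb{I}$ for every $\sigma$, and $(\mathbb{I})_{U_i}$ is a fixed matrix independent of $\sigma$, so
\[
(\P_\sigma\BB\P_\sigma^\top)_{U_i} = \tfrac12\big((\P_\sigma\AA_{C_d}\P_\sigma^\top)_{U_i} + (\lambda\mathbb{I})_{U_i}\big),
\]
and the analogous identity holds for $\DD$ with $\psi_i(\sigma)$ in place of $\sigma$; combining with the displayed equality yields $(\P_\sigma\BB\P_\sigma^\top)_{U_i} = (\P_{\psi_i(\sigma)}\DD\P_{\psi_i(\sigma)}^\top)_{U_i}$, so $\BB\cong_{\mathcal{U}^t_d,S_d}\DD$ with the same bijections $\psi_i$. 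The one point I would be careful about — and the only genuinely non-mechanical step — is precisely this last manipulation: it works because both $\AA_{C_d}$ and $\AA_{C_{d_0}\oplus C_{d_0+1}}$ have identically zero diagonals, so after adding $\lambda\mathbb{I}$ and restricting to a $U_i$ that may contain self-loops, the two sides still receive exactly the same diagonal contribution $(\lambda\mathbb{I})_{U_i}$. Everything else is bookkeeping, so I do not expect a serious obstacle; the real work was already done in Proposition~\ref{prop:cong} and its cycle-combinatorics input.
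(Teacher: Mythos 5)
Your proposal is correct and takes essentially the same route as the paper's proof: you compute the extremal eigenvalues of $\AA_{C_d}$ and $\AA_{C_{d_0}\oplus C_{d_0+1}}$ via the cosine formula (Fact~\ref{fact:cycle}), read off the entry bounds directly, and reduce the subgraph-equivalence claim to Proposition~\ref{prop:cong} plus the observation that adding a constant multiple of the identity is permutation-invariant and contributes the same diagonal to both matrices (since both adjacency matrices have identically zero diagonal). Your writeup is a bit more explicit on the spectral arithmetic (e.g.\ verifying $\lambda_{\min}(\BB)=0$ exactly and $\lambda=-\lambda_{\min}(\AA_{C_d})$), but the argument and its dependency structure are the same as in the paper.
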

\begin{proof}
    By Proposition \ref{prop:cong}, we know $C_{2d_0+1} \cong_{\mathcal{U}_{2d_0+1}^t, S_{2d_0+1}}
   C_{d_0} \oplus C_{d_0+1}$, so to show subgraph equivalence suffices to show that adding $\lambda \mathbb{I}_{2d_0+1}$ to both $C_{2d_0+1}$ and $ C_{d_0} \oplus C_{d_0+1}$ does not effect the fact that they are $\mathcal{U}^t_{d_0} , S_{d_0}$ subgraph-equivalent. But note that this fact is clear, since we have only changed the diagonal which is still equal to $\lambda$ everywhere for both $\BB,\DD$. Namely, for any $\sigma,\pi \in S_{2d_0+1}$ and $i \in [2d_0+1]$ we have $\left(\P_{\sigma} \BB \P_{\sigma}^T \right)_{(i,i)}    = \left(\P_{\pi} \DD  \P_{\pi}^T \right)_{(i,i)}  = \lambda$, thus the subgraph equivalence between $C_{2d_0+1}$ and $ C_{d_0} \oplus C_{d_0+1}$ still holds using the same functions $\psi_i$ as required for $C_{2d_0+1} \cong_{\mathcal{U}_{2d_0+1}^t, S_{2d_0+1}}
   C_{d_0} \oplus C_{d_0+1}$. Note that the $L_\infty$ bound on the entries follows from the fact that adjacency matrices are bounded by $1$ and zero on the diagonal, $\lambda \leq 2$, and we scale each matrix down by $1/2$. Next, by Fact \ref{fact:cycle}, we know that $\BB$ is PSD and $\lambda_{\min}(\DD) =- \Theta(\frac{1}{d^2})$, which holds still after scaling by $1/2$, and completes the proof. 
\end{proof}

  We now state our main theorem, which is direct result of instantiating the general lower bound of  Lemma \ref{lem:ifBDthenLB} with the matrices as described above in Proposition \ref{prop:LBfinal}.

\begin{theorem}\label{thm:lbmain}
Any non-adaptive sampling algorithm which solves with probability at least $2/3$ the PSD testing problem with $\eps$-$\ell_2^2$ gap must query at least $\wt{\Omega}( \frac{1}{\eps^2})$ entries of the input matrix. 
\end{theorem}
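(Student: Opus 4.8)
The plan is to derive \Cref{thm:lbmain} directly from the general reduction of \Cref{lem:ifBDthenLB}, instantiated with the cycle-graph pair of \Cref{prop:LBfinal} tensored against the all-ones matrix; essentially all the work has been done, and what remains is a careful choice of parameters plus a check that the two hard distributions respect the bounded-entry promise and actually separate ``PSD'' from ``$\eps$-far in $\ell_2^2$''. Concretely, given the target gap $\eps\in(0,1)$, I would set $k$ to be the largest power of two with $k\log^6 k \le c_0/\eps$ for a small absolute constant $c_0$, so $k=\wt{\Theta}(1/\eps)$; if this forces $k=\Omega(n)$ (equivalently, the internal block factor $m$ below is smaller than $1$) then $\wt{\Omega}(1/\eps^2)=\Omega(n^2)$ and the statement is vacuous, so we may assume $m\ge 1$. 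Put $t=\log k$, $d_0=4t$, $d=2d_0+1=\Theta(\log(1/\eps))$, and $m=n/(dk)$ (as in the remark preceding \Cref{lem:ifBDthenLB} we may assume $dk\mid n$, otherwise embed a slightly smaller instance). Let $\BB,\DD\in\R^{d\times d}$ be the matrices from \Cref{prop:LBfinal} for this $t$: $\BB$ is PSD, $\lambda_{\min}(\DD)\le-\delta$ with $\delta=\Theta(1/d^2)$, $\|\BB\|_\infty,\|\DD\|_\infty\le 1$, and $\BB\cong_{\mathcal{U}^t_{d},S_d}\DD$. Finally take $\ZZ=\mathbf{1}_{m\times m}$, the $m\times m$ all-ones matrix, and feed the triple $(\BB,\DD,\ZZ)$ into \Cref{lem:ifBDthenLB}, which shows no non-adaptive algorithm can distinguish the resulting $\mathcal{D}_1$ from $\mathcal{D}_2$ with probability $2/3$ using fewer than $\Omega(k^2)=\wt{\Omega}(1/\eps^2)$ queries.

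Next I would verify that a PSD tester with $\eps$-$\ell_2^2$ gap would distinguish $\mathcal{D}_1$ from $\mathcal{D}_2$. Boundedness holds in both cases since the entries of $\BB\otimes\mathbf{1}$ and $\DD\otimes\mathbf{1}$ are exactly those of $\BB,\DD$, and permutation conjugation and block-diagonal assembly preserve $\ell_\infty$ bounds. For $\mathcal{D}_1$: the eigenvalues of $\BB\otimes\mathbf{1}$ are the products $\lambda_i(\BB)\lambda_j(\mathbf{1})$, and since $\mathbf{1}$ has eigenvalues $m\ge 0$ and $0$ while $\BB$ is PSD, $\BB\otimes\mathbf{1}$ is PSD; conjugating by a permutation and stacking $k$ such blocks on the diagonal keeps the matrix PSD, so every $\AA\sim\mathcal{D}_1$ is PSD. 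For $\mathcal{D}_2$: $\DD\otimes\mathbf{1}$ has eigenvalues $\{m\lambda_i(\DD)\}\cup\{0\}$ (the latter with high multiplicity), so its smallest eigenvalue is $m\lambda_{\min}(\DD)\le-m\delta$ and its negative mass is at least $m^2\delta^2$; since $\AA\sim\mathcal{D}_2$ is block diagonal with $k$ permuted copies of $\DD\otimes\mathbf{1}$, its negative mass $\sum_{i:\lambda_i<0}\lambda_i^2$ is at least $k\,m^2\delta^2=\Theta\!\big(n^2/(k\,d^6)\big)\ge\eps n^2$ by the choice of $k$. Hence every $\AA\sim\mathcal{D}_2$ is $\eps$-far from PSD in $\ell_2^2$. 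Combining these facts with \Cref{lem:ifBDthenLB} yields the claimed $\wt{\Omega}(1/\eps^2)$ lower bound.

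The genuinely nontrivial ingredients are all already supplied by the earlier results: \Cref{prop:cong} (via the cycle lemma of Section~\ref{sec:CnLemma}), which gives $C_{d_0}\oplus C_{d_0+1}\cong_{\mathcal{U}^t_{d},S_d}C_{2d_0+1}$ for $t\le d_0/4$, \Cref{len:kron}, which propagates subgraph equivalence through the tensor product with $\mathbf{1}$, and \Cref{fact:cycle}, which provides the $\Theta(1/d^2)$ spectral gap between the single odd cycle and the union of two cycles. So within the present proof the only real subtlety is the parameter bookkeeping: one is forced to take $d=\Theta(\log k)$ in order to fool all query graphs of matching number up to $\log k$, which together with the $1/\operatorname{poly}\log$ loss in $\delta$ costs only polylogarithmic factors, and one must confirm that tensoring with $\mathbf{1}$ amplifies the bare cycle gap to the required $-\wt{\Omega}(n/k)$ scale while keeping $\mathcal{D}_1$ PSD — so that, after re-parametrizing the gap, the bound remains $\wt{\Omega}(1/\eps^2)$ rather than something weaker. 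That re-parametrization (absorbing the $\log^6(1/\eps)$ factor into $\eps$) is the step I would watch most carefully, but it is routine.
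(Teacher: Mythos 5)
Your proposal is correct and follows essentially the same route as the paper's proof of Theorem~\ref{thm:lbmain}: set $k = \wt{\Theta}(1/\eps)$, $t=\log k$, $d=\Theta(\log k)$, feed the cycle-pair matrices of Proposition~\ref{prop:LBfinal} tensored with $\ZZ=\mathbf{1}^{m\times m}$ into Lemma~\ref{lem:ifBDthenLB}, and verify that the $\wt{\DD}$-blocks contribute total negative mass $k\,m^2\delta^2=\Theta(n^2/(d^6 k))\ge\eps n^2$. The parameter bookkeeping (absorbing the $\log^6(1/\eps)$ factor from $d^6$ into the definition of $k$) is exactly what the paper does, and your edge-case handling is consistent with the paper's remark that for $\eps$ above a constant the bound is trivial.
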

\begin{proof}

Set $k =  C \frac{1}{\eps \log^6(1/\eps)}$ for a small enough constant $C>0$. Also set $t = \log k$, $d_0 = 4t$. $d = 2d_0 + 1$, and as before set $m = n/(dk)$.
We first apply Lemma \ref{lem:ifBDthenLB} with $\ZZ = \mathbf{1}^{m \times m}$, and the matrices $\BB = 1/2(\AA_{C_{d}} + \lambda \mathbb{I}_{d})$ and $\DD = 1/2(\AA_{C_{d_0} \oplus C_{d_0+1}} + \lambda \mathbb{I}_{d})$ from Proposition \ref{prop:LBfinal}, where $\lambda = - 2 \cos( \frac{2 \pi d_0 }{2d_0+1})$. Then by Lemma \ref{lem:ifBDthenLB}, using that $\BB\cong_{\mathcal{U}^t_{2d_0+1} , S_{2d_0+1}} \DD$ via Proposition \ref{prop:LBfinal}, it follows that any non-adaptive sampling algorithm that distinguishes $\mathcal{D}_1$ from $\mathcal{D}_2$ requires $\Omega(k^2)$ samples. 

We now demonstrate every instance of $\mathcal{D}_1$ and $\mathcal{D}_2$ satisfy the desired $\ell_2^2$-gap as defined in Problem \ref{prob:l2}. First, since the eigenvalues of the Kronecker product $\Y \otimes \ZZ$ of any matrices $\Y,\ZZ$ are all pairwise eigenvalues of the matrices $\Y,\ZZ$, it follows that $\wt{\BB}$ is PSD as $\BB$ is PSD by Proposition \ref{prop:LBfinal} and and $\mathbf{1}^{m \times m} = \mathbf{1}^m (\mathbf{1}^m)^T$ is PSD. By the same fact and Proposition \ref{prop:LBfinal}, since $\lambda_1(\mathbf{1}^{m \times m}) = m$, we have that $\lambda_{\min}(\wt{\DD}) = -\Theta(m/(d^2))  = -\Theta(\frac{n}{d^3 k})$. Now note that if $\AA_1 \sim \mathcal{D}_1$, then $\AA_1$ is a block-diagonal matrix where each block is PSD, thus $\AA_1$ is PSD. Note also that if $\AA_2 \sim \mathcal{D}_2$, then $\AA_2$ is a block-diagonal matrix where each block is has an eigenvalue smaller than $-C'\frac{n}{d^3 k}$ for some constant $C' > 0$. Since the eigenvalues of a block diagonal matrix are the union of the eigenvalues of the blocks, it follows that

\begin{equation}
    \begin{split}
        \sum_{i: \lambda_i(\AA_2) < 0} (\lambda_i(\AA_2))^2 &= \sum_{i=1}^k \lambda_{\min}(\wt{\DD})^2 \\
        &\geq k (C' \frac{n}{d^3 k})^2 \\
        &= \left( \frac{(C')^2 \log^6 (1/\eps) }{C (8\log(\frac{C}{\eps \log^6(1/\eps)}) +1)^6 }\right) \cdot \eps n^2 \\ 
        &\geq \eps n^2 \\
    \end{split}
\end{equation} 
Where the last inequality follows from setting the constant $C = \frac{(C')^2}{100^6}$ so that 
\begin{equation}
    \begin{split}
     \frac{(C')^2 \log^6 (1/\eps) }{C\left(8 \log(\frac{C}{\eps \log^6(1/\eps)}) +1 \right)^6 } &=     \frac{100^6 \log^6 (1/\eps) }{\left(8 \log(\frac{(C')^2}{ 100^6\eps \log^6(1/\eps)}) +1 \right)^6 }  \\ 
    &\geq  \frac{100^6 \log^6 (1/\eps) }{\left(16 \log(\frac{1}{\eps })\right)^6 }  \\ 
        & >1  \\ 
    \end{split}
\end{equation}
and using that the first inequality above holds whenever $\left(\frac{1}{\eps }\right)^{16} \leq  \left(\frac{ (C')^2}{3 \cdot 100^6\eps \log^6(1/\eps)}\right)^{8}$, which is true so long as $\eps < C_0$ for some constant $C_0$. Note that if $\eps > C_0$, then a lower bound of $\Omega(1) = \Omega(1/\eps^2)$ follows from the one heavy eigenvalue $\ell_\infty$ gap lower bound. Thus $\AA_1,\AA_2$ satisfies the $\eps$-$L_2$ gap property as needed, which completes the proof. 

    \end{proof}

 \subsubsection{\texorpdfstring{$C_{n_1+ n_2}$}{Cycle n1 plus n2} is Subgraph Equivalent to \texorpdfstring{$C_{n_1} \oplus C_{n_2}$}{Cycle n1 plus Cycle n2} }\label{sec:CnLemma}
 In this section, we demonstrate the subgraph equivalence of the the cycle $C_{n_1+ n_2}$ and union of cycles $C_{n_1} \oplus C_{n_2}$. In order to refer to edges which are not in the cycles $C_{n_1+ n_2}$ and $C_{n_1} \oplus C_{n_2}$, it will actually be convenient to show that $\overline{C_{n_1+ n_2}}$ is subgraph equivelant to  $\overline{C_{n_1} \oplus C_{n_2}}$, where recall that $\overline{G}$ for a simple graph $G$ is the result of adding negative edges to $G$ for each edge $e = (u,v) \notin E(G)$. Equivalently, the adjacency matrix of $\overline{G}$ is the result of replacing the $0$'s on the off-diagonal of $A_{G}$ with $-1$'s. Notice that, by the definition of subgraph equivalence, it does not matter whether these values are set to $0$ or to $-1$. 
 
 \paragraph{Overview of the bijection. }
 We now intuitively describe the bijection of Lemma \ref{lem:cyclesmain}, which demonstrates that for any singed graph $\Sigma$ such that any set of pairwise vertex disjoint edges $\{e_1,\dots,e_k\}$ (i.e. any matching) in $\Sigma$ has size at most $k \leq \min\{n_1,n_2\}/4$, the number of subgraphs of $\overline{C_{n_1 + n_2}}$ isomorphic to $\Sigma$ is the same as the number of subgraphs of $\overline{C_{n_1} \oplus C_{n_2}}$ isomorphic to $\Sigma$. So let $H$ be any subgraph of $\overline{C_{n_1 + n_2}}$ that is isomorphic to $\Sigma$. For simplicity, let $n_1 = n_2$, and suppose $H$ contains only positive edges, so that $H$ is actually a subgraph of the unsigned cycle $C_{2n}$. Since $\Sigma$ has at most $n/4$ edges, $\Sigma \cong H$ must be a collection of disjoint paths. So the problem can be described as an arrangement problem: for each arrangement $H$ of $\Sigma$ in $C_{2n}$, map it to a unique arrangement $H'$ of $\Sigma$ in $C_{n} \oplus C_n$. 
\begin{figure}
\centering
\includegraphics[scale = 1.1 ]{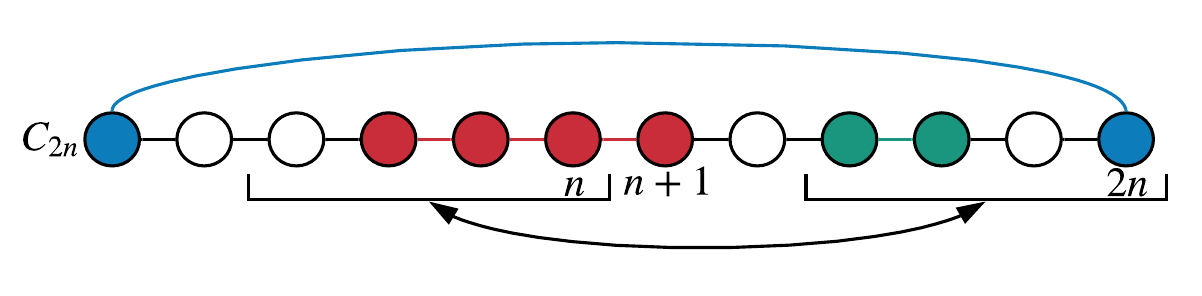}
	\includegraphics[scale = 1.1 ]{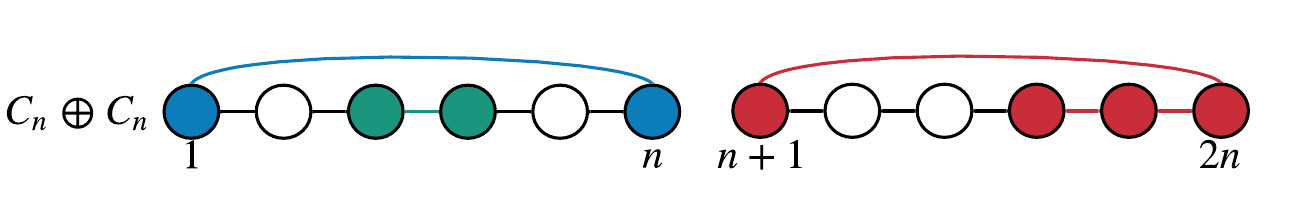}
	\caption{An illustration of the bijection in Lemma \ref{lem:cyclesmain}, when $H$ only contains positive edges. The three colored paths represent the graph $H$, which must be mapped from $C_{2n}$ to $C_n \oplus C_n$. Since the paths intersect the edges $(2n,1)$ and $(n,n+1)$ to be cut, we must first swap the last four vertices $\{n-4,\dots,n\}$ and $\{2n - 4, \dots, 2n\}$ of $C_{2n}$ before the two splitting points $n,2n$, and then cut the cycle. Note that four is the smallest number of vertices which can be swapped, without swapping in the middle of a path of $H$. 
	}
	\label{fig:graph1}
\end{figure}
 
We would like to construct such a mapping by ``splitting'' the big cycle $C_{2n}$ into two smaller cycles, see Figure \ref{fig:graph1} for an example. Specifically, we could split the cycle $C_{2n}$ down the middle, cutting the edges $(n,n+1)$, and $(n,1)$, and instead connecting the first vertex to the $n$-th and the $n+1$-st to the $2n$-th. Now if $H$ does not contain either of the cut edges, then the resulting collection of paths will be an isomorphic copy of $H$ living inside of $C_{n} \oplus C_n$. However, if $H$ does contain such an edge, we cannot cut the cycle here, as the resulting paths inside of $C_n \oplus C_n$ would not be isomorphic. For example see Figure \ref{fig:graph1}, where if we just cut the edge between $(n,n+1)$ and rerouted it to $(n,1)$, then the red cycle with $4$ vertices would be disconnected into a cycle of length three, and an isolated vertex. To handle this, before cutting and rerouting the edges $(n,n+1)$ and $(2n,1)$, we first \textit{swap} the last $i$ vertices before the cutting points, for some $i$. Namely, we swap the vertices $(n-i,n-i+1,\dots,n)$ with $(2n - i, 2n - i + 1, \dots ,2n)$ and \textit{then} split the graph at the edges $(n,n+1)$ and $(2n,1)$. For the resulting graphs to be isomorphic, we cannot swap in the middle of a path, thus the value $i$ is chosen as the \textit{smallest} $i \geq 0$ such that the edges $(n-i-1,n-i)$ and $(2n-i-1,2n - i)$ do not exist in any path of $H$. Moreover, such an $i$ must exist, so long as $H$ has fewer than $\min\{n_1,n_2\}$ edges (the stronger bound of $\min\{n_1,n_2\}/4$ is only needed for the more general case, where negative edges are included).
 
One can show that this mapping is actually an involution; namely, given the collection of paths $H'$ in $C_{n} \oplus C_{n}$ which are obtained from applying the function on $H$, one can similarly find the smallest $i \geq 0$ such that the edges $(n-i-1,n-i)$ and $(2n-i-1,2n - i)$ are not in $H'$, which must in fact be the same value of $i$ used when mapping $H$! Then, by swapping the last $i$ vertices before $n$ and $2n$, and then reconnecting $C_n \oplus C_n$ into a single cycle, one obtains the original graph $H$. From this, demonstrating bijectivity becomes relatively straightforward.
Extending this to the case where $H$ is allowed to contain negative edges of $\overline{C_{2n}}$ follows similar steps, albiet with a stronger condition on the choice of $i$. 
The full proof is now presented below.


 \begin{lemma}\label{lem:cyclesmain}
 Fix any $n = n_1 + n_2$. Fix any simple graph $|\Sigma|$, such that any set of vertex disjoint edges $\{e_1,\dots,e_k\}$ in $|\Sigma|$ has size at most $k \leq \min\{n_1,n_2\}/4$, and let $\Sigma = (|\Sigma|,\sigma)$ be any signing of $|\Sigma|$. Let $\mathcal{F}_{\Sigma}(\overline{C_n})$ denote the set of subgraphs of $\overline{C_n}$ isomorphic to $|\Sigma|$, and similarly define $\mathcal{F}_{\Sigma}(\overline{C_{n_1} \oplus C_{n_2}})$. Then we have 
 \[			\left|\mathcal{F}_{\Sigma}(\overline{C_n})\right| = \left|\mathcal{F}_{\Sigma}(\overline{C_{n_1} \oplus C_{n_2}})\right|	\] 	
 \end{lemma}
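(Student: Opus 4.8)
The plan is to exhibit an explicit sign‑preserving bijection $\Phi\colon \mathcal F_{\Sigma}(\overline{C_n})\to\mathcal F_{\Sigma}(\overline{C_{n_1}\oplus C_{n_2}})$. Place both adjacency matrices on the vertex set $[n]$, with $C_n$ the $n$‑cycle $1,2,\dots,n$, and with $C_{n_1}\oplus C_{n_2}$ obtained from it by deleting the two ``cut'' edges $e_{n_1}=\{n_1,n_1{+}1\}$ and $e_n=\{n,1\}$ and adding the two ``closing'' edges $g_1=\{n_1,1\}$, $g_2=\{n,n_1{+}1\}$; write $A_1=\{1,\dots,n_1\}$, $A_2=\{n_1{+}1,\dots,n\}$. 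For $0\le i\le \min\{n_1,n_2\}-1$ let $\tau_i\in S_n$ be the involution swapping $n_1{-}t\leftrightarrow n{-}t$ for $t=0,1,\dots,i-1$ and fixing everything else (so $\tau_0=\mathrm{id}$, and $\tau_i$ exchanges the $i$ vertices of $A_1$ just before the cut $e_{n_1}$ with the $i$ vertices of $A_2$ just before the cut $e_n$). A direct edge‑chase shows that for each such $i$,
\begin{equation*}
 \tau_i\!\left(E(C_n)\right)\,\triangle\,E(C_{n_1}\oplus C_{n_2}) \;=\; P_i \;:=\; \bigl\{\, e_{n_1-i},\ e_{n-i},\ \{n_1{-}i,\,n{-}i{+}1\},\ \{n{-}i,\,n_1{-}i{+}1\}\,\bigr\},
\end{equation*}
and likewise $\tau_i\!\left(E(C_{n_1}\oplus C_{n_2})\right)\triangle E(C_n)=P_i$ (intuitively, $\tau_i$ transports the cyclic adjacency structure of $C_n$ onto that of $C_{n_1}\oplus C_{n_2}$ with only these four vertex‑pairs behaving anomalously; for $i=0$ one gets $P_0=\{e_{n_1},e_n,g_1,g_2\}$). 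Hence for any signed subgraph $H\subseteq\overline{C_n}$, the relabelled graph $\tau_i(H)$ is a sign‑preserving subgraph of $\overline{C_{n_1}\oplus C_{n_2}}$ \emph{iff} $E(H)\cap P_i=\varnothing$, a condition depending only on the unsigned edge set $E(H)$. Call such an $i$ \emph{admissible for $H$}; when one exists, let $i(H)$ be the least, and set $\Phi(H):=\tau_{i(H)}(H)$. Since $\tau_{i(H)}$ is a vertex relabelling, $\Phi(H)$ carries the same abstract signed graph as $H$, so $\Phi(H)\cong\Sigma$ and $\Phi$ maps into $\mathcal F_\Sigma(\overline{C_{n_1}\oplus C_{n_2}})$.

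Bijectivity follows from a symmetry argument. Define the reverse map $\Psi$ identically, with the two graphs interchanged; by the displayed identity the admissibility condition is again ``$E(H')\cap P_i=\varnothing$'', so it is literally the same family $\{P_i\}$. The crucial point is that $i(\Phi(H))=i(H)$, which reduces to the observation that the fixed permutation $\tau_{i(H)}$ maps \emph{each} set $P_i$ onto itself: one checks by cases (according to whether $i>i(H)$, $i=i(H)$, or $i<i(H)$) that $\tau_{i(H)}$ either fixes the four pairs of $P_i$ pointwise, or interchanges $e_{n_1-i}\leftrightarrow\{n_1{-}i,n{-}i{+}1\}$ and $e_{n-i}\leftrightarrow\{n{-}i,n_1{-}i{+}1\}$ (for $i=i(H)$), or interchanges $e_{n_1-i}\leftrightarrow e_{n-i}$ together with the two chords (for $i<i(H)$). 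Consequently $E(\Phi(H))\cap P_i=\tau_{i(H)}(E(H))\cap P_i=E(H)\cap\tau_{i(H)}(P_i)=E(H)\cap P_i$, so $i$ is admissible for $\Phi(H)$ exactly when it is admissible for $H$; the least admissible indices therefore agree and $\Psi(\Phi(H))=\tau_{i(H)}(\tau_{i(H)}(H))=H$. Symmetrically $\Phi(\Psi(H'))=H'$, so $\Phi$ is a bijection and $|\mathcal F_\Sigma(\overline{C_n})|=|\mathcal F_\Sigma(\overline{C_{n_1}\oplus C_{n_2}})|$.

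The one remaining obligation — and the main obstacle — is to show that an admissible $i$ always exists in the range $[0,\min\{n_1,n_2\}-1]$, which is exactly where the hypothesis $\nu(\Sigma)\le\min\{n_1,n_2\}/4$ is used. Bound the number of \emph{blocked} parameters (those $i$ with $E(H)\cap P_i\ne\varnothing$) by separating $P_i$ into its two cycle edges and its two chords. For the cycle edges: $E^+(H)$, being a subgraph of $C_n$ isomorphic to $\Sigma^+$, is a disjoint union of paths, hence $|E^+(H)|\le 2\nu(\Sigma^+)\le 2\nu(\Sigma)$, and since the two windows $\{n_1,n_1{-}1,\dots\}$ and $\{n,n{-}1,\dots\}$ near the cuts are disjoint, at most $|E^+(H)|$ indices are blocked because of $e_{n_1-i}$ or $e_{n-i}$. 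For the chords: for distinct $i$ the edges $\{n_1{-}i,n{-}i{+}1\}$ (resp.\ $\{n{-}i,n_1{-}i{+}1\}$) are pairwise vertex‑disjoint, hence form matchings in $H$, so each type blocks at most $\nu(\Sigma)$ indices. A careful accounting along these lines — exploiting $\nu(\Sigma)\le\min\{n_1,n_2\}/4$ together with the interaction between the cycle‑edge and chord windows to avoid overcounting — shows strictly fewer than $\min\{n_1,n_2\}$ parameters are blocked, so an admissible $i$ exists, completing the proof. The delicate bookkeeping in this last step (which is also what pins down the constant $1/4$) is the part I expect to require the most care; everything else is routine once the identity $\tau_i(E(C_n))\triangle E(C_{n_1}\oplus C_{n_2})=P_i$ is verified.
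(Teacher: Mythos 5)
Your construction is essentially identical to the paper's: the same swap involutions $\tau_i$ (the paper's $\sigma_i$), the same stopping rule $i(H)$ defined as the least $i$ with $E(H)\cap P_i=\varnothing$ (the paper's four-entry condition on $c_i=(v_{n_1-i},v_{n_1-i+1},v_{n-i},v_{n-i+1})$ is exactly your $P_i$), and the same involution argument via the three-case check that $\tau_{i(H)}$ maps each $P_i$ onto itself. All of this is correct and matches the paper step for step.

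The genuine gap is precisely where you flag it: your proof that an admissible $i$ exists is not completed, and the accounting you sketch does not suffice. Your three bounds give at most $2\nu+\nu+\nu=4\nu\le\min\{n_1,n_2\}$ blocked indices, while the range has $\min\{n_1,n_2\}$ indices, so the naive count allows every index to be blocked when $\nu=\min\{n_1,n_2\}/4$; the required strict inequality is asserted but never established, and the ``interaction between cycle-edge and chord windows'' you allude to is exactly the hard part. The paper sidesteps this counting entirely by observing that if no admissible $i$ exists, then \emph{every} window $c_i$ for $i\in[0,\min\{n_1,n_2\}-1]$ contains an edge of $H$; restricting to even $i$, the windows $c_0,c_2,c_4,\dots$ are pairwise vertex-disjoint, so picking one edge per window yields a matching of size roughly $\min\{n_1,n_2\}/2>\min\{n_1,n_2\}/4\ge\nu(\Sigma)$, a contradiction. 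This ``alternate-window'' extraction is the one idea your write-up is missing, and it is what actually pins down the constant $1/4$; the rest of your argument would go through verbatim once this step is supplied.
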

\begin{proof}
		Order the vertices's of the cycle $C_n = \{1,2,\dots,n\}$, which we will describe as the same vertex set for $C_{n_1} \oplus C_{n_2}$, where $\{1,\dots,n_1\}$ are the vertices of the first cycle $C_{n_1}$ and $\{n_1 + 1, \dots, n\}$ are the vertices of $C_{n_2}$. 
	We derive a bijection $\varphi:  \mathcal{F}_{\Sigma}(\overline{C_n})\to  \mathcal{F}_{\Sigma}(\overline{C_{n_1} \oplus C_{n_2}})$. 
		We describe a point $\X \in \mathcal{F}_H(\overline{C_n}) \cup  \mathcal{F}_{\Sigma}(\overline{C_{n_1} \oplus C_{n_2}})$ by its (signed) adjacency matrix $\X \in \{-1,0,1\}^{n \times n}$. Namely, $\X \in \{-1,0,1\}^{n \times n}$ is any matrix obtained by setting a subset of the entries of $\AA_{\overline{C_{n_1 + n_2}}}$ or $\AA_{\overline{C_{n_1} \oplus C_{n_2}}} $ equal to $0$, such that the signed graph represented by $\X$ is isomorphic to $\Sigma$. In this following, we will always modularly interpret the vertex $v_{n +i} =v_i$ for $i \geq 1$.
		
		Thus, we can now think of $\varphi$ as being defined on the subset of the matrices $\{-1,0,1\}^{n \times n}$ given by the adjacency matrices of signed graphs in $\mathcal{F}_{\Sigma}(\overline{C_n})$. In fact, it will useful to define $\varphi$ on a larger domain. Let $\mathcal{D} \subset \{-1,0,1\}^{n \times n}$ be the set of all adjacency matrices  for signed graphs $G$ with the property that any set of vertex disjoint edges $\{e_1,\dots,e_k\}$ in $G$ size at most $k \leq \min\{n_1,n_2\}/4$. Notice that $\mathcal{D}$ contains both $ \mathcal{F}_{\Sigma}(\overline{C_n})$ and $\mathcal{F}_{\Sigma}(\overline{C_{n_1} \oplus C_{n_2}})$.  For a given $\X \in \mathcal{D}$, we will define $\varphi(\X) = \P_{\sigma_\X} \X \P_{\sigma_\X}^T$ for some permutation $\sigma_\X$. Since the graph of $ \P_{\sigma_\X} \X \P_{\sigma_\X}^T$ is by definition isomorphic to $\X$, it follows that $ \P_{\sigma_\X} \X \P_{\sigma_\X}^T \in \mathcal{D}$, thus $\varphi$ maps  $\mathcal{D}$ into $\mathcal{D}$. So in order to define the mapping $\varphi(\X)$, it suffices to define a function $\phi: \mathcal{D} \to S_n$ mapping into the symmetric group so that $\varphi(\X) = \P_{\phi(\X)}\X\P_{\phi(\X)}^T$.

		For $i=0,1,2,\dots,\min_{n_1,n_2} -1$, define the permutation $\sigma_i \in S_n$ as follows. 
 For $j \in \{0,1,\dots,n_1 - i\} \cup \{n_1 + 1,\dots,n - i\}$, we set $\sigma_i(j) = j$.
	If $i > 0$, then for each $0 \leq j < i$, we set $\sigma_i(n_1 - j) = n - j$ and $\sigma_i(n - j) = n_1 - j$.  In other words, the function $\sigma_i$  swaps the last $\max\{0,i-1\}$ vertices before the spliting points $n_1,n$ of the cycle. Notice that $\sigma_i$ is an involution, so $\sigma_i(\sigma_i) = \text{id}$ and $\sigma_i = \sigma^{-1}_i$.

	We now define our bijection $\varphi$. For $\X \in \mathcal{D}$, let $i(\X)$ be the smallest value of $i \geq 0$ such that $\X_{n_1 - i, n_1 - i + 1} = \X_{n - i, n - i +1} =\X_{n - i, n_1 - i + 1} = \X_{n_1 - i, n - i +1} = 0$. Equivalently, $i(\X)$ is the smallest value of $i \geq 0$ such that none of the four edges of the cycle $c_i = (v_{n_1 -i}, v_{n_1 -i+1}, v_{n -i}, v_{n - i + 1})$ exist in $\X$.  We then define $\varphi(\X) = \sigma_{i(\X)} = \sigma_\X$, so that $\varphi(\X) =  \P_{\sigma_{i(\X)}}\X\P_{\sigma_{i(\X)}}^T$. 
	Note that if the maximum number of vertex disjoint edges in $\X$ is at most $\min\{n_1,n_2\}/4$, then $i(\X)$ must always exist and is at most $\min\{n_1,n_2\}/2+1$. This can be seen by the fact that for each $i$ such that $i(\X) > i+1$, there must be at least one edge with endpoints in the set $\{v_{n_1 - i}, v_{n_1-i+1}, v_{n - i}, v_{n-i+1}\}$, thus for each $i \geq 0$ with $i< i(\X)$ we can assign an edge $e_{i,}$ such that $e_{0}, e_{2},e_4, \dots, e_{i(\X)-1}$ are vertex disjoint. 
	
	We must first argue that if $\X \in \mathcal{F}_{\Sigma}(\overline{C_n})$, then $\varphi(\X) \in\mathcal{F}_{\Sigma}(\overline{C_{n_1} \oplus C_{n_2}})$, namely that the function maps into the desired co-domain. To do this, we must show that for every  $(i,j)$ with $(\P_{\sigma_\X}\X\P_{\sigma_\X}^T)_{i,j} \neq 0$, we have  $(\P_{\sigma_\X}\X\P_{\sigma_\X}^T)_{i,j} = (\AA_{\overline{C_{n_1} \oplus C_{n_2} }})_{i,j}$. This is equivalent to showing that for any signed edge $e = (v_i,v_j) \in \X$, $v_i,v_j$ are connected in $C_n$ if and only if $v_{\sigma_\X(i)},v_{\sigma_\X(j)}$ are connected in $C_{n_1} \oplus C_{n_2}$.  In the proof of this fact, we will only use that $\X \in \mathcal{D}$. 
		
	So suppose $v_i,v_j$ were connected in $C_n$, and wlog $j > i$. First suppose that $i \notin \{n,n_1\}$. Then we have $j=i+1$. Since  $e = (v_i,v_j) \in \X$ is an edge of the subgraph, we know $i \notin \{ n - i(\X), n_1 - i(\X)\}$ by construction of $i(\X)$. Thus $(v_{\sigma_\X(i)},v_{\sigma_\X(j)}) = (v_{i'} , v_{i'+1})$ for some $i' \notin \{n_1,n\}$, which is always an edge of $C_{n_1} \oplus C_{n_2}$. If $i =  n$, then $j = 1$, and we have $i(\X) > 0$, so $\sigma(i) = n_1$ and $\sigma(j) = 1$, and $(v_{n_1}, v_{1}) $ is an edge of $C_{n_1} \oplus C_{n_2}$. Similarly, if $i =  n_1$, then $j = n_1 + 1$, and  since again necessarily $i(\X)>0$ we have $\sigma(i) = n, \sigma(j) = n_1 + 1$, and $(v_{n},v_{n_1 + 1})$ is an edge of $C_{n_1} \oplus C_{n_2}$.	
	We now consider the case where $(v_i,v_j) \in \X$ is not an edge in $C_n$. Suppose for the sake of contradiction that $(v_{\sigma_\X(i)} ,v_{\sigma_\X(j)})$ is an edge in $C_{n_1} \oplus C_{n_2}$.  WLOG, $i,j$ are in the first cycle $C_{n_1}$. We can write $\sigma_\X(i) = i', \sigma_\X(j) = i' + 1$ for some $i' \in \{1,2,\dots,n_1 \}$, where $i'+1$ is interpreted as $1$ if $i' = n_1$. If $i' \leq i(\X) -1$, then both $ i' = i$ and $ i'+1 = i+1 = j$, but $(v_{i}, v_{i+1})$ is also connected in $C_n$.  If $i'\geq i(\X) +1$, then $i' = i + n_2$ and $i'+1 = i + n_2 +1$ (where $i + n_2 + 1$ is interpreted modularly as $1$ if $i =n_1$), and again $v_{i+n_2}$ and $v_{i+n_2 + 1}$ are connected in $C_n$. Finally, if $i' = i(\X)$, then $i= i'$ and $j = i+  n_2 + 1$, but then we cannot have  $(v_i,v_j) \in \X$ by construction of $i(\X)$, which completes the of the claim that $\varphi$ maps $\mathcal{F}_{\Sigma}(\overline{C_n})$ into $\mathcal{F}_{\Sigma}(\overline{C_{n_1} \oplus C_{n_2}})$.

	We now show that $\varphi$ is injective. 
To do this, we show that $\varphi(\varphi(\X)) = \X$ for any $\X \in \mathcal{D}$ -- namely that $\varphi$ is an involution on $\mathcal{D}$. This can be seen by showing that we always have $i(\X) = i(\varphi(\X))$. To see this, observe that $i(\X)$ is defined as the first $i \geq 0$ such that none of the four edges of the cycle $c_i = (v_{n_1 -i}, v_{n_1 -i+1}, v_{n -i}, v_{n - i + 1})$ exist in $\X$. Thus it suffices to show that for each $\min \{n_1,n_2\}-1 > i \geq 0$, the number of edges in $c_i$ is preserved after permuting the vertices by $\sigma_{i(\X)}$. To see this, note that if $i(x) > i$, then $(\sigma_{i(\X)}(v_{n_1 -i}),\sigma_{i(\X)}( v_{n_1 -i + 1}),\sigma_{i(\X)} (v_{n -i}),\sigma_{i(\X)}( v_{n - i + 1})) = (v_{n -i}, v_{n -i+1}, v_{n_1 -i}, v_{n_1 - i + 1}) $, which is the same cycle. If $i(\X) < i$, then $\sigma_{i(\X)}$ does not move any of the vertices in $c_i$. Finally, if $i(\X) = i$, then $(\sigma_{i(\X)}(v_{n_1 -i}),\sigma_{i(\X)}( v_{n_1 -i + 1}),\sigma_{i(\X)} (v_{n -i}),\sigma_{i(\X)}( v_{n - i + 1})) = (v_{n_1 -i}, v_{n -i+1}, v_{n -i}, v_{n_1 - i + 1})$, which again is the same cycle $c_i$ (just with the ordering of the vertices reversed). So $\varphi(\varphi(\X)) = \X$ for any $\X \in\mathcal{D}$, so in particular $\varphi:\mathcal{F}_{\Sigma}(\overline{C_n}) \to \mathcal{F}_{\Sigma}(\overline{C_{n_1} \oplus C_{n_2} }) $ is injective.

To show surjectivity, it suffices to show that if $\X  \in \mathcal{F}_{\Sigma}(\overline{C_{n_1} \oplus C_{n_2}})$ then $\varphi(\X) \in\mathcal{F}_{\Sigma}(\overline{C_{n}})$. Namely, that $\varphi$ can also be defined as a valid function $\varphi: \mathcal{F}_{\Sigma}(\overline{C_{n_1} \oplus C_{n_2}}) \to \mathcal{F}_{\Sigma}(\overline{C_{n}})$. Again, this is equivalent to showing that for any signed edge $e = (v_i,v_j) \in \X$, $v_i,v_j$ are connected in $C_{n_1} \oplus C_{n_2}$ if and only if $v_{\sigma_\X(i)},v_{\sigma_\X(j)}$ are connected in $C_{n}$. Since $\sigma_\X$ is an involution, this is the same as asking that for any signed edge $e = (v_i,v_j) \in \X$, $v_{\sigma_\X(\sigma_\X(i))},v_{\sigma_\X(\sigma_\X(j))}$ are connected in $C_{n_1} \oplus C_{n_2}$ if and only if $v_{\sigma_\X(i)},v_{\sigma_\X(j)}$ are connected in $C_{n}$. Setting $i' = \sigma_\X(i), j' =\sigma_\X(j)$, this states that for all signed edges $(v_{i'},v_{j'}) \in \P_{\sigma_\X} \X \P_{\sigma_\X} = Y \in \mathcal{D}$, we have that $v_{i'},v_{j'}$ are connected in $C_{n}$ if and only if $v_{\sigma_\X(i')},v_{\sigma_\X(j')}$ are connected in $C_{n_1} \oplus C_{n_2}$. But as shown above, we have that $i(\X) = i(\varphi(\X))$, so $\sigma_\X = \sigma_Y$,  and then this fact was already proven above for any $Y \in \mathcal{D}$, which completes the proof.

\end{proof}

Now for any signed graph $\Sigma$ on $n$ vertices, let $\AA_{\Sigma}$ be its adjacency matrix. Note that we can equivalently define via $\mathcal{F}_{\Sigma}(\overline{C_n}) = \{ H \subseteq \overline{C_n}, \; | \; \P_{\sigma} \AA_{\Sigma} \P_{\sigma}^T  = \AA_{H} , \sigma \in S_n \}$. Here $ H \subseteq \overline{C_n}$ means $H$ is a subgraph of $\overline{C}_n$.  On the other hand, we may be interested in the potentially much larger set of all possible permutations $\sigma$ such that  $\P_{\sigma} \AA_{\Sigma} \P_{\sigma}^T = \AA_{H}$ for some $H \subset \overline{C_n}$. So define $\mathcal{H}_{\Sigma}(\overline{C_n}) = \{\sigma  \; | \; \P_{\sigma} \AA_{\Sigma} \P_{\sigma}^T  = \AA_{H} , H \subset \overline{C_n}, \sigma \in S_n \} $. It is not difficult to show that $|\mathcal{H}_{\Sigma}(\overline{C_n})| = |\text{Aut}(\Sigma)| |\mathcal{F}_{\Sigma}(\overline{C_n})|$, where $\text{Aut}(\Sigma)$ is the set of (signed) graph automorphisms of $\Sigma$.

\begin{fact}
	We have $|\mathcal{H}_{\Sigma}(\overline{C_n})| = |\text{Aut}(\Sigma)| |\mathcal{F}_{\Sigma}(\overline{C_n})|$.
\end{fact}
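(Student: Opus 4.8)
The plan is to recognize this as an instance of the orbit--stabilizer theorem for the action of $S_n$ on $n \times n$ matrices by simultaneous row and column permutation. First I would set up the action: for $\sigma \in S_n$ and an $n\times n$ matrix $\M$, write $\sigma \cdot \M = \P_\sigma \M \P_\sigma^T$. Using the convention $\P_{\sigma\tau} = \P_\sigma \P_\tau$ one checks $(\sigma\tau)\cdot \M = \P_\sigma \P_\tau \M \P_\tau^T \P_\sigma^T = \sigma \cdot(\tau \cdot \M)$, so this is a genuine left group action. Next I would identify the stabilizer of $\AA_\Sigma$ with $\text{Aut}(\Sigma)$: a permutation $\pi$ fixes $\AA_\Sigma$, i.e.\ $(\AA_\Sigma)_{\pi(i),\pi(j)} = (\AA_\Sigma)_{i,j}$ for all $i,j$, precisely when $\pi$ is a bijection of $V(\Sigma)$ preserving both adjacency and edge signs, which is exactly the definition of a signed graph automorphism. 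Hence $\mathrm{Stab}(\AA_\Sigma) = \text{Aut}(\Sigma)$ is a subgroup of $S_n$ of cardinality $|\text{Aut}(\Sigma)|$.

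With this in place, I would decompose $\mathcal{H}_\Sigma(\overline{C_n})$ along the fibers of the natural map $\sigma \mapsto H$, where $H$ is the graph with $\AA_H = \sigma \cdot \AA_\Sigma$. By definition $\mathcal{H}_\Sigma(\overline{C_n}) = \{\sigma \in S_n : \sigma\cdot \AA_\Sigma = \AA_H \text{ for some } H \subseteq \overline{C_n}\}$, and since for each $\sigma$ the resulting $H$ is unique and isomorphic to $\Sigma$, this set is the disjoint union over $H \in \mathcal{F}_\Sigma(\overline{C_n})$ of the fibers $F_H := \{\sigma \in S_n : \sigma \cdot \AA_\Sigma = \AA_H\}$. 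Each $F_H$ is nonempty because membership $H \in \mathcal{F}_\Sigma(\overline{C_n})$ provides some $\sigma_0$ with $\sigma_0 \cdot \AA_\Sigma = \AA_H$. For any $\sigma \in F_H$ we then have $\sigma_0 \cdot \AA_\Sigma = \sigma \cdot \AA_\Sigma$, so $(\sigma_0^{-1}\sigma)\cdot \AA_\Sigma = \AA_\Sigma$, i.e.\ $\sigma_0^{-1}\sigma \in \text{Aut}(\Sigma)$; conversely every element of the coset $\sigma_0\,\text{Aut}(\Sigma)$ lies in $F_H$. Thus $F_H = \sigma_0\,\text{Aut}(\Sigma)$ has exactly $|\text{Aut}(\Sigma)|$ elements.

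Summing over the fibers then gives $|\mathcal{H}_\Sigma(\overline{C_n})| = \sum_{H \in \mathcal{F}_\Sigma(\overline{C_n})} |F_H| = |\mathcal{F}_\Sigma(\overline{C_n})| \cdot |\text{Aut}(\Sigma)|$, which is the claim. There is no genuine obstacle here beyond bookkeeping: the only point requiring a moment's care is fixing the convention for $\P_\sigma$ so that $\sigma \mapsto \P_\sigma \M \P_\sigma^T$ is a left (not right) action, and verifying that the matrix stabilizer of $\AA_\Sigma$ coincides with the signed-graph automorphism group; everything else is the orbit--stabilizer theorem applied fiberwise. An entirely equivalent phrasing avoids the action language and argues directly that each $F_H$ is a coset of the subgroup $\text{Aut}(\Sigma)$ of $S_n$.
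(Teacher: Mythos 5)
Your proposal is correct and takes essentially the same approach as the paper: both arguments decompose $\mathcal{H}_\Sigma(\overline{C_n})$ by the subgraph $H$ that $\sigma$ maps $\AA_\Sigma$ onto, and then show that the fiber over each $H$ is a coset of $\text{Aut}(\Sigma)$ in $S_n$, hence of size $|\text{Aut}(\Sigma)|$. You phrase it in the language of group actions and orbit--stabilizer, while the paper verifies the coset structure directly (fixing $\sigma$ with $\P_\sigma \AA_\Sigma \P_\sigma^\top = \AA_H$, showing $\{\sigma\circ\pi : \pi\in\text{Aut}(\Sigma)\}$ is the full fiber), but the underlying argument is the same.
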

\begin{proof}
	Fix any $H \subset \overline{C_n}$ such that  $\P_{\sigma} \AA_{\Sigma} \P_{\sigma}^T  = \AA_{H}$ for some $\sigma \in S_n$. We show that there are exactly $|\text{Aut}(\Sigma)|$ elements $\sigma' \in S_n$ such that $\P_{\sigma'} \AA_{\Sigma} \P_{\sigma'}^T  = \AA_{H}$. By definition, $\text{Aut}(\Sigma)$ is the set of permutations $\pi \in S_n$ with $\P_{\sigma} \AA_{\Sigma} \P_{\sigma}^T = \AA_{\Sigma}$. For every $\pi \in \text{Aut}(\Sigma)$, we have $\P_{\sigma} \P_{\pi} \AA_{\Sigma} \P_{\pi}\P_{\sigma}^T  =\P_{\sigma \circ \pi } \AA_{\Sigma} \P_{\sigma \circ \pi}^T = \AA_{H}$, and moreover the set of elements $|\{ \sigma \circ \pi \; | \; \pi \in \text{Aut}(\Sigma) \}| = |\text{Aut}(\Sigma)|$ since $S_n$ is a group. Now suppose we have some $\lambda \in S_n$ such that  $\P_{\lambda} \AA_{\Sigma} \P_{\lambda}^T  = \AA_{H}$  and $\lambda \notin \{ \sigma \circ \pi \; | \; \pi \in \text{Aut}(\Sigma) \}$. Then  $\P_{\sigma} \AA_{\Sigma} \P_{\sigma}^T = \P_{\lambda} \AA_{\Sigma} \P_{\lambda}^T$, so $\P_{\sigma^{-1} \circ \lambda} \AA_{\Sigma} \P_{\sigma^{-1} \circ \lambda}^T = \AA_{\Sigma} $, which by definition implies that $\sigma^{-1} \circ \lambda =x$ for some $x \in \text{Aut}(\Sigma)$. Thus  $ \lambda =  \sigma \circ x \in \{ \sigma \circ \pi \; | \; \pi \in \text{Aut}(\Sigma) \}$, which is a contradiction. 
\end{proof}

\begin{corollary}\label{cor:CycleEquiv}
	 Fix any $n = n_1 + n_2$. Fix any simple graph $|\Sigma|$, such that any set of vertex disjoint edges $\{e_1,\dots,e_k\}$ in $|\Sigma|$ has size at most $k \leq \min\{n_1,n_2\}/4$, and let $\Sigma = (|\Sigma|,\sigma)$ be any signing of $|\Sigma|$. Let $\mathcal{F}_{\Sigma}(\overline{C_n})$ denote the set of subgraphs of $C_n$ isomorphic to $|\Sigma|$, and similarly define $\mathcal{F}_{\Sigma}(\overline{C_{n_1} \oplus C_{n_2}})$. Then we have 
	\[			\left|\mathcal{H}_{\Sigma}(\overline{C_n})\right| = \left|\mathcal{H}_{\Sigma}(\overline{C_{n_1} \oplus C_{n_2}})\right|	\] 	
\end{corollary}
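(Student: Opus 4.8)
The plan is to deduce this directly from Lemma \ref{lem:cyclesmain} together with the counting identity $|\mathcal{H}_{\Sigma}(\overline{C_n})| = |\text{Aut}(\Sigma)|\,|\mathcal{F}_{\Sigma}(\overline{C_n})|$ proved just above. First I would observe that the proof of that identity uses nothing specific about the cycle: it only invokes that $S_n$ is a group (to see that the elements $\sigma\circ\pi$ with $\pi\in\text{Aut}(\Sigma)$ are all distinct and that $\sigma^{-1}\circ\lambda$ lies in $\text{Aut}(\Sigma)$ whenever $\P_\sigma \AA_\Sigma \P_\sigma^\top = \P_\lambda \AA_\Sigma \P_\lambda^\top$) and the definition of $\text{Aut}(\Sigma)$. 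Hence the same argument gives, verbatim, $|\mathcal{H}_{\Sigma}(G)| = |\text{Aut}(\Sigma)|\,|\mathcal{F}_{\Sigma}(G)|$ for \emph{any} fixed signed graph $G$ on $n$ vertices, and in particular for $G = \overline{C_{n_1}\oplus C_{n_2}}$.

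Next I would invoke Lemma \ref{lem:cyclesmain}: its hypotheses are exactly those of the corollary (namely $|\Sigma|$ is a simple graph whose every set of pairwise vertex-disjoint edges has size at most $\min\{n_1,n_2\}/4$), so it yields $\left|\mathcal{F}_{\Sigma}(\overline{C_n})\right| = \left|\mathcal{F}_{\Sigma}(\overline{C_{n_1}\oplus C_{n_2}})\right|$. Multiplying both sides by the (finite, nonzero) quantity $|\text{Aut}(\Sigma)|$ and applying the counting identity on each side, once for $\overline{C_n}$ and once for $\overline{C_{n_1}\oplus C_{n_2}}$, immediately gives $\left|\mathcal{H}_{\Sigma}(\overline{C_n})\right| = \left|\mathcal{H}_{\Sigma}(\overline{C_{n_1}\oplus C_{n_2}})\right|$, which is the claim.

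The only thing requiring a moment's care is checking that the Fact above really does generalize without modification (it does, as noted), and that $|\text{Aut}(\Sigma)|$ is a legitimate multiplicative factor — it is, since the identity permutation is always an automorphism, so $|\text{Aut}(\Sigma)|\ge 1$. There is essentially no combinatorial obstacle in the corollary itself; all the genuine difficulty has already been absorbed into Lemma \ref{lem:cyclesmain} and its bijection $\varphi$. If one preferred to avoid appealing to the Fact twice, an equivalent route is to construct a bijection $\mathcal{H}_{\Sigma}(\overline{C_n})\to\mathcal{H}_{\Sigma}(\overline{C_{n_1}\oplus C_{n_2}})$ directly: partition each $\mathcal{H}$ into $\text{Aut}(\Sigma)$-cosets indexed by the subgraphs in the corresponding $\mathcal{F}$, and carry the cosets across via $\varphi$; but the two-line computation through the Fact is the cleanest presentation, so that is the version I would write up.
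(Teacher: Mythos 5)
Your proposal is correct and is exactly the argument the paper intends (the paper leaves the corollary unproved, expecting the reader to combine Lemma~\ref{lem:cyclesmain} with the preceding Fact). Your side remark that the Fact's proof uses nothing specific to $\overline{C_n}$ and hence applies verbatim to $\overline{C_{n_1}\oplus C_{n_2}}$ is the one small point worth making explicit, and you do.
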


 \subsection{Lower Bounds for Schatten, Ky-Fan, and Tail Error Testing}
 In this section, we demonstrate how our construction of subgraph equivalent matrices with gaps in their spectrum result in lower bounds for a number of other spectral testing problems via Lemma \ref{lem:ifBDthenLB}. We begin by proving a lower bound for testing Schatten norms. To do this, we must first demonstrate that there is a gap in the Schatten $1$ norm between a cycle and the union of two disjoint cycles.

 \begin{fact}[Theorem 1 of \cite{knapp2009sines}]\label{fact:cos}
     Fix any $a ,b,n \in \R$ with $\sin(b/2) \neq 0$. Then we have
     \[\sum_{k=0}^{n-1} \cos(a + kb )= 
     \frac{\sin(\frac{n b}{2})}{ \sin(\frac{b}{2}) }\cos\left(a + \frac{(n-1)b}{2} \right) \]
 \end{fact}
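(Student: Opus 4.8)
\textbf{Proof proposal for Fact \ref{fact:cos}.} This is the classical closed form for a sum of cosines in arithmetic progression, and the plan is to prove it by passing to complex exponentials. Since $\cos\theta = \mathrm{Re}(e^{i\theta})$, I would first write
\[
\sum_{k=0}^{n-1}\cos(a+kb) = \mathrm{Re}\left( e^{ia}\sum_{k=0}^{n-1} e^{ikb}\right),
\]
and then evaluate the inner geometric sum. The hypothesis $\sin(b/2)\neq 0$ guarantees $e^{ib}\neq 1$, so $\sum_{k=0}^{n-1}e^{ikb} = \frac{e^{inb}-1}{e^{ib}-1}$ is well defined.

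The key algebraic step is to symmetrize numerator and denominator by factoring out half-angle phases: $e^{inb}-1 = e^{inb/2}\bigl(e^{inb/2}-e^{-inb/2}\bigr) = 2i\,e^{inb/2}\sin(nb/2)$ and likewise $e^{ib}-1 = 2i\,e^{ib/2}\sin(b/2)$. Dividing these gives
\[
\sum_{k=0}^{n-1} e^{ikb} = e^{i(n-1)b/2}\cdot\frac{\sin(nb/2)}{\sin(b/2)}.
\]
Multiplying by $e^{ia}$ and taking the real part of $e^{i(a+(n-1)b/2)}$ yields $\cos\!\bigl(a+\tfrac{(n-1)b}{2}\bigr)$, and since $\frac{\sin(nb/2)}{\sin(b/2)}$ is a real scalar it passes through $\mathrm{Re}(\cdot)$ unchanged, producing exactly the claimed right-hand side.

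There is essentially no obstacle here — the only point requiring minor care is the degenerate case handling, i.e.\ noting that $\sin(b/2)\neq 0$ is precisely what makes the geometric series formula valid (and, if one wished to state it for all real $n$ rather than positive integers, clarifying that $n$ indexes a finite sum so $n\in\mathbb{Z}_{\geq 1}$ is intended). As an alternative to the complex-exponential route, one could instead multiply the left-hand side by $2\sin(b/2)$ and apply the product-to-sum identity $2\sin(b/2)\cos(a+kb) = \sin\!\bigl(a+kb+\tfrac b2\bigr) - \sin\!\bigl(a+kb-\tfrac b2\bigr)$, so that the sum telescopes to $\sin\!\bigl(a+(n-1)b+\tfrac b2\bigr)-\sin\!\bigl(a-\tfrac b2\bigr) = 2\sin(nb/2)\cos\!\bigl(a+\tfrac{(n-1)b}{2}\bigr)$ by one more product-to-sum step; dividing back by $2\sin(b/2)$ finishes. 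I would present the complex-exponential proof as the main one since it is the shortest and most transparent.
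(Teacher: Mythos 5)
Your proof is correct and is the standard derivation via complex exponentials and the geometric series; the telescoping alternative you sketch is also valid. Note, however, that the paper itself does not prove this fact — it is cited directly as Theorem 1 of an external reference (Knapp, ``Sines and cosines of angles in arithmetic progression''), so there is no in-paper proof to compare against; your argument would serve as a self-contained replacement for the citation.
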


\begin{proposition}
Fix any $d \geq 6$ be any integer divisible by $4$. Then
\[ \|C_d\|_{\mathcal{S}_1} = 4 \cdot \frac{\cos \left( \pi/d \right)}{\sin(\pi/d)}\]
\end{proposition}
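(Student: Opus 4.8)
The plan is to diagonalize $C_d$ explicitly and collapse the Schatten-$1$ norm into a single geometric sum of cosines, to which Fact~\ref{fact:cos} applies directly. Recall (as in Fact~\ref{fact:cycle}) that the eigenvalues of the cycle graph $C_d$ are $2\cos(2\pi t/d)$ for $t=0,1,\dots,d-1$, so that
\[
\|C_d\|_{\mathcal{S}_1} \;=\; \sum_{t=0}^{d-1}\bigl|2\cos(2\pi t/d)\bigr| \;=\; 2\sum_{t=0}^{d-1}\bigl|\cos(2\pi t/d)\bigr|,
\]
and it suffices to evaluate $\Sigma := \sum_{t=0}^{d-1}\bigl|\cos(2\pi t/d)\bigr|$.

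First I would reduce the range of summation. Since $|\cos\theta|$ has period $\pi$ and $\cos(x+\pi)=-\cos x$, the shift $t\mapsto t+d/2$ preserves each summand, so $\Sigma = 2\sum_{t=0}^{d/2-1}\bigl|\cos(2\pi t/d)\bigr|$. Write $m=d/4$, an integer by hypothesis (with $m\geq 2$ since $d\geq 6$). On the range $0\leq t\leq 2m-1$ we have $\cos(2\pi t/d)>0$ for $0\leq t\leq m-1$, it vanishes at $t=m$, and it is negative for $m+1\leq t\leq 2m-1$. In the negative block substitute $t=2m-s$: since $4m=d$ we get $2\pi t/d=\pi-2\pi s/d$ and hence $-\cos(2\pi t/d)=\cos(2\pi s/d)$, so that block contributes $\sum_{s=1}^{m-1}\cos(2\pi s/d)$. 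Setting $A:=\sum_{t=0}^{m-1}\cos(2\pi t/d)$ we obtain $\sum_{t=0}^{2m-1}\bigl|\cos(2\pi t/d)\bigr| = A + (A-1) = 2A-1$, and therefore $\Sigma = 2(2A-1)$.

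It remains to evaluate $A$ via Fact~\ref{fact:cos} with $a=0$, $b=2\pi/d$, $n=m$:
\[
A \;=\; \frac{\sin(m\pi/d)}{\sin(\pi/d)}\,\cos\!\left(\frac{(m-1)\pi}{d}\right).
\]
Since $m\pi/d=\pi/4$ we have $\sin(m\pi/d)=1/\sqrt2$, while $(m-1)\pi/d=\pi/4-\pi/d$, so the angle-subtraction formula gives $\cos\!\big((m-1)\pi/d\big)=\tfrac{1}{\sqrt2}\big(\cos(\pi/d)+\sin(\pi/d)\big)$. Substituting,
\[
A \;=\; \frac{1}{2\sin(\pi/d)}\big(\cos(\pi/d)+\sin(\pi/d)\big) \;=\; \frac{\cos(\pi/d)}{2\sin(\pi/d)}+\frac12,
\]
so $2A-1=\cos(\pi/d)/\sin(\pi/d)$, whence $\Sigma = 2\cos(\pi/d)/\sin(\pi/d)$ and $\|C_d\|_{\mathcal{S}_1}=2\Sigma = 4\cos(\pi/d)/\sin(\pi/d)$, as claimed.

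The only real obstacle is bookkeeping: correctly locating the sign regions of $\cos(2\pi t/d)$ over the integer index set and carrying out the reflection in the negative block so that the absolute values collapse cleanly into the geometric sum $A$. Once that is set up, Fact~\ref{fact:cos} together with elementary trigonometric identities finishes the computation; one should also verify a small case such as $d=8$ (where the formula gives $4(1+\sqrt2)$, matching the eigenvalues $2,\sqrt2,0,-\sqrt2,-2,-\sqrt2,0,\sqrt2$) as a sanity check.
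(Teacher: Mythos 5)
Your proposal is correct, and the computation checks out: the period-$\pi$ shift $t\mapsto t+d/2$ halves the range, the reflection $t\mapsto 2m-s$ about $t=m$ folds the negative block onto the positive one, and a single application of Fact~\ref{fact:cos} to $A=\sum_{t=0}^{m-1}\cos(2\pi t/d)$ finishes. The paper takes a different but related route: it splits the original full-range sum directly into three blocks by the sign of $\cos(2\pi j/d)$ (using $a_1 = \lfloor d/4\rfloor$, $a_2 = \lfloor 3d/4\rfloor$, $a_3 = d - a_2 - 1$), merges the third block into the first by cyclic reindexing $j\in\{-a_3,\dots,a_1\}$, and then applies Fact~\ref{fact:cos} \emph{twice}, once to each of the two remaining sums, specializing to $d$ divisible by $4$ at the end. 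Your version is tighter in that it exploits both symmetries of $|\cos|$ up front and so needs only one invocation of the trigonometric identity, which avoids the two parallel computations (and the floor-function bookkeeping) in the paper's proof; the paper's version is more direct in that the sign split is written down immediately without first reducing the index range. Both are valid proofs of the same identity, and your $d=8$ sanity check is a nice addition the paper does not include.
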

\begin{proof}
    By \cite{chung1996lectures}, for any $d \geq 3$ the eigenvalues of $C_d$ are given by $2 \cdot \cos( \frac{2 \pi j}{d})$ for $j=0,1,\dots,d-1$. Let $a_1 = \lfloor d/4 \rfloor, a_2 = \lfloor 3d/4 \rfloor, a_3 = d - a_2 -1$. 
    
    \begin{equation}
        \begin{split}
            \|C_{d}\|_1 & =2 \sum_{j=0}^{d-1} \left|\cos\left( \frac{2 \pi j}{d}\right) \right| \\ 
          & =2 \left( \sum_{j=0}^{a_1  } \cos\left( \frac{2 \pi j}{d}\right)  - \sum_{j=a_1 + 1}^{a_2 }  \cos\left( \frac{2 \pi j}{d} \right)+ \sum_{j=a_2 + 1 }^{d-1} \cos\left( \frac{2 \pi j}{d} \right) \right)    \\ 
                & =2 \left( \sum_{j=-a_3}^{a_1  } \cos\left( \frac{2 \pi j}{d}\right)  - \sum_{j=a_1 + 1}^{a_2 }  \cos\left( \frac{2 \pi j}{d} \right) \right)    \\
        \end{split}
    \end{equation}
We analyze each term in the above via Fact \ref{fact:cos}. Firstly:

\begin{equation}
    \begin{split}
             \sum_{j=-a_3}^{a_1  } \cos\left( \frac{2 \pi j}{d}\right) &=  \sum_{j=0}^{a_1 + a_3  } \cos\left( \frac{2 \pi j}{d}  -\frac{2 \pi a_3 }{d}\right) \\
             &= \frac{\sin((a_1 + a_3 + 1) \pi/d )}{\sin(\pi/d)} \cos\left( \frac{(a_1 + a_3) \pi}{d} -\frac{2 \pi a_3 }{d}\right)\\ 
    \end{split}
\end{equation}

Note that if $d$ is divisible by $4$, the above becomes  $2 \cos(\pi/d)/\sin(\pi/d)$.  
   Next, for the second term, we have
   \begin{equation}
     \begin{split}
                 \sum_{j=a_1 + 1}^{a_2 }  \cos\left( \frac{2 \pi j}{d} \right) &=   \sum_{j=0}^{a_2 - a_1 - 1 }  \cos\left( \frac{2 \pi j}{d} - \frac{2\pi(a_1 + 1) }{d} \right)\\
                 &= \frac{\sin((a_2 - a_1 ) \pi/d )}{\sin(\pi/d)} \cos\left( \frac{(a_2 - a_1 - 1) \pi }{d} - \frac{2\pi(a_1 + 1) }{d}  \right)\\
       \end{split}
   \end{equation}

Again, note that if $d$ is divisible by $4$, the above becomes  $2 \cos(\pi/d)/\sin(\pi/d)$. 
 Putting these two equations together, we have that
   \[  \|C_d\|_1 = 4 \cdot \frac{\cos \left( \pi/d \right)}{\sin(\pi/d)} \]

      \end{proof}
 
 \begin{proposition}\label{prop:schattengap}
     Fix any $d$ larger than some constant. Then we have 
     
     \[ \left| \| \CC_{8d}\|_{\mathcal{S}_1} - \| \CC_{4d} \oplus \CC_{4d} \|_{\mathcal{S}_1}  \right| \gtrsim \frac{1}{d^3} \]
 \end{proposition}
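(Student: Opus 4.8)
The plan is to reduce the statement directly to the closed form for $\|C_m\|_{\mathcal{S}_1}$ established in the preceding proposition, together with a single double-angle identity. Since $8d$ and $4d$ are both divisible by $4$ and are both at least $6$ once $d \geq 2$, the preceding proposition gives $\|C_{8d}\|_{\mathcal{S}_1} = 4\cot(\pi/(8d))$ and $\|C_{4d}\|_{\mathcal{S}_1} = 4\cot(\pi/(4d))$. Next I would observe that $C_{4d}\oplus C_{4d}$ has block-diagonal adjacency matrix consisting of two copies of $\AA_{C_{4d}}$, so its multiset of singular values is the disjoint union of two copies of the singular values of $C_{4d}$; hence $\|C_{4d}\oplus C_{4d}\|_{\mathcal{S}_1} = 2\|C_{4d}\|_{\mathcal{S}_1} = 8\cot(\pi/(4d))$.

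Writing $\theta = \pi/(8d)$, the key computation is the identity $\cot(2\theta) = \tfrac{\cos^2\theta - \sin^2\theta}{2\sin\theta\cos\theta} = \tfrac12(\cot\theta - \tan\theta)$, which yields $8\cot(\pi/(4d)) = 8\cot(2\theta) = 4\cot\theta - 4\tan\theta = 4\cot(\pi/(8d)) - 4\tan(\pi/(8d))$. Subtracting the two expressions, $\|C_{8d}\|_{\mathcal{S}_1} - \|C_{4d}\oplus C_{4d}\|_{\mathcal{S}_1} = 4\tan(\pi/(8d)) > 0$. Finally, using $\tan y \geq y$ for $y \in [0,\pi/2)$ with $y = \pi/(8d) \in (0,\pi/2)$, this difference is at least $\pi/(2d)$, which is $\gtrsim 1/d^3$ (indeed $\gtrsim 1/d$), completing the argument.

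There is essentially no obstacle here: all the content lies in the preceding proposition's evaluation of $\|C_m\|_{\mathcal{S}_1}$. The only points needing a line of justification are the additivity of the Schatten-$1$ norm over disjoint unions of graphs (immediate, since the spectrum of a block-diagonal matrix is the union of the blocks' spectra) and the correct application of the double-angle formula. I would also remark explicitly that the gap we actually establish, $4\tan(\pi/(8d))$, is $\Theta(1/d)$, far larger than the claimed $1/d^3$, and that the weaker bound is all that is needed downstream when this subgraph-equivalent pair is plugged into Lemma \ref{lem:ifBDthenLB} (with $k = \Theta(1/\eps)$) and the gap is amplified by tensoring with a Rademacher matrix $\MM$.
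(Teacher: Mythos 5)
Your proof is correct, and it takes a genuinely cleaner route than the paper's. The paper applies the closed form $\|C_m\|_{\mathcal{S}_1} = 4\cot(\pi/m)$ (for $m$ divisible by $4$), then Taylor-expands each cotangent to third order and reads off the gap from the $1/d^3$ coefficients, claiming the leading and $1/d$ terms cancel. You instead invoke the exact double-angle identity $\cot(2\theta) = \tfrac12(\cot\theta - \tan\theta)$, which with $\theta = \pi/(8d)$ gives
\[
\|C_{8d}\|_{\mathcal{S}_1} - \|C_{4d}\oplus C_{4d}\|_{\mathcal{S}_1} = 4\cot\theta - 8\cot(2\theta) = 4\tan\theta,
\]
so the whole estimate collapses to $\tan\theta \geq \theta$. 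This is preferable on two counts: it is exact rather than asymptotic, and it dispenses with any bookkeeping of Taylor coefficients.

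Your calculation also shows the true gap is $\Theta(1/d)$, not merely $\Omega(1/d^3)$. A direct expansion confirms this: $\cot(\pi/(8d)) = \tfrac{8d}{\pi} - \tfrac{\pi}{24d} + O(1/d^3)$ while $2\cot(\pi/(4d)) = \tfrac{8d}{\pi} - \tfrac{\pi}{6d} + O(1/d^3)$, so the $1/d$ terms do \emph{not} cancel. The paper's displayed expansions write $\tfrac{\pi}{24d}$ as the $1/d$ coefficient for \emph{both} $\|C_{8d}\|_{\mathcal{S}_1}$ and $2\|C_{4d}\|_{\mathcal{S}_1}$, which appears to be a miscalculation — the second should be $\tfrac{\pi}{6d}$ — so the $1/d^3$ reading in the paper undersells the gap. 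The stated conclusion $\gtrsim 1/d^3$ is of course still correct (it is simply weaker than what holds), and everything downstream in Theorem \ref{thm:schattenlb} only uses the $1/d^3$ bound, so nothing breaks; but your argument gives the sharper and more transparent estimate.
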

 \begin{proof}
     By the prior Lemma, we have $\| \CC_{d}\|_{\mathcal{S}_1} = 4  \cot( \pi/d)$ for any $d$ divisible by $4$. Thus using the Taylor expansion of cotangent, we have
     
     \begin{equation}
         \begin{split}
          \| \CC_{8d}\|_{\mathcal{S}_1} &= 4 \left( \frac{8d}{\pi } + \frac{ \pi}{24d} + \frac{\pi^3}{ 45 \cdot 512 \cdot d^3} + O(1/d^5) \right)
         \end{split}
     \end{equation}
     and
     \begin{equation}
         \begin{split}
          \| \| \CC_{4d} \oplus \CC_{4d} \|_{\mathcal{S}_1} &=2  \|\CC_{4d} \|_{\mathcal{S}_1} \\
          & = 4 \left( \frac{8d}{\pi } + \frac{ \pi}{24d} + \frac{\pi^3}{ 45 \cdot 128 \cdot d^3} + O(1/d^5) \right)
         \end{split}
     \end{equation}
     Thus 
     \begin{equation}
         \begin{split}
              \left| \| \CC_{8d}\|_{\mathcal{S}_1} - \| \CC_{4d} \oplus \CC_{4d} \|_{\mathcal{S}_1}  \right|& \gtrsim \frac{1}{d^3}
         \end{split}
     \end{equation}
     
     \end{proof}
 
 \begin{theorem}\label{thm:schattenlb}
 Fix any $\frac{1}{\sqrt{n}} \leq \eps \leq 1$. Then given $\AA \in \R^{n \times n}$ with $\|\AA\|_\infty \leq 1$, any non-adaptive sampling algorithm which distinguishes between the cases
 \begin{enumerate}
     \item  $\|\AA\|_{\mathcal{S}_1} > \eps_0 n^{1.5}$
       \item  $\|\AA\|_{\mathcal{S}_1} <\eps_0 n^{1.5} - \eps n^{1.5} $
 \end{enumerate}
with probability at least $3/4$, where $\eps_0 = \wt{\Theta}(\eps)$, must query at least $\tilde{\Omega}(1/\eps^4)$ entries of $\AA$.
 \end{theorem}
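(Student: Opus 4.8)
The plan is to derive Theorem~\ref{thm:schattenlb} as a direct instantiation of the general reduction in Lemma~\ref{lem:ifBDthenLB}, choosing the triple $(\BB,\DD,\ZZ)$ so that the Schatten-$1$ gap between a cycle and a union of two cycles (Proposition~\ref{prop:schattengap}) is amplified by a Rademacher tensor factor. First I would set $k = \Theta\!\big(1/(\eps^2\log^9(1/\eps))\big)$ with the constant fixed at the end, $t=\log k$, $d_0 = \Theta(\log(1/\eps))$ chosen large enough that $d_0 \ge t$, $d = 8d_0$, and $m = n/(dk)$; in the regime $\eps \ge 1/\sqrt n$ this gives $m \ge 1$, and for $\eps$ above a constant the claimed bound is $\tilde\Omega(1)$ and hence trivial, so we may assume $\eps$ is small. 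Take $\BB = \AA_{C_{8d_0}}$ and $\DD = \AA_{C_{4d_0}\oplus C_{4d_0}}$; these have $\|\BB\|_\infty,\|\DD\|_\infty\le 1$ (adjacency matrices), and by Proposition~\ref{prop:cong} applied with $n_1=n_2=4d_0$ and $t=\log k\le d_0$ we get $\BB \cong_{\mathcal{U}^t_d,S_d}\DD$. For $\ZZ$, take any fixed $\ZZ\in\{-1,1\}^{m\times m}$ with $\|\ZZ\|_2\le 3\sqrt m$, which exists since a uniformly random sign matrix has this property with high probability. Since $\|\ZZ\|_\infty\le 1$, the matrices $\widetilde\BB=\BB\otimes\ZZ$, $\widetilde\DD=\DD\otimes\ZZ$ are bounded, so the distributions $\mathcal{D}_1,\mathcal{D}_2$ defined by $(\BB,\DD,\ZZ)$ are supported on $\AA$ with $\|\AA\|_\infty\le 1$, and Lemma~\ref{lem:ifBDthenLB} gives that distinguishing $\mathcal{D}_1$ from $\mathcal{D}_2$ with probability $2/3$ (hence also with probability $3/4$) requires $\Omega(k^2) = \tilde\Omega(1/\eps^4)$ queries.

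It then remains to compute the Schatten-$1$ norms and check the two distributions fall on opposite sides of the stated threshold. I would use two elementary facts: (i) the Schatten-$1$ norm is invariant under the permutation conjugations used in $\mathcal{D}_1,\mathcal{D}_2$ and additive over direct sums, so a block-diagonal $\AA\sim\mathcal{D}_i$ with $k$ blocks each a permuted copy of $\widetilde\BB$ (resp.\ $\widetilde\DD$) has $\|\AA\|_{\mathcal{S}_1}=k\|\widetilde\BB\|_{\mathcal{S}_1}$ (resp.\ $k\|\widetilde\DD\|_{\mathcal{S}_1}$) exactly; and (ii) singular values of a Kronecker product are pairwise products, so $\|\BB\otimes\ZZ\|_{\mathcal{S}_1}=\|\BB\|_{\mathcal{S}_1}\|\ZZ\|_{\mathcal{S}_1}$. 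For $\ZZ$ we have the deterministic sandwich $\Omega(m^{1.5})\le \|\ZZ\|_{\mathcal{S}_1}\le m^{1.5}$ — the lower bound from $\|\ZZ\|_{\mathcal{S}_1}\ge \|\ZZ\|_F^2/\|\ZZ\|_2 = m^2/(3\sqrt m)$ and the upper from Cauchy--Schwarz $\|\ZZ\|_{\mathcal{S}_1}\le\sqrt m\,\|\ZZ\|_F$. Combining with Proposition~\ref{prop:schattengap}, which gives $\big|\|C_{8d_0}\|_{\mathcal{S}_1}-\|C_{4d_0}\oplus C_{4d_0}\|_{\mathcal{S}_1}\big|\gtrsim 1/d_0^3$ while both quantities are $\Theta(d_0)$, the larger of the two norms equals $\Theta(k\,d_0\,m^{1.5}) = \Theta\!\big(n^{1.5}/\sqrt{d_0 k}\big)$ and their difference equals $\Theta\!\big(k\,m^{1.5}/d_0^3\big) = \Theta\!\big(n^{1.5}/(d_0^{4.5}\sqrt k)\big)$. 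Substituting $k=\Theta(1/(\eps^2 d_0^9))$ makes the difference a constant times $\eps n^{1.5}$ and the larger norm a constant times $\eps d_0^4 n^{1.5}=\tilde\Theta(\eps)\,n^{1.5}$; tuning the constant in the definition of $k$ and taking $\eps_0$ equal to (a hair below) the larger norm divided by $n^{1.5}$ puts $\mathcal{D}_1$ in case~(1) and $\mathcal{D}_2$ in case~(2), so the $\Omega(k^2)=\tilde\Omega(1/\eps^4)$ lower bound transfers verbatim.

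The only genuinely delicate part is the parameter bookkeeping: one must choose $k$ (and the derived quantities $t, d_0, d, m$) so that the logarithmic overhead $d_0=\Theta(\log(1/\eps))$ — forced by the requirement that the cycles fool all matchings of size $\log k$ inside a block — propagates through the Kronecker/direct-sum identities to yield an additive gap of exactly $\eps n^{1.5}$ with threshold $\eps_0=\tilde\Theta(\eps)$, all while keeping $d_0\ge\log k$ and $m\ge 1$ in the full range $1/\sqrt n\le\eps\le 1$. Everything else is either standard (the Kronecker and direct-sum behavior of Schatten norms, the spectral-norm bound for a random sign matrix) or already supplied by Lemma~\ref{lem:ifBDthenLB} and Propositions~\ref{prop:cong} and~\ref{prop:schattengap}. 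I would also note that this is one instance of a template: the same construction with $C_{2d_0+1}$ versus $C_{d_0}\oplus C_{d_0+1}$ and $\ZZ=\mathbf{1}^{m\times m}$ (as in Proposition~\ref{prop:LBfinal}) yields the PSD $\ell_2^2$ bound, and other choices of $\ZZ$ large in the relevant norm yield the Ky-Fan and residual-error bounds.
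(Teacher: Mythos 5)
Your proposal is correct and takes essentially the same route as the paper: set $k=\tilde\Theta(1/\eps^2)$, use $\BB=C_{8d_0}$ and $\DD=C_{4d_0}\oplus C_{4d_0}$ with $d_0=\Theta(\log(1/\eps))\geq\log k$, amplify via a fixed Rademacher tensor factor $\ZZ$, and combine Proposition~\ref{prop:cong}, Proposition~\ref{prop:schattengap}, the Kronecker multiplicativity of $\|\cdot\|_{\mathcal{S}_1}$, and Lemma~\ref{lem:ifBDthenLB}. Your bookkeeping is in fact a bit cleaner than the paper's (you fix $\ZZ$ deterministically from the start and are consistent about the block dimension $d=8d_0$, whereas the paper has a couple of apparent typos, e.g.\ ``$d=4k$'' where $d=4t$ is meant), so there is nothing to add beyond noting the match.
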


 \begin{proof}
     We use the hard instance $\mathcal{D}_1,\mathcal{D}_2$ as earlier. Set $k = C \frac{1}{\eps^2 \log^9(1/\eps)}$, $t = \log k$, and $d = 4k$, and $m=n/(dk)$.  
     We instantiate the matrices $(\BB,\DD,\ZZ)$ in the hard instance via $\BB = C_{2d}, \DD = \C_{d} \oplus C_d$, and let $\ZZ = \delta_{i,j}$ for $i \leq j$, where $\delta_{i,j} \in \{-1,1\}$ are i.i.d. Bernoulli random variables, so that $\ZZ \in \R^{m \times m}$ is a symmetric random Bernoulli matrix. Using the fact that $\|\ZZ\|_2 \leq O(\sqrt{n})$ with high probability \cite{vershynin2010introduction}, along with the fact that $\|\ZZ\|_F^2 = n^2$ deterministically, we have that $\|\ZZ\|_{\mathcal{S}_1} > C_1 m^{1.5}$ with non-zero probability for some constant $C_1 > 0$, as the former two facts imply that $\ZZ$ has $\Omega(n)$ eigenvalues with magnitude $\Theta(\sqrt{n})$. 
     Thus, we can deterministically fix $\ZZ$ to be such a matrix with $\{1,-1\}$ entries such that $\|\ZZ\|_{\mathcal{S}_1} \geq C_1 m^{1.5}$. Given this, we have $\|\wt{\BB}\|_{\mathcal{S}_1} = \|\BB \otimes \ZZ \|_{\mathcal{S}_1} = \|\BB\|_{\mathcal{S}_1}  \cdot \|\ZZ\|_{\mathcal{S}_1}$, and so by Proposition \ref{prop:schattengap}, we have
     \[ \left|  \|\wt{\BB}\|_{\mathcal{S}_1} - \|\wt{\DD}\|_{\mathcal{S}_1}     \right|  \geq C_0\frac{m^{1.5}}{d^3}   \]
     for some absolute constant $C_0 \geq 0$. Note also that we have  $\|\BB \|_{\mathcal{S}_1} > \Omega(d)$, where we use the fact that a constant fraction of the eigenvalues $2 \cdot \cos( \frac{2 \pi j}{d})$ for $j=0,1,\dots,d-1$ of $\BB$ are $\Omega(1)$. Thus we have $\|\wt{\BB}\|_{\mathcal{S}_1} = dm^{1.5}$.

   Now by Proposition \ref{prop:cong}, we obtain that $\BB \cong_{\mathcal{U}_{2d}^t, S_{2d} } \DD$, and thus $\wt{\BB} = \BB \otimes \ZZ \cong_{\mathcal{U}_{2d}^t, \Gamma_{2d,2dm}} \DD \otimes \ZZ = \wt{\DD}$ by Lemma \ref{len:kron}. Thus by Lemma \ref{lem:ifBDthenLB}, we have that distinguishing $\mathcal{D}_1$ from $\mathcal{D}_1$ requires $\Omega(k^2) = \tilde{O}(1/\eps^4)$ samples for any non-adaptive algorithm. It suffices then to show that if $\AA_1 \sim \mathcal{D}_1$ and $\AA_2 \sim \mathcal{D}_2$, then we have the desired gap in Schatten norms. We have
   
   \begin{equation}
       \begin{split}
           \left| \|\AA_1\|_{\mathcal{S}_1} -\| \AA_2\|_{\mathcal{S}_1}    \right| &\geq \sum_{i=1}^k C_0\frac{m^{1.5}}{d^3}   \\
           & \geq C_0 \frac{n^{1.5}}{d^{4.5} k^{1/2}} \\
             & \geq \eps n^{1.5} \\
       \end{split}
   \end{equation} 
   Where the last inequality follows setting $C$ large enough, and assuming that $1/\eps$ is larger than some constant as in Theorem \ref{thm:lbmain}. Again, if $1/\eps$ is not larger than some constant, a $\Omega(1)$ lower bound always applies, since an algorithm must read at least one entry of the matrix to have any advantage.  Now note that we also have  $\|\AA_1\|_{\mathcal{S}_1}  = k \|\wt{\BB}\|_{\mathcal{S}_1} = kdm^{1.5} = n^{1.5} /\sqrt{dk} = \wt{\Theta}(\eps n^{1.5})$ as desired.   To complete the proof, we can scale down all the entries of the input matrix by $1/2$, which results in the required bounded entry property, and only changes the gap by a constant factor.  
     \end{proof}
     We now present our lower bound for testing Ky-Fan norms. Recall that for a matrix $\AA \in \R^{n \times n}$ and $1 \leq s \geq n$, the Ky-Fan $s$ norm is defined as $\|\AA\|_{KF(s)} = \sum_{i=1}^k \sigma_i(\AA)$, where $\sigma_i(\AA)$ is the $i$-th singular value of $\AA$.
     
 \begin{theorem}\label{thm:kflb}
  Fix any $1 \leq s \leq n/(\poly \log n)$. Then there exists a fixed constant $c > 0$ such that given $\AA \in \R^{n \times n}$ with $\|\AA\|_\infty \leq 1$, any non-adaptive sampling algorithm which distinguishes between the cases
 \begin{enumerate}
     \item  $\|\AA\|_{KF(s)} >  \frac{c}{\log(s)}n $
       \item  $\|\AA\|_{KF(s)} < (1-\eps_0) \frac{c}{\log(s)} n $
 \end{enumerate}
with probability at least $3/4$, where $\eps_0 = \Theta(1/\log^2(s))$, must query at least $\tilde{\Omega}(s^2)$ entries of $\AA$.\footnote{$\wt{\Omega}$ hides $\log(s)$ factors here.}
 \end{theorem}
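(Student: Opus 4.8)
The plan is to instantiate the general block-diagonal lower bound of Lemma~\ref{lem:ifBDthenLB} with the subgraph-equivalent cycle matrices of Proposition~\ref{prop:LBfinal}, tensored with the all-ones matrix, exactly as in the proof of Theorem~\ref{thm:lbmain}; the only differences are that the number of blocks is tuned to $s$ rather than $1/\eps$, and the Ky-Fan parameter is chosen to read off the top \emph{two} singular values of each block. Assume $s$ is even and larger than a fixed constant (otherwise a trivial $\tilde\Omega(1)$ bound applies). Set $k=s/2$, $t=\log k$, $d_0=4t$, $d=2d_0+1$, and $m=n/(dk)$, assuming $dk\mid n$ and $m\ge1$ (divisibility is handled as in Lemma~\ref{lem:ifBDthenLB} by embedding a slightly smaller instance, and $m\ge1$ holds since $dk=\Theta(s\log s)\le n$ by the hypothesis $s\le n/\poly\log n$). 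Take $\ZZ=\mathbf{1}^{m\times m}$, let $\BB=\tfrac12(\AA_{C_d}+\lambda\II_d)$ and $\DD=\tfrac12(\AA_{C_{d_0}\oplus C_{d_0+1}}+\lambda\II_d)$ with $\lambda=-2\cos(\tfrac{2\pi d_0}{2d_0+1})$ be the matrices of Proposition~\ref{prop:LBfinal}, and let $\mathcal D_1,\mathcal D_2$ be the hard distributions built from the triple $(\BB,\DD,\ZZ)$ (so $\mathcal D_1$ uses $\wt\BB=\BB\otimes\ZZ$, $\mathcal D_2$ uses $\wt\DD=\DD\otimes\ZZ$). Every matrix in their support has $\|\cdot\|_\infty\le1$: by Proposition~\ref{prop:LBfinal} the entries of $\BB,\DD$ are bounded by $1$, and Kronecker-multiplying by $\mathbf 1$ and permuting coordinates preserves this.

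The query lower bound is then immediate: Proposition~\ref{prop:LBfinal} gives $\BB\cong_{\mathcal U^t_d,S_d}\DD$ with $t=\log k$, hence $\wt\BB\cong_{\mathcal U^t_{dm},\Gamma_{d,m}}\wt\DD$ by Lemma~\ref{len:kron}, so by Lemma~\ref{lem:ifBDthenLB} any non-adaptive algorithm distinguishing $\mathcal D_1$ from $\mathcal D_2$ must make $\Omega(k^2)=\tilde\Omega(s^2)$ queries. It remains to check that a $\|\cdot\|_{KF(s)}$ tester solves this distinguishing task. Since $\|\cdot\|_{KF(s)}$ is invariant under the permutation conjugations used to sample $\mathcal D_1,\mathcal D_2$, its value is a single fixed number on each support, and we only need to compute these two numbers.

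Each $\AA\sim\mathcal D_i$ is block-diagonal with $k$ blocks, every block having the same singular values (up to a coordinate permutation) as $\wt\BB$ or $\wt\DD$. Since $\mathbf 1^{m\times m}$ is rank one with top singular value $m$, the nonzero singular values of $\BB\otimes\mathbf 1^{m\times m}$ are exactly $m\sigma_1(\BB),\dots,m\sigma_d(\BB)$, and likewise for $\DD$. By Fact~\ref{fact:cycle} and the standard cycle spectrum, $\BB$ is PSD with $\sigma_1(\BB)=\tfrac12(2+\lambda)$ (multiplicity one) and $\sigma_2(\BB)=\tfrac12(2\cos(\tfrac{2\pi}{d})+\lambda)$ (the second-largest eigenvalue of the odd cycle $C_d$, multiplicity two), whereas $\DD$ has $\sigma_1(\DD)=\sigma_2(\DD)=\tfrac12(2+\lambda)$ (the value $2$ is an eigenvalue once in each of $C_{d_0},C_{d_0+1}$), and the unique negative eigenvalue of $\DD$ has magnitude only $\tfrac12(2-\lambda)=\Theta(1/d^2)$, so it sits near the bottom of the singular-value list. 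Taking the Ky-Fan parameter to be $s=2k$, the global top-$s$ singular values of $\AA$ are precisely $k$ copies of each of the top two singular values of a single block, whence
\[
\|\AA\|_{KF(s)} = \begin{cases} km\,(\sigma_1(\BB)+\sigma_2(\BB)) = km\,\big(1+\cos(\tfrac{2\pi}{d})+\lambda\big), & \AA\sim\mathcal D_1, \\ km\,(\sigma_1(\DD)+\sigma_2(\DD)) = km\,(2+\lambda), & \AA\sim\mathcal D_2. \end{cases}
\]
Both quantities equal $\Theta(km)=\Theta(n/d)=\Theta(n/\log s)$, so there is a constant $c>0$ lying strictly between them; their difference is $km\,(1-\cos(\tfrac{2\pi}{d}))=\Theta(km/d^2)$, a $\Theta(1/d^2)=\Theta(1/\log^2 s)$ fraction of the larger ($\mathcal D_2$) value, so setting $\eps_0$ to a suitable $\Theta(1/\log^2 s)$ quantity makes the $\mathcal D_1$ value below $(1-\eps_0)$ times the $\mathcal D_2$ value. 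Identifying the large case $\mathcal D_2$ with ``case~1'' of the statement, any $\|\cdot\|_{KF(s)}$ tester distinguishes the two distributions, giving the $\tilde\Omega(s^2)$ bound; the leftover task is tracking the $1+o(1)$ factors from flooring $k,m,d_0$ and from the embedding when $dk\nmid n$ (negligible next to $\eps_0$) and fixing the constant $c$.

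The main obstacle is the spectral bookkeeping in the third paragraph: correctly identifying which eigenvalues of $C_d$ and $C_{d_0}\oplus C_{d_0+1}$, with what multiplicities, become the top two singular values of $\BB$ and $\DD$ after the $\lambda\II$ shift, so that the global top-$s=2k$ singular values of the block matrix split cleanly into per-block contributions, and then showing the resulting relative gap is pinched between constant multiples of $1/\log^2 s$. The genuinely hard ingredient — that $C_d$ is subgraph equivalent to $C_{d_0}\oplus C_{d_0+1}$ down to matching number $\Theta(d_0)$ — is already available as Proposition~\ref{prop:cong} and is used here only as a black box; this is entirely parallel to Theorem~\ref{thm:schattenlb}, the only change being the spectral functional one reads off and the choice of the amplifier $\ZZ$.
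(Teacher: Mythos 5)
Your proof is correct and follows the same overall strategy as the paper: instantiate the block-diagonal lower bound of Lemma~\ref{lem:ifBDthenLB} with cycle-graph matrices tensored with $\mathbf 1^{m\times m}$, and read off the Ky-Fan norm gap.

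The one nontrivial difference is in the choice of diagonal shift and the resulting $s$-to-$k$ conversion. You reuse the Proposition~\ref{prop:LBfinal} matrices verbatim, i.e.\ the shift $\BB = \tfrac12(\AA_{C_d} + \lambda \II)$, $\DD = \tfrac12(\AA_{C_{d_0}\oplus C_{d_0+1}} + \lambda \II)$ with $\lambda = -2\cos(\tfrac{2\pi d_0}{2d_0+1}) > 0$. Under this shift both $\BB$ and $\DD$ have top singular value $\tfrac12(2+\lambda)$ — identical — so the top singular value per block carries no signal, which is why you need $s = 2k$ and must track the second singular value per block, where the tie is broken ($\sigma_2(\BB) = \tfrac12(2\cos(2\pi/d)+\lambda) < \tfrac12(2+\lambda) = \sigma_2(\DD)$). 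The paper instead shifts by $-2\II$ (so $\BB = \tfrac14(\AA_{C_d}-2\II)$, $\DD = \tfrac14(\AA_{C_{d_0}\oplus C_{d_0+1}} - 2\II)$), which makes both blocks negative semidefinite; now the \emph{top} singular value is $|\lambda_{\min}|$ of the cycle, and since $\lambda_{\min}(C_d) \neq \lambda_{\min}(C_{d_0}\oplus C_{d_0+1})$ the gap shows up already in $\sigma_1$, so the paper can set $k = s$ and look at just one singular value per block. Both choices give the same $\tilde\Omega(s^2)$ and the same $\eps_0 = \Theta(1/\log^2 s)$; yours reuses the PSD-testing construction as a black box at the cost of slightly more spectral bookkeeping, while the paper's tailors the shift to the Ky-Fan functional so the top singular value suffices. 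Your verification that the top $2k$ global singular values split cleanly as $k$ copies of each of the top two block singular values (despite the multiplicity-two tie $\sigma_2 = \sigma_3$ in $\BB$, and the check that $\DD$'s unique negative eigenvalue has magnitude $\Theta(1/d^2)$ and hence lands at the bottom of the singular-value list) is the right thing to worry about, and it holds.
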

\begin{proof}
The proof is nearly the same as the usage of the hard instance in Theorem \ref{thm:lbmain}. 
Set $k=s$, and let $d_0 = \Theta(\log s)$ and $d = 2d_0 + 1$.
We apply Lemma \ref{lem:ifBDthenLB} with the hard instance as instantiated with $\ZZ = \mathbf{1}^{m \times m}$, and the matrices $\BB = 1/4(\AA_{C_{d}} - 2\mathbb{I}_{d})$ and $\DD = 1/4(\AA_{C_{d_0} \oplus C_{d_0+1}} -2\mathbb{I}_{d})$. Notice that since the eigenvalues of $C_d$ are given by $2 \cdot \cos( \frac{2 \pi j}{d})$ for $j=0,1,\dots,d-1$ \cite{chung1996lectures}, we have  $\lambda_{\min}(\AA_{C_{d}}) = -2 \cos(\frac{2 \pi d_0}{2d_0 + 1}) = -2 + \Theta(1/\log^2(1/\eps))$, $\lambda_{\min}(\AA_{C_{d_0} \oplus C_{d_0+1}} ) = -2$,  and  $\lambda_{\max}(\AA_{C_{d}}) =  \lambda_{\max}(\AA_{C_{d_0} \oplus C_{d_0+1}} ) = 2$. Thus $\|\DD\|_2 = 4$ and $\|\BB\|_2 = 4 - \Theta(1/d^2)$, and moreover $\|\DD \otimes \ZZ\|_2 = 4m$ $\|\BB \otimes \ZZ\|_2 = 4m(1 - \Theta(1/\log^2(1/\eps)))$. Thus if $\AA_1 \sim \mathcal{D}_1$, we have $\|\AA_1\|_{KF(s)} > \sum_{i=1}^k 4m  = 4km$, and  $\|\AA_2\|_{KF(s)} < 4km(1 - \Theta(1/\log^2(1/\eps)))$. The proof then follows from the $\Omega(k^2)$ lower bound for this hard instance via  Lemma \ref{lem:ifBDthenLB}.

\end{proof} 

      We now present our lower bound for testing the magnitude of the $s$-tail $\|\AA - \AA_s \|_{F}^2$, where $\AA_s = \U \Sigma_s \V^T$ is the truncated SVD (the best rank-$s$ approximation to $\AA$). Note that $\|\AA - \AA_s \|_{F}^2 = \sum_{j > s}\sigma_j^2(\AA)$. 
 \begin{theorem}\label{thm:taillb}
  Fix any $1 \leq s \leq n/(\poly \log n)$. Then there exists a fixed constant $c> 0$ (independent of $\eps)$, such that given $\AA \in \R^{n \times n}$ with $\|\AA\|_\infty \leq 1$, any non-adaptive sampling algorithm which distinguishes between the cases
 \begin{enumerate}
     \item  $\|\AA - \AA_s \|_{F}^2 >   \frac{c}{\log(s)} \cdot \frac{n^2}{s} $
       \item  $\|\AA - \AA_s \|_{F}^2 <  (1-\eps_0)\cdot  \frac{c}{\log(s)}\cdot  \frac{n^2}{s}$
 \end{enumerate}
with probability at least $3/4$,  where $\eps_0 = \wt{\Theta}(1)$,  must query at least $\tilde{\Omega}(s^2)$ entries of $\AA$.
 \end{theorem}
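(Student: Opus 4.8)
The plan is to re-use the block-diagonal hard distribution $\mathcal{D}_1,\mathcal{D}_2$ of Lemma \ref{lem:ifBDthenLB}, instantiated with the same cycle matrices as the Ky-Fan bound (Theorem \ref{thm:kflb}), but now tracking the rank-$s$ tail $\|\AA-\AA_s\|_F^2$ rather than the Ky-Fan norm. Concretely I would set $k=s$, $t=\log k$, $d_0=4t$, $d=2d_0+1=\Theta(\log s)$ and $m=n/(dk)$ (using $s\le n/\poly\log n$ to guarantee $m\ge 1$, and taking $dk\mid n$ as in Lemma \ref{lem:ifBDthenLB}), and choose $\BB=\tfrac14(\AA_{C_d}-2\II_d)$, $\DD=\tfrac14(\AA_{C_{d_0}\oplus C_{d_0+1}}-2\II_d)$, and $\ZZ=\mathbf{1}^{m\times m}$. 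Since $\BB,\DD$ have entries of magnitude at most $1/2$, and since (by Proposition \ref{prop:cong}, together with the observation from Proposition \ref{prop:LBfinal} that a uniform diagonal shift does not affect $(\mathcal{U}^t_d,S_d)$-subgraph equivalence) $\BB\cong_{\mathcal{U}^t_d,S_d}\DD$, Lemma \ref{len:kron} gives $\wt{\BB}=\BB\otimes\ZZ \cong_{\mathcal{U}^t_{dm},\Gamma_{d,m}}\DD\otimes\ZZ=\wt{\DD}$, so Lemma \ref{lem:ifBDthenLB} already yields the desired $\Omega(k^2)=\wt{\Omega}(s^2)$ query lower bound for distinguishing $\mathcal{D}_1$ from $\mathcal{D}_2$ (success probability $3/4$ only makes the task harder). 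It remains to compare the tails.

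The key structural point I would use is that a draw $\AA\sim\mathcal{D}_1$ is block-diagonal with $k$ blocks, each orthogonally similar to $\wt{\BB}$, so the sorted singular value list of $\AA$ is each singular value of $\wt{\BB}$ repeated $k$ times; hence with $s=k$ one obtains the clean identity $\|\AA-\AA_s\|_F^2=\|\AA\|_F^2-\sum_{j\le k}\sigma_j^2(\AA)=k\big(\|\wt{\BB}\|_F^2-\sigma_1^2(\wt{\BB})\big)$, and similarly $k\big(\|\wt{\DD}\|_F^2-\sigma_1^2(\wt{\DD})\big)$ for $\mathcal{D}_2$. Next I would invoke the Kronecker identities $\|\wt{\BB}\|_F^2=m^2\|\BB\|_F^2$ and $\sigma_1(\wt{\BB})=m\|\BB\|_2$ (and likewise for $\DD$), and compute via traces that $\|\BB\|_F^2=\tfrac1{16}\big(\Tr(\AA_{C_d}^2)+4d\big)=\tfrac{3d}{8}=\|\DD\|_F^2$ (using $\Tr(\AA_{C_d}^2)=2d$, $\Tr(\AA_{C_d})=0$, and the same for the union of cycles), while from Fact \ref{fact:cycle} $\|\DD\|_2=1$ exactly and $\|\BB\|_2=1-\Theta(1/d_0^2)$. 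This yields $\|\AA-\AA_s\|_F^2=km^2\big(\tfrac{3d}{8}-1+\Theta(1/d_0^2)\big)$ in the $\mathcal{D}_1$ case and $km^2\big(\tfrac{3d}{8}-1\big)$ in the $\mathcal{D}_2$ case.

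To finish, since $km^2=n^2/(d^2 s)$ and $d=\Theta(\log s)$, both quantities are $\Theta\!\big(\tfrac{1}{\log s}\cdot\tfrac{n^2}{s}\big)$, so choosing the constant $c$ just below the $\mathcal{D}_1$-coefficient makes case (1) hold for every matrix in the support of $\mathcal{D}_1$, while the multiplicative gap
\[
\frac{km^2\big(\tfrac{3d}{8}-1\big)}{km^2\big(\tfrac{3d}{8}-1+\Theta(1/d_0^2)\big)}=1-\Theta\!\Big(\tfrac{1}{d_0^2 d}\Big)=1-\Theta\!\big(\tfrac{1}{\log^3 s}\big)
\]
lets me take $\eps_0=\Theta(1/\log^3 s)=1/\log^{O(1)}(s)$, so case (2) holds for $\mathcal{D}_2$. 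Hence any algorithm solving the tail-testing problem with probability $3/4$ distinguishes $\mathcal{D}_1$ from $\mathcal{D}_2$, and therefore must make $\wt{\Omega}(s^2)$ queries; for $1/s$ below an absolute constant the claim is trivial, and the $1/4$ scaling already supplies the bounded-entry property.

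The step I expect to require the most care is the spectral bookkeeping of the middle paragraph rather than any new idea: one must verify that $\|\BB\|_F^2$ and $\|\DD\|_F^2$ coincide (so the tail difference is governed \emph{only} by the top singular value), that the $-2\II_d$ shift is exactly what promotes the differing, most-negative eigenvalue to be largest in magnitude, and that because the cycle's spectrum is essentially flat a single block's rank-$1$ tail exceeds its squared top singular value by a factor $\Theta(d)$ — which is precisely why the separation degrades to $\eps_0=\Theta(1/\log^3 s)$ rather than a constant. Everything else (the cycle bijection of Lemma \ref{lem:cyclesmain}/Proposition \ref{prop:cong}, the tensoring Lemma \ref{len:kron}, and the $\Omega(k^2)$ reduction of Lemma \ref{lem:ifBDthenLB}) is already available off the shelf.
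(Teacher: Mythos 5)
Your proposal is correct and follows essentially the same route as the paper: re-use the Ky-Fan hard instance with $\BB=\tfrac14(\AA_{C_d}-2\II_d)$, $\DD=\tfrac14(\AA_{C_{d_0}\oplus C_{d_0+1}}-2\II_d)$, $\ZZ=\mathbf{1}^{m\times m}$, invoke the $\Omega(k^2)$ indistinguishability from Lemma \ref{lem:ifBDthenLB}, and then compute the tail of each side via $\|\AA-\AA_s\|_F^2=\|\AA\|_F^2-\sum_{i\le s}\sigma_i^2(\AA)$, using that the two blocks share the same Frobenius norm so the gap is driven entirely by the spectral-norm gap $\Theta(1/d^2)$. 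Your bookkeeping is in fact cleaner than the paper's writeup (which has a label swap and drops the $1/4$ scaling mid-calculation), and your $\eps_0=\Theta(1/\log^3 s)$ matches what the paper's intended computation yields and lies within the claimed $\wt\Theta(1)=1/\log^{O(1)}(s)$ range.
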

\begin{proof}
    We set $s = k$, and use the same hard instance as in Theorem \ref{thm:kflb} above. Note that if $\mathcal{D}_1,\mathcal{D}_2$ are defined as in Theorem \ref{thm:kflb}, if $\AA_1 \sim \mathcal{D}_1$, $\AA_s \sim \mathcal{D}_s$, we have $\sum_{i=1}^s \lambda_i(\AA_1) = s(4m)^2 = 16n^2/(sd^2)$ and $\sum_{i=1}^s \lambda_i(\AA_2) = 16 n^2/(sd^2)(1-\Theta(1/\log^2 s))$.   Now note that $\|\AA_1\|_F^2 = \|\AA_2\|_F^2 = k d m^2 = n^2 / (dk) = n^2/(ds)$, using that each of the single cycle and union of two smaller cycles has $d$ edges, so the Frobenius norm of each block is $d m^2$ in both cases. Using that $d = \Theta(\log s)$, we have that if  $\|(\AA_1) - (\AA_1)s \|_{F}^2 > n^2/(ds) - 16n^2/(sd^2) = c \frac{n^2}{s \log(s)}$ for some constant $c> 0$, and $\|(\AA_2) - (\AA_2)s \|_{F}^2 > n^2/(ds) - 16 n^2/(sd^2)(1-\Theta(1/\log^2 s)) = c \frac{n^2}{s \log(s)} + \wt{\Theta}(\frac{n^2}{s})$, which completes the proof after applying Lemma \ref{lem:ifBDthenLB}. 
\end{proof}


\subsection{Lower Bound For Estimating Ky-Fan of \texorpdfstring{$\AA \AA^T$ }{A A-transpose} via Submatrices}
In this section, we demonstrate a $\Omega(1/\eps^4)$ query lower bound for algorithms which estimate the quantity $\sum_{i=1}^k \sigma_i^2(\AA) = \|\AA\AA^T \|_{KF(k)}$ for any $k \geq 1$ by querying a sub-matrix.  The following lemma as a special case states that for $\eps = \Theta(1/\sqrt{n})$, additive $\eps n^2$ approximation of $\|\AA\AA^T \|_{KF(k)} $ requires one to read the entire matrix $\AA$. 

\begin{lemma}\label{lem:lowerboundAA}
Fix any $1 \leq k \leq n$, and fix any $\frac{100}{\sqrt{n}} \leq \eps \leq 1/4$. 
Any algorithm that queries a submatrix $\AA_{S \times T}$ of $\AA \in \R^{n \times n}$ with $\|\AA\|_\infty \leq 1$ and distinguishes with probability at least $4/5$ between the case that either:
\begin{itemize}
    \item  $\sum_{i=1}^k \sigma_i^2(\AA) > n^2 / 2 + \eps n^2$.
     \item  $\sum_{i=1}^k \sigma_i^2(\AA)  \leq n^2 / 2 $
\end{itemize}
 must make $|S|\cdot |T| = \Omega(1/\eps^4)$ queries to the matrix $\AA$. 
\end{lemma}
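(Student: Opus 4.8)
\textit{Plan.} The goal is the standard two-distribution, second-moment lower bound. I would exhibit two distributions $\mathcal{D}_{\mathrm{YES}},\mathcal{D}_{\mathrm{NO}}$ over matrices $\AA \in \R^{n\times n}$ with $\|\AA\|_\infty\le 1$, feed the tester the mixture $\tfrac12(\mathcal{D}_{\mathrm{YES}}+\mathcal{D}_{\mathrm{NO}})$, and argue that a submatrix of $o(1/\eps^4)$ entries cannot tell the two apart. Both distributions share a fixed ``background'' with $\sum_{i\le k}\sigma_i^2 \approx n^2/2$ — for concreteness, a $\pm 1$ rounding of $\tfrac{1}{\sqrt 2}\mathbf 1\mathbf 1^\top$ together with i.i.d. sub-constant mean-zero noise, so each individual entry is a fair-ish coin and the Frobenius/spectral behaviour of the background is known. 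In $\mathcal{D}_{\mathrm{YES}}$ one additionally plants a random low-rank structure contributing an extra $\Theta(\eps n^2)$ to the top-$k$ squared singular values: $K$ mutually near-orthogonal rank-one pieces $\gamma\, r^{(\ell)} (c^{(\ell)})^\top$ with independent random sign vectors $r^{(\ell)},c^{(\ell)}$, scaled so each piece adds $\gamma^2 n^2$ to one of $\sigma_2^2,\dots,\sigma_{K+1}^2$ while the per-entry perturbation $\gamma\sum_\ell r_i^{(\ell)}c_j^{(\ell)}$ stays $o(1)$. A routine concentration argument (the planted spectral contribution $\Theta(\eps n^2)$ dominates the $O(\sqrt n)$ or $O(n)$ fluctuations of the background, which is where $\eps\ge 100/\sqrt n$ enters), together with eigenvalue interlacing, turns this into the required gap in $\sum_{i\le k}\sigma_i^2$; for small $k$ the same is done with fewer but correspondingly stronger pieces, and the precise choice of $K$ depends on $k$ and on the range of $\eps$.

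Given the construction, the reduction is routine. By Yao's minimax principle and Markov's inequality it suffices to fix a deterministic tester that reads one fixed submatrix $\AA_{S\times T}$ with $|S|\,|T| = m$, and by the data-processing inequality its advantage over the mixture is at most $\|\,\mathcal{D}_{\mathrm{YES}}|_{S\times T} - \mathcal{D}_{\mathrm{NO}}|_{S\times T}\,\|_{\mathrm{TV}}$; it remains to show this is $<1/5$ whenever $m=o(1/\eps^4)$. I would bound it via $\|P-Q\|_{\mathrm{TV}}\le \tfrac12\sqrt{\chi^2(P\|Q)}$ with $Q=\mathcal{D}_{\mathrm{NO}}|_{S\times T}$ a product law. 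Writing $P=\mathcal{D}_{\mathrm{YES}}|_{S\times T}$ as a mixture over the latent sign vectors and summing out the entries, $\chi^2(P\|Q)+1$ becomes an expectation over two independent copies of the latent vectors of $\prod_{(i,j)\in S\times T}(1+\lambda X_{ij})$, where $\lambda=\Theta(\gamma^2)$ and $X_{ij}$ is the product of the two copies' signal values at $(i,j)$. Since each $X_{ij}$ has mean zero given the index $(i,j)$, the linear-in-$\lambda$ term vanishes, and after the usual $\log(1+\cdot)$ linearization the product collapses to the moment generating function of a bilinear form in two near-Gaussian vectors with variances $\asymp |S|$ and $\asymp |T|$; this evaluates to $\exp(\Theta(\lambda^4 K^2 |S|^2|T|^2))$, so $\chi^2=o(1)$ precisely when $\lambda^2 K |S|\,|T| = o(1)$. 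Substituting $\lambda\asymp\gamma^2$ and the calibration $K\gamma^2\asymp\eps$ gives the threshold $|S|\,|T| = \Theta(K/\eps^2)$, and choosing $K\asymp 1/\eps^2$ yields $|S|\,|T|=\Omega(1/\eps^4)$ as claimed.

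The main obstacle is exactly this $\chi^2$ computation. Under $\mathcal{D}_{\mathrm{YES}}$ the entries of $\AA_{S\times T}$ are not independent — they are coupled through the shared latent vectors — so the product over $(i,j)$ does not factor, and one must carefully track the cancellation by which the $O(\gamma^2 |S|\,|T|)$ term coming from $\E[-\tfrac{\lambda^2}{2}\sum_{i,j} X_{ij}^2]$ exactly kills the leading term of the bilinear-form MGF, leaving only the genuinely fourth-order $\Theta(\lambda^4)$ contribution (this is the step that produces $1/\eps^4$ rather than $1/\eps^2$). Controlling the higher moments of that bilinear form, to justify the Gaussian/MGF approximations and to ensure the tails do not spoil the bound, and simultaneously verifying that the planted structure gives the $\Theta(\eps n^2)$ spectral gap while keeping $\|\AA\|_\infty\le 1$, are the two places where real care is needed; everything else — Yao, Markov, data processing, and the TV-to-$\chi^2$ inequality — is standard.
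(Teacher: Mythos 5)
Your approach is genuinely different from the paper's, and in its present form it has a gap that prevents it from reaching $\Omega(1/\eps^4)$.

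The paper's proof is far more elementary. The hard instance is a \emph{random rank-one} matrix: each row of $\AA$ is independently set to the all-ones vector with probability $p_1 = 1/2 + 2\eps$ (YES) or $p_2 = 1/2 - 2\eps$ (NO), the remaining rows are zero, and then $\AA$ or $\AA^\top$ is returned with equal probability. Since $\AA$ is rank one, $\sum_{i\le k}\sigma_i^2(\AA) = \|\AA\|_F^2 = n\cdot(\text{number of all-ones rows})$, and Chernoff gives the claimed gap. The crucial feature is that \emph{all entries within a row are identical}, so a queried $S\times T$ submatrix yields only $\min(|S|,|T|)$ informative bits — whether each of those rows (or columns, after the transpose coin flip) is all-$0$ or all-$1$ — and the problem collapses to distinguishing $\mathrm{Bin}(\min(|S|,|T|), p_1)$ from $\mathrm{Bin}(\min(|S|,|T|), p_2)$. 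A Binomial TV bound gives $\min(|S|,|T|) = \Omega(1/\eps^2)$, hence $|S||T|=\Omega(1/\eps^4)$. No $\chi^2$ second-moment machinery, no planted low-rank structure.

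The gap in your plan is the parameter choice. You calibrate $K\gamma^2 \asymp \eps$ and then set $K \asymp 1/\eps^2$ to obtain the target threshold $|S||T|=\Theta(K/\eps^2)=\Theta(1/\eps^4)$. But this forces $\gamma = \sqrt{\eps/K}$, so the \emph{worst-case} per-entry signal is $\gamma K = \sqrt{\eps K} = 1/\sqrt{\eps}\gg 1$, and even the \emph{typical} signal is $\gamma\sqrt{K}=\sqrt{\eps}$, leaving no room on top of a $\pm 1$-valued background. More to the point, the $\pm1$ likelihood ratio $1+\gamma a_{ij}(\theta)x$ is only a valid density when $|\gamma a_{ij}|\le 1$, i.e.\ when $\gamma K\le 1$; otherwise the factors $1+\lambda X_{ij}$ in your product can go negative and the whole $\chi^2$ expression is not a divergence of any probability model. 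Enforcing $\gamma K \le 1$ gives $K\le 1/\eps$, and then your own formula $|S||T| = \Theta(K/\eps^2)$ only yields $\Omega(1/\eps^3)$; for small $k$ (e.g.\ $k=1$, where a single plant with $\gamma^2=\eps$ is the natural choice) it degrades further to $\Omega(1/\eps^2)$. Truncating or conditioning away the rare overflowing entries is not a cosmetic fix: it changes the conditional law $P_\theta$ in a $\theta$-dependent way that the Ingster–Suslina factorization does not tolerate. The underlying reason the approach tops out short of $1/\eps^4$ is that in an (approximately) i.i.d.\ noise model, every one of the $|S||T|$ entries carries fresh information about the plant; the paper gets the extra factor of $1/\eps^2$ precisely by making all entries in a row perfectly redundant, which a random rank-$K$ plant on an i.i.d.\ background does not do.
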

\begin{proof}
We design two distributions $\mathcal{D}_1,\mathcal{D}_2$.  If $\AA_1 \sim \mathcal{D}_1$, we independently set each row of $\AA_1$ equal to the all $1's$ vector with probability $p_1 = 1/2 + 2  \eps$, and then return either $\AA = \AA_1$ or $\AA = \AA_1^T$ with equal probability.  If $\AA_2 \sim \mathcal{D}_2$, we independently set each row of $\AA_2$ equal to the all $1's$ vector with probability $p_2 = 1/2 -  2\eps$, and then return either $\AA = \AA_2$ or $\AA = \AA_2^T$ with equal probability. Our hard instance then draws $\AA \sim \frac{\mathcal{D}_1 + \mathcal{D}_2}{2}$ from the mixture. First note that in both cases, we have $\|\AA\|_2^2 = \|\AA\|_F^2 = \sum_{i=1}^k \sigma_i^2(\AA)$, since the matrix is rank $1$. Since $\frac{100}{\sqrt{n}} \geq \eps$, by Chernoff bounds, we have that if  $\AA_1 \sim \mathcal{D}_1$ then $\sum_{i=1}^k \sigma_i^2(\AA) > n^2 / 2 + \eps n^2$ with probability at least $99/100$. Similarly, we have that if  $\AA_2 \sim \mathcal{D}_2$ then $\sum_{i=1}^k \sigma_i^2(\AA) \leq n^2$ with probability at least $99/100$.  

Now suppose that such an algorithm sampling $|S|\cdot |T|  < \frac{c^2}{ \eps^4}$ entries exists, for some constant $c>0$. Then by Yao's min-max principle, there is a fixed submatrix $S,T \subset [n]$ such that, with probability $9/10$ over the distribution $\frac{\mathcal{D}_1 + \mathcal{D}_2}{2}$, the algorithm correctly distinguishes $\mathcal{D}_1$ from $\mathcal{D}_2$ given only $A_{S \times T}$. Suppose WLOG that $|S| \leq \frac{c}{\eps^2}$. Then consider the case only when $\AA_1$ or $\AA_2$ is returned by either of the distributions, and not their transpose, which occurs with probability at least $1/2$. Then $A_{S \times T}$ is just a set of $|S|$ rows, each of which are either all $0$'s or all $1$'s. Moreover, each row is set to being the all $1$'s row independently with probability $p_1$ in the case of $\mathcal{D}_2$, and $p_2$ in the case of $\mathcal{D}_2$. Thus, by Independence across rows, the behavior of the algorithm can be assumed to depend only on the number of rows which are set to $1$. Thus, in the case of $\mathcal{D}_1$ the algorithm receives $X_1 \sim \texttt{Bin}(|S|,p_1)$ and in  $\mathcal{D}_2$ the algorithm receives $X_2 \sim \texttt{Bin}(|S|,p_2)$. Then if $d_{TV}(X_1,X_2)$ is the total variational distance between $X_1,X_2$, then by Equation 2.15 of \cite{adell2006exact}, assuming that $\eps \sqrt{|S|} $ is smaller than some constant (which can be obtained by setting $c$ small enough), we have
\[  d_{TV}(X_1,X_2) \leq O( \eps \sqrt{|S|} ) \]    
Which is at most $1/100$ for $c$ a small enough constant. Thus any algorithm can correctly distinguish these two distributions with advantage at most $1/100$. Since we restricted our attention to the event when rows were set and not columns, and since we conditioned on the gap between the norms which held with probability $99/100$, it follows that the algorithm distinguishes $\mathcal{D}_1$ from $\mathcal{D}_2$ with probability at most $1/2 + 1/4 + (2/100) < 4/5$, which completes the proof. 
\end{proof}

\section{Conclusion}\label{sec:conclusion}

In this work, we gave an optimal (up to $\log(1/\eps)$ factors) algorithm for testing if a matrix was PSD, or was far in spectral norm distance from the PSD cone. In addition, we gave a query efficient algorithm for testing if a matrix was PSD, or was $\eps n^2$ far from the PSD-cone in $\ell_2^2$ distance. Furthermore, we established a new technique for proving lower bounds based on designing ``subgraph-equivelant'' matrices. We believe that this technique is quite general, as shown by its immediate application to lower bounds for the Schatten-$1$ norm, Ky-Fan norm, and tail error testing. Our construction could also likely be useful for proving lower bounds against testing of \textit{graph properties}, which is a well studied area \cite{goldreich2010introduction}. We pose the open problem to design (or demonstrate the non-existence of) additional subgraph-equivalent matrices beyond the cycle graph construction utilized in this work,  which have gaps in their spectral or graph-theoretic properties.

Additionally, we pose the open problem of determining the exact non-adaptive query complexity of PSD testing with $\ell_2^2$ gap. As discussed in Section \ref{sec:contri}, there appear to be several key barriers to improving the complexity beyond $O(1/\eps^4)$. Indeed, it seems that perhaps the main tool that is lacking is a concentration inequality for the eigenvalues of random principal submatrices. Since most such decay results apply only to norms \cite{tropp2008norms,rudelson2007sampling}, progress in this direction would likely result in important insights into eigenvalues of random matrices.

Finally, we note that the complexity of the testing problems for several matrix norms, specifically the Schatten $p$ and Ky-Fan norms, are still open in the bounded entry model. In particular, for the Schatten $1$ norm, to the best of our knowledge no non-trivial algorithms exist even for estimation with additive error $\Theta(n^{1.5})$, thus any improvements would be quite interesting.

\section*{Acknowledgement}

We thank Erik Waingarten for many useful suggestions,  being closely involved in the early stages of this project, and for feedback on early drafts of this manuscript.  We also thank Roie Levin, Ryan O'Donnell, Pedro Paredes, Nicolas Resch, and Goran Zuzic for illuminating discussions related to this project.

\bibliographystyle{alpha}
\bibliography{cluster}

\appendix

\section{Proof of Eigenvalue Identity}
\label{sec:appendixA}

\begin{proposition}
	Let $\AA \in \R^{n \times n}$ be any real symmetric matrix. 
	Then $\min_{\BB \succeq 0} \|\AA - \BB\|_F^2 = \sum_{i : \lambda_i(\AA) < 0 } \lambda_i^2(\AA)$.
\end{proposition}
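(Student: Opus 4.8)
The plan is to reduce to the diagonal case using unitary invariance of the Frobenius norm, and then optimize entrywise. First, write the eigendecomposition $\AA = \U \Lam \U^\top$ with $\U$ orthogonal and $\Lam = \mathrm{diag}(\lambda_1(\AA),\dots,\lambda_n(\AA))$. For any $\BB \in \R^{n\times n}$, set $\CC = \U^\top \BB \U$. Since conjugation by an orthogonal matrix preserves the PSD cone, $\BB \succeq 0$ if and only if $\CC \succeq 0$, and this substitution is a bijection between $\{\BB : \BB \succeq 0\}$ and $\{\CC : \CC \succeq 0\}$. By orthogonal invariance of $\|\cdot\|_F$,
\[
\|\AA - \BB\|_F^2 = \|\U^\top(\AA-\BB)\U\|_F^2 = \|\Lam - \CC\|_F^2,
\]
so it suffices to prove $\min_{\CC \succeq 0}\|\Lam - \CC\|_F^2 = \sum_{i : \lambda_i(\AA) < 0}\lambda_i^2(\AA)$.

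Next I would split the objective into diagonal and off-diagonal contributions:
\[
\|\Lam - \CC\|_F^2 = \sum_{i=1}^n \big(\lambda_i(\AA) - \CC_{i,i}\big)^2 + \sum_{i \neq j} \CC_{i,j}^2 .
\]
For the lower bound, recall that every PSD matrix has nonnegative diagonal entries: $\CC_{i,i} = e_i^\top \CC e_i \geq 0$. Hence whenever $\lambda_i(\AA) < 0$ we have $(\lambda_i(\AA) - \CC_{i,i})^2 \geq \lambda_i^2(\AA)$, while all remaining terms (the diagonal terms with $\lambda_i(\AA)\geq 0$ and all off-diagonal terms) are nonnegative. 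This gives $\|\Lam - \CC\|_F^2 \geq \sum_{i:\lambda_i(\AA)<0}\lambda_i^2(\AA)$ for every $\CC \succeq 0$. For the matching upper bound, take $\CC$ to be the diagonal matrix with $\CC_{i,i} = \max\{\lambda_i(\AA),0\}$; this is PSD, and a direct computation gives $\|\Lam - \CC\|_F^2 = \sum_{i:\lambda_i(\AA)<0}\lambda_i^2(\AA)$. Combining the two bounds proves the identity, and translating back through $\BB = \U \CC \U^\top$ exhibits the minimizer explicitly as the projection of $\AA$ onto the PSD cone (i.e. $\AA$ with its negative eigenvalues zeroed out).

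There is no serious obstacle here; the only point requiring a moment's care is that the minimizing $\BB$ need not commute with $\AA$ a priori, which is precisely why the substitution $\CC = \U^\top \BB \U$ (rather than assuming $\BB$ is diagonalized by $\U$) is the right first move. Everything after that is an entrywise optimization using only the fact that diagonal entries of PSD matrices are nonnegative.
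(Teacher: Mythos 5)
Your proof is correct, and it takes a genuinely different route from the paper's. The paper proves the lower bound by setting $\ZZ = \BB - \AA$, observing $\ZZ \succeq -\AA$ in the Loewner ordering, and invoking Courant--Fischer to compare the eigenvalues of $\ZZ$ with those of $-\AA$, then summing squares. You instead conjugate by the eigenbasis $\U$ to reduce to $\min_{\CC\succeq 0}\|\Lam - \CC\|_F^2$ with $\Lam$ diagonal, split the squared Frobenius norm into diagonal and off-diagonal parts, and use only the elementary fact that PSD matrices have nonnegative diagonal entries. Your version avoids the eigenvalue-dominance machinery entirely and, as a side benefit, makes it transparent that the off-diagonal blocks of the optimal $\CC$ must vanish (and hence that the minimizer is $\BB = \U\,\max(\Lam,0)\,\U^\top$), something the paper's argument exhibits only via its separate upper-bound construction. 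Your remark about not prematurely assuming $\BB$ is simultaneously diagonalizable with $\AA$ is exactly the right caution; the orthogonal change of variables is what makes the entrywise reduction legitimate. One could also note that the paper's eigenvalue-comparison argument is a bit terse about index conventions (after ordering eigenvalues decreasingly, $\ZZ\succeq -\AA$ yields $\lambda_k(\ZZ)\ge -\lambda_{n+1-k}(\AA)$, not $\lambda_k(\ZZ)\ge -\lambda_k(\AA)$ as literally written), whereas your proof has no such bookkeeping to worry about.
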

\begin{proof}
	 Let $g_i$ be the eigenvector associated with $\lambda_i = \lambda_i(\AA)$.
	 First, setting $\BB = \sum_{i : \lambda_i(\AA) \geq  0 } \lambda_i g_i g_i^\top$, which is a PSD matrix, we have $\|\AA - \BB\|_F^2 = \| \sum_{i : \lambda_i(\AA) < 0 } \lambda_i(g_i g_i^\top \AA) \|_2^2 = \sum_{i : \lambda_i(\AA) < 0 } \lambda_i^2(\AA)$, where the second equality follows from the Pythagorean Theorem, which proves that $\min_{\BB \succeq 0} \|\AA - \BB\|_F^2 \leq \sum_{i : \lambda_i(\AA) < 0 } \lambda_i^2(\AA)$. To see the other direction, fix any PSD matrix $\BB$, and let $\ZZ =  \BB - \AA$.  Then $\ZZ + \AA \succeq 0$, where $\succeq$ is the Lowner ordering, thus $\ZZ \succeq -\AA$, which by definition implies that $x^\top \ZZ x \geq - x^\top \AA x$ for all $x \in \R^n$. Then by the Courant-Fischer variational characterization of eigenvalues, we have that $\lambda_i(\ZZ) \geq -\lambda_i(\AA)$ for all $i$. In particular, $|\lambda_i(\ZZ)| \geq |\lambda_i(\AA)|$ for all $i$ such that $\lambda_i(\AA) < 0$.
Thus $\|\ZZ\|_F^2 = \sum_i \lambda_i^2(\ZZ) \geq   \sum_{i : \lambda_i(\AA) < 0 } \lambda_i^2(\ZZ)  \geq \sum_{i : \lambda_i(\AA) < 0 } \lambda_i^2(\AA)$, which completes the proof.

\end{proof}

\end{document}